\newcommand{\RR}{\mathbb{R}}
\renewcommand{\P}{\mathbb{P}}
\newcommand{\IF}{{\mathbf{1}}}
\DeclareMathOperator{\diag}{diag}
\DeclareMathOperator{\tr}{tr}
\theoremstyle{plain}
\newtheorem{Assumption}{Assumption}
\theoremstyle{definition}
\newtheorem{remark}{Remark}
\newcommand{\E}{\mathbb{E}}
 \newcommand{\defeq}{\overset{\text{\tiny def}}{=}} 
\DeclareMathOperator*{\argmax}{arg\,max}
\DeclareMathOperator*{\argmin}{arg\,min}
\newtheorem{theorem}{Theorem}
\newtheorem{lemma}{Lemma}
        \newtheorem{corollary}{Corollary}[theorem]
\title[]{Plausible GMM: A Quasi-Bayesian Approach}
\thanks{Date: \today{}.  }  
\author[]{Victor Chernozhukov$^a$,  Christian B. Hansen$^b$, Lingwei Kong$^c$, Weining Wang$^d$}  
 \thanks{$^a$: Department of Economics, Massachusetts Institute of Technology.}
 \thanks{$^b$: Booth School of Business, The University of Chicago.}
 \thanks{$^c$: Department of Economics, Econometrics and Finance, University of Groningen.}
 \thanks{$^d$: Department of Economics, University of  Bristol. weining.wang@bristol.ac.uk } 
\thanks{We thank  Bertille Antoine, Tom Boot, Peter Boswijk, Songxi Chen, \'Aureo de Paula, Art\={u}ras Juodis, Byunghoon Kang, Frank Kleibergen,  Michal Koles\'{a}r,  Jun Liu, Anna Mikusheva, Whitney K. Newey, Frank A. G. Windmeijer,  Taisuke Otsu,  Johannes Schmidt-Hieber, and Liyang Sun for thoughtful comments. All errors remain our own.}
\begin{document}

\begin{abstract}
Structural estimation in economics often makes use of models formulated in terms of moment conditions. While these moment conditions are generally well-motivated, it is often unknown whether the moment restrictions hold exactly. We consider a framework where researchers model their belief about the potential degree of misspecification via a prior distribution and adopt a quasi-Bayesian approach for performing inference on structural parameters. We provide quasi-posterior concentration results, verify that quasi-posteriors can be used to obtain approximately optimal Bayesian decision rules under the maintained prior structure over misspecification, and provide a form of frequentist coverage results. We illustrate the approach through empirical examples where we obtain informative inference for structural objects, allowing for substantial relaxations of the requirement that moment conditions hold exactly. 

\vspace{.1in}
\noindent\textbf{Keywords}: sensitivity analysis, misspecification,   generalized method of moments (GMM), quasi-Bayes, Bernstein-von Mises theorem (BvM)
\end{abstract}

\maketitle
 
\section{Introduction}
Moment restrictions are commonly used in the identification and estimation of structural or causal parameters in empirical economics. Prominent examples include instrument exclusion conditions, unconfoundedness assumptions, parallel trend assumptions, and nonlinear moment restrictions imposed in structural models. Economists typically use institutional knowledge and economic reasoning to argue for the validity of these restrictions in settings with observational data. Based on these arguments, classical estimation and inference, such as estimation and inference based on generalized method of moments (GMM), then proceed under the maintained assumption that the posited moment restrictions hold exactly. 

While the arguments employed to justify moment restrictions provide a basis for believing that the moment conditions are plausible, they are also generally debatable. That is, it is hard to know whether there are unmodeled sources of confounding or sources of misspecification that would result in moment conditions failing to hold exactly in any given empirical setting. Unfortunately, it is well-known that estimation and inference results obtained under the assumption that moment restrictions hold exactly can be substantially distorted in the sense of returning biased estimates and delivering unreliable conclusions. See, for example, \cite{hansen2001acknowledging, hansen2008robustness, hansen2010fragile}, \cite{hall2003large}, \cite{cheng2015uniform}, \cite{CHEN2024105653} and \cite{hansen2021inference}.

In this paper, we consider one approach to estimation and inference within a semiparametric structural model characterized by a set of moment restrictions, allowing for the possibility that the specified moment conditions do not hold exactly. We consider a setting where a researcher has access to an observable independent and identically distributed (i.i.d.) data stream $\{Z_t\}_{t=1}^{T}$ realized from unknown distribution $\mathbb{P}_{\mu_*}$.\footnote{While we maintain the assumption that the $Z_t$ are independent and identically distributed, some of our theoretical results hold more generally. We briefly comment on extensions to non-i.i.d. settings in Section \ref{simple}.}
The researcher specifies a structural model, defined in terms of an economically meaningful $k$-dimensional parameter vector $\theta_*$, that restricts the joint distribution via moment conditions $m(\theta_*)=\mathbb{E}_{\mathbb{P}_{\mu_*}}[g(Z_{t},\theta_*)] = \mu_*$ for some $q \geq k$ dimensional vector $\mu_*$.
Of course, informative inference about $\theta_*$ is impossible without beliefs about $\mu_*$.
Classical estimation and inference proceed under the dogmatic prior belief that $\mu_* \equiv 0$.

Rather than adopt dogmatic prior beliefs, we conceptualize the notion that moment restrictions are plausible---but not known to hold exactly---by assuming the researcher is able to place an informative prior distribution over $\mu_*$. The use of a proper prior over $\mu_*$ allows informative inference about model parameters to proceed while relaxing the usual restriction that $\mu_* \equiv 0$.  
By concentrating this prior over 0, we capture the notion that a researcher subjectively believes the structural moment restrictions are likely to be correct. The spread and shape of the prior away from 0 further captures the researchers' beliefs about likely economically motivated possible deviations from the baseline structural model. Thus, the use of a proper prior distribution over $\mu_*$ 
provides a way for researchers to explicitly encode their subjective beliefs over the plausibility of their structural model.

Given that we wish to only leverage structural moment conditions and choose to conceptualize the plausibility of these moment conditions by using a proper prior distribution, it is natural to consider estimation and inference based on approximate or quasi-Bayesian posteriors (QBP), as in \cite{kim:lilgmm} and \cite{chernozhukov2003mcmc}.\footnote{In \cite{chernozhukov2003mcmc}, estimators produced from quasi-Bayesian posteriors are referred to as Laplace-Type Estimators (LTEs). These approaches are also connected to ``probably approximately correct inference,'' e.g., \cite{catoni:learning}.} QBPs provide a tractable approach to approximate Bayesian estimation and inference in traditional semiparametric moment condition models where 
$\mu_* \equiv 0$ 
is imposed; see, e.g., \cite{kim:lilgmm}, \cite{gallant:inducedspace}, \cite{fs:bayesmoment}, and \cite{andrews2022optimal}. Outside of the Bayesian motivation, \cite{chernozhukov2003mcmc} demonstrate that these estimators have desirable frequentist properties within the moment condition framework when $\mu_* \equiv 0$.  
In addition, \cite{andrews2022optimal} verifies that quasi-Bayes decision rules approximate Bayes' optimal decision rules within the weakly identified GMM framework.  

In this paper, we extend the QBP framework to deal with settings where moment conditions are not assumed to hold exactly, i.e., to settings with non-dogmatic prior over 
$\mu_*$. We refer to estimation and inference within this setting as plausible GMM (PGMM). 
A central challenge in this setting is that $\theta_*$ and $\mu_*$ 
are not jointly identified, which implies that the impact of priors is not asymptotically negligible. We develop new technical results that address this challenge and show that, under suitable regularity conditions, key properties of the QBP framework extend to the PGMM setting. Building on \cite{andrews2022optimal}, we verify that quasi-Bayes decision rules approximate Bayes optimal decision rules given the provided subjective priors. We also generalize the results of \cite{chernozhukov2003mcmc} to verify that interval estimates from QBPs have a well-defined \textit{ex-ante} notion of frequentist coverage under a sampling process where nature first draws a degree of misspecification from the subjective prior for $\mu_*$ and then the model realizes conditional on this draw as in \cite{conley2012plausibly} and analogous to the coverage notion considered in \cite{andrews:pseudo}. Finally, we provide novel large sample approximations for quasi-posterior distributions within this partially identified framework, allowing for the dimension of both 
$\theta_*$ and $\mu_*$
to increase with the sample size. These results can be viewed as new Bernstein-von Mises type theorems that explicitly account for additional terms that arise when dealing with misspecification.

We illustrate the use of QBPs with proper priors over the degree of misspecification, 
$\mu_*$, via two empirical examples.  
A cost of allowing for potential misspecification by considering non-dogmatic beliefs about 
$\mu_*$  
is that inferential statements must be less precise than those obtained under dogmatic beliefs. The empirical applications demonstrate that one can still draw economically meaningful conclusions using our approach under what we consider to be sensible beliefs about model misspecification. The approach thus potentially enhances the credibility of the qualitative empirical results. We also use the empirical examples to discuss prior choice, illustrate the impact of prior choice on the resulting quasi-posteriors, and discuss sensitivity analysis focused on prior choice.   

There is a large existing literature on sensitivity analysis and partial identification. Much of this research focuses on establishing formal frequentist guarantees for estimating and performing inference on the identified set. See, e.g., \cite{Canay_Shaikh_2017} and \cite{molinari:review} for excellent reviews and \cite{norets2014semiparametric}, \cite{kline2016bayesian}, \cite{chib2018bayesian}, \cite{liao2019bayesian}, \cite{giacomini2021robust}, \cite{giacomini2022robust}, and \cite{kuangbayesian} for approaches leveraging Bayesian methods.

Within this literature, our work is closely related to \cite{armstrong2021sensitivity}. They consider moment condition models in which, under correct specification, $m(\theta_*) = 0$ for a true population parameter $\theta_*$. Misspecification is introduced by allowing $m(\theta_*) = C_T \neq 0$, where $C_T$ is unknown but constrained to lie in a known set; see also \cite{bonhomme2022minimizing}. \cite{armstrong2021sensitivity} focus on the case $C_T = c/\sqrt{T}$, where misspecification is local in the sense that it is of the same order as sampling uncertainty. Their framework, however, also accommodates settings in which misspecification does not vary with sample size. They develop a tractable frequentist approach to valid inference, show how sensitivity analysis can be conducted by varying the admissible set for $C_T$, and demonstrate that optimal GMM weighting matrices in this environment optimally trade off sampling variability against potential misspecification.

Our analysis delivers an analogous but conceptually distinct result from a quasi-Bayesian perspective. In particular, we show that QBPs are centered on a GMM estimator whose weighting matrix trades off moment precision with misspecification in a manner related to \cite{armstrong2021sensitivity}. This trade-off emerges endogenously from the quasi-posterior when Gaussian priors with variance proportional to $1/T$ are imposed, rather than being imposed through a minimax or sensitivity criterion. While the resulting weighting structure differs in detail, this connection highlights how robustness considerations arise naturally in quasi-Bayesian inference under local misspecification. \textcolor{black}{We further show that, in the Gaussian limit experiment, the Bayes credible interval produced by our procedure under a two-point least favorable prior coincides with the robust confidence interval of \cite{armstrong2021sensitivity}; see Supplementary Appendix Section~\ref{AK}.} This equivalence provides a formal link between the two approaches and clarifies the sense in which the quasi-Bayesian method recovers existing robust frequentist procedures in the least favorable case. It also helps explain why, under alternative priors, our confidence intervals are typically narrower than those obtained from the procedure of \cite{armstrong2021sensitivity}, as illustrated in the empirical application in Section~\ref{sec:empirical application}.

Another closely related paper is \cite{chen2018monte}, who use simulation from quasi-posteriors to construct confidence sets with frequentist coverage guarantees for identified sets in general settings, including moment condition models. They illustrate their approach in moment inequality models by augmenting the model with an auxiliary parameter $\mu_*$, imposing support restrictions implied by the moment inequalities, and profiling out $\mu_*$ rather than placing a prior on it. While their framework could in principle be extended to other settings with support restrictions on $\mu_*$, including the environment considered here, our focus is on a fundamentally different inferential regime in which a prior is imposed on $\mu_*$. This distinction is not innocuous: the presence of a prior has important implications for posterior concentration and alters the asymptotic behavior of the quasi-posterior in partially identified models.
Our resulting theory relies on arguments that differ from those in \cite{chen2018monte} and yields insights that are not directly available under profiling-based approaches. These theoretical differences are illustrated through concrete examples in Supplementary Appendix Section~\ref{chen}.
 
Our perspective is different from much of this previous work whose chief goal is establishing frequentist guarantees under partial identification as we wish to impose a proper subjective prior over $\mu_*$. That is, we mostly maintain a subjective Bayesian perspective as we believe there are settings where researchers will want to employ informative, subjective beliefs about potential misspecification. Our work thus aligns closely with the strand of Bayesian work on partial identification reviewed in \cite{gustafson:book}. An element of this work is establishing posterior concentration results. We contribute to this literature by providing such concentration results within the semiparametric moment condition framework where the source of partial identification is uncertainty about the potential misspecification. These concentration results also allow us to consider frequentist properties of posterior summaries and thus complement the broader literature on partial identification and sensitivity analysis.
 
Within the Bayesian literature on partial identification and misspecification, our setup resembles \cite{chib2018bayesian}. \cite{chib2018bayesian} consider a Bayesian semiparametric moment condition model that includes an auxiliary parameter equivalent to $\mu_*$
to capture the misspecification of some moment conditions. However, \cite{chib2018bayesian} focus on establishing posterior concentration on a well-defined pseudo-true value in the case of model misspecification, which requires that the number of free elements in $\mu_*$ is less than $q-k$. We instead allow for the possibility that all elements of $\mu_*$ are free, which precludes point identification of even a pseudo-true value and complicates establishing asymptotic concentration. Our formal results differ substantially in that priors have a non-negligible impact on asymptotic results, and posteriors do not generally concentrate on a unique pseudo-true value.
 
Our work also complements the recent contribution of \cite{andrews:pseudo}, who study Bayesian decision-making in over-identified population minimum distance problems under potential misspecification. By considering appropriately constructed misspecification-invariant statistics, they provide an approach for obtaining interval estimates that have a notion of \textit{ex ante} frequentist coverage under a class of distribution-invariant priors. Our approach applies in just-identified as well as over-identified settings under relatively general subjective priors over misspecification. Our motivation is chiefly Bayesian, but our approach provides a similar notion of \textit{ex ante} frequentist coverage. We thus view the two approaches as complementary. 

In summary, we develop a framework that treats potential misspecification in semiparametric moment condition models through a partial identification perspective induced by parameter over-parameterization. Rather than imposing the dogmatic benchmark that the moment restrictions hold exactly, we allow the researcher to place a proper prior on the degree of misspecification. This leads to a quasi-Bayesian approach to estimation, decision-making, and uncertainty quantification that remains informative under both global and local misspecification. We also show that informative prior restrictions can substantially sharpen the resulting identified set. More broadly, the quasi-posterior provides a tractable way to summarize uncertainty about both the parameter of interest and the extent of misspecification.

The remainder of the paper is organized as follows. In Section \ref{sec:The Approach: Main Ideas}, we more carefully discuss the main ideas and provide a convenient approximation result for the case when the prior for $\mu_*$ is taken to be Gaussian with precision proportional to the sample size---the case of ``local misspecification.'' We present the empirical illustrations in Section \ref{sec:empirical application},  and we provide formal results in Section \ref{sec:Theoretical results}. The Supplemental Appendix (indexed by the prefix ``SA'') and the Online Materials (indexed by the prefix ``OM'') provide additional details and proofs.

\section{The Approach: Main Ideas}\label{sec:The Approach: Main Ideas}

\subsection{\small Plausible Moment Restriction Model}
 
Suppose that we observe data $\{Z_{t}\}_{t=1}^{T}$ which are a realization from some unknown  distribution $\P_{\mu_*}$.  
Suppose that we also have a posited structural economic model, which provides a set of moment restrictions on the distribution $\P_{\mu_*}$ indexed by a $q$-dimensional parameter $\mu_* \in \mathcal{M}$. Specifically, suppose the structural model implies a set of $q \geq k$ equations for a $k$-dimensional parameter $\theta \in \Theta$
\begin{align*}
m(\theta)=\E_{\P_{\mu_*}}[g(Z_{t},\theta)],
\end{align*}
such that there exists a vector $\mu_*$ corresponding to a target parameter $\theta_*$ satisfying
\begin{align*}
m(\theta_*) = \mu_*.
\end{align*}

Of course, with no restrictions on the vector $\mu_*$, it is impossible to update beliefs about $\theta_*$ or the distribution $\P_{\mu_*}$ using the structural model. For any posited value of $\theta$ and distribution $\P_{\mu}$, we can always set $\mu = \E_{\P_{\mu}}[g(Z_{t},\theta)]$ such that the structural moment equation is satisfied.\footnote{Priors over $\theta_*$ and $\P_{\mu_*}$ produce restrictions over $\mu_*$, but the structural moment restriction adds no additional information if $\mu_*$ is left completely unrestricted.} Classical approaches to moment restriction models bypass this difficulty by assuming the vector $\mu_*$ is known to be a fixed, prespecified vector (without loss of generality $\mu_* \equiv 0$). This classical approach is equivalent to imposing the dogmatic prior that the researcher knows the structural moment equations hold exactly under $\P_{\mu_*}$---that is, the researcher has a dogmatic prior that the moment equations are correctly specified.

Unfortunately, it is hard to be fully confident that a set of structural moment restrictions hold exactly in many settings. For example, we may worry that there are unobserved confounding variables or that the functional form of the structural model is incorrect. We allow for departures from the dogmatic belief that the structural moment restrictions hold exactly by making use of a proper, non-degenerate prior distribution over $\mu_*$, denoted $\pi(\mu)$. The use of a proper prior over $\mu_*$ allows moment restrictions to be informative in updating beliefs about $\theta_*$ while falling short of imposing the often implausible restriction that moment restrictions hold exactly.

As a concrete example, consider the constant coefficient linear model
\begin{align*}
Y_t = X_t \theta_* + U_t,
\end{align*}
where $X_t$ is an observed variable with $\E_{\P_{\mu_*}}[X_t U_t] \neq 0$. Further, suppose we observe an additional variable $D_t$  that, based on economic reasoning or institutional knowledge, we believe satisfies the usual instrument exclusion restriction $\E_{\P_{\mu_*}}[D_t U_t] = 0$ for $t = 1,...,T$. Under this belief, we obtain the moment restriction $\E_{\P_{\mu_*}}[D_t U_t] = 0$ which can be used to identify the structural parameter $\theta_*$.

However, it is hard to know that the IV exclusion restriction holds exactly. For example, we might worry that there exists an unobserved confound, $M_t$, that covaries with both $Y_t$ and $D_t$ such that $U_t = M_t + V_t$, $\E_{\P_{\mu_*}}[D_t M_t]  =\mu_* \neq 0$ and $\E_{\P_{\mu_*}}[D_t V_t] = 0$. Imposing the moment restriction $\E_{\P_{\mu_*}}[D_t U_t] = \E_{\P_{\mu_*}}[D_t (Y_t - \theta X_t)] = 0$ and solving for $\theta$ produces
\begin{align*}
\theta &= \left(\E_{\P_{\mu_*}}[D_t X_t]\right)^{-1}\E_{\P_{\mu_*}}[D_t Y_t] = \theta_* +  \left(\E_{\P_{\mu_*}}[D_t X_t]\right)^{-1}\mu_* \neq \theta_*.
\end{align*}

Within the IV example, we might instead consider the restriction $\E_{\P_{\mu_*}}[D_t (Y_t - \theta_* X_t)] = \mu_*$ where we assume that $\mu_*$ is a fixed realization from a random variable $\mu$, e.g., $\mu \sim N(0,\sigma^2)$.
Here, the assumed distribution captures the notion that the researcher believes the instrument is ``close to'' being valid in that the prior mass for $\mu$ is concentrated around 0. The distribution also encapsulates that the researcher believes it is incredibly unlikely that the moment restriction is perfect as $\{\mu = 0\}$ occurs with zero probability under such a distribution. Finally, the researcher can control beliefs about the strength of the unobserved confound via the prior variance, $\sigma^2$, while technically allowing for $\mu_*$ to be unbounded. That is, the proper prior over $\mu_*$ allows a well-defined and concrete description of the moment restriction being plausibly, but not certainly, satisfied.

To summarize, we are interested in a formalized version of a ``plausible'' moment restriction model characterized by parameters $(\theta,\mu)$ such that 
\begin{align*}
m(\theta) = \mu
\end{align*}
and $\mu$ is governed by a prior distribution with density $\pi(\mu)$. We refer to $\mu$ as the ``plausibility characteristic,'' and we denote any root of the equation $m(\theta) = \mu$ as $\theta(\mu)$. For establishing some of the formal results in Section \ref{sec:Theoretical results}, we will assume that $\pi(\mu)$ places strictly positive mass over a region $\Gamma$ such that solutions $\theta(\mu)$ exist for $\mu \in \Gamma$. This prior restriction is essentially trivially satisfied for any prior when $q = k$; see, e.g. \cite{hall2003large}. However, satisfaction of this assumption is not guaranteed with $q > k$, suggesting that care should be taken in adding moment conditions about which a researcher has relatively weak prior beliefs unless the researcher is willing to use very diffuse priors.\footnote{From a frequentist perspective, \cite{armstrong2021sensitivity} note that usual overidentification statistics can be used to infer lower bounds on the magnitude of $\mu$. \cite{andrews:pseudo} consider an interesting different approach in overidentified settings that operationally inflates the size of confidence sets based on the magnitude of overidentification statistics.}

In the next section, we outline a quasi-Bayesian approach to perform inference on our main target parameter $\theta$.

\subsection{\small Quasi-Bayes for Plausible Moment Restrictions}
We adopt a quasi-Bayesian approach to performing inference within the plausible moment restriction model. Let 
\begin{align*}
\widehat m(\theta) := \frac{1}{T} \sum_{t=1}^T g(Z_t, \theta)
\end{align*}
be the average of $g(Z_t,\theta)$ against the empirical distribution at a given value of $\theta$. We can then define a continuous updating GMM-type criterion function for parameters $(\theta,\mu)$ as 
\begin{align}\label{eq: GMM criterion}
Q_{T}(\theta,\mu)=- {T}\left(\widehat{m}(\theta)-\mu\right)^{\top}\widehat{\Omega}_{T}(\theta)^{-1}\left(\widehat{m}(\theta)-\mu\right)
\end{align}
for $\widehat{\Omega}_{T}(\theta)$ a positive definite matrix approximating  $$\Omega(\theta)= \lim_{T\to \infty} \text{Var}\left(\sqrt{T}(\widehat{m}(\theta)-m(\theta))\right).$$ For example, it would make sense to use 
$$\widehat{\Omega}_T(\theta) = \frac{1}{T}\sum_{t=1}^{T} \left(g(Z_t, \theta)-\widehat{m}(\theta)\right)\left(g(Z_t,\theta)-\widehat{m}(\theta)\right)^{\top}$$
under the assumption that the $Z_t$ are i.i.d.

A quasi-posterior based on criterion \eqref{eq: GMM criterion} is then obtained as
\begin{align}\label{eq: Posterior}
p_{T}(\theta,\mu) = \frac{\exp\left(\frac{1}{2}Q_T(\theta,\mu)\right)\pi(\theta,\mu)}{\int_{\Xi} \exp\left(\frac{1}{2} Q_T(\theta,\mu)\right)\pi(\theta,\mu)d\mu d\theta}
\end{align}
where $\pi(\theta,\mu) = \pi(\theta|\mu)\pi(\mu)$ is the joint prior over $(\theta,\mu)$ and  $\Xi$  
is the corresponding joint prior support.  
This quasi-posterior is constructed directly from the sample criterion function together with the prior on $(\theta,\mu)$, and therefore remains well defined even when $(\theta,\mu)$ is not point identified. In many applications, researchers may choose a flat prior over $\theta$, i.e., $\pi(\theta|\mu) \propto 1$. At the same time, the quasi-Bayesian framework also provides a convenient way to incorporate economically motivated prior information about $\theta$ when appropriate. In general, $p_{T}(\theta,\mu)$ will not be available analytically but will need to be approximated using Markov Chain Monte Carlo (MCMC) or other approximation methods; see, e.g., \cite{MCMCbook} for a classic textbook introduction. We also provide a simple Gaussian approximation to the posterior in a setting where the prior for $\mu$ is taken to be normal with a small variance in Section \ref{sec: local}.

We obtain marginal posteriors for $\theta$ and $\mu$ as usual by integration:
\begin{align*}
p_T(\theta) = \int_{\mathcal{M}} p_{T}(\theta,\mu)d\mu \ \text{and} \ p_T(\mu) = \int_{\Theta} p_{T}(\theta,\mu)d\theta,
\end{align*}
where $\mathcal{M}$   
and $\Theta$ respectively denote the support of $\mu$ and $\theta$. $p_T(\theta)$ captures posterior information about the economically meaningful parameter $\theta$ and thus is the chief object of interest. $p_T(\mu)$ also potentially provides interesting information by summarizing posterior beliefs about the plausibility term $\mu$.\footnote{Note that, while $\theta$ and $\mu$ are not jointly identified, the imposition of a proper prior over either $\theta$ or $\mu$ will lead to posterior updating over both $\theta$ and $\mu$, even in settings with $q = k$.} 

Given the quasi-posterior, we can also immediately formulate optimal decisions under the quasi-posterior by minimizing the quasi-posterior expected risk. Specifically, let $\ell(\theta,\mu,d)$ be a loss function that depends on the underlying model parameters $(\theta,\mu)$ and a decision $d \in \mathcal{D}$.\footnote{In most applications, this loss function will depend only on the economically motivated parameter $\theta$, but we allow the loss to be over $\mu$ as well.} We can then define the expected loss minimizing decision under the quasi-posterior, denoted $s_T(p_T)$, as
\begin{align}\label{eq: decisions}
s_T(p_T) \in \arg\min_{d \in \mathcal{D}} \int_{\Xi} \ell(\theta,\mu,d) p_T(\theta,\mu) d\mu d\theta.
\end{align}

The quasi-posterior \eqref{eq: Posterior} and inferential objects obtained from it, such as optimal decisions or credible intervals, can be given an approximate Bayesian interpretation. \cite{fs:bayesmoment} and \cite{andrews2022optimal} study the classic semiparametric moment condition model under correct specification such that $m(\theta_*) \equiv 0$, i.e., under the dogmatic prior that $\mu \equiv 0$.  \cite{fs:bayesmoment} provide prior choices for the unknown model density such that the analog of \eqref{eq: Posterior} within this setting corresponds to the posterior for $\theta$ as the limit when the prior becomes diffuse. By augmenting the parameter space to include $\mu$, \eqref{eq: Posterior} can be obtained as the posterior over $(\theta,\mu)$ under the prior structure of \cite{fs:bayesmoment}. \cite{andrews2022optimal} study optimal decision rules in weakly identified moment condition models under correct specification such that $m(\theta_*) \equiv 0$. \cite{andrews2022optimal} establish that the analog of \eqref{eq: Posterior} for their setting, corresponding to \eqref{eq: Posterior} under the dogmatic prior that $\mu \equiv 0$, results as the limit of a sequence of posteriors under a specific choice of proper priors. The resulting quasi-Bayes decision rule then corresponds to the pointwise limit of the sequence of Bayes decision rules and can thus be motivated as approximating the optimal Bayes decision. We show that these results continue to apply in our setting with non-dogmatic prior over $\mu$ in Section \ref{sec:Theoretical results}. Of course, given the non-degenerate prior over $\mu$, the optimal Bayes decision depends explicitly on not only the prior for $\theta$ as in \cite{andrews2022optimal} but also on the prior for $\mu$. See also \cite{kim:lilgmm} and \cite{gallant:inducedspace} for additional Bayesian motivation and perspective. 

From a purely frequentist perspective, \cite{chernozhukov2003mcmc} verify that inference based on the quasi-posterior is asymptotically equivalent to inference based on efficient GMM in strongly identified settings under correct specification ($m(\theta_*) \equiv 0$). \cite{chernozhukov2003mcmc} further argue that basing frequentist estimation and inference on posterior summaries from \eqref{eq: Posterior}, such as using the posterior mean as a point estimator and posterior credibility interval as a confidence interval, may be desirable in settings where \eqref{eq: GMM criterion} is hard to optimize. However, credible intervals resulting from the quasi-posterior \eqref{eq: Posterior} within the partially identified setting where $\mu$ follows a non-degenerate prior no longer generally deliver usual frequentist coverage guarantees, though they do still have an approximate Bayes interpretation; see \cite{moon2012bayesian} and \cite{gustafson:book}.\footnote{\cite{andrews2022optimal} also show that quasi-posterior interval estimates generally do not provide correct frequentist coverage in weakly identified settings and suggest a procedure to obtain confidence sets with proper frequentist coverage. We choose to focus our coverage results on strongly identified settings while allowing for a non-dogmatic prior over $\mu$. In principle, the weak identification robust confidence set construction of \cite{andrews2022optimal} could also be incorporated in the present setting.} 

We extend the results from \cite{chernozhukov2003mcmc} by studying the frequentist properties of the quasi-Bayes posterior within our setting with a non-dogmatic prior over $\mu$ allowing for sequences with increasing $k$ and $q$ in Section \ref{sec:Theoretical results}. Within this setting, we provide new Bernstein-von Mises type posterior concentration results showing that the quasi-posterior converges to a mixture of Gaussian distributions where the mixture weights and components depend heavily on the specific prior $\pi(\mu)$. 
That is, the posterior aligns with our intuition about partial identification in that the prior for $\mu$ plays a key role in the shape of the posterior even in the limit. See also \cite{gustafson:book}. In the case that one has a dogmatic prior over $\mu$---e.g., $\mu \equiv \mu_*$---our concentration result reproduces \cite{chernozhukov2003mcmc}, in the sense that our posterior approximation collapses to a Gaussian random variable with center $\theta(\mu_*)$ and variance equal to the limiting variance of the efficient GMM estimator in this case.

Based on the posterior concentration results, we then have that usual Bayesian credible regions from the quasi-posterior \eqref{eq: Posterior} have correct frequentist coverage within a two-stage sampling thought experiment where each repeated sample corresponds to drawing a value $\mu_*$ from a random variable with density $\pi(\mu)$ and then generating data such that $m(\theta(\mu_*)) = \mu_*$. Alternatively, one can view this notion of coverage as providing an \emph{ex ante} coverage guarantee in a setting where a single value $\mu_*$ will be realized from a random variable with density $\pi(\mu)$. 

Given that the two-stage sampling notion of coverage is non-standard, we consider a final approach to leveraging \eqref{eq: Posterior} to provide a confidence set with a uniform frequentist coverage guarantee when the plausibility characteristic is taken to be some fixed vector, $\mu_0$, whose value is unknown, but where it is known that $\mu_0 \in C$ for some known compact set $C$. The basic idea is that we can use QBPs exactly as in \cite{chernozhukov2003mcmc} to obtain point and interval estimates that would be asymptotically equivalent to efficient GMM for any fixed $\mu_* \in C$ under strong identification. Letting $CI(\mu_*,\alpha)$ be the resulting $(1-\alpha)\times 100\%$ credible interval, it then follows that $\cup_{\mu_* \in C} CI(\mu_*,\alpha)$ has coverage at least $(1-\alpha)\%$ for $\theta(\mu_0)$. This approach mimics, e.g., the union of confidence intervals approach from \cite{conley2012plausibly} and the approach outlined in Remark 3.3 of \cite{armstrong2021sensitivity}.

\subsection{\small Simple Quasi-Bayes Inference using Gaussian Local Priors}\label{sec: local}

In this section, we outline an approach to doing approximate quasi-Bayes inference when the prior for the plausibility term $\mu$ is normal with a small variance. Specifically, suppose that our prior is 
\begin{align}\label{eq: local prior}
\mu \sim \mathcal{N}\left(\mu_0, \frac{\Lambda}{T}\right)
\end{align}
for some fixed $q$ dimensional vector $\mu_0$ and a fixed, full-rank 
$q \times q$ matrix $\Lambda$.\footnote{We could relax the restriction that $\Lambda$ is full-rank at the cost of a modest complication of notation.} 
A simple form of $\Lambda$ is a diagonal matrix with $\lambda_k$'s on the diagonal, where a small $\lambda_k$ indicates that there is little uncertainty about the plausibility of the $k$-th moment and larger values indicate higher uncertainty.

Intuitively, prior \eqref{eq: local prior} captures the case where misspecification is believed to be small but non-zero in the sense that we believe the moment conditions ``almost'' hold with $m(\theta_0) = \mu_0$. Considering a sequence of priors with variance of order $1/T$ means that prior uncertainty concentrates at the same rate as the sample moments, so neither will dominate as we consider large $T$ asymptotic approximations. Following the literature, we refer to \eqref{eq: local prior} as a local prior; see, e.g. \cite{conley2012plausibly} and \cite{armstrong2021sensitivity}.

We now provide an approximation to the quasi-posterior with $\pi(\mu)$ defined in \eqref{eq: local prior} and a flat prior over $\theta$. We assume $\mu_0$ is such that a solution $\theta(\mu_0)$ satisfying $m(\theta(\mu_0)) = \mu_0$ exists. For simplicity, we further set $\mu_0 = 0$. We assume strong identification in the sense that $m(\theta(\mu))=\mu$ has unique solution $\theta(\mu)$ for each $\mu$ and that the following linearization around $\theta_0 \equiv \theta(\mu_0)$ holds:
$$
m(\theta(\mu)) = G \left(\theta(\mu)- \theta_0\right) + o\left( \| \theta(\mu) - \theta_0\|\right),
$$
where $G=\partial \E(\widehat{m}(\theta))/\partial \theta |_{\theta = \theta_0}$ and $G^{\top}G$ has minimal eigenvalue bounded away from zero.
We focus here on the strong identification case because it yields a transparent approximation that captures important forces shaping the quasi-posterior. Further, define the weighting matrix $\widehat{A}_{T,\theta}$ and its population counterpart $A_\theta$:
\begin{align*}
&\widehat{A}_{T,\theta}=\widehat{\Omega}_T(\theta)^{-1}-\widehat{\Omega}_T(\theta)^{-1}[\Lambda^{-1}+\widehat{\Omega}_T(\theta)^{-1}]^{-1}\widehat{\Omega}_T(\theta)^{-1},\\
& A_\theta=\Omega(\theta)^{-1}-\Omega(\theta)^{-1}[\Lambda^{-1}+\Omega(\theta)^{-1}]^{-1}\Omega(\theta)^{-1}.
\end{align*}

Under these conditions, we show 
in Section \ref{sec:Theoretical results} that $p_{T}(\theta)$ is approximately proportional to 
$$
  \exp\left(-T\|\widehat m(\widehat \theta) + G (\theta - \widehat \theta) \|_{\widehat{A}_{T,\theta}}^{2}/2\right),
$$
where the quasi-posterior mode $\widehat{\theta}$ is the GMM estimator obtained using weighting matrix $\widehat{A}_{T,\theta}$. That is, we can approximate the quasi-posterior for $\theta$ as
\begin{align}\label{eq: normal approx}
\theta \approx \mathcal{N}\left(\widehat\theta, \frac{V}{T}\right), \quad V= \left(G^{\top} \widehat{A}_{T,\theta} G\right)^{-1}.
\end{align}
Note that the weighting matrix $A_{\theta_0}$ in the quasi-posterior is different from the standard efficient GMM weighting matrix $\Omega(\theta_0)^{-1}$.\footnote{As $\widehat{A}_{T,\theta}$ will converge to $A_{\theta_0}$ with $T\to \infty$, we provide intuition as if $A_{\theta_0}$ were known.} Specifically, $A_{\theta_0}$ reflects additional uncertainty brought by not having a fixed, known value for $\mu$. We do see that we recover the case of efficient GMM by letting $\Lambda \to 0$---i.e. by assuming that there is no uncertainty over the moment conditions. 

The quasi-posterior has several interesting features. The center of the quasi-posterior, $\widehat\theta$, corresponds to the classical GMM estimator that uses the weighting matrix $A_{\theta_0}$ rather than the efficient weighting matrix $\Omega(\theta_0)^{-1}$. This centering is intuitive as $\Omega(\theta_0)$ captures only sampling variation in the moments but does not reflect the additional uncertainty arising from the plausibility of the moments. The weighting matrix $A_{\theta_0}$ incorporates both sources of uncertainty, intuitively placing the most weight on moments about which the researcher is most confident in the sense that combined sampling variability and plausibility uncertainty is lowest. 

Looking at the quasi-posterior variance, there are two further noteworthy features. First, the variance matrix $V = (G^{\top}A_{\theta_0} G)^{-1} \geq (G^{\top} \Omega({\theta_0})^{-1} G)^{-1}$, where $(G^{\top}\Omega({\theta_0})^{-1} G)^{-1}$ is the usual asymptotic variance of the efficient GMM estimator. The variance matrix $V$ thus captures additional uncertainty, relative to efficient GMM, introduced by a lack of certainty over the validity of the moment restrictions. Second, the approximate sampling distribution of $\widehat\theta$ is 
$$
\sqrt{T}(\widehat{\theta} -\theta_0 )\rightarrow_d \mathcal{N}(0, \bar V), \quad \bar V = (G^{\top} A_{\theta_0} G)^{-1} G^{\top} A_{\theta_0} \Omega({\theta_0}) A_{\theta_0} G (G^{\top} A_{\theta_0} G )^{-1}, 
$$
where $\bar V \leq V$ because $A_{\theta_0} \Omega({\theta_0}) A_{\theta_0} \leq A_{\theta_0}$. Thus, the quasi-posterior variance is also larger than the asymptotic variance of the $A_{{\theta_0}}$-weighted GMM estimator. Again, this larger quasi-posterior variance arises because the sampling distribution of the $A_{{\theta_0}}$-weighted GMM estimator is obtained under the dogmatic belief that $\mu \equiv 0$ and thus does not reflect uncertainty about the validity of the moment restrictions outside of through reweighting the moment conditions.

To summarize, the approximation in \eqref{eq: normal approx} provides a very simple avenue to obtain approximate Bayesian inference under local Gaussian priors. While restrictive, it does seem like a Gaussian prior with small variance may provide a reasonable model for subjective beliefs about moment condition violations in some settings, and we illustrate the use of the approximation, along with illustrating simulation of the full quasi-posterior, in the empirical examples in the next section. More importantly, the approximation captures the clear intuition that there is no free lunch. Incorporating non-dogmatic priors over moment condition violations naturally results in less informative inference relative to the case where dogmatic priors are imposed---reflecting the researcher's uncertainty about the validity of the moment restrictions. This property seems desirable as the resulting inference likely more accurately reflects what can be learned in real empirical settings where model uncertainty exists.
  
\section{Empirical Applications}\label{sec:empirical application}

This section applies plausible GMM in two illustrative empirical applications. In the first, we revisit \cite{acemoglu2001colonial}, which uses linear instrumental variables (IV) to study the effect of institutions on economic output in a relatively small sample. In the second, we revisit \cite{CH:401krestat}, which uses IV quantile regression to examine the effects of 401(k) participation on measures of household assets.   

\subsection{\small Linear IV Example: Effect of Institutions on GDP}\label{subsec:Illustration Example 1} 

We start by illustrating our methodology by revisiting the classic study of \cite{acemoglu2001colonial}, which investigates the effect of institutions on economic performance. 
The outcome variable $Y_t$ is the log of PPP-adjusted GDP per capita in 1995 where $t = 1, \ldots, 64$ indexes a set of countries that are ex-European colonies. The main regressor of interest, $X_t$, is a ten-point index measuring protection against expropriation risk, serving as a proxy for institutional quality. We consider a baseline specification from \cite{acemoglu2001colonial} which includes normalized distance from the equator, $W_t$, as a control for geographic factors.

To address concerns about the endogeneity of $X_t$, \cite{acemoglu2001colonial} adopt an IV strategy. Following \cite{acemoglu2001colonial}, we consider two IV specifications. The first, denoted Linear IV(1), uses the log of settler mortality as the sole instrument. This specification corresponds to the baseline in the original study. The second specification  adds the proportion of the population of European descent in 1900 as an additional instrument as is done in a robustness exercise in the original paper. This specification allows us to illustrate our procedure in a setting with an overidentifying moment restriction. We refer to this specification as Linear IV(2).

Formally, we consider the linear IV model
$$
Y_t = \alpha + \beta_X X_t + \beta_W W_t + U_t,
$$
with parameter vector $\theta = (\alpha, \beta_X, \beta_W)^\top$ and moment condition
$$
g(Z_t, \theta) = (1, W_t, D_t^\top)^\top \left( Y_t - \alpha - X_t \beta_X - W_t \beta_W \right),
$$
where $D_t$ denotes the vector of instruments.

To implement PGMM, we must specify priors for the parameters $(\theta, \mu)$. We set the prior for $\theta$ as $\mathcal{N}(0, \text{diag}(100, 4, 64))$. We set the prior variances for the elements of $\theta$ via a loose argument based on economic intuition. For example, we know that the $X_t$ is measured on a 10-point scale with empirical 25th and 75th percentiles equal to 5.6 and 7.8, respectively, and an empirical range of $3.5$ to $10$. A coefficient of $\beta_X$ of approximately $.5$ would thus suggest moving from the 25th to 75th percentiles of $X_t$ is associated with around a one log unit change (around a 170\% change) in GDP, which seems economically quite large. We thus feel comfortable placing a relatively low prior probability on $\beta_X$ having a magnitude larger than 4. We use the same rationale for our choice of the prior over $\alpha$ and $\beta_W$. 

To specify the prior for $\mu$, we use reasoning based on the IV model. Specifically, we consider a benchmark where model misspecification arises from an omitted latent factor, $C_t$, that is correlated with the exogenous variables and may directly affect the outcome. That is, we consider an ``augmented'' model
$$
Y_t = \alpha + \beta_X X_t + \beta_W W_t + C_t + U_t,
$$
where we represent $C_t$ as $C_t=(1, W_t, D_t^\top)\pi$ to capture misspecification that is linearly related to the exogenous variables. In the case of Linear IV(1), this augmented structure implies that, for a given $\pi$, the moment equation becomes $\mathbb{E}[g(Z_t, \theta)] = \E[(1, W_t, D_t)^\top (1, W_t, D_t)] \pi = \mu$.

We can then capture subjective beliefs about misspecification by placing a proper prior over $\pi$. We start by centering this prior at 0, reflecting the subjective belief that the arguments for exclusion and exogeneity are compelling enough to center our beliefs on no direct effect of the instrument and no endogeneity in the control variable. We then focus on the element of $\pi$ associated with the excluded instrument. Given that $D_t$ is log mortality among Europeans several hundred years prior to 1995, one might reasonably believe there is limited scope for the direct effect of $D_t$ to be large. We benchmark our prior using the subjective belief that, with high probability, the elasticity of GDP with respect to settler mortality is no larger than 10\%, corresponding to the last entry of $\pi$ being no larger than 0.1. We encode these beliefs by specifying a mean-zero Gaussian prior for the entry of $\pi$ corresponding to $D_t$ with standard deviation $0.05$. We regard this as a sensible benchmark in this setting. It places substantial prior mass near 0, reflecting the view that the argument in \cite{acemoglu2001colonial} is broadly persuasive, while still allowing for moderate violations and assigning less mass to larger departures.
We complete the prior by assuming the other entries of $\pi$ behave similarly to the one corresponding to the instrument. Specifically, our baseline (denoted ``PGMM-g'') uses a Gaussian prior given by 
$$
\mu \sim \mathcal{N}(0, \Sigma_T \Omega_d \Sigma_T^\top),
$$
where $\Sigma_T = T^{-1} \sum_{t=1}^{T} (1, W_t, D_t^\top)^\top (1, W_t, D_t^\top)$ and $\Omega_d = 0.05^2 I_3$.

For Linear IV(2), we follow a similar approach. Here, the additional instrument is the proportion of the population of European descent. Assuming that its direct impact on the outcome is, with high probability, no greater than 0.01 (a semi-elasticity of 1\%), we extend the Gaussian prior construction used in Linear IV(1) by setting $\Sigma_T = T^{-1} \sum_{t=1}^{T} (1, W_t, D_t^\top)^\top (1, W_t, D_t^\top)$ and 
$$
\Omega_d = \text{diag}(0.05^2 I_3, 0.005^2),
$$
where the final diagonal entry corresponds to the new instrument.

Of course, it is important to gauge the sensitivity of the posterior to the prior specification. We thus report results using two additional simple prior settings. In the first, we consider a more diffuse prior for $\mu$ (denoted ``PGMM(d)-g''), given by $\mathcal{N}(0, c \Sigma_T \Omega_d \Sigma_T^\top)$ with $c=4$. As a second alternative, we also consider a uniform prior for $\mu$ (denoted ``PGMM-u''), distributed uniformly over the elliptical region $\mathcal{C} = \{ \left(\Sigma_T \Omega_d \Sigma_T^\top\right)^{1/2}c: c^{\top}c \leq \chi^2_{0.68}(q)\},$ where $q$ is the number of moment conditions, and $\chi^2_{0.68}(q)$ denotes the 0.68 quantile of $\chi^2(q)$. This prior thus allocates all probability mass to the 68\% highest density region of the Gaussian prior used in the ``PGMM-g'' case.

 \begin{figure}[htbp]
        \centering
        \includegraphics[width=\textwidth]{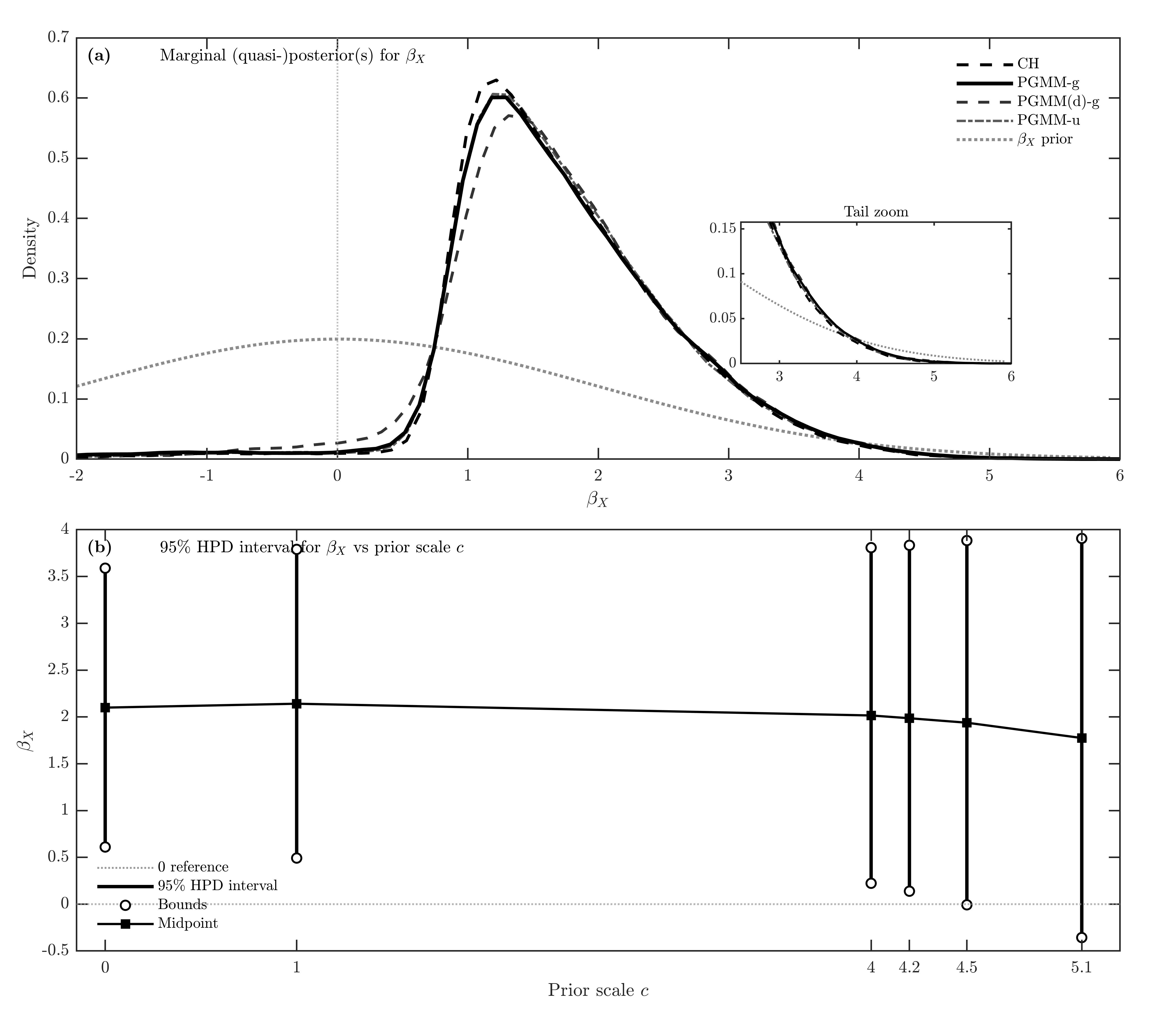} 
\caption{Upper panel (a): Linear IV (1) marginal (quasi-)posterior(s) for $\beta_X$ and its marginal prior (dotted curve). Lower panel (b): Linear IV(1) 95\% HPD intervals for $\beta_X$ resulting from the PGMM approach along various priors for $\mu$, i.e., $ \mathcal{N}(0,c\Sigma_T \Omega_d \Sigma_T^\top)$ for various values of $c$. 
 } \label{fig:ajr-empirical}
\end{figure}

We report the PGMM quasi-posterior obtained under our different priors, along with the quasi-posterior from \cite{chernozhukov2003mcmc} obtained under the dogmatic prior $\mu \equiv 0$ (labeled "CH"), for $\beta_X$ in the specification with one excluded instrument in the top panel of Figure \ref{fig:ajr-empirical}.\footnote{Quasi-posteriors in the Linear IV(2) case, shown in Figure \ref{fig:ajr-empirical:2}, exhibit similar patterns. We also present posteriors for elements of $\mu$, which roughly align with the corresponding priors, in Figure \ref{fig:ajr-empirical:3}.} We see that, in terms of $\beta_X$, the quasi-posteriors are relatively robust to the prior over $\mu$. As anticipated, we see that the quasi-posteriors become somewhat more diffuse as the prior dispersion increases from $\mu \equiv 0$ to PGMM-g to PGMM(d)-g, although the changes in dispersion are relatively small despite the large increase in prior dispersion for $\mu$ across these cases. Unsurprisingly given the design of the uniform prior, we also see that both the benchmark Gaussian prior (PGMM-g) and related uniform prior (PGMM-u) produce very similar quasi-posteriors for $\beta_X$.  

For additional insight, we provide 95\% HPD intervals for $\beta_X$ under $\mu \sim \mathcal{N}(0, c \Sigma_T \Omega_d \Sigma_T^\top)$ for additional values of $c$ in the bottom panel of Figure \ref{fig:ajr-empirical}. Here, we see that the lower bound of the HPD interval is relatively stable for values of $c \leq 4$. However, the lower bound then decreases relatively quickly as $c$ increases away from 4, crossing 0 at $c \approx 4.5$. That is, under our prior for $\theta$ and class of priors for $\mu$, posterior mass remains largely concentrated over positive effects even allowing for what appear to us be economically large deviations from correct specification. 

Finally, we observe that the left tails of the quasi-posteriors in Figure \ref{fig:ajr-empirical} are very similar to the upper tail of the maintained prior for $\beta_X$. That is, it appears that the behavior of the upper tail of the posteriors may be driven largely by the prior choice for $\theta$. Consequently, the upper bounds of the provided intervals are relatively insensitive to the prior for $\mu$. While unsurprising, we find this interplay between prior structure interesting, especially as researchers often have reasonable economic understanding about plausible values for structural parameters.

We report interval estimates obtained from a variety of procedures under both the Linear IV(1) and Linear IV(2) specification in Table \ref{tab:acemoglu2}. For frequentist methods, we report 95\% level confidence intervals, and we report 95\% level HPD regions for (quasi-)Bayesian procedures. Specifically, we consider intervals produced by applying the following: 
\begin{itemize}
\item \textbf{2SLS}: two-stage least squares with the usual asymptotic approximation.
\item \textbf{CUE}: continuous updating estimator with the usual asymptotic approximation.
\item \textbf{CH}: PGMM under $\mu \equiv 0$; the quasi-Bayes approach from \cite{chernozhukov2003mcmc}. 
\item \textbf{S}: inversion of the $S$ statistic; see Theorem 2 from \cite{stock2000gmm}. 
\item \textbf{AK}: robust intervals constructed using the local misspecification method of \cite{armstrong2021sensitivity} assuming a true value $\theta_*$ such that  
$g(\theta_*) =c/\sqrt{T}, c\in \mathcal{C}$ for $\mathcal{C}$ specified as in PGMM-u. 
\item \textbf{PGMM-u}: PGMM with uniform prior as previously specified.
\item \textbf{PGMM-g}: PGMM with baseline Gaussian prior.
\item \textbf{PGMM(d)-g}: PGMM with diffuse Gaussian prior. 
\item \textbf{Local Approx:} Gaussian limiting approximation for the $\beta_X$ marginal quasi-posterior of "PGMM-g" under the assumption of local misspecification as described in Equation (\ref{eq: normal approx}) and Theorem \ref{Theorem:1}.
\item \textbf{Local Approx (d):} Gaussian limiting approximation for the $\beta_X$ marginal quasi-posterior of "PGMM(d)-g" under the assumption of local misspecification as described in Equation (\ref{eq: normal approx}) and Theorem \ref{Theorem:1}.
\end{itemize} 
All approaches allow for heteroskedasticity. 2SLS, CUE, CH, and S maintain the assumption of correct specification. All other procedures allow for departures from correct specification by relaxing the constraint that $\mu \equiv 0$, with AK being frequentist valid and the remaining procedures having a (quasi-)Bayes interpretation. We note that S is formally valid under weak identification, while formal frequentist results for the other procedures are obtained assuming strong identification.

\begin{table}[htbp!]
\centering
\begin{tabular}{lcccc}
\hline\hline
\multicolumn{2}{l} {\textbf{Methods} }  & \textbf{Linear IV(1)} & \textbf{Linear IV(2)} \\
\hline\multicolumn{1}{l}{{Assuming correct specification, i.e.,  $\mu\equiv 0$}} &&& \\ 
2SLS && [0.56, 1.38] & [0.64, 1.23] \\
CUE && [0.56, 1.38] & [0.64, 1.21] \\  
CH  & &[0.61, 3.59] & [0.63, 3.56] \\
\hline
\multicolumn{1}{l}{{Weak identification robust}} &&& \\  
S  && [0.63, 3.23] & [0.63, 4.55] \\
\hline
\multicolumn{1}{l}{{Misspecification robust}} &&& \\ 
AK  && [-28.86, 30.80] & [-0.87, 2.66] \\
PGMM-u && [0.58, 3.65] & [0.53, 3.78] \\
PGMM-g & &[0.49, 3.79] & [0.54, 3.65] \\
PGMM(d)-g && [0.22, 3.81] & [0.30, 3.73] \\
Local Approx && [0.52, 1.42] & [0.53, 1.36] \\
Local Approx(d) && [0.42, 1.52] & [0.35, 1.57] \\
\hline\hline
\end{tabular}
\caption{95\% interval estimates for $\beta_X$, the effect of institutions on output, obtained from different procedures as described in the main text.}  
\label{tab:acemoglu2}
\end{table}

Table~\ref{tab:acemoglu2} shows that, with the exception of the AK interval, the qualitative conclusions are largely consistent across methods: The centers and lower bounds of the intervals lie above zero, suggesting a positive effect of institutions on output. More specifically, the interval estimates broadly fall within two groups, with 2SLS, CUE, Local Approx, and Local Approx(d) in one group and CH, S, PGMM-u, PGMM-g, and PGMM(d)-g in the other. 

Interestingly, the 2SLS, CUE, and local approximation intervals all rely on asymptotic approximations obtained under strong identification and are substantially narrower than the remaining intervals. One possible interpretation is that these approximations are less reliable in this application because of weak-instrument concerns (see, e.g., \cite{chernozhukov2008instrumental}, \cite{kleibergen2025double}).  
Relatively weak identification could also explain the lack of updating, relative to the prior on $\beta_X$, in the upper tail of the distribution. That is, while we specified what seemed like an economically diffuse prior, the data do not seem sufficiently informative to shift beliefs in the upper tail of the effect distribution. Of course, the more relevant issue from an economic perspective is the amount of posterior mass in the left tail below zero. By contrast, the CH and PGMM intervals are obtained directly from the (quasi-)posterior rather than from strong-identification asymptotic approximations. It is therefore interesting that these intervals are quite similar to the interval produced by the weak-identification-robust procedure. Although this similarity may be partly specific to this application, it suggests a potentially useful connection that may be worth exploring further.

Looking at the quasi-Bayes procedures specifically, recall that the CH method imposes the validity of the moment conditions, while the PGMM procedures relax this assumption. This relaxation, of course, results in the PGMM intervals being wider than CH as the PGMM intervals reflect the added uncertainty from accounting for potential misspecification. However, at least under the priors considered, the increase in width is relatively small and does not qualitatively change the conclusions that one would draw relative to CH.

Finally, we observe that the AK intervals lead to qualitatively different conclusions than those from the other approaches. This difference is particularly pronounced in the Linear IV(1) specification, where the AK interval is substantially wider than the intervals produced by the alternative methods. The most informative comparison is between AK and PGMM-u, as both restrict the plausibility term to lie within the same support. The key distinction is that the AK approach is designed to ensure valid frequentist coverage by focusing on least favorable directions within the misspecified model, given only the support restriction on the plausibility term. In contrast, PGMM-u imposes proper subjective priors on both the plausibility term and the structural parameters, $\theta$. In this example, the subjective priors place very little mass on models with economically extreme values of $\beta_X$, resulting in the quasi-posterior assigning negligible mass to much of the AK interval. This outcome illustrates how subjective priors can substantially shape the quasi-posterior in partially identified settings. Rather than targeting worst-case combinations, the quasi-posteriors reflect economically motivated beliefs about the joint distribution of the structural parameters and the plausibility term. Both approaches serve meaningful purposes, but we believe there are scenarios in which inference based on the quasi-posterior under subjective priors may offer more economically relevant insights.

\subsection{\small \small IV Quantile Regression Example: Effect of 401(k) Participation on Financial Assets}\label{subsec:Illustration Example 2}
In this subsection, we illustrate the use of PGMM in a non-linear model by using IV quantile regression (IVQR) to estimate the impact of 401(k) participation on quantiles of net financial assets as in \cite{CH:401krestat}. Specifically, we apply PGMM using the IVQR moment conditions from \cite{chernozhukov2005iv}: 
$$
g_{\tau}(Z_t, \theta_{\tau})=(1, W_t^\top, D_t)^\top\left(\tau-\IF{\left(Y_t \leqslant \alpha_\tau + X_t \beta_{X,\tau} + W_t^\top \beta_{W,\tau} \right)}\right)
$$
where $\tau$ denotes the quantile of interest; $Y_t$ is the outcome variable representing net total financial assets (in 1991 dollars); $X_t$ is a binary indicator for 401(k) participation; $W_t$ is a vector of control variables; and $D_t$ is a binary instrument indicating 401(k) eligibility.\footnote{The covariates are income, a quadratic in age, family size, four indicators of education categories, marital status, two-earner status, defined benefit pension status, IRA participation, and home ownership. For more details about the data, see, e.g., \cite{abadie2003semiparametric} and \cite{CH:401krestat}.} We report results for three quantiles, $\tau \in \{0.15,0.5,0.85\}$, to illustrate performance for a low, central, and upper quantile.

The basic argument for 401(k) eligibility being a valid instrument for participation is that eligibility is determined by employers and so may plausibly be taken as exogenous after conditioning on job relevant covariates. See, e.g., \cite{abadie2003semiparametric} for further discussion of the underlying exclusion restriction. Of course, there are reasons that one might worry that the exclusion restriction does not hold perfectly. For example, one might conjecture that firms that offered 401(k) plans were attractive to employees who prefer savings for other, unobserved, reasons. Motivated by such concerns, \cite{conley2012plausibly} explore sensitivity of linear IV estimates of the effect of 401(k) participation on financial assets. Our analysis extends this line of work by investigating the sensitivity of quantile treatment effect estimates. We also note that examining quantile effects may be of substantive economic interest given the strong asymmetry of financial asset holdings. 

Of course, implementing PGMM requires specification of priors for the structural parameters, $\theta_{\tau}$, and plausibility terms, $\mu_\tau$. In this example, we use the same diffuse prior for each $\theta_{\tau}$---$\theta_{\tau} \sim \mathcal{N}\left(0, \frac{10^{10}}{5} I_{14}\right)$---for all reported results. Given the magnitude of the outcome variable and units of the input variables,\footnote{For example, the 0.15 and 0.85 quantiles of $Y_t$ are approximately $-2751$ and $36,303$, respectively.} this prior seems to be extremely uninformative.

The more delicate choice is the prior over the local misspecification parameter $\mu_{\tau}$. As in the previous example, we assess sensitivity to this choice by considering both zero-mean Gaussian priors and zero-mean uniform priors for each of our three values of $\tau$. We set baseline priors using a stylized model for misspecification. Specifically, we construct a bound on the moment conditions, denoted $\delta_\tau$, under the assumption that any misspecification arises from a direct effect of $D_t$ on $Y_t$, capped at 2000 dollars in absolute value. To simulate this bound, we compute
$$
\max_{\gamma \in \{-2000,2000\} } \left|\frac{1}{T}\sum_{t=1}^T  (1, W_t^\top, D_t)^\top\left(\tau-\IF{\left(\epsilon_t +D_t \gamma \leqslant 0 \right)}\right)\right|,
$$ 
where the $\epsilon_t$ are generated as the residuals from the linear IV analog of our quantile models. Using the resulting $\delta_\tau$, we define priors for $\mu_{\tau}$ as $c \cdot \mathcal{N}\left(0, \text{diag}(\delta_\tau/3)^2\right)$ or as uniform priors over $[-c\delta_\tau/3, c\delta_\tau/3]$. To explore varying degrees of prior concentration, we consider $c \in \{0, 0.5, 0.9, 1.0\}$, where $c = 0$ corresponds to the dogmatic prior that maintains correct specification.  

\begin{figure}[htbp!] 
    \centering
\includegraphics[width=\textwidth]{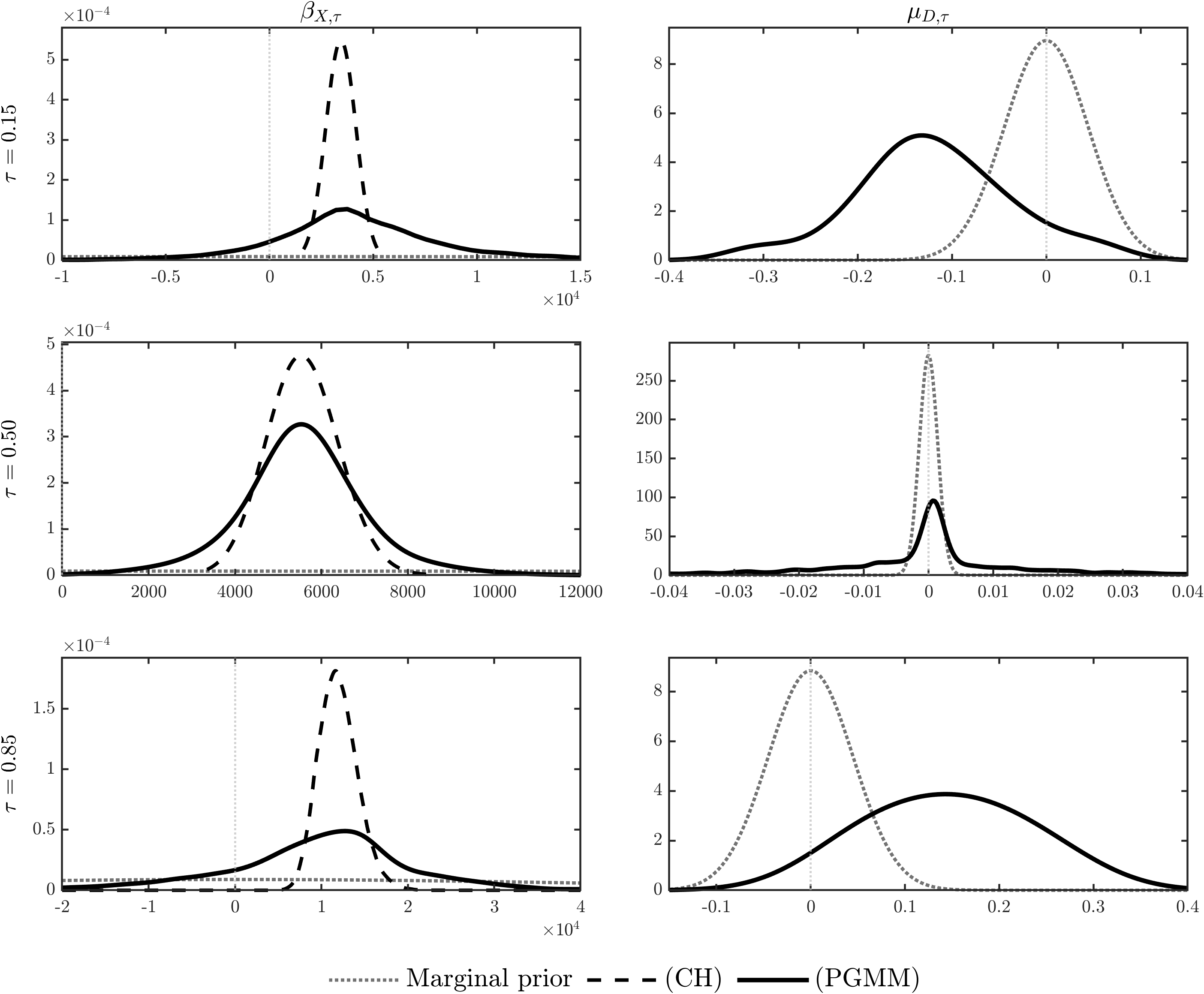}
  \caption{CH (dashed black curves, $\mu\equiv 0$) and PGMM marginal quasi-posteriors (solid black curves) for $\beta_{X,\tau}$ and $\mu_{D,\tau}$. Examples shown use Gaussian priors over $\mu$ that vary with $\tau$ with $c=1$ as described in the main text. The light dotted curves represent the marginal prior density curves for the displayed parameters.   
}
    \label{fig/401k-posterior_gauss_unif.png} 
\end{figure}

Figure \ref{fig/401k-posterior_gauss_unif.png} displays marginal quasi-posteriors for both $\beta_{X,\tau}$ and the component of $\mu_\tau$ associated with the IV moment condition, denoted by $\mu_{D,\tau}$, under the Gaussian prior for $\mu_\tau$ with $c = 1$.\footnote{We provide quasi-posterior plots for the remaining settings in Figures \ref{401k_posterior_figure_c05onlyPGMM}-\ref{401k_posterior_figure_c1onlyPGMMu}.} For comparison, we also provide the marginal quasi-posterior for $\beta_{X,\tau}$ under the assumption of correct specification ($c = 0$) in dashed curves. We see that the quasi-posteriors for the quantile effect obtained under correct specification concentrate over positive values for each value of $\tau$, suggesting a robust positive effect of 401(k) participation on net financial assets. The quasi-posteriors under correct specification are suggestive of larger quantile treatment effects at higher quantiles. 

Looking at the PGMM quasi-posteriors, we see that allowing for departures from correct specification according to our prior leads to substantially more diffuse quasi-posteriors. For $\tau = 0.15$ and $\tau = 0.85$, the quasi-posterior for $\beta_{X,\tau}$ now places substantial mass on both large positive and large negative values. This substantial spread in the posterior implies that it becomes difficult to draw reliable conclusions about the lower and upper quantile treatment effect of 401(k) participation once we allow for plausible violations of the exclusion restriction consistent with the instrument having up to a \$2,000 direct effect on savings. The resulting intervals for low and high quantiles from ``PGMM-g'' in Figure \ref{ExampleII-empirical} span from -7.88 to 18.09 and -13.66 to 33.23 (in units of thousands of dollars), respectively. In contrast, the quasi-posterior for the median remains concentrated over positive values, suggesting a relatively robust positive median treatment effect of 401(k) participation. 

Interestingly, Figure~\ref{fig/401k-posterior_gauss_unif.png} reveals that the marginal quasi-posterior distributions of $\mu_{D,\tau}$ differ noticeably from the prior. In particular, the quasi-posterior for $\tau = 0.15$ ($\tau = 0.85$) is shifted to the left (right) relative to the prior. The quasi-posterior for the median remains centered approximately over the prior center, but has substantially thicker tails than the marginal prior. This pattern suggests that deviations from the baseline model are more likely at the lower and upper quantiles, indicating that the moment conditions for the tails are “less plausible”---in the sense that their quasi-posteriors are not centered at zero---than those for the median. We find it interesting that, at least viewed through the lens of the marginal quasi-posteriors over $\mu_{D,\tau}$, the combination of data and moments leads to updating in the direction of model misspecification. That is, the marginal quasi-posterior over the sensitivity term for the IV moment restriction is less concentrated around zero (correct specification) than the initial prior. We take this updating as further evidence that a researcher should be hesitant to trust results that dogmatically maintain correct specification in this example.

\begin{figure}[htbp]
    \centering
\includegraphics[width=0.9\textwidth]{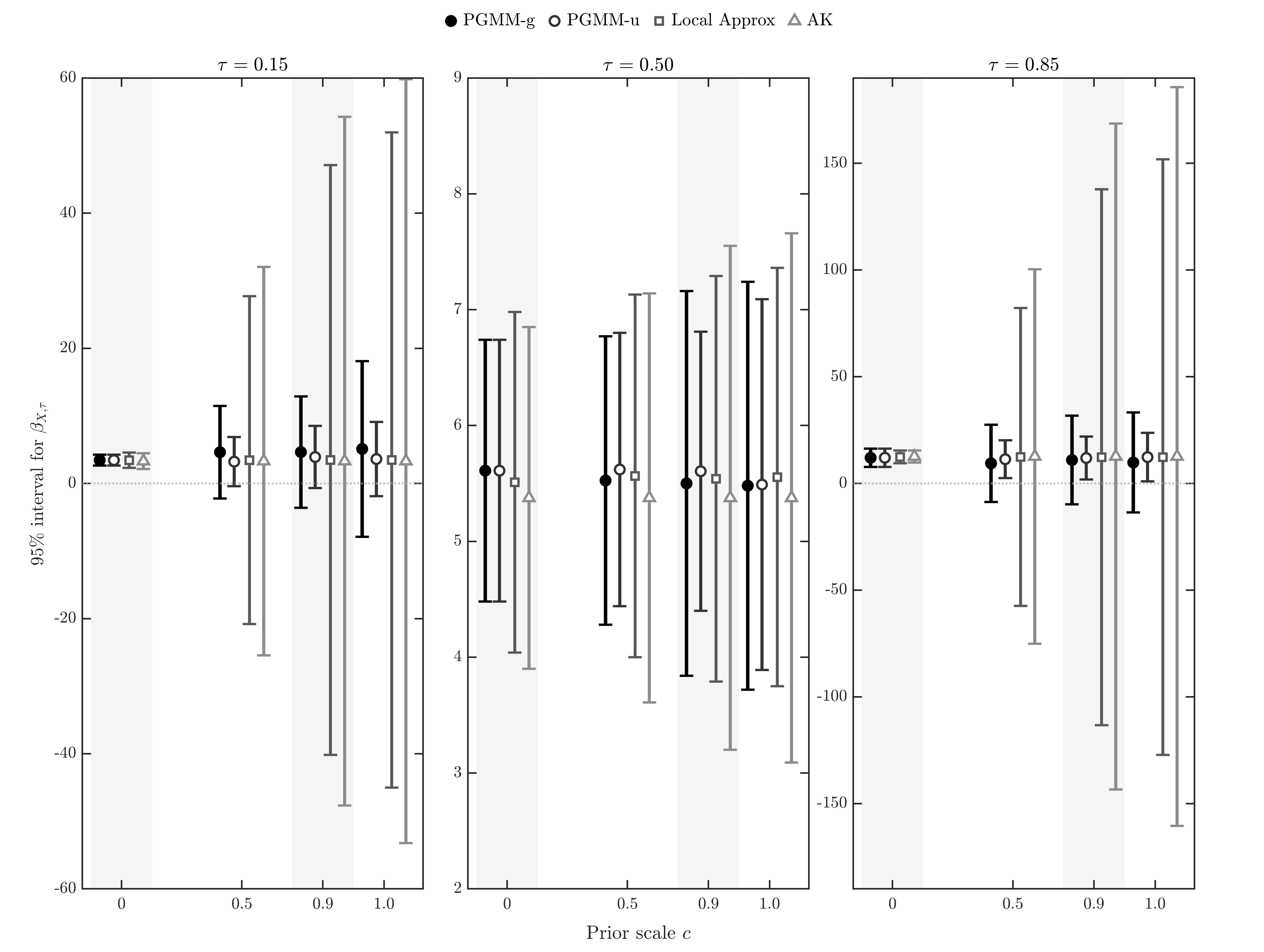}
\caption{
This figure shows the 95\% intervals constructed for the treatment effect parameter, $\beta_{X,\tau}$, using the limiting approximation indicated by Theorem \ref{Theorem:1} (Local Approx), PGMM (PGMM-u denotes the cases with uniform priors on $\mu$ while PGMM-g denotes the cases with Gaussian priors on $\mu_\tau$) 
and AK (AK) for IVQR with $\tau=0.15, 0.50, 0.85$  and  $c=0, 0.5,0.9, 1.0$.
}
\label{ExampleII-empirical}
\end{figure}

Figure~\ref{ExampleII-empirical} reports 95\% highest quasi-posterior density intervals constructed using the PGMM method under our full set of prior specifications for $\mu_\tau$. In addition to the PGMM intervals obtained from simulating the full quasi-posterior, we also report intervals based on the local limiting approximation described in Section~\ref{sec: local} and Theorem~\ref{Theorem:1}, as well as frequentist intervals constructed using the method of \cite{armstrong2021sensitivity} (AK). The AK intervals are derived under a local misspecification framework in which the true parameter value $\theta_{\tau,X}$ is assumed to satisfy $\sqrt{T} m(\theta_{\tau, X}) \in \mathcal{C}_\tau$. For comparability, we define the restriction set $\mathcal{C}_\tau$ to match the support of the corresponding uniform priors used in PGMM-u for a given constant $c$: 
$$
\mathcal{C}_\tau = \left\{ \sqrt{T} \cdot \text{diag}(c \delta_\tau / 3) a : a \in \mathbb{R}^q,\ |a|_\infty \leq 1 \right\},
$$
where $|\cdot|_\infty$ denotes the $\ell_\infty$ norm. The parameter $c$ in Figure~\ref{ExampleII-empirical} thus has a different interpretation for the different methods. For PGMM with a Gaussian prior on $\mu_\tau$ (PGMM-g), it determines the standard deviations of the prior distribution. For PGMM with a uniform prior (PGMM-u), it sets the upper and lower bounds of the uniform prior support. For the local approximation intervals (Local Approx), it indexes the scale of the local Gaussian prior $\mathcal{N}(0, \Lambda_c / T)$, where $\Lambda_c = T c^2 \cdot \text{diag}((\delta_\tau / 3)^2)$, consistent with the PGMM-g case. For the AK intervals, $c \delta_\tau$ parameterizes the local misspecification set $\mathcal{C}_\tau$ as defined above.

As shown in Figure \ref{ExampleII-empirical}, the results for the lower and upper quantiles appear relatively sensitive to both the value of $c$ and the method used to obtain the interval estimate. As expected, the AK intervals---which are designed to ensure asymptotic frequentist coverage under worst-case local misspecification---are strictly wider than PGMM-u intervals in all cases. For these lower and upper quantiles, we see that the AK intervals are much wider than the corresponding PGMM intervals and, interestingly, are tracked relatively closely by the intervals obtained from the local approximation to the quasi-posterior. An interesting feature of this example is that the moment corresponding to plausibility term $\mu_{D,\tau}$ has natural support restrictions. The full quasi-Bayes procedure updates such that values that violate these support restrictions have very little posterior mass. This updating does not occur in either the local approximation or the AK intervals, which may explain some of the discrepancy between the procedures, especially for larger values of $c$. 

In contrast, the intervals for the median effect are notably more stable across methods. All approaches yield similar interval estimates, and the lower bounds remain above zero even under relatively diffuse priors on the misspecification term. This stability suggests that inference about the median treatment effect is more robust to the specification of priors and the choice of estimation method.

As in the previous example, we find that examining quasi-posteriors under non-dogmatic priors on the degree of misspecification offers valuable insight into the identification and plausibility of the estimated effects. Estimates of quantile effects in the upper and lower tails are relatively sensitive to assumptions about model specification, with this sensitivity manifesting as instability across methods and prior choices. In contrast, the estimated median effects appear considerably more robust, yielding qualitatively similar results across all considered specifications. Finally, the updating from the prior over $\mu_\tau$ to the quasi-posterior is particularly interesting. In all cases, the quasi-posteriors place more mass away from $\mu_\tau = 0$ than the prior, indicating quasi-posterior evidence against correct specification. This shift suggests that researchers should be cautious about imposing the assumption of correct specification too rigidly in this setting.

\section{Theoretical results}\label{sec:Theoretical results}
 
This section is organized as follows. We first define notation in Subsection \ref{sec:notation}. Subsection \ref{sec:opt} introduces a Bayesian optimal decision-theoretic motivation for the proposed quasi-Bayes procedure. Subsection \ref{simple} presents a Bernstein-von Mises (BvM) theorem in a fixed-dimensional setting under local misspecification and establishes theoretical coverage guarantees for the highest quasi-posterior region. Subsection \ref{main} extends the BvM result to the high-dimensional case. Finally, subsection \ref{main2} provides a frequentist justification for the coverage of the Bayesian credible set.

 \subsection{\small Notation}\label{sec:notation}
For a vector $v=(v_1,\ldots,v_d)\in\RR^d$ and $q>0$, we denote $|v|_q=\left(\sum_{i=1}^d|v_i|^q\right)^{1/q}$,{$|v|_{\infty}=\max_{1\le i\le d}|v_i|$}, and $\|v\|=|v|_2$.  
For a vector $v$ and a conformable non-negative definite matrix $A$, define $\|v\|_{A}:=\sqrt{v^{\top}Av} \geq 0$. For two positive number sequences $(a_T)$ and $(b_T)$, we say  
$a_T\lesssim b_T$ (resp. $a_T\asymp b_T$) if there exists $C>0$ such that $a_T/b_T\le C$ (resp. $1/C\le a_T/b_T\le C$) for all large $T$. We denote 
$a_T \ll b_T$ if $a_T/b_T\rightarrow 0$ as $T\rightarrow\infty$, and write $a_T\gg b_T$ if $a_T/b_T\rightarrow\infty$ as $T$ diverges.
Let $\nu(\mu)$ denote a point sufficiently close to $\mu$ such that a unique solution $\theta(\nu(\mu))$ of $m(\theta(\nu(\mu))) = \nu(\mu)$ exists.
Denote a total variation of moments (TVM) type norm of ${\kappa}$ for a real-valued measurable function $g$ on $\Theta\times \mathcal{M}$ by  $\left\Vert g \right\Vert_{TVM(\kappa)}
=\int_{\theta \in \Theta, \mu \in \mathcal{M}}
(1+\|\sqrt{T}(\theta-\theta(\nu(\mu)))\|^{\kappa})|g(\theta,\mu)|d\theta d\mu$ for ${\kappa}\geq 0$.

We use $\propto$ to denote ``proportional to''. We use the subscript $p$ to denote statements with respect to the outer measure $\mathbb{P}^*$ of a given probability $\mathbb{P}$. We use $\rightarrow_d$ to denote convergence in distribution. We set $(X_T)$ and $(Y_T)$ as two sequences of random variables. Write $X_T=O_{p}(Y_T)$ if $\forall \epsilon>0$, there exists $C>0$ such that $\mathbb{P}^*(|X_T/Y_T|\le C)>1-\epsilon$ for all large $T$. We denote $X_T=o_{p}(Y_T)$ if $X_T/Y_T\rightarrow_p 0$ as $T\rightarrow\infty$.
We limit ourselves to situations in which, given $\mu$, observations are a random sample from a distribution $\mathbb{P}_{\mu}$ with $\mathbb{P}_{\mu}$ being the conditional law of the random sample given $\mu$; and probability statements under $\mathbb{P}$ are made relative to the joint distribution of the random sample and $\mu$, given a fixed latent distribution $F_\mu$ over $\mu$.

\subsection{\small Link to Bayes optimal decisions}\label{sec:opt}
 
Given the quasi-posterior, we can formulate optimal decisions under the quasi-posterior by minimizing the quasi-posterior expected risk. Specifically, let ${\ell}(\theta,\mu,d)$ be a loss function that depends on the parameters $\theta, \mu$ and a decision $d\in\mathcal{D}$. 
 Policymakers may want to choose a decision $d$ to minimize the loss ${\ell}\left(\theta,\mu,d\right)$ that depends on both $\theta$ and $\mu$. The expected loss-minimizing decision under the quasi-posterior, denoted by $s_T(p_T)$, then takes the usual form:
 \begin{eqnarray}
s_T(p_T) \in \arg \min_{d \in \mathcal{D} 
 }{\int {\ell}(\theta,\mu,d)p_T(\theta,\mu) d\mu d\theta}.
 \end{eqnarray}
We note that this framework encompasses the setting where loss depends only upon $\theta$ as a special case.

These quasi-Bayes decision rules can be motivated as approximations to fully Bayesian rules. In the weak identification setting, \cite{andrews2022optimal} show that the quasi-posterior based on continuously updated GMM can be obtained as the limit of a sequence of posteriors under proper priors. The corresponding quasi-Bayes decision rule can then be interpreted as the pointwise limit of the associated Bayes decision rules.

In the case where parameters are low-dimensional, the results of \cite{andrews2022optimal} can readily be adapted to the PGMM framework. Specifically, we have that,  under regularity conditions, e.g., Assumptions \ref{a1} and \ref{a3}.ii) stated in Section \ref{simple}, the process $\sqrt{T}\widehat{m}(\cdot) - \sqrt{T} m(\cdot)$ converges in distribution to a mean-zero Gaussian process with covariance function $\Sigma(\cdot, \cdot)$ and mean function satisfying $m(\theta(\mu))=\mu$ on $\mu\in \Gamma$. It then follows that we can construct a likelihood as in \cite{andrews2022optimal} by properly substituting their parameter $\theta^*$ with the pair $(\theta(\mu), \mu)$. As a result, the optimal quasi-Bayesian decision rule under model misspecification retains the desirable properties established by their analysis. We provide the supporting technical details in Supplementary Appendix Section \ref{appendix:optimalrule}.

\subsection{\small Gaussian quasi-posterior approximation under local misspecification}\label{simple}
This subsection considers a local misspecification setting in which the prior on $\mu$ is Gaussian with variance $\Lambda/T$.
We show that, under this specification, the quasi-posterior distribution is asymptotically Gaussian and coincides with the results in \citet{chernozhukov2003mcmc} in the special case where $\Lambda = 0$---i.e., in the case that $\mu \equiv 0$. The formal result in this section, Theorem \ref{Theorem:1}, provides justification for the arguments presented in Section~\ref{sec:The Approach: Main Ideas}.

We start by presenting the technical assumptions under which we establish the Gaussian approximation. Throughout this section, we set $\mu_0 = 0$ and define
$$
G(\theta(\mu)) = \frac{\partial \E_{P_\mu}\left[g(Z_t, \theta(\mu))\right]}{\partial \theta(\mu)}.
$$
We also let $\theta(\mu_0) = \theta_0$, $A_{\theta(\mu_0)} = A_{\theta_0}$, and $\Pi_T(\theta, \mu_0)$ be a density function with
$$
\Pi_T(\theta, \mu_0) \propto \exp\left( -\frac{T}{2} \|\theta - \widehat{\theta}\|^2_{G(\theta_0)^\top A_{\theta_0} G(\theta_0)} \right).
$$
 
\begin{Assumption}[Plausibility characteristic]\label{a1} The plausibility characteristic $\mu \in \mathcal{M} \subset \mathbb{R}^{q}$.  
$\Gamma\subseteq \mathcal{M} $ is a nonempty set containing values for $\mu$ such that $\theta(\mu)$ satisfying $m(\theta(\mu)) = \mu$ uniquely exists.  
Each $\theta(\mu)$ belongs to the interior of a compact convex subset $\Theta$ of the Euclidean space $\mathbb{R}^{k}$.  
\end{Assumption}

Assumption \ref{a1} defines the support of the plausibility characteristic and, importantly, a set $\Gamma$ within the support such that the moment condition has a unique solution for each value of $\mu \in \Gamma$. In Assumption \ref{ass:prior1} below, we require that the prior for $\mu$ has positive mass over at least one point in $\Gamma$, which ensures the quasi-Bayes procedure is (asymptotically) well-behaved.

\begin{Assumption}[GMM estimator]\label{a2} 
Assume $m(\theta)$ is first order differentiable in $\theta\in \Theta$. Let 
$$\widehat{\theta} = \argmin_{\theta \in \Theta} \widehat{m}\left(\theta\right)^\top \widehat{A}_{T,\theta} \widehat{m}\left(\theta\right)$$ 
be the GMM estimator using weighting matrix $\widehat{A}_{T,\theta}$. Assume $\widehat{\theta}$ has expansion 
$$\widehat{\theta} =\theta(\mu_{0})+J_A(\theta(\mu_{0}))^{-1}{\Delta}_{T}\left(\theta(\mu_{0})\right)+o_p(1/\sqrt{T}),$$ 
where $J_A(\theta(\mu_{0})) = G\left(\theta_0\right)^{\top}A_{\theta_0}G\left(\theta_0\right)$ and $\Delta_{T}\left(\theta_0\right) =-G\left(\theta_0\right)^{\top}A_{\theta_0}(\widehat{m}\left(\theta_0\right)-\mu_0).$
\end{Assumption}

\begin{Assumption}[Expansion] 
\label{a3}
i) Assume $J_{A}\left(\theta\right)$ is positive definite for all $\theta\in\Theta$, and $J_{A}\left(\theta\right)$  is continuous in $\theta$. Further, assume 
$G(\theta)$ and $\Omega(\theta)$ are continuous and full rank for all $\theta \in 
\Theta$. ii) Assume $$\sqrt{T}\Delta_{T}\left(\theta(\mu_0)\right)=-\sqrt{T}G(\theta(\mu_0))^{\top}A_{\theta(\mu_0)}(\widehat{m}\left(\theta(\mu_0)\right)-\mu_0)\rightarrow_{\mathrm{d}}\mathcal{N}\left(0,\tilde{V}\left(\theta(\mu_0)\right)\right),$$ where $\tilde{V}(\theta(\mu_0))=G(\theta(\mu_0))^{\top}A_{\theta(\mu_0)}\Omega(\theta(\mu_0))A_{\theta(\mu_0)}G(\theta(\mu_0))$ and $$\Omega(\theta(\mu_0))  = \lim_{T\to \infty} \text{Var}(\sqrt{T}(\widehat{m}(\theta(\mu_0))-{m}(\theta(\mu_0)))).$$ 
\end{Assumption}

\begin{Assumption}[Modulus of continuity and identification.]\label{a4}
Let $${r_{T}(m,\theta)}=\sqrt{T}\left\|\left(\widehat{m}(\theta)-\widehat{m}\left(\theta_0\right)\right)-\left(\E\widehat{m}(\theta)-\E\widehat{m}\left(\theta_0\right)\right)\right\|.$$ 
For a sufficiently small positive constant $\delta>0$, 
assume 
$$\sup_{\theta:\|\theta-\theta_0\|\leq\delta}r_{T}(m,\theta)/\left([1\text{\ensuremath{\vee}}\sqrt{T}\|\theta-\theta_0\|]\right)=o_p(r(\delta)),$$
and $r(\delta)\to_{p}0$ if $\delta\to 0$. 
Further, assume that with probability approaching one,
$$\inf_{\theta:\|\theta-\theta_0\|\geq \delta} \|\left(\widehat{m}(\theta)-\widehat{m}\left(\theta_0\right)\right)\|\geq \delta .$$
\end{Assumption}

Assumption \ref{a2} concerns the properties of the GMM estimator with a weighting matrix that incorporates prior uncertainty as outlined in Section \ref{sec: local}. It specifically imposes that the resulting GMM estimator has a linear expansion dominated by its leading term. This assumption rules out weak identification. Nonetheless, the main results, together with analogous insights about misspecification, could in principle be extended to settings with weak instruments along the lines of \cite{andrews2022optimal}, albeit at the cost of substantial additional technical work. Assumption \ref{a3} then characterizes the asymptotic behavior of the leading term in this expansion. Assumptions \ref{a2} and \ref{a3} are analogous to standard assumptions that align, for example, with Assumption 4 and conditions (ii) and (iii) in Proposition 1 in \cite{chernozhukov2003mcmc}.

Assumption \ref{a4} is a modulus-of-continuity condition similar to condition (iv) of Proposition 1 in \cite{chernozhukov2003mcmc}, used there to handle non-smooth criterion functions. It requires the remainder term to be bounded in a neighborhood of $\theta_0$ and is satisfied when the moments are sufficiently smooth. The final condition in Assumption \ref{a4} ensures that $\theta_0$ is asymptotically well-identified.

The next assumption imposes restrictions on the prior that are sufficient for verifying approximate normality of the quasi-posterior.

\begin{Assumption}[Prior]\label{ass:prior1}
$\pi(\mu,\theta)=\pi(\mu)\pi(\theta)$,
where $\pi(\mu)$ is a Gaussian prior centered at $\mu_{0}\in {\Gamma}$ with covariance matrix $\Lambda/T$, and { $\pi(\theta)$}  is bounded and continuously differentiable around an open ball of  $\theta_0 \in \Theta$. $\lambda_{\max}(\Lambda)=o(T)$  
    and 
 $1 \lesssim \lambda_{\max}(\Lambda)/\lambda_{\min}(\Lambda)\lesssim 1$, where $\lambda_{\min}(\Lambda)$ and $\lambda_{\max}(\Lambda)$ denote the minimum and the maximum eigenvalue of a matrix $\Lambda$ respectively. 
\end{Assumption}

The main substantive restriction of Assumption \ref{ass:prior1} is that the prior for $\mu$ is Gaussian with variance of the same order as sampling variation. We further impose that the prior variance is full rank and that, \emph{a priori}, $\theta$ and $\mu$ are independent. 
The requirement that $\Lambda$ be of full rank can be relaxed.\footnote{For example, let $B\in\mathbb{R}^{q\times\tilde q}, \tilde q<q,$
and let $\Lambda_x\in\mathbb{R}^{\tilde q\times\tilde q}$ be full rank. Consider the prior for $\mu$ generated from 
$\mu = Bx, x\sim \mathcal{N}\bigl(0,\;T^{-1}\Lambda_x\bigr).$
Following the same arguments as used to establish Theorem \ref{Theorem:1}, we can establish the quasi-posterior density $p_T(\theta)$ is, for large $T$, approximately Gaussian with covariance matrix $A_\theta\;=\;
\Omega(\theta)^{-1}
\;-\;
\Omega(\theta)^{-1}\,B\,
\bigl(\Lambda_x^{-1}+B^\top\,\Omega(\theta)^{-1}B\bigr)^{-1}
\,B^\top\,\Omega(\theta)^{-1}.$}  

We now present the first main theorem, which shows that the quasi-posterior density $p_{T}(\theta)$ converges in the TVM norm to a Gaussian density. With slight abuse of notation, we denote $\bar{p}_T(\widehat{m}(\theta))=p_T(\theta)$ and obtain $\bar{p}_{T}\left(\widehat{m}(\widehat{\theta})-G(\theta_0)(\widehat{\theta}-\theta)\right)$ by replacing $\widehat{m}(\theta)$ in $\bar{p}_T(\widehat{m}(\theta))$ with $\widehat{m}(\widehat{\theta})-G(\theta_0)(\widehat{\theta}-\theta)$.
\begin{theorem}[Convergence in TVM norm]\label{Theorem:1}
Under Assumptions \ref{a1}--\ref{ass:prior1}, for any $0 \le \kappa < \infty$,
\[
\begin{aligned}
&\Big\Vert 
\bar{p}_{T}\!\left(
\widehat{m}(\widehat{\theta}) 
- G(\theta_0)(\widehat{\theta}-\theta)
\right)
-
\Pi_{T}(\theta,\mu_{0})
\Big\Vert_{TVM(\kappa)}
\\
&\quad =
\int_{\Theta}
\left(1+\|\sqrt{T}(\theta-\theta_0)\|^{\kappa}\right)
\left|
\Pi_{T}(\theta,\mu_{0})
-
\bar{p}_{T}\!\left(
\widehat{m}(\widehat{\theta})
-
G(\theta_0)(\widehat{\theta}-\theta)
\right)
\right|
\,d\theta \to_p 0,
\\[0.8em]
&\Big\Vert 
p_{T}(\theta) - \Pi_{T}(\theta,\mu_0)
\Big\Vert_{TVM(\kappa)}
\\
&\quad =
\int_{\Theta}
\left(1+\|\sqrt{T}(\theta-\theta_0)\|^{\kappa}\right)
\left|
p_{T}(\theta)-\Pi_{T}(\theta,\mu_{0})
\right|
\,d\theta  \xrightarrow{p} 0.
\end{aligned}
\]
\end{theorem}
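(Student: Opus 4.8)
The plan is to reduce the statement to a Chernozhukov--Hong-type Bernstein--von Mises argument by first integrating $\mu$ out of the joint quasi-posterior in closed form. Since $Q_T(\theta,\mu)$ is quadratic in $\mu$ and, under Assumption~\ref{ass:prior1}, the prior $\pi(\mu)$ is Gaussian centered at $\mu_0=0$ with precision $T\Lambda^{-1}$, the $\mu$-integral is an exact Gaussian integral. Completing the square in $\mu$ and applying the Woodbury identity, the cross term collapses precisely into the matrix $\widehat{A}_{T,\theta}=\widehat{\Omega}_T(\theta)^{-1}-\widehat{\Omega}_T(\theta)^{-1}[\Lambda^{-1}+\widehat{\Omega}_T(\theta)^{-1}]^{-1}\widehat{\Omega}_T(\theta)^{-1}$, leaving the profiled marginal
\[
p_T(\theta)\ \propto\ \pi(\theta)\,\bigl|\widehat{\Omega}_T(\theta)^{-1}+\Lambda^{-1}\bigr|^{-1/2}\exp\!\Bigl(-\tfrac{T}{2}\,\widehat{m}(\theta)^\top \widehat{A}_{T,\theta}\,\widehat{m}(\theta)\Bigr).
\]
Up to the smooth factor $\bigl|\widehat{\Omega}_T(\theta)^{-1}+\Lambda^{-1}\bigr|^{-1/2}$, this is exactly a continuously-updated GMM quasi-posterior with weighting matrix $\widehat{A}_{T,\theta}$, which is the object governed by Assumptions~\ref{a2}--\ref{a4}. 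The matrix identity here is the algebraic heart of the ``no free lunch'' phenomenon: the efficient weighting $\Omega(\theta)^{-1}$ is replaced by $A_\theta$, which downweights moments in proportion to their combined sampling and plausibility uncertainty.

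For the first display (the linearized version $\bar p_T$), I would work on the concentration neighborhood $\{\|\theta-\theta_0\|\le\delta\}$, noting that $\widehat{\theta}-\theta_0=O_p(1/\sqrt{T})$ by Assumption~\ref{a2}. Over this neighborhood I would replace $\widehat{A}_{T,\theta}$ by $A_{\theta_0}$ (continuity, Assumption~\ref{a3}.i) and the slowly varying factors $\pi(\theta)$ and $\bigl|\widehat{\Omega}_T(\theta)^{-1}+\Lambda^{-1}\bigr|^{-1/2}$ by their values at $\theta_0$; each replacement incurs an $o_p(1)$ relative error on the $1/\sqrt{T}$-scale and is absorbed into the normalizing constant. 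With $\widehat{m}$ replaced by its linearization $\widehat{m}(\widehat{\theta})+G(\theta_0)(\theta-\widehat{\theta})$, the exponent becomes exactly quadratic in $\theta$; expanding it, the first-order condition of the $\widehat{A}_{T,\theta}$-weighted estimator (Assumption~\ref{a2}) forces the cross term $G(\theta_0)^\top A_{\theta_0}\widehat{m}(\widehat{\theta})=o_p(1/\sqrt{T})$, so the quadratic centers at $\widehat{\theta}$ with curvature $G(\theta_0)^\top A_{\theta_0}G(\theta_0)$, i.e.\ it reduces to $\Pi_T(\theta,\mu_0)$. Because the linearized integrand has genuine Gaussian tails, the contribution of $\{\|\theta-\theta_0\|>\delta\}$ is exponentially small and dominated by the polynomial weight $(1+\|\theta-\theta_0\|^{\kappa})$ in the $\mathrm{TVM}(\kappa)$ norm, so the first display follows after checking convergence of the normalizing constants on the local and tail pieces separately.

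The second display then follows from the first by the triangle inequality once I bound the cost of undoing the linearization, $\bigl\|p_T(\theta)-\bar p_T(\widehat{m}(\widehat\theta)-G(\theta_0)(\widehat\theta-\theta))\bigr\|_{\mathrm{TVM}(\kappa)}$. On the local neighborhood the difference of the two exponents is governed by the modulus-of-continuity remainder $r_T(m,\theta)$, which Assumption~\ref{a4} bounds by $r(\delta)\,[1\vee\sqrt{T}\|\theta-\theta_0\|]$ with $r(\delta)\to_p 0$; after the change of variables $h=\sqrt{T}(\theta-\theta_0)$ this is an $o_p(1)(1+\|h\|)$ perturbation of a Gaussian exponent, whose weighted-$L^1$ effect vanishes. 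On the complement I would instead invoke the identification condition $\inf_{\|\theta-\theta_0\|\ge\delta}\|\widehat{m}(\theta)-\widehat{m}(\theta_0)\|\ge\delta$ of Assumption~\ref{a4} to show the true (non-linearized) criterion is bounded below there, yielding exponentially small tail mass for $p_T$ that again beats the polynomial weight.

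I expect the main obstacle to be the last step: converting the pointwise, $o_p$ control of the non-smooth linearization remainder $r_T(m,\theta)$ into a bound in the \emph{polynomially weighted} $\mathrm{TVM}(\kappa)$ norm. The difficulty is that the remainder grows linearly in $\|h\|$ while the weight grows polynomially, so one must simultaneously retain exponential tail control of the quasi-posterior strong enough to dominate $(1+\|\theta-\theta_0\|^{\kappa})$ and control the ratio of the true and approximate normalizing constants, both uniformly. Handling the $\theta$-dependence of $\widehat{A}_{T,\theta}$ inside the exponent (rather than treating it as fixed) together with a non-differentiable $g$ is what makes this more delicate than the smooth, fixed-weight case and is precisely why Assumption~\ref{a4} is stated as it is.
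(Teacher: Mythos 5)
Your proposal follows essentially the same route as the paper: the paper's proof likewise completes the square in $\mu$, performs the exact Gaussian integral to obtain the profiled marginal $p_T(\theta)\propto\pi(\theta)\exp\bigl(-\tfrac{T}{2}\|\widehat m(\theta)\|^2_{\widehat A_{T,\theta}}\bigr)$ with the Woodbury-collapsed weighting matrix, and then states that the conclusion follows from the proof of Theorem 1 of \citet{chernozhukov2003mcmc} applied with weight $\widehat A_{T,\theta}$ (with a slowly growing localization radius $M_T$ when $\Lambda$ is allowed to grow). Your second and third paragraphs simply spell out the localization, linearization, and tail-control steps that the paper delegates to that citation, so the argument is correct and not materially different.
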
\begin{proof}
    See Appendix \ref{Proof: theorem1}. 
\end{proof}
Theorem \ref{Theorem:1} demonstrates that $p_T(\theta)$ can be asymptotically approximated by a Gaussian density function $\Pi_{T}(\theta, \mu_0)$ under a sequence of Gaussian priors over misspecification that concentrate at the same rate as sampling error. Further, the theorem confirms the expected result that $p_T(\theta)$ concentrates in a $1 / \sqrt{T}$ neighborhood of $\theta_0$ under local misspecification. This result differs from the related approximation result in \cite{chernozhukov2003mcmc} in that the quasi-posterior depends on the prior for the plausibility characteristic, even asymptotically. 
We do note that the approximation result reproduces the result from \cite{chernozhukov2003mcmc} under $\mu_0 =0$ and $\Lambda= 0$. 
 
We note that Theorem \ref{Theorem:1}, along with Theorems \ref{Theorem:2}-\ref{th4} presented below, could be established without requiring the data stream $\{Z_t\}_{t=1}^{T}$ to be i.i.d. Rather, we could work with moments defined as $m_T(\theta(\mu)) = T^{-1}\sum_{t=1}^{T}\E_{\mathbb{P}_{\mu,t}}[g(Z_{t},\theta(\mu))] = \mu$ where $\mathbb{P}_{\mu,t}$ is the marginal distribution for $Z_t$. Within this structure, results could be established under suitable restrictions on dependence and heterogeneity. We do not pursue this direction formally to avoid further complicating notation.

\subsection{\small General quasi-posterior approximation results.}  
\label{main}
This section discusses an extension of Theorem \ref{Theorem:1} by allowing relatively general choice of prior for $\mu$. The key result is that, as in the previous section, the prior for $\mu$ matters even in the limit. However, under a general prior structure, we do not obtain a limiting Gaussian approximation. Rather, we have that, \emph{conditional} on $\mu$, the limiting approximation is Gaussian. Thus, the limiting approximation to the posterior is a Gaussian mixture where mixture weights depend heavily on the prior. We establish the formal results under sequences that allow the dimensions $k$ and $q$ to grow with the sample size $T$, which offers a technical extension of some results even in the case where a dogmatic prior is placed over $\mu$.
 
To accommodate a broader family of weighting matrices, we allow the GMM-type criterion that serves to define our quasi-posterior, $Q_{T}(\theta,\mu)$, to be formed with any positive-definite weight matrix $\widehat W_{T}(\theta)$. That is, we now consider 
$$
Q_{T}(\theta,\mu)=- {T}\left(\widehat{m}(\theta)-\mu\right)^{\top}\widehat{W}_{T}(\theta)\left(\widehat{m}(\theta)-\mu\right), 
$$
where setting $\widehat W_{T}(\theta) = \widehat\Omega_{T}(\theta)^{-1}$ corresponds to the leading case discussed in previous sections. Within this more general formulation, we use $W(\theta)$ to denote the population analog of $\widehat{W}_{T}(\theta)$ in the same manner that $\Omega(\theta)$ serves as the population counterpart of $\widehat{\Omega}_{T}(\theta)$.

In stating the formal results in this section, we make use of additional notation. 
Assume that, for each $\mu \in \mathcal{M}$, there exists a unique 
$\nu(\mu)\in\Gamma$ such that
\[
\|\mu-\nu(\mu)\|
=
\inf_{\gamma\in\Gamma}\|\mu-\gamma\|.
\]
Let 
$h(\theta,\mu)
=
G\big(\theta(\nu(\mu))\big)
\big(\theta-\theta(\nu(\mu))\big)
-\mu+\nu(\mu).$ 
For $\varepsilon\asymp C_\epsilon q\log T$, define the 
$\varepsilon/\sqrt{T}$-expansion of 
$\{(\theta(\mu),\mu):\mu\in\Gamma\}$ by
\[
\begin{aligned}
B_{\varepsilon}
=
\Big\{
(\theta,\mu)\in\Theta\times\mathcal{M}: 
\sqrt{T}\|h(\theta,\mu)\|\leq\varepsilon, 
\sqrt{T}\|\mu-\nu(\mu)\|\leq\varepsilon
\Big\}.
\end{aligned}
\]
Thus, $B_{\varepsilon}$ is an $\varepsilon/\sqrt{T}$-neighborhood containing 
pairs $(\theta,\mu)$ that are close to pairs 
$(\theta(\mu),\mu)$ with $\mu\in\Gamma$. Let
\[
\begin{aligned}
V_T(h(\cdot),\theta,\mu)
=
&\;
2T h^{\top}
W\big(\theta(\nu(\mu))\big)
\big(\widehat{m}(\theta(\nu(\mu)))-\nu(\mu)\big)
\\
&\;
+
T h^{\top}
W\big(\theta(\nu(\mu))\big)h
+
C(\mu),
\end{aligned}
\]
where we abbreviate $h(\theta,\mu)$ as $h$, and
\[
\begin{aligned}
C(\mu)
=
&\;
T\big(\widehat{m}(\theta(\nu(\mu)))-\nu(\mu)\big)^{\top}
W\big(\theta(\nu(\mu))\big)
\big(\widehat{m}(\theta(\nu(\mu)))-\nu(\mu)\big)
\\
&\;-2\log\big(\pi(\theta(\nu(\mu)))\big)
+
2\sqrt{T}\|\mu-\nu(\mu)\|.
\end{aligned}
\] 
We now state sufficient conditions for establishing our limiting approximation to the quasi-posterior.

\begin{Assumption}[Identification, smoothness, and tails]\label{assum33} Assume the following: (i) $m(\theta)$ is first order differentiable. (ii) For any $\theta\in \Theta$,  $G(\theta)$ has singular values bounded from below and above and $W(\theta)$ has eigenvalues bounded from below by $c_w\frac{1}{k}$ and above for $c_w$ a positive constant. (iii) There exists a positive constant $C>0$ such that $\sup_{1\leq j \leq q, (\theta,\mu)\in B_\varepsilon }\E\left[ \|e_j^\top (g(Z_t, \theta) -\mu)\|^{2}\right] <C,$ and $\sup_{\theta \in \Theta}\|\widehat{m}(\theta)-{m}(\theta)\|\lesssim_p \frac{\sqrt{q}}{\sqrt{T}}$.
(iv) For $\theta, \theta'\in \Theta$, $\mu \in \Gamma$, $\|G(\theta)-G(\theta')\|\lesssim  \|\theta-\theta'\|,  \|W(\theta)-W(\theta')\|\lesssim  \|\theta-\theta'\|$, and  $\|\widehat{W}_{T}(\theta)- W(\theta)\| = o_p(1)$.
\end{Assumption}

Assumption \ref{assum33} imposes regularity conditions that are sufficient for good behavior of the limiting criterion function. Importantly, these conditions ensure that, at fixed values of $\mu$, $\theta$ is strongly identified through the restrictions on $G(\theta)$; see, e.g., \cite{hansen2010instrumental}.  

The next assumption restricts priors, importantly requiring that $\theta$ and $\mu$ are \emph{a priori} independent and that the marginal priors place positive mass uniformly over the corresponding parameter space. Define a neighborhood for $\mu$,
 \[
\mathcal{B}_\delta(\mu) := \left\{ \mu' \in \mathcal{M}: \| \mu' - \mu \| < \delta \right\},
\]
and similarly define a neighborhood $\mathcal{B}_\delta(\theta)$ for a point $\theta \in \Theta$.

\begin{Assumption}[Prior]\label{assummu1}  
Assume the prior density $\pi(\theta,\mu)=\pi(\theta)\pi(\mu)$ with $\pi(\mu)>0$ for $\mu \in \mathcal{M}$ and $\pi(\theta)>0$ for $\theta \in {\Theta}$.  
Assume $\pi(\theta)$ is bounded and Lipschitz continuous on $\Theta$. There exists \(C_\mu>0\) with \(\pi(\mu)\le C_\mu\) for all \(\mu\in \mathcal{M}\setminus \Gamma \), and there exists a $\delta>0$ and a plausible pair \((\mu,\theta(\mu))\) such that for all $0<\delta' \leq \delta$, $\min_{\theta \in \mathcal{B}_{\delta'}(\theta(\mu))} \pi(\theta) \geq   (\frac{1}{C})^{k}$ and $\int_{\mathcal{B}_{\delta'}(\mu)} \pi(\mu) d\mu \geq c (\frac{\delta'}{C})^q$ with $C>0$. 
\end{Assumption}
When \(q\) and \(k\) are of fixed dimension, the above assumption imposes a strictly positive lower bound on \(\pi(\theta(\mu))\) and a positive lower bound on the probability mass that \(\pi(\mu)\) assigns to a value \(\mu \in \Gamma\). This condition allows the prior on \(\mu\) to be very informative---for example, it may reduce to a single point within \(\Gamma\)---whereas the prior on $\theta$ must remain sufficiently diffuse to satisfy the required lower bound.

Before stating the next assumption, define
\[
R_{T}(\theta,\mu)
:=\frac{1}{2}\bigl(Q_T(\theta,\mu)
+V_T(h(\cdot),\theta,\mu)\bigr)
+\log\pi(\theta),
\]
which measures the discrepancy between the posterior 
and its Gaussian approximation. This assumption 
collects high-level rate conditions, which we verify under more primitive conditions in Lemmas~\ref{RT} and~\ref{verifybound} 
under Assumptions~\ref{assum9}--\ref{2}.

\begin{Assumption}[Empirical process rates]\label{rT}
The following properties hold with probability 
tending to $1$.

\begin{itemize}

\item[i)] \textbf{Local rate.} On the ball $B_\varepsilon$,
\[
\sup_{(\theta,\mu)\in B_\varepsilon}
\frac{T|R_T(\theta,\mu)|}
{\|\sqrt{T}h(\theta,\mu)\|^2+k(\log T)^2}
\lesssim_p
\frac{\sqrt{k}(\log T)^2}{\sqrt{T}}
\vee\frac{q}{\sqrt{kT}},
\]
\label{eq:cont}

\item[ii)] \textbf{Tail bound.} There exists a constant 
$1/2<C_0<1$ such that for all 
$(\theta,\mu)\in B_\varepsilon^{c}$,
\begin{equation}\label{eq:bound}
R_T(\theta,\mu)
\leq
-C_0\sqrt{T}\|h(\theta,\mu)\|_{W(\theta(\nu(\mu)))}\varepsilon
+\frac{C_0\varepsilon^2}{2}
+\frac{T\|h(\theta,\mu)\|^2_{W(\theta(\nu(\mu)))}}{2}
+\sqrt{T}\|\mu-\nu(\mu)\|.
\end{equation}

\end{itemize}
\end{Assumption}
Assumption \ref{rT} i) is verified by Lemma~\ref{RT} under 
Assumptions~\ref{assum9}--\ref{2}.
Assumption \ref{rT} ii) is verified by Lemma~\ref{verifybound} under 
Assumptions~\ref{assum33} and~\ref{assum10}.
 
The condition  
$$
\sup_{\theta,\mu \in B_{\varepsilon}} \frac{T|R_{T}(\theta,\mu)|}{\|\sqrt{T}h(\theta,\mu)\|^2 + k(\log T)^2} \to_p 0.
$$  
arises from needing to control a modulus of continuity. It ensures the oscillatory behavior of the empirical process $ R_T(\theta, \mu) $ is mild. While the condition is high level, Lemma \ref{RT} shows that this condition is satisfied with differentiable moments. Assumption \ref{eq:cont} effectively imposes an identification requirement for large values of $ \theta $ and a smoothness condition for smaller values of $\theta$ on the term  $\sup_{(\theta,\mu) \in B_{\varepsilon}^{c}} TR_T(\theta,\mu).$ The assumption resembles the finite-sample bound in Lemma A.16 of \cite{spokoiny2019accuracy}, and it enables the derivation of a tail bound outside the ball $B_{\varepsilon}$ using a Gaussian integral argument.
\begin{Assumption}[Dimension and moment conditions]\label{rates}
The dimensions $k$, $q$, $\kappa$, and sample size 
$T$ satisfy 
\[
\frac{k^{(\kappa\vee 2)+1}(\log T)^{2(\kappa\vee 2)+4}}{T}
\to 0
\qquad\text{and}\qquad
\frac{q^{5/2}(\log T)^2}{k\sqrt{T}}\to 0.
\]
\end{Assumption}
The first condition is used in Lemma~\ref{gai} to 
control the moment terms of order $\kappa$, and 
reduces to $k^3(\log T)^8/T\to 0$ when $\kappa=2$.
The second condition implies both $q^2(\log T)^2/T\to 0$ 
and $q^2/(kT)\to 0$ since $q\geq k\geq 1$, 
and is needed to control the weighting matrix 
interaction term in Lemma~\ref{RT}. 
It also ensures $q^2(\log T)^2\to\infty$, 
which guarantees the tail decay 
$k^{\kappa/2}\exp(-cq^2(\log T)^2)\to 0$ 
required in Lemma~\ref{gai}(ii).

Now, define $$N_{T}(\theta,\mu)=\frac{\exp\left\{ -\frac{1}{2}[V_T(h(.),\theta,\mu)]\right\} \pi(\mu)}{\int_{\mu \in \mathcal{M}}\int_{\theta \in \Theta}\exp\left\{ -\frac{1}{2}[V_T(h(.),\theta,\mu)]\right\} \pi(\mu)d\theta d\mu}.$$ 
Under the stated assumptions, we obtain the following Bernstein-von Mises-type result establishing that the quasi-posterior is asymptotically approximated by $N_T(\theta,\mu)$.
\begin{theorem}\label{Theorem:2}
Under Assumptions \ref{a1} and \ref{assum33}--\ref{rates}, for any $0 \le \kappa < \infty$,
\begin{equation}\label{eq:tvm_joint_convergence}
\begin{aligned}
\big\| p_T - N_T \big\|_{TVM(\kappa)}
&:=
\int_{\mathcal M}\int_{\Theta}
\left(1+
\left\|\sqrt{T}\big(\theta-\theta(\nu(\mu))\big)\right\|^{\kappa}
\right)
\\
&\qquad\qquad \times
\left|p_T(\theta,\mu)-N_T(\theta,\mu)\right|
\,d\theta\,d\mu
\;\xrightarrow{p}\;0 .
\end{aligned}
\end{equation}
\end{theorem}
\begin{proof} See Appendix \ref{Proof: theorem2} \end{proof}

The above theorem indicates that, conditional on fixed values of $ \mu $, the quasi-posterior distribution of $ \theta $ can be well approximated by a Gaussian distribution, thus facilitating practical inference via conditional sampling, as demonstrated in Theorems \ref{th3}-\ref{th4}. In contrast to Theorem \ref{Theorem:1}, Theorem \ref{Theorem:2} relaxes the prior specification on $\mu$ by not imposing a Gaussian prior, thereby extending the applicability of the result. While the joint limiting distribution $ N_T(\theta, \mu) $ is not Gaussian in general, it becomes Gaussian when conditioning on $ \mu $. This insight has practical implications: one may select a representative set of $ \mu $ values, compute the corresponding conditional quasi-posterior distributions of $ \theta $, and aggregate the highest quasi-posterior density regions. This approach mirrors the strategy employed by \cite{conley2012plausibly} for constructing robust inference under partial identification.

Let $PR_T(\alpha)$ denote the $(1- \alpha)$\% ($0<\alpha< 1$) highest quasi-posterior density region for $\theta$ obtained from the quasi-posterior $p_T(\theta)$.\footnote{For a positive constant $c$ and density $p_T(\theta)$, $PR_T(\alpha) = \{\theta \in \Theta: p_T(\theta)> c\}$ such that $\int_{PR_T(\alpha)} p_T(\theta) d \theta = 1- \alpha$.} Lemma \ref{coverageprt} shows that $PR_T(\alpha)$ asymptotically provides valid weighted average frequentist coverage in large samples if we envision a world where nature draws $\mu$ from the prior $\pi(\mu)$, in which case $\pi(\mu)$ coincides with the fixed latent distribution $F_\mu$ in the data generating mechanism. 

\begin{lemma}[Weighted average coverage rate of $PR_T(\alpha)$] \label{coverageprt}
Assume that  $\pi(\mu)$ coincides with $F_\mu$ and that Assumptions \ref{a1}, \ref{assum33}-\ref{rates} hold. Let $\partial F_{\mu}(u)/\partial \mu = f_{\mu}(u)$ and
fix $\alpha \in (0,1)$. Assume that $W(\theta(\mu))^{-1} =\Omega(\theta(\mu))$ and the distribution of $\widehat{m}(\theta(\mu)) - \mu$ under $ \mathbb{P}_\mu$ can be well-approximated by a Gaussian distribution with mean zero and covariance matrix $\Omega(\theta(\mu))$. Then $PR_T(\alpha)$ satisfies the following $\pi$-weighted average coverage rate, which also corresponds to the coverage rate under $\mathbb{P}$, in large samples:
$$\int_{\mu} \mathbb{P}_{\mu}(\theta(\mu) \in PR_{T}(\alpha)) \pi(\mu) d \mu = \mathbb{P}(\theta(\mu) \in PR_{T}(\alpha))  = 1-\alpha+o(1).$$ 
\end{lemma}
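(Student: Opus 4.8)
The statement has two parts: the first equality is an instance of Fubini's theorem, while the substantive content is the $\approx 1-\alpha$ claim, which I would establish as an asymptotic Bayesian calibration result. For the first equality, note that when $\pi(\mu)$ coincides with the latent law $F_\mu$, the joint measure $\mathbb{P}$ is by construction ``$\mu\sim\pi(\mu)$, then data $\sim\mathbb{P}_\mu$''; hence $\mathbb{P}(\theta(\mu)\in PR_T(\alpha))=\int_\mu \mathbb{P}_\mu(\theta(\mu)\in PR_T(\alpha))\pi(\mu)\,d\mu$ by the tower property, the event $\{\theta(\mu)\in PR_T(\alpha)\}$ being measurable in $(\mu,\text{data})$ since $PR_T(\alpha)$ depends only on the data. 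The plan for the second part is to show that the marginal quasi-posterior $p_T(\theta)$ is, asymptotically, the true conditional law of the target $\theta(\mu)$ given the data under the two-stage model, after which the calibration identity delivers $1-\alpha$.

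First I would reduce to a Gaussian limit experiment. By Theorem \ref{Theorem:2}, $p_T(\theta,\mu)$ is approximated in TVM norm by $N_T(\theta,\mu)$; completing the square in $V_T(h(\cdot),\theta,\mu)$ shows that, conditional on $\mu$, $N_T(\cdot,\mu)$ is the Gaussian density $\mathcal{N}\big(\theta^*(\mu),\,(T J_W(\theta(\mu)))^{-1}\big)$ with center $\theta^*(\mu)=\theta(\mu)-J_W(\theta(\mu))^{-1}G(\theta(\mu))^\top W(\theta(\mu))\big(\widehat m(\theta(\mu))-\mu\big)$, while the induced marginal weight over $\mu$ is $\propto \pi(\mu)\pi(\theta(\mu))|J_W(\theta(\mu))|^{-1/2}\exp\!\big(-\tfrac{T}{2}\|\widehat m(\theta(\mu))-\mu\|^2_{W-WGJ_W^{-1}G^\top W}\big)$. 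Since $\int_{PR_T(\alpha)}p_T(\theta)\,d\theta=1-\alpha$ exactly by definition and $p_T$ is $L^1$-close (hence TVM-close) to the mixture $\int N_T(\theta,\mu)\,d\mu$, the mixture integrates to $1-\alpha+o_p(1)$ over the same level set; and by the maintained Gaussian sampling approximation for $\widehat m(\theta(\mu))-\mu$ I may compute coverage under the Gaussian law up to vanishing error. This reduces the problem to the exactly Gaussian two-stage experiment.

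The crux is then the matching $p_T(\theta)\approx q_T(\theta)$, where $q_T$ denotes the genuine conditional density of $\theta(\mu)$ given the data in the two-stage Gaussian model, and this is exactly where the assumption $W(\theta(\mu))^{-1}=\Omega(\theta(\mu))$ enters. Under $W=\Omega^{-1}$ one has $J_W=G^\top\Omega^{-1}G$ and $G^\top W\Omega W G=G^\top\Omega^{-1}G=J_W$, so the sampling covariance of the center, $\mathrm{Var}_{\mathbb{P}_\mu}(\theta^*(\mu)-\theta(\mu))=\tfrac1T J_W^{-1}G^\top W\Omega W G J_W^{-1}=\tfrac1T J_W^{-1}$, coincides with the conditional quasi-posterior covariance $\tfrac1T J_W^{-1}$; moreover the profiled exponent $W-WGJ_W^{-1}G^\top W$ reduces to the efficient residual projection $\Omega^{-1}-\Omega^{-1}G(G^\top\Omega^{-1}G)^{-1}G^\top\Omega^{-1}$, which is the concentrated Gaussian log-likelihood in $\mu$. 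Consequently the efficient quasi-likelihood reproduces the limit-experiment likelihood both conditionally on $\mu$ and after profiling out $\theta$, so the pushforward of the true $\mu$-posterior under $\theta(\cdot)$ equals $p_T(\theta)$ up to $o_p(1)$, i.e. $q_T=p_T$. The standard calibration identity then gives $\mathbb{P}(\theta(\mu)\in PR_T(\alpha))=\mathbb{E}_{\text{data}}\big[\int_{PR_T(\alpha)}q_T(\theta)\,d\theta\big]=\mathbb{E}_{\text{data}}[1-\alpha]+o(1)=1-\alpha+o(1)$.

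The main obstacle is precisely this matching step: verifying that the efficiently weighted quasi-posterior is, to the relevant order, the true Bayesian posterior of the target in the two-stage model. The conditional covariance matching is immediate from $W=\Omega^{-1}$, but one must also show that the marginal-over-$\mu$ weights agree, which requires that the $\theta$-profiled continuous-updating criterion reproduce the concentrated Gaussian likelihood in $\mu$ and that the Jacobian and prior factors $\pi(\theta(\mu))|J_W|^{-1/2}$ be correctly accounted for. The TVM control from Theorem \ref{Theorem:2} together with Assumption \ref{rT} is what lets these approximations be transferred to the level sets and integrals defining $PR_T(\alpha)$ uniformly enough that the $o_p(1)$ errors do not accumulate. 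I note that away from the efficient case, $W\neq\Omega^{-1}$, the covariance matching fails and exact calibration would not hold, which is consistent with the lemma invoking efficient weighting as a hypothesis.
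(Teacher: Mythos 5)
Your strategy is genuinely different from the paper's. You try to show that the marginal quasi-posterior $p_T(\theta)$ coincides asymptotically with the \emph{true} conditional law $q_T$ of $\theta(\mu)$ given the data in the two-stage Gaussian experiment, and then invoke the exact Bayesian calibration identity $\mathbb{E}[\int_{PR_T(\alpha)}q_T]=1-\alpha$. The paper never constructs a true posterior. It observes that $PR_T(\alpha)$ is a level set of a function of $\widehat m(\theta)$ alone, decomposes $\widehat m(\theta(\mu))=(\widehat m(\theta(\mu))-\mu)+\mu$, and matches the law of this statistic under the quasi-posterior with its law under the two-stage sampling scheme: in the local Gaussian case both are $\mathcal{N}\left(0,A_{\theta_0}^{-1}/T\right)$ with $A_{\theta_0}^{-1}=\Omega(\theta_0)+\Lambda$ by the Woodbury identity, so the posterior quantile of the quadratic form is simultaneously its $(1-\alpha)$ sampling quantile; the nonlocal case is handled by the same pivot-level matching applied to $\tilde p_T(\widehat m(\theta))$. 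That route needs only the hypotheses actually stated in the lemma (Gaussianity of $\widehat m(\theta(\mu))-\mu$, efficiency of $W$, and $\pi=F_\mu$), whereas yours requires a full posterior identification. Your covariance computation --- that under $W=\Omega^{-1}$ the sampling variance of the center $\theta^{*}(\mu)$ equals the conditional quasi-posterior variance $(TJ_W(\theta(\mu)))^{-1}$ --- is correct and is the same mechanism driving the paper's argument.

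The gap is the step you yourself flag: the claim $q_T=p_T$. Two issues. First, in the two-stage model the data are i.i.d.\ draws from $\mathbb{P}_\mu$, so the genuine conditional law of $\theta(\mu)$ given the data involves the unspecified likelihood $\prod_t d\mathbb{P}_\mu(Z_t)$; $q_T$ is only meaningful inside a reduced limit experiment in which the data are the moment process, and even there the mean function away from $\theta(\mu)$ is an infinite-dimensional nuisance, so identifying the CUE quasi-posterior with a genuine posterior requires the prior construction of Supplementary Appendix \ref{appendix:optimalrule}, not just $W=\Omega^{-1}$. Second, and more concretely, your own Laplace computation gives the quasi-posterior's marginal weight over $\mu$ as proportional to $\pi(\mu)\,\pi(\theta(\mu))\det\!\left(J_W(\theta(\mu))\right)^{-1/2}$ times the exponential of the \emph{residual-projection} quadratic form $W-WGJ_W^{-1}G^{\top}W$, whereas the marginal Gaussian likelihood of the statistic $\widehat m(\theta(\mu))\sim\mathcal{N}(\mu,\Omega/T)$ carries the full quadratic form $\Omega^{-1}$; the two differ by the fitted component along $G$, and the factors $\pi(\theta(\mu))\det\!\left(J_W(\theta(\mu))\right)^{-1/2}$ are constant to first order only when $\mu$ is confined to a $1/\sqrt{T}$ neighborhood as in Theorem \ref{Theorem:1}. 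Under a fixed prior on $\mathcal{M}$ the posterior for $\mu$ does not concentrate, these discrepancies reweight the mixture by $O(1)$, and the claimed agreement of the mixing weights does not follow. Without that step your argument delivers only conditional-on-$\mu$ calibration, not the marginal statement of the lemma.
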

\begin{proof}
    See Appendix \ref{Proof: lemma1}. 
\end{proof}

\subsection{\small Using quasi-posteriors to provide frequentist inference under support restrictions}\label{main2} 
In this section, we provide an approach to obtain regions for $\theta$ that deliver valid frequentist coverage guarantees under support restrictions over $\mu$. The regions are constructed by taking unions of quasi-posterior credible regions for fixed values of $\mu$. Frequentist validity of this approach relies on properties of quasi-posterior intervals obtained from the posterior distribution $p_{T}(\theta,\mu)$ established in Theorems \ref{th3}-\ref{th4} in this section.  

In the following, we suppose that one is interested in a continuously differentiable function $\eta(\theta): \mathbb{R}^{k}\to \mathbb{R}$. To state our results, we define the following quantities at fixed, given values of $\mu$: $$J_{\Omega,W}\left(\theta(\mu) \right) = G(\theta(\mu))^{\top}W(\theta(\mu))\Omega(\theta(\mu))  W(\theta(\mu))G(\theta(\mu)),$$
$$U_{T}(\mu)={J}_{W}\left(\theta(\mu)\right)^{-1}{\Delta}_{T,W}\left(\theta(\mu)\right), \; {J}_{W}\left(\theta(\mu)\right) = G(\theta(\mu))^{\top}W(\theta(\mu))G(\theta(\mu)), $$
$$\Delta_{T,W}\left(\theta(\mu)\right)=G(\theta(\mu))^{\top}W(\theta(\mu))(\widehat{m}(\theta(\mu))-\mu).$$  We first introduce a high level assumption for an estimator of $\theta(\mu)$ which is defined at a fixed value of $\mu$. While we focus on quasi-Bayes estimators in this paper, we note that the estimator in this section can be relatively generic. For example, it could be a LTE conditional on a value of $\mu$, $\widehat{\theta}(\mu) =\mbox{argmin}_{d\in \mathcal{D}} \int_{\theta} \ell(\theta,d) p_T(\theta,\mu)/p_{T}(\mu)d\theta$, 
or a CUE $\widehat{\theta}(\mu) =\mbox{argmin}_{\theta \in \Theta} Q_T(\theta,\mu)$, among many others.

\begin{Assumption}[Asymptotic normality]\label{Gaussian}
For each fixed $\mu \in \Gamma$, $\widehat{\theta}(\mu)$ admits the linearization
\[
\Big\|
\big(\widehat{\theta}(\mu)-\theta(\mu)\big)
-
U_T(\mu)
\Big\|
=
o_p\!\left(T^{-1/2}\right).
\]
Moreover,
\[
\sqrt{T}\,
\sigma_{\eta,\mu}^{-1}
\left(
\frac{\partial \eta(\theta(\mu))}{\partial\theta}
\right)^{\top}
\big(\widehat{\theta}(\mu)-\theta(\mu)\big)
\xrightarrow{d}
\mathcal{N}(0,1),
\]
where
\[
\begin{aligned}
\sigma_{\eta,\mu}^2
=
\left(
\frac{\partial \eta(\theta(\mu))}{\partial\theta}
\right)^{\top}
J_W\big(\theta(\mu)\big)^{-1}
J_{\Omega,W}\big(\theta(\mu)\big)
J_W\big(\theta(\mu)\big)^{-1}
\left(
\frac{\partial \eta(\theta(\mu))}{\partial\theta}
\right).
\end{aligned}
\]
\end{Assumption}

The expansion assumed in Assumption \ref{Gaussian} is readily justified for GMM estimators; see, e.g., Corollary \ref{coro:Gaussian}. The proof of Theorem \ref{Theorem:2} implies that, for any $\mu \in \Gamma$, the CUE admits the same first-order linearization as the LTE with symmetric loss functions analyzed in \cite{chernozhukov2003mcmc}, when considering the quasi-posterior conditional on $\mu$; see also Theorem 2 of \cite{chernozhukov2003mcmc} for the fixed-$k$ case. When the weight matrix satisfies the generalized information equality (Equation \ref{inform}), Assumption \ref{Gaussian} further implies that the asymptotic variance of the leading term $(\partial \eta(\theta(\mu))/\partial\theta)^{\top} U_{T}(\mu)$ in the expansion of $\eta(\widehat{\theta}(\mu)) - \eta(\theta(\mu))$ is given by
$$
T^{-1} \left(\frac{\partial \eta(\theta(\mu))}{\partial\theta}\right)^{\top} {J}_{\Omega}\left(\theta(\mu)\right)^{-1} \left(\frac{\partial \eta(\theta(\mu))}{\partial\theta}\right), \; {J}_{\Omega}\left(\theta(\mu)\right) = G(\theta(\mu))^{\top}\Omega(\theta(\mu))^{-1} G(\theta(\mu)).
$$

We now state two theorems that make use of different features of quasi-posteriors obtained conditional on fixed values of $\mu$ to produce interval estimates for $\eta(\theta(\mu))$. The first main result in each theorem verifies that the resulting interval estimates have asymptotically correct frequentist coverage under the assumption that the fixed value of $\mu$ corresponds to the value of $\mu$ defining the conditional distribution from which data were realized. As a consequence, we can obtain frequentist confidence regions with correct coverage under the prior support condition that $\mu$ belongs to a known set $\mathcal{M}$ without requiring a completely specified prior by taking a union of confidence intervals produced at each $\mu \in \mathcal{M}$. This approach is analogous to the union of confidence intervals approach in \citet{conley2012plausibly} and the approach outlined in Remark 3.3 of \citet{armstrong2021sensitivity}.
\begin{theorem}[Frequentist properties of posterior quantiles]\label{th3}
Assume that $\eta(\theta)$ has uniformly bounded first derivatives, that is, 
for some constant $C>0$,
\[
\left\|
\frac{\partial \eta(\theta(\mu))}{\partial \theta}
\right\|
\le C.
\] 
Assume further that
\begin{equation}\label{inform}
\lim_{T\to\infty}
J_{\Omega}\big(\theta(\mu)\big)
J_{\Omega,W}\big(\theta(\mu)\big)^{-1}
=
I_k,
\end{equation}
where $I_k$ denotes the $k\times k$ identity matrix. For each $\mu\in\Gamma$, define the conditional quasi-posterior distribution function of $\eta(\theta)$ by
\begin{equation}\label{eq:conditional_cdf_eta}
F_{\eta,T}(x,\mu)
=
\int_{\{\theta\in\Theta:\,\eta(\theta)\le x\}}
\frac{p_T(\theta,\mu)}{p_T(\mu)}
\,d\theta,
\end{equation}
and define its $\alpha$-quantile as
\[
c_{\eta,T}(\alpha,\mu)
=
\inf\left\{
x\in\mathbb{R}:
F_{\eta,T}(x,\mu)\ge \alpha
\right\}.
\]
Let
\[
\sigma_\eta^2(\mu)
=
\left(
\frac{\partial \eta(\theta(\mu))}{\partial \theta}
\right)^{\top}
J_{\Omega}\big(\theta(\mu)\big)^{-1}
\left(
\frac{\partial \eta(\theta(\mu))}{\partial \theta}
\right).
\]
Under Assumptions \ref{a1} and \ref{assum33}--\ref{Gaussian}, for any $\alpha\in(0,1)$,
\begin{equation}\label{eq:posterior_quantile_expansion}
c_{\eta,T}(\alpha,\mu)
=
\eta\big(\widehat{\theta}(\mu)\big)
+
q_{\alpha}\frac{\sigma_\eta(\mu)}{\sqrt{T}}
+
o_p\!\left(T^{-1/2}\right),
\end{equation}
where $q_\alpha$ denotes the $\alpha$-quantile of the standard normal distribution. Define
\[
\mathrm{CI}_T(\mu)
=
\left[
c_{\eta,T}(\alpha/2,\mu),
c_{\eta,T}(1-\alpha/2,\mu)
\right].
\]
Then
\begin{equation}\label{eq: cond cov 1}
\lim_{T\to\infty}
\mathbb{P}_{\mu}^*
\left\{
\eta\big(\theta(\mu)\big)\in \mathrm{CI}_T(\mu)
\right\}
=
1-\alpha,
\end{equation}
and
\begin{equation}\label{eq: union cov 1}
\lim_{T\to\infty}
\mathbb{P}^*
\left\{
\eta\big(\theta(\mu)\big)
\in
\cup_{\mu'} \mathrm{CI}_T(\mu'),
\;
\forall \mu\in\Gamma 
\right\}
\ge
1-\alpha.
\end{equation}
\end{theorem}
\begin{proof} See Appendix \ref{proof: theorem3}. \end{proof} 

Theorem \ref{th3} shows that quantiles induced by the limiting conditional posterior density 
$\frac{N_T(\theta,\mu)}{N_T(\mu)}$ yield an asymptotically valid frequentist approximation to the distribution of 
$\sqrt{T}\big(\eta(\widehat{\theta}(\mu))-\eta(\theta(\mu))\big)$. 
The coverage result is driven by the generalized information equality in \eqref{inform}, 
and is in line with existing results for point-identified scalar parameters and partially identified models; 
see, for example, \cite{chernozhukov2003mcmc} and \cite{chen2018monte}.

The first main result of Theorem \ref{th3}, equation \eqref{eq: cond cov 1}, verifies that posterior quantiles obtained from the quasi-posterior constructed conditional on fixed value of $\mu$ have asymptotically correct coverage under the corresponding conditional measure. Equation \eqref{eq: union cov 1}, showing valid coverage of the union of intervals obtained under a support restriction, then immediately follows under the assumption that the value of $\mu$ under which the data were generated belongs to the specified support $\mathcal{M}.$ The results of Theorem \ref{th3} critically depend on choosing $\widehat{W}_T(\theta(\mu))$ such that \eqref{inform} holds.  
Motivated by Theorem 4 of \citet{chernozhukov2003mcmc}, the next result proposes an alternative procedure for using features of the quasi-posterior to construct intervals with frequentist coverage guarantees in settings where $\widehat{W}_T(\theta(\mu))$ is specified in such a way that \eqref{inform} does not hold.  
 
\begin{theorem}[Frequentist properties of intervals based on Gaussian approximations]\label{th4}
Suppose Assumptions \ref{a1} and \ref{assum33}--\ref{Gaussian} hold, and suppose that 
$\eta(\theta)$ has bounded derivatives. Let $c>0$ be small enough such that
\[
\int_{\mathcal M}
1_{\{p_T(\mu)\leq c\}}\,p_T(\mu)\,d\mu
=
o_p(1).
\]
We apply the theorem to points $\mu$ with $p_T(\mu)>c$. Let
\[
\widehat{J}_{T}^{-1}\big(\widehat{\theta}(\mu)\big)
=
\int_{\Theta}
T\big(\theta-\widehat{\theta}(\mu)\big)
\big(\theta-\widehat{\theta}(\mu)\big)^{\top}
\left[
\frac{p_T(\theta,\mu)}{p_T(\mu)}
\right]
\,d\theta .
\]
Assume that there exists an estimator 
$\widetilde{J}_{\Omega,W}\big(\theta(\mu)\big)$ such that
\[
\left\|
\widetilde{J}_{\Omega,W}\big(\theta(\mu)\big)
J_{\Omega,W}\big(\theta(\mu)\big)^{-1}
-
I_{k\times k}
\right\|
\to_p 0 .
\]

\[
\left\|
\widehat{J}_{T}\big(\widehat{\theta}(\mu)\big)
J_W\big(\theta(\mu)\big)^{-1}
-
I_{k\times k}
\right\|
\to_p 0 ,
\]
where $I_{k\times k}$ denotes the $k\times k$ identity matrix. For $\mu\in\Gamma$, let
\[
\begin{aligned}
\widetilde{c}_{\eta,T}(\alpha,\mu)
\defeq\;
&\eta\big(\widehat{\theta}(\mu)\big)
+
q_{\alpha}\frac{1}{\sqrt{T}}
\\
&\times
\Bigg[
\left(
\frac{\partial \eta\big(\widehat{\theta}(\mu)\big)}
{\partial \theta}
\right)^{\top}
\widehat{J}_{T}\big(\widehat{\theta}(\mu)\big)^{-1}
\widetilde{J}_{\Omega,W}\big(\theta(\mu)\big)
\\
&\qquad\qquad\qquad
\times
\widehat{J}_{T}\big(\widehat{\theta}(\mu)\big)^{-1}
\left(
\frac{\partial \eta\big(\widehat{\theta}(\mu)\big)}
{\partial \theta}
\right)
\Bigg]^{1/2},
\end{aligned}
\]
and
\[
\widetilde{\mathrm{CI}}(\mu)
=
\left[
\widetilde{c}_{\eta,T}(\alpha/2,\mu),
\widetilde{c}_{\eta,T}(1-\alpha/2,\mu)
\right].
\]
Then
\begin{equation}\label{eq: cond cov 2}
\lim_{T\to\infty}
\mathbb{P}_{\mu}^{*}
\left\{
\eta\big(\theta(\mu)\big)
\in
\widetilde{\mathrm{CI}}(\mu)
\right\}
=
1-\alpha,
\end{equation}
and, for a small enough $c$,
\begin{equation}\label{eq: union cov 2}
\lim_{T\to\infty}
\mathbb{P}^{*}
\left\{
\eta\big(\theta(\mu)\big)
\in
\cup_{\mu'}
\widetilde{\mathrm{CI}}(\mu'),
\;
\forall \mu\in\Gamma 
\right\}
\ge
1-\alpha .
\end{equation}
\end{theorem}
\begin{proof} See Appendix \ref{proof: theorem4}. \end{proof}

Theorem \ref{th4} verifies that intervals constructed making use of a normal approximation constructed conditional on fixed value of $\mu$ also have asymptotically correct coverage under the corresponding conditional measure. In practice, $\widehat{J}_{T}\left(\widehat{\theta}(\mu)\right)^{-1}$ in Theorem \ref{th4} can be computed by multiplying the variance-covariance matrix of the MCMC
sequence $S=\left(\theta^{(1)},\theta^{(2)},\ldots,\theta^{(B)}\right)$, 
where $B$ denotes the simulation sample size, by $T$ in settings where MCMC is used to approximate the quasi-posterior at a fixed value of $\mu$. As with Theorem \ref{th3}, it is then immediate that a union of intervals obtained using this approach under a support restriction on $\mu$ delivers valid frequentist inference.

Note that we state these results for completeness and to verify that the quasi-Bayes approach can be used to deliver frequentist valid inference under only support restrictions as in \cite{armstrong2021sensitivity}. However, our chief interest is in using quasi-Bayes approaches in settings where informative prior information is of use. If valid frequentist inference under a support restriction is the goal, it is not clear there is much advantage to adopting the framework presented in this paper. Nevertheless, our posterior density assigns probability mass to both $\theta$ and $\mu$, enabling a ranking across different parameter values. This offers richer information than a simple interval estimate.

\section{Conclusion}

In this paper, we introduce Plausible GMM (PGMM), a quasi-Bayesian framework for inference in moment condition models that allows for potential misspecification. By placing a proper prior over the degree of misspecification, PGMM provides a flexible and transparent way to incorporate researchers' subjective beliefs about the plausibility of structural assumptions. This approach extends classical GMM by acknowledging that moment conditions are often credible but not exact, enabling more credible inference in the presence of model uncertainty.

Our theoretical contributions include posterior concentration results and new Bernstein-von Mises type approximations under partial identification for quasi-Bayes procedures allowing diverging dimensions of parameters and moments. In addition, we also provide decision-theoretic guarantees under some special cases. While not our main goal, we also provide an approach and results for using quasi-posteriors to obtain asymptotically valid frequentist inferential statements under support restrictions for the degree of misspecification.

Empirical applications illustrate the use of PGMM. In these examples, we see that PGMM intervals remain informative while allowing for subjective, but empirically motivated deviations away from dogmatic identifying assumptions. PGMM may thus offer a useful tool for applied researchers who wish to retain the structure of moment-based models while explicitly allowing for uncertainty about moments being perfectly satisfied.

\section{Appendix: Proofs}
 \textbf{Notation} We denote the max norm by $|A|_{\text{max}}=\max_{i,j}|a_{i,j}|$,  the spectral norm by $\|A\|=\sqrt{\lambda_{\max}(A^TA)}$, and the Frobenius norm by $\|A\|_F$.  The trace of an $ n \times n $ square matrix $A$ is defined as $\mbox{tr}(A) = \sum_i a_{i,i}$ and its determinant is denoted by $\det(A)$.
 For positive semi-definite matrices $A, B$, we write $A\geq B$ if $A-B$ is positive semi-definite.
 ``w.p.a.1.'' to denote ``with probability approaching one''.
 For $s>0$ and a random vector $X$, we say $X\in\mathcal{L}^s$ if $\| X\|_s=[\E(|X|^s)]^{1/s}<\infty.$ 
\subsection{\small Proof of Theorem  \ref{Theorem:1}}\label{Proof: theorem1}
	\[
	p_{T}(\theta,\mu)\propto\pi(\theta)\exp(-\frac{T}{2}\|\widehat{m}(\theta)-\mu\|_{\widehat{\Omega}_T(\theta)^{-1}}^2)\exp(-(T \mu^{\top}\Lambda^{-1}\mu)/2).
	\]
We shall now prove that it is indeed true that,
	\[
	p_{T}(\theta)\propto\exp(-\frac{T}{2}\|\widehat{m}(\theta)-\mu_0\|_{\widehat{A}_{T,\theta}}^{2}).
	\]
 Without loss of generality, we prove for $\mu_0 = 0$.
Let $\mu\sim N(\mu_0,T^{-1}\Lambda).$
Let $$C_w^{2}=T(\Lambda^{-1}+\widehat{\Omega}_T(\theta)^{-1}),\;  C_wC_g=T\widehat{\Omega}_T(\theta)^{-1}\widehat{m
}(\theta).$$ 
Then we have the following,
\begin{align*}
	p_{T}(\theta,\mu) & \propto \pi(\theta)\exp(-\frac{1}{2}\|\widehat{m
}(\theta) -\mu  \|_{T\widehat{\Omega}_T(\theta)^{-1}}^{2}-\|{\mu}\|_{T\Lambda^{-1}}^{2}/2)\\
	& \propto \pi(\theta) \exp(-\frac{1}{2}\|\widehat{m}(\theta)\|_{T\widehat{\Omega}_T(\theta)^{-1}}^{2}+T{\mu}^{\top}\widehat{\Omega}_T(\theta)^{-1}\widehat{m
}(\theta)-\|{\mu}\|_{T\widehat{\Omega}_T(\theta)^{-1}}^{2}/2-\|{\mu}\|^2_{T\Lambda^{-1}}/2)\\
	& \propto \pi(\theta)  \exp(-\frac{1}{2}(\|\widehat{m}
(\theta)\|_{T\widehat{\Omega}_T(\theta)^{-1}}^{2})\exp(+T\mu^{\top}\widehat{\Omega}_T(\theta)^{-1}\widehat{m
}(\theta)-\|\mu\|_{(\Lambda^{-1}+\widehat{\Omega}_T(\theta)^{-1})T}^{2}/2)\\
	& \propto \pi(\theta) \exp(-\frac{1}{2}(\|\widehat{m
}(\theta)\|_{T\widehat{\Omega}_T(\theta)^{-1}}^{2})\exp(+\mu^{\top}C_w C_g-\|\mu\|_{C_w^{2}}^{2}/2-C_g^{\top}C_g/2)\exp(+C_g^{\top}C_g/2)\\
	& \propto \pi(\theta) \exp(-\frac{1}{2}(\|\widehat{m
}(\theta)\|_{T\widehat{\Omega}_T(\theta)^{-1}}^{2})\exp(-(C_w\mu-C_g)^{\top}(C_w\mu-C_g)/2)\exp(C_g^{\top}C_g/2)\\
	& \propto \pi(\theta)  \exp(-\frac{1}{2}(\|\widehat{m
}(\theta)\|_{T\widehat{\Omega}_T(\theta)^{-1}}^{2})\exp(-(\mu-C_w^{-1}C_g)^{\top}C_w^{2}(\mu-C_w^{-1}C_g)/2)\exp(C_g^{\top}C_g/2).
\end{align*}
Plugging in the definition of $C_g$, we have that
\begin{align*}
	\int p_{T}(\theta,\mu)d\mu & \propto \pi(\theta)  \exp(-\frac{1}{2}(\|\widehat{m}(\theta)\|_{T\widehat{\Omega}_T(\theta)^{-1}}^{2}))\exp(T^2\widehat{m
}(\theta)^{\top}\widehat{\Omega}_T(\theta)^{-1}C_w^{-2}\widehat{\Omega}_T(\theta)^{-1}\widehat{m
}(\theta)/2) \\
& \qquad *\sqrt{2\pi{\det{(C_w^{2})}}}\\
	& \propto \pi(\theta) 
 \exp(-\frac{1}{2}(\|\widehat{m
}(\theta)\|_{T(\widehat{\Omega}_T(\theta)^{-1}-\widehat{\Omega}_T(\theta)^{-1}TC_w^{-2}\widehat{\Omega}_T(\theta)^{-1})}^{2})).
\end{align*} 
Then, the results directly follow the proof of Theorem 1 in \cite{chernozhukov2003mcmc} for fixed $\Lambda$.  
In the scenario where $c \ll \Lambda \ll T$ ($c$ is a positive constant), we need to replace $M$ in the proof of Theorem 1 in \cite{chernozhukov2003mcmc} by a slowly increasing $M_T$, e.g., $M_T=M\|A_{\theta(\mu_0)}\|^{-\frac{1}{2}}$, and the remaining proofs proceed with a similar rationale. 
\subsection{\small Proof of Theorem \ref{Theorem:2}}\label{Proof: theorem2}

We slightly modify the definition of the TVM norm used in the proof. 
Throughout this proof, we work with
\[
\int_{\mathcal M}\int_{\Theta}
\left(
1+\|\sqrt{T}h(\theta,\mu)\|^\kappa
\right)
\left|
p_T(\theta,\mu)-N_T(\theta,\mu)
\right|
\,d\theta\,d\mu .
\]
Establishing the result under this norm is equivalent to proving the result under
\[
\int_{\mathcal M}\int_{\Theta}
\left(
1+\|\sqrt{T}(\theta-\theta(\nu(\mu)))\|^\kappa
\right)
\left|
p_T(\theta,\mu)-N_T(\theta,\mu)
\right|
\,d\theta\,d\mu ,
\]
except for the case treated separately in Lemma~\ref{lemmaextra}. Note that Assumption \ref{assum33} implies
\[
\operatorname{tr}
\left(
G(\theta)^\top W(\theta)G(\theta)
\right)
\lesssim k,
\]
\[
\operatorname{tr}
\left\{
T
G(\theta)^\top
W(\theta)
\big(\widehat m(\theta)-\mu\big)
\big(\widehat m(\theta)-\mu\big)^\top
W(\theta)
G(\theta)
\right\}
=
O_p(k),
\]
and
\[
0
<
\lambda_{\min}
\left(
W(\theta)G(\theta)G(\theta)^\top W(\theta)
\right)
\le
\lambda_{\max}
\left(
W(\theta)G(\theta)G(\theta)^\top W(\theta)
\right)
\lesssim 1.
\]

Define
\[
C_{m,\mu}
=
G^\top\big(\theta(\nu(\mu))\big)
W\big(\theta(\nu(\mu))\big)
\big(
\widehat m(\theta(\nu(\mu)))-\nu(\mu)
\big),
\]
\[
C_{w,\mu}
=
G^\top\big(\theta(\nu(\mu))\big)
W\big(\theta(\nu(\mu))\big)
G\big(\theta(\nu(\mu))\big),
\]
and
\[
\begin{aligned}
2C_\mu
=
&-2\log\pi(\mu)
-2\log\pi\big(\theta(\nu(\mu))\big)
\\
&+
T
\big(
\widehat m(\theta(\nu(\mu)))-\nu(\mu)
\big)^\top
W\big(\theta(\nu(\mu))\big)
\big(
\widehat m(\theta(\nu(\mu)))-\nu(\mu)
\big)
\\
&+
2\sqrt{T}\|\mu-\nu(\mu)\|.
\end{aligned}
\]

We next analyze $V_T(h(\cdot),\theta,\mu)$. Abbreviating $h(\theta,\mu)$ by $h$, we have
\begin{align*}
&V_T(h(\cdot),\theta,\mu)-2\log\pi(\mu)
\\
={}&
2T h^\top
W\big(\theta(\nu(\mu))\big)
\big(
\widehat m(\theta(\nu(\mu)))-\nu(\mu)
\big)
+
T h^\top
W\big(\theta(\nu(\mu))\big)h
\\
&\quad
-2\log\pi(\mu)
-2\log\pi\big(\theta(\nu(\mu))\big)
\\
&\quad
+
T
\big(
\widehat m(\theta(\nu(\mu)))-\nu(\mu)
\big)^\top
W\big(\theta(\nu(\mu))\big)
\big(
\widehat m(\theta(\nu(\mu)))-\nu(\mu)
\big)
\\
&\quad
+
2\sqrt{T}\|\mu-\nu(\mu)\|
\\
={}&
2T h^\top
W\big(\theta(\nu(\mu))\big)
\big(
\widehat m(\theta(\nu(\mu)))-\nu(\mu)
\big)
+
T h^\top
W\big(\theta(\nu(\mu))\big)h
+
2C_\mu .
\end{align*}

Conditioning on $\mu$, and completing the square in $\theta$,
\[
\exp\!\left\{
-\frac{1}{2}
\big(
V_T(h(\cdot),\theta,\mu)-2\log\pi(\mu)
\big)
\right\}
\]
is, as a function of $\theta$, proportional to the density of
\[
\mathcal N
\left(
\begin{aligned}
&\theta(\nu(\mu))
-
C_{w,\mu}^{-1}C_{m,\mu}
\\
&\qquad
+
C_{w,\mu}^{-1}
G\big(\theta(\nu(\mu))\big)^\top
W\big(\theta(\nu(\mu))\big)
\big(\mu-\nu(\mu)\big),
\end{aligned}
\;
(TC_{w,\mu})^{-1}
\right).
\]

By Assumption \ref{assum33},
\begin{align*}
\operatorname{tr}\big(C_{m,\mu}C_{m,\mu}^\top\big)
&=
\big(
\widehat m(\theta(\nu(\mu)))-\nu(\mu)
\big)^\top
W\big(\theta(\nu(\mu))\big)
G\big(\theta(\nu(\mu))\big)
G^\top\big(\theta(\nu(\mu))\big)
\\
&\qquad\qquad\times
W\big(\theta(\nu(\mu))\big)
\big(
\widehat m(\theta(\nu(\mu)))-\nu(\mu)
\big)
\\
&=
O_p\!\left(
\operatorname{tr}
\left[
G^\top\big(\theta(\nu(\mu))\big)
W\big(\theta(\nu(\mu))\big)
\right.\right.
\\
&\qquad\qquad\qquad\left.\left.
\times
\mathbb E
\left\{
\big(
\widehat m(\theta(\nu(\mu)))-\nu(\mu)
\big)
\big(
\widehat m(\theta(\nu(\mu)))-\nu(\mu)
\big)^\top
\right\}
\right.\right.
\\
&\qquad\qquad\qquad\left.\left.
\times
W\big(\theta(\nu(\mu))\big)
G\big(\theta(\nu(\mu))\big)
\right]
\right)
\\
&=
O_p\!\left(\frac{k}{T}\right).
\end{align*}
and $\lambda_{\max}(G(\theta(\nu(\mu)))G^{\top}(\theta(\nu(\mu))))=\lambda_{\max}(G^{\top}(\theta(\nu(\mu)))G(\theta(\nu(\mu))))\lesssim 1$  for all $\mu$.

To derive the conclusion, we shall divide the proof into the following steps:
	\begin{align*}
&\int_{\mathcal{M}}\int_{\Theta}\left(1+\|\sqrt{T}h(\theta,\mu)\|^{\kappa}\right)\left|p_{T}(\theta,\mu)-N_{T}(\theta,\mu)\right|\mathrm{d}\theta d\mu\\
=&\int_{B_{\varepsilon}}\left(1+\|\sqrt{T}h(\theta,\mu)\|^{\kappa}\right)\left|p_{T}(\theta,\mu)-N_{T}(\theta,\mu)\right|\mathrm{d}\theta d\mu
\\&+\int_{B_{\varepsilon}^{c}}\left(1+\|\sqrt{T}h(\theta,\mu)\|^{\kappa}\right)\left|p_{T}(\theta,\mu)-N_{T}(\theta,\mu)\right|\mathrm{d}\theta d\mu
		=\mathcal{R}_{1,T}+\mathcal{R}_{2,T}.
	\end{align*}

To look at $\mathcal{R}_{1,T}$, we look at
$\int_{B_{\varepsilon}}\left(1+\|\sqrt{T}h(\theta,\mu)\|^{\kappa}\right)N_{T}(\theta,\mu)\left|p_{T}(\theta,\mu)/N_{T}(\theta,\mu)-1\right|\mathrm{d}\theta d\mu.$  Let the integral ratio be
	\begin{eqnarray*}
&&c^*=\frac{\int_{ \Xi}\exp(-V_T(h(.),\theta,\mu)/2+\log(\pi(\mu)))d\theta d\mu}{\int_{ \Xi}\exp(\frac{1}{2}Q_T(\theta,\mu)+\log\pi(\mu,\theta))d\theta d\mu}.
	\end{eqnarray*}	
 By Lemma \ref{smallc}, under Assumptions \ref{assummu1}-\ref{rT}, we have $c^* \rightarrow_{p} 1$.
	 
 We define $N_{T}(\theta|\mu) \propto \frac{N_{T}(\theta,\mu)}{\int_\Theta 
 N_{T}(\theta,\mu) d\theta}$ and $N_T(\mu)\propto \int_\Theta 
 N_{T}(\theta,\mu) d\theta$  for $\mu \in \mathcal{M}$. It is not hard to see that, conditional on $\mu$, the density function
$N_{T}(\theta|\mu)$ is a multivariate Gaussian density in $\theta$ with mean 
$\theta(\nu(\mu)) - C_{w,\mu}^{-1}C_{m,\mu} 
+ C_{w,\mu}^{-1}G(\theta(\nu(\mu)))^\top W(\theta(\nu(\mu)))(\mu - \nu(\mu))$ 
and variance $T^{-1}C_{w,\mu}^{-1}$. 
And we define $\E_{N_T(\theta|\mu)}(.)$ as taking expectation under the measure corresponding to the density function $N_T(\theta|\mu)$.  
We denote $\E_{\gamma}(.)$ as taking expectation under a standard multivariate Gaussian distribution with an identity variance covariance matrix. And $\mathbb{P}_{\gamma}(.)$ is the probability corresponding to $\gamma$.
Recall that 
\[
W_{\mu}=W(\theta(\nu(\mu)))\in\mathbb{R}^{q\times q}
\]
is symmetric positive definite and 
\[
G_{\mu}=G(\theta(\nu(\mu)))\in\mathbb{R}^{q\times k}
\]
has full column rank ($k\le q$). 
Define
\[
\theta_\mu:=\theta(\nu(\mu)), 
\qquad 
x_\mu:=\theta-\theta_\mu,
\qquad 
d_\mu:=\mu-\nu(\mu).
\]

Denote
\[
r_\mu:=\widehat m(\theta_\mu)-\mu
\]

Since $G_\mu$ has full column rank and $W_\mu$ is positive definite,
\[
C_{w,\mu}:=G_\mu^\top W_\mu G_\mu
\]
is positive definite and therefore invertible. 
We define the $W_\mu$–orthogonal projector onto 
$\operatorname{col}(G_\mu)$ by
\begin{equation}
P_{G,\mu}
:=G_\mu (G_\mu^\top W_\mu G_\mu)^{-1} G_\mu^\top W_\mu.
\label{eq:PG}
\end{equation}

The mapping $h:\mathbb{R}^k\times\mathbb{R}^q\to\mathbb{R}^q$ is
\begin{equation}
h(\theta,\mu)
:=G_\mu(\theta-\theta_\mu)+\nu(\mu)-\mu
=G_\mu x_\mu - d_\mu.
\label{eq:h-def}
\end{equation}

Define the orthogonal complement of 
$\operatorname{col}(G_\mu)$,  $W_\mu$, by
\[
\operatorname{col}(G_\mu)^{\perp_{W_\mu}}
:=\{y\in\mathbb{R}^q: G_\mu^\top W_\mu y=0\}.
\]

Using the identity $I_q=P_{G,\mu}+(I_q-P_{G,\mu})$, we decompose
\begin{align}
h(\theta,\mu)
&= \big(G_\mu x_\mu - P_{G,\mu} d_\mu\big)
\;-\;
\big(I_q-P_{G,\mu}\big)d_\mu.
\label{eq:h-decomp}
\end{align}
The first term lies in $\operatorname{col}(G_\mu)$, while the second lies in 
$\operatorname{col}(G_\mu)^{\perp_{W_\mu}}$. 

Indeed, $P_{G,\mu}$ is idempotent and $W_\mu$–self–adjoint:
\[
P_{G,\mu}^2=P_{G,\mu},
\qquad
P_{G,\mu}^\top W_\mu = W_\mu P_{G,\mu},
\]
which implies
\[
G_\mu^\top W_\mu (I_q-P_{G,\mu})=0.
\]
Consequently,
\[
\big(G_\mu x_\mu - P_{G,\mu} d_\mu\big)^\top
W_\mu
\big((I_q-P_{G,\mu}) d_\mu\big)
=0.
\]

Finally, letting $K:=W_\mu^{1/2}G_\mu$ and 
$\Pi:=K(K^\top K)^{-1}K^\top$, we have
\begin{equation}
P_{G,\mu}=W_\mu^{-1/2}\Pi W_\mu^{1/2},
\label{eq:PG-alt}
\end{equation}
so $P_{G,\mu}$ is the similarity transform of the usual orthogonal projector 
$\Pi$ onto $\operatorname{col}(K)$.
We shall then decompose $h(\theta,\mu)$ accordingly. 
By Lemma~\ref{NT}, the marginal kernel in $\mu$ satisfies
\begin{equation}
N_T(\mu)\ \propto\ 
|C_{w,\mu}|^{-1/2}
\exp\!\left\{
-\frac{T}{2}
r_\mu^\top
(I-P_{G,\mu})^\top W_{\mu} (I-P_{G,\mu})
r_\mu
\right\}
\pi(\mu)\,\pi(\theta_\mu)
\exp\!\big(-\sqrt{T}\|\mu-\nu(\mu)\|\big).
\end{equation}

Define
\[
x_{d,\mu}:=C_{w,\mu}^{-1}G_{\mu}^\top W_\mu d_\mu,\qquad
h_\parallel:=G_{\mu}(x_\mu-x_{d,\mu})\in\operatorname{col}(G_{\mu}),\qquad
h_\perp:=-(I-P_{G,\mu})d_{\mu}\in\operatorname{col}(G_{\mu})^{\perp_{W_\mu}},
\]

so that $h(\theta,\mu)=h_\parallel+h_\perp$ and $h_\parallel^\top W_\mu h_\perp=0$.
Note that,
	$$\frac{p_{T}(\theta,\mu)}{N_{T}(\theta,\mu)}=\frac{\exp\left(\frac{1}{2}Q_T(\theta,\mu)+V_T(h(.),\theta,\mu)/2+\log(\pi(\theta))\right) }{c^*}=\frac{\exp\left( R_T(\theta,\mu)\right)}{c^*},$$  let $a(\theta,\mu)=\exp(R_{T}(\theta,\mu))/c^*-1$.

\begin{align*}
\mathcal{R}_{1,T}
=&\int_{B_{\varepsilon}}
\left(1+\|\sqrt{T}h(\theta,\mu)\|^{\kappa}\right)
N_{T}(\theta,\mu)
\left|a(\theta,\mu)\right|
\mathrm{d}\theta\,d\mu\\
=&\int_{B_{\varepsilon}}
\left(1+\|\sqrt{T}h(\theta,\mu)\|^{\kappa}\right)
N_{T}(\theta|\mu)
\left|a(\theta,\mu)\right|
\mathrm{d}\theta\,N_T(\mu)\,d\mu.
\end{align*}
By Assumption~\ref{rT} and Lemma~\ref{smallc}, on $B_\varepsilon$,
\[
|a(\theta,\mu)| 
\lesssim |R_T(\theta,\mu)| 
\lesssim 
\frac{1}{T}\left(\frac{\sqrt{k}(\log T)^2}{\sqrt{T}} \vee \frac{q}{\sqrt{kT}}\right)
\left(\|\sqrt{T}h(\theta,\mu)\|^2 + k(\log T)^2\right).
\]
Since $h = h_\parallel + h_\perp$ with $W_\mu$-orthogonality,
\[
1 + \|\sqrt{T}h\|^\kappa 
\lesssim 
1 + \|\sqrt{T}h_\parallel\|^\kappa + \|\sqrt{T}h_\perp\|^\kappa,
\]
and
\[
\|\sqrt{T}h\|^2 \lesssim \|\sqrt{T}h_\parallel\|^2 + \|\sqrt{T}h_\perp\|^2.
\]
On $B_\varepsilon$, $\|\sqrt{T}h_\perp\| \leq \varepsilon$, so
$\|\sqrt{T}h_\perp\|^\kappa \leq \varepsilon^\kappa$ and
$\|\sqrt{T}h_\perp\|^2 \leq \varepsilon^2 \lesssim k(\log T)^2$.
Therefore
\begin{align*}
\mathcal{R}_{1,T}
&\lesssim 
\frac{1}{T}\left(\frac{\sqrt{k}(\log T)^2}{\sqrt{T}} \vee \frac{q}{\sqrt{kT}}\right)
\int_{\mu:\,\sqrt{T}\|h_\perp\|\leq\varepsilon}
(1 + \varepsilon^\kappa)\,
\\
&\qquad\times
\E_{N_T(\theta|\mu)}\!\left[
\left(1 + \|\sqrt{T}h_\parallel\|^\kappa\right)
\left(\|\sqrt{T}h_\parallel\|^2 + k(\log T)^2\right)
\mathbf{1}_{B_\varepsilon}
\right]
N_T(\mu)\,d\mu.
\end{align*}
Under the Gaussian change of variables 
$\gamma = \sqrt{T}\,C_{w,\mu}^{1/2}(x_\mu - x_{d,\mu} + C_{w,\mu}^{-1}C_{m,\mu})$, 
the conditional density $N_T(\theta|\mu)$ becomes the standard 
Gaussian density in $\gamma$, and
\[
\sqrt{T}\,h_\parallel 
= \sqrt{T}\,G_\mu(x_\mu - x_{d,\mu}) 
= G_\mu C_{w,\mu}^{-1/2}(\gamma - \sqrt{T}\,C_{w,\mu}^{-1/2}C_{m,\mu}).
\]
The constraint $\|\sqrt{T}h_\parallel\| \leq \varepsilon$ translates to 
$\gamma$ restricted to a ball of radius $O(\varepsilon)$ 
(up to constants depending on $C_{w,\mu}$). 
By Lemma~\ref{gai}, all moments of $\gamma$ under the standard 
Gaussian are bounded, so
\[
\E_\gamma\!\left[
\left(1 + \|\sqrt{T}h_\parallel\|^\kappa\right)
\left(\|\sqrt{T}h_\parallel\|^2 + k(\log T)^2\right)
\right]
\lesssim k^{\kappa/2+1}(\log T)^2.
\]
Combining,
\[
\mathcal{R}_{1,T}
\lesssim 
\frac{1}{T}\left(\frac{\sqrt{k}(\log T)^2}{\sqrt{T}} \vee \frac{q}{\sqrt{kT}}\right)
\cdot k^{\kappa/2+1}(\log T)^2
\cdot (1 + \varepsilon^\kappa)
\to 0
\]
under Assumption~\ref{rates}.

  Denote $\mathbb{P}_{N_{T}(\theta|\mu)}(\cdot)$ as the conditional probability measure function 
of $\theta- \theta(\nu(\mu))$ conditioning on a fixed value of $\mu$ corresponding to the density $N_{T}(\theta|\mu)$.

On $B_{\varepsilon}^{c}$, by Assumption \ref{rT},  
\[
\left|\exp(R_{T}(\theta,\mu))/c^*-1\right|
\leq 
\exp\!\Big(
-C_0\|h(\theta,\mu)\|\varepsilon\sqrt{T}
+ C_0\varepsilon^2/2
+ T\|h(\theta,\mu)\|^2_{W_{\mu}}/2
+\sqrt{T}\|\mu -\nu(\mu)\|
\Big)/c^*
+1 .
\]

Define
\[
B_{\varepsilon}^c 
=
\left\{ (\theta, \mu) \in \Theta \times \mathcal{M} :
\sqrt{T} \| h(\theta, \mu) \| > \varepsilon
\ \text{or}\ 
\sqrt{T} \| \mu - \nu(\mu) \| > \varepsilon
\right\}.
\]

Accordingly define
\[
B_{\varepsilon}^{c1}
=
\left\{
\|\sqrt{T}h_{\parallel}\|>\frac{\varepsilon}{2}
\right\},
\]

\[
B_{\varepsilon}^{c2}
=
\left\{
\|\sqrt{T}h_{\perp}\|>\frac{\varepsilon}{2}
\right\},
\]

\[
B_{\varepsilon}^{c3}
=
\left\{
\sqrt{T}\|P_{G,\mu}(\mu-\nu(\mu))\|>\frac{\varepsilon}{2} 
\right\}.
\]

Hence
\[
B_{\varepsilon}^{c}
=
B_{\varepsilon}^{c1}
\cup
B_{\varepsilon}^{c2}
\cup
B_{\varepsilon}^{c3}.
\]

The contribution on
\[
B_{\varepsilon}^{c3}
\cap
(B_{\varepsilon}^{c1})^c
\cap
(B_{\varepsilon}^{c2})^c
\]
is handled separately in Lemma~\ref{lemmaextra}, which studies the region
\[
\overline B
=
\{\|\sqrt{T}h\|\le \varepsilon,\;
\sqrt{T}\|\mu-\nu(\mu)\|>\tfrac{\varepsilon}{2} \}.
\]
It therefore suffices to control the integral on $B_{\varepsilon}^{c1}$ and
$B_{\varepsilon}^{c2}$. We first present the argument for $B_{\varepsilon}^{c1}$;
the argument for $B_{\varepsilon}^{c2}$ is given afterwards.

Let
\[
\mathcal{B}_T(h(\theta,\mu))
=
-C_0\|h(\theta,\mu)\|\varepsilon\sqrt{T}
+ C_0\frac{\varepsilon^2}{2}
+ \frac{T}{2}\|h(\theta,\mu)\|^2_{W_{\mu}} .
\]

Moreover, we have the decomposition
\begin{equation}\label{eq:BT-sum}
R_T(\theta,\mu)
\le
\mathcal{B}_T\!\big(h_\parallel(\theta,\mu)\big)
+
\mathcal{B}_T\!\big(h_\perp(\theta,\mu)\big)
+
\sqrt{T}\|\mu-\nu(\mu)\|.
\end{equation}

To analyze the inner integral on $B_{\varepsilon}^{c1}$, introduce the Gaussian
change of variables
\[
\gamma
=
\sqrt{T}\,C_{w,\mu}^{1/2}
\Bigl(x_\mu - x_{d,\mu} + C_{w,\mu}^{-1}C_{m,\mu}\Bigr),
\]
so that
\[
h_\parallel
=
G_\mu C_{w,\mu}^{-1/2}
\left(
T^{-1/2}\gamma
-
C_{w,\mu}^{-1/2}C_{m,\mu}
\right).
\]
Define the corresponding region
\[
B_\gamma
=
\left\{
\gamma\in\mathbb{R}^k:
\|\gamma - \sqrt{T}\,C_{w,\mu}^{-1/2}C_{m,\mu}\|
>
c'\varepsilon
\right\},
\]
where $c'>0$ is determined by the lower bound on the singular values of $G_\mu$ 
from Assumption~\ref{assum33}.

Then
\[
\left\{
\gamma:
\left\|
G_\mu C_{w,\mu}^{-1/2}
\big(
T^{-1/2}\gamma
-
C_{w,\mu}^{-1/2}C_{m,\mu}
\big)
\right\|
>
\frac{c\varepsilon}{\sqrt{T}}
\right\}
\subseteq
B_\gamma .
\]
Indeed, by Assumption~\ref{assum33} the singular values of $G_\mu$ are bounded 
from below, so $\|G_\mu v\| \geq c_0\|v\|$ for some $c_0>0$, which gives 
the inclusion with $c' = c \cdot c_0$.
This converts the inner integral into an expectation over a standard Gaussian
vector $\gamma$, and the resulting tail region $B_\gamma$ can be controlled
using Gaussian concentration bounds.

Next we shall control $\mathcal{R}_{2,T}(B_{\varepsilon}^{c1})$.
We write
\begin{align*}
\mathcal{R}_{2,T}(B_{\varepsilon}^{c1})
&=
\int_{B_{\varepsilon}^{c1}}
\left(1+\|\sqrt{T}h(\theta,\mu)\|^{\kappa}\right)
N_T(\theta,\mu)
\left|\exp(R_T(\theta,\mu))/c^*-1\right|
\,d\theta d\mu .
\end{align*}

Using Assumption~\ref{rT}, we obtain
\[
\left|\exp(R_T(\theta,\mu))/c^*-1\right|
\lesssim
1+
\exp\!\Big(
-C_0\|h(\theta,\mu)\|\varepsilon\sqrt T
+ C_0\varepsilon^2/2
+ T\|h(\theta,\mu)\|_{W_{\mu}}^2/2
+\sqrt T\|\mu-\nu(\mu)\|
\Big).
\]
Hence
\[
\mathcal{R}_{2,T}(B_{\varepsilon}^{c1})
\lesssim
\mathcal{R}_{21,T}
+
\mathcal{R}_{22,T},
\]
where
\begin{align*}
\mathcal{R}_{21,T}
&=
\int_{B_{\varepsilon}^{c1}}
\left(1+\|\sqrt{T}h(\theta,\mu)\|^{\kappa}\right)
N_T(\theta,\mu)
\,d\theta d\mu,
\\
\mathcal{R}_{22,T}
&=
\int_{B_{\varepsilon}^{c1}}
\exp(R_T(\theta,\mu))
\left(1+\|\sqrt{T}h(\theta,\mu)\|^{\kappa}\right)
N_T(\theta,\mu)
\,d\theta d\mu .
\end{align*}
\noindent
\textbf{Step 1: control of $\mathcal{R}_{21,T}$.}

Using
\[
\|h(\theta,\mu)\|
\le
\|h_{\parallel}(\theta,\mu)\|
+
\|h_{\perp}(\theta,\mu)\|,
\]
we have
\[
1+\|\sqrt T h(\theta,\mu)\|^\kappa
\lesssim
1+\|\sqrt T h_\parallel\|^\kappa
+\|\sqrt T h_\perp\|^\kappa .
\]

Therefore
\begin{align*}
\mathcal{R}_{21,T}
&\lesssim
\int_{\mathcal M}
\E_{N_T(\theta|\mu)}
\Big[
(1+\|\sqrt T h_\parallel\|^\kappa)
\mathbf 1\{\|\sqrt T h_\parallel\|>\varepsilon/2\}
\Big]
\\
&\qquad\qquad\times
(1+\|\sqrt T h_\perp\|^\kappa)
\,N_T(\mu)\,d\mu .
\end{align*}
According to the definition of $B_{\gamma}$,
\begin{align*}
\mathcal{R}_{21,T}
&\lesssim
\int_{\mathcal M}
\E_\gamma
\Big[
(1+\|G_\mu C_{w,\mu}^{-1/2}(\gamma-\sqrt T C_{w,\mu}^{-1/2}C_{m,\mu})\|^\kappa)
\mathbf 1\{\gamma\in B_\gamma\}
\Big]
\\
&\qquad\qquad\times
(1+\|\sqrt T h_\perp\|^\kappa)
\,N_T(\mu)\,d\mu .
\end{align*}

Since $\|G_\mu v\| \lesssim \|v\|$ by Assumption~\ref{assum33} and 
$B_\gamma$ corresponds to a Gaussian tail region with threshold
$\varepsilon \asymp q\log T$, Lemma \ref{gai} implies that
\[
\sup_{\mu\in\mathcal M}
\E_\gamma
\Big[
(1+\|G_\mu C_{w,\mu}^{-1/2}(\gamma-\sqrt T C_{w,\mu}^{-1/2}C_{m,\mu})\|^\kappa)
\mathbf 1\{\gamma\in B_\gamma\}
\Big]
\to0 .
\]

Hence $\mathcal{R}_{21,T}\to0$.

\noindent
\textbf{Step 2: control of $\mathcal{R}_{22,T}$.}

Write
\begin{align*}
\mathcal{R}_{22,T}
&\lesssim
\int_{\mathcal M}
\E_{N_T(\theta|\mu)}
\Big[
\exp(R_T(\theta,\mu))
(1+\|\sqrt T h(\theta,\mu)\|^\kappa)
\mathbf 1\{\theta\in B_{\varepsilon}^{c1}\}
\Big]
N_T(\mu)d\mu .
\end{align*}

Using the bound
\[
R_T(\theta,\mu)
\le
\mathcal B_T(h_\parallel)
+
\mathcal B_T(h_\perp)
+
\sqrt T\|\mu-\nu(\mu)\|,
\]
we obtain
\begin{align*}
\mathcal{R}_{22,T}
&\lesssim
\int_{\mathcal M}
\E_{N_T(\theta|\mu)}
\Big[
\exp(\mathcal B_T(h_\parallel))
(1+\|\sqrt T h_\parallel\|^\kappa)
\mathbf 1\{\theta\in B_{\varepsilon}^{c1}\}
\Big]
\\
&\qquad\qquad\times
\exp(\mathcal B_T(h_\perp)+\sqrt T\|\mu-\nu(\mu)\|)
(1+\|\sqrt T h_\perp\|^\kappa)
N_T(\mu)d\mu .
\end{align*}

After the Gaussian change of variables and using the same tail region
$B_\gamma$, Gaussian tail bounds imply
\[
\E_\gamma
\Big[
\exp(\mathcal B_T(\cdot))
(1+\|\cdot\|^\kappa)
\mathbf 1\{\gamma\in B_\gamma\}
\Big]
\leq C \exp(-(\varepsilon\vee \sqrt{k})) .
\]
where the bound is by Lemma \ref{Gaussintegral} as Gaussian integral lemma and implied by our Assumptions \ref{assum33} and  \ref{rT} .
Moreover the factor
\[
\exp(\mathcal B_T(h_\perp)+\sqrt T\|\mu-\nu(\mu)\|)N_T(\mu)
\]
remains integrable due to the cancellation of the
$\sqrt T\|\mu-\nu(\mu)\|$ terms in $N_T(\mu)$.

Therefore
$\mathcal{R}_{22,T}\to0.$
Combining the two steps yields
$\mathcal{R}_{2,T}(B_{\varepsilon}^{c1})\to 0.$

Furthermore we shall discuss the region on \(B_{\varepsilon}^{c2}\)

Consider the region
\[
B_{\varepsilon}^{c2}
=
\left\{
\|\sqrt{T}h_{\perp}\|>\varepsilon/2
\right\}
\]
We show that the contribution from this region is asymptotically negligible.

Write
\[
\mathcal R_{2,T}(B_{\varepsilon}^{c2})
:=
\int_{B_{\varepsilon}^{c2}}
\left(1+\|\sqrt{T}h(\theta,\mu)\|^\kappa\right)
N_T(\theta,\mu)
\left|\exp(R_T(\theta,\mu))/c^*-1\right|
\,d\theta d\mu .
\]
By Assumption~\ref{rT},
\[
\left|\exp(R_T(\theta,\mu))/c^*-1\right|
\lesssim
1+
\exp\!\Big(
-C_0\|h(\theta,\mu)\|\varepsilon\sqrt T
+ C_0\varepsilon^2/2
+ T\|h(\theta,\mu)\|_{W_{\mu}}^2/2
+\sqrt T\|\mu-\nu(\mu)\|
\Big).
\]
Hence
\[
\mathcal R_{2,T}(B_{\varepsilon}^{c2})
\lesssim
\mathcal R_{21,T}^{(2)}
+
\mathcal R_{22,T}^{(2)},
\]
where
\begin{align*}
\mathcal R_{21,T}^{(2)}
&:=
\int_{B_{\varepsilon}^{c2}}
\left(1+\|\sqrt{T}h(\theta,\mu)\|^\kappa\right)
N_T(\theta,\mu)\,d\theta d\mu,\\
\mathcal R_{22,T}^{(2)}
&:=
\int_{B_{\varepsilon}^{c2}}
\exp(R_T(\theta,\mu))
\left(1+\|\sqrt{T}h(\theta,\mu)\|^\kappa\right)
N_T(\theta,\mu)\,d\theta d\mu.
\end{align*}

\noindent
\textbf{Step 1: control of \(\mathcal R_{21,T}^{(2)}\).}

Since \(h=h_\parallel+h_\perp\),
\[
1+\|\sqrt T h\|^\kappa
\lesssim
1+\|\sqrt T h_\parallel\|^\kappa+\|\sqrt T h_\perp\|^\kappa.
\]
Therefore
\begin{align*}
\mathcal R_{21,T}^{(2)}
&\lesssim
\int_{\mathcal M}
\E_{N_T(\theta\mid\mu)}
\Big[
1+\|\sqrt T h_\parallel\|^\kappa
\Big]
\mathbf 1\{\|\sqrt T h_\perp\|>\varepsilon/2\}
(1+\|\sqrt T h_\perp\|^\kappa)
N_T(\mu)\,d\mu.
\end{align*}

Now the inner conditional moment is uniformly bounded under Assumption~\ref{assum33}, so
\[
\sup_{\mu\in\mathcal M}
\E_{N_T(\theta\mid\mu)}
\Big[
1+\|\sqrt T h_\parallel\|^\kappa
\Big]
<\infty.
\]
Hence
\[
\mathcal R_{21,T}^{(2)}
\lesssim
\int_{\mathcal M}
\mathbf 1\{\|\sqrt T h_\perp\|>\varepsilon/2\}
(1+\|\sqrt T h_\perp\|^\kappa)
N_T(\mu)\,d\mu.
\]

Using Lemma \ref{NT}, 
\[
N_T(\mu)
\propto
|C_{w,\mu}|^{-1/2}
\exp\!\left(
-\frac{T}{2}
r_\mu^\top (I-P_{G,\mu})^\top W_\mu(I-P_{G,\mu})r_\mu
\right)
\pi(\mu)\pi(\theta_\mu)
\exp\!\big(-\sqrt T\|\mu-\nu(\mu)\|\big).
\]
Since $(I-P_{G,\mu})r_\mu = h_\perp + (I-P_{G,\mu})(\widehat{m}(\theta_\mu)-\nu(\mu))$ 
and $\|(I-P_{G,\mu})(\widehat{m}(\theta_\mu)-\nu(\mu))\| = O_p(\sqrt{q/T})$, 
on $B_\varepsilon^{c2}$ where $\sqrt{T}\|h_\perp\| > \varepsilon/2 \asymp q\log T \gg \sqrt{q}$, 
we have 
\[
\|(I-P_{G,\mu})r_\mu\|_{W_\mu}^2 \geq c\|h_\perp\|_{W_\mu}^2 \geq c'\|h_\perp\|^2
\]
w.p.a.1.\ for some constants $c,c'>0$, where the last inequality uses the 
eigenvalue bounds on $W_\mu$ from Assumption~\ref{assum33}. 
Therefore, on $B_\varepsilon^{c2}$,
\[
\mathcal R_{21,T}^{(2)}
\lesssim
\int_{\mathcal M}
\mathbf 1\{\|\sqrt T h_\perp\|>\varepsilon/2\}
(1+\|\sqrt T h_\perp\|^\kappa)
|C_{w,\mu}|^{-1/2}
e^{-c'T\|h_\perp\|^2/2}
\pi(\mu)\pi(\theta_\mu)\,d\mu.
\]
Since $\sqrt{T}\|h_\perp\| > \varepsilon/2$ on $B_\varepsilon^{c2}$,
\[
e^{-c'T\|h_\perp\|^2/2}
\le
e^{-c'\varepsilon^2/8}.
\]
Thus
\[
\mathcal R_{21,T}^{(2)}
\lesssim
e^{-c'\varepsilon^2/8}
\int_{\mathcal M}
(1+\|\sqrt T h_\perp\|^\kappa)
|C_{w,\mu}|^{-1/2}
\pi(\mu)\pi(\theta_\mu)\,d\mu
=o(1),
\]
provided the last integral is $O(1)$.
 
~\\
\noindent
\textbf{Step 2: control of \(\mathcal R_{22,T}^{(2)}\).}

Using
\[
R_T(\theta,\mu)
\le
\mathcal B_T(h_\parallel)
+
\mathcal B_T(h_\perp)
+
\sqrt T\|\mu-\nu(\mu)\|,
\]
we obtain
\begin{align*}
\mathcal R_{22,T}^{(2)}
&\lesssim
\int_{\mathcal M}
\E_{N_T(\theta\mid\mu)}
\Big[
\exp(\mathcal B_T(h_\parallel))
(1+\|\sqrt T h_\parallel\|^\kappa)
\Big]
\\
&\qquad\qquad\times
\exp(\mathcal B_T(h_\perp)+\sqrt T\|\mu-\nu(\mu)\|)
\mathbf 1\{\|\sqrt T h_\perp\|>\varepsilon/2 \}
(1+\|\sqrt T h_\perp\|^\kappa)
N_T(\mu)\,d\mu .
\end{align*}

Again, the conditional expectation in \(h_\parallel\) is uniformly bounded. Hence
\[
\mathcal R_{22,T}^{(2)}
\lesssim
\int_{\mathcal M}
\exp(\mathcal B_T(h_\perp)+\sqrt T\|\mu-\nu(\mu)\|)
\mathbf 1\{\|\sqrt T h_\perp\|>\varepsilon/2\}
(1+\|\sqrt T h_\perp\|^\kappa)
N_T(\mu)\,d\mu .
\]
Now
\[
\mathcal B_T(h_\perp)
=
-C_0\varepsilon\sqrt T\,\|h_\perp\|_{W_\mu}
+\frac{C_0\varepsilon^2}{2}
+\frac{T}{2}\|h_\perp\|_{W_\mu}^2.
\]
Combining this with the expression for $N_T(\mu)$, the factor
$\exp(-\sqrt T\|\mu-\nu(\mu)\|)$ in $N_T(\mu)$ cancels
$\exp(\sqrt T\|\mu-\nu(\mu)\|)$. For the quadratic terms,
$\mathcal B_T(h_\perp)$ contains $\frac{T}{2}\|h_\perp\|_{W_\mu}^2$
while $N_T(\mu)$ contains $-\frac{T}{2}\|(I-P_{G,\mu})r_\mu\|_{W_\mu}^2$.
Since $(I-P_{G,\mu})r_\mu = h_\perp + (I-P_{G,\mu})(\widehat{m}(\theta_\mu)-\nu(\mu))$,
their sum equals
\[
-T h_\perp^\top W_\mu (I-P_{G,\mu})(\widehat{m}(\theta_\mu)-\nu(\mu))
- \frac{T}{2}\|(I-P_{G,\mu})(\widehat{m}(\theta_\mu)-\nu(\mu))\|_{W_\mu}^2,
\]
which is $O_p(\sqrt{Tq}\cdot\sqrt{T}\|h_\perp\|) + O_p(q)$.
This is absorbed by the linear term $-C_0\varepsilon\sqrt{T}\|h_\perp\|_{W_\mu}$
since $\varepsilon \asymp q\log T \gg \sqrt{q}$. Therefore
\begin{align*}
\mathcal R_{22,T}^{(2)}
&\lesssim
\int_{\mathcal M}
\exp\!\left(
-c\varepsilon\sqrt T\,\|h_\perp\|
+C\varepsilon^2
\right)
\mathbf 1\{\|\sqrt T h_\perp\|>\varepsilon/2\}
\\
&\qquad\qquad\times
(1+\|\sqrt T h_\perp\|^\kappa)
|C_{w,\mu}|^{-1/2}
\pi(\mu)\pi(\theta_\mu)\,d\mu .
\end{align*}

On $B_{\varepsilon}^{c2}$,
$\sqrt T\,\|h_\perp\|>\varepsilon/2$,
so
\[
-c\varepsilon\sqrt T\,\|h_\perp\|
+C\varepsilon^2
<0.
\]
Hence
\[
\mathcal R_{22,T}^{(2)}
\lesssim
\int_{\mathcal M}
(1+\|\sqrt T h_\perp\|^\kappa)
|C_{w,\mu}|^{-1/2}
\pi(\mu)\pi(\theta_\mu)\,d\mu
=o(1),
\]
provided the remaining integral is $O(1)$.

Combining the bounds for $\mathcal R_{21,T}^{(2)}$ and $\mathcal R_{22,T}^{(2)}$, 
we conclude that the contribution from $B_{\varepsilon}^{c2}$ is asymptotically negligible.

\section{Declaration of generative AI and AI-assisted technologies in the writing process}

During the preparation of this work, the authors used OpenAI’s ChatGPT, including GPT-4o, OpenAI o3, and GPT-5-series models to suggest edits to some lengthy passages for concision and clarity. After using these tools, the authors reviewed and edited the content as needed and take full responsibility for the content of the published article.

\clearpage

\setcounter{section}{0}\setcounter{subsection}{0}\setcounter{subsubsection}{0}
\setcounter{equation}{0}\setcounter{figure}{0}\setcounter{table}{0}
\setcounter{theorem}{0}\setcounter{lemma}{0}\setcounter{corollary}{0}
\makeatletter
\@ifundefined{c@Theorem}{}{\setcounter{Theorem}{0}}
\@ifundefined{c@Lemma}{}{\setcounter{Lemma}{0}}
\@ifundefined{c@Corollary}{}{\setcounter{Corollary}{0}}
\@ifundefined{c@Assumption}{}{\setcounter{Assumption}{0}}
\@ifundefined{c@algocf}{}{\setcounter{algocf}{0}}
\makeatother

\renewcommand{\thesection}{SA.\arabic{section}}
\renewcommand{\theequation}{SA.\arabic{equation}}
\renewcommand{\thefigure}{SA.\arabic{figure}}
\renewcommand{\thetable}{SA.\arabic{table}}
\renewcommand{\thecorollary}{SA.\arabic{corollary}}
\renewcommand{\thelemma}{SA.\arabic{lemma}}
\renewcommand{\thetheorem}{SA.\arabic{theorem}}
\renewcommand{\theAssumption}{SA.\arabic{Assumption}}
\renewcommand{\thealgocf}{SA.\arabic{algocf}}
\providecommand{\theHsection}{\thesection}\renewcommand{\theHsection}{SA.\arabic{section}}
\providecommand{\theHequation}{\theequation}\renewcommand{\theHequation}{SA.\arabic{equation}}
\providecommand{\theHfigure}{\thefigure}\renewcommand{\theHfigure}{SA.\arabic{figure}}
\providecommand{\theHtable}{\thetable}\renewcommand{\theHtable}{SA.\arabic{table}}
\providecommand{\theHtheorem}{\thetheorem}\renewcommand{\theHtheorem}{SA.\arabic{theorem}}
\providecommand{\theHlemma}{\thelemma}\renewcommand{\theHlemma}{SA.\arabic{lemma}}
\providecommand{\theHcorollary}{\thecorollary}\renewcommand{\theHcorollary}{SA.\arabic{corollary}}
\providecommand{\theHAssumption}{\theAssumption}\renewcommand{\theHAssumption}{SA.\arabic{Assumption}}

\begin{center}
{\Large\bfseries Supplemental Appendix for\\ ``Plausible GMM: A Quasi-Bayesian Approach''\par}
\vspace{1em}
{\normalsize Victor Chernozhukov$^a$, Christian B. Hansen$^b$, Lingwei Kong$^c$, Weining Wang$^d$\par}
\vspace{0.5em}
{\footnotesize $^a$ Department of Economics, Massachusetts Institute of Technology.\\
$^b$ Booth School of Business, The University of Chicago.\\
$^c$ Department of Economics, Econometrics and Finance, University of Groningen.\\
$^d$ Department of Economics, University of Bristol.}
\end{center}
\thispagestyle{empty}
\clearpage
\pagestyle{plain}
\section{Appendix: additional figures}
\begin{figure}[htbp!]
    \centering 
\includegraphics[width=0.8\textwidth,height=0.35\textheight]{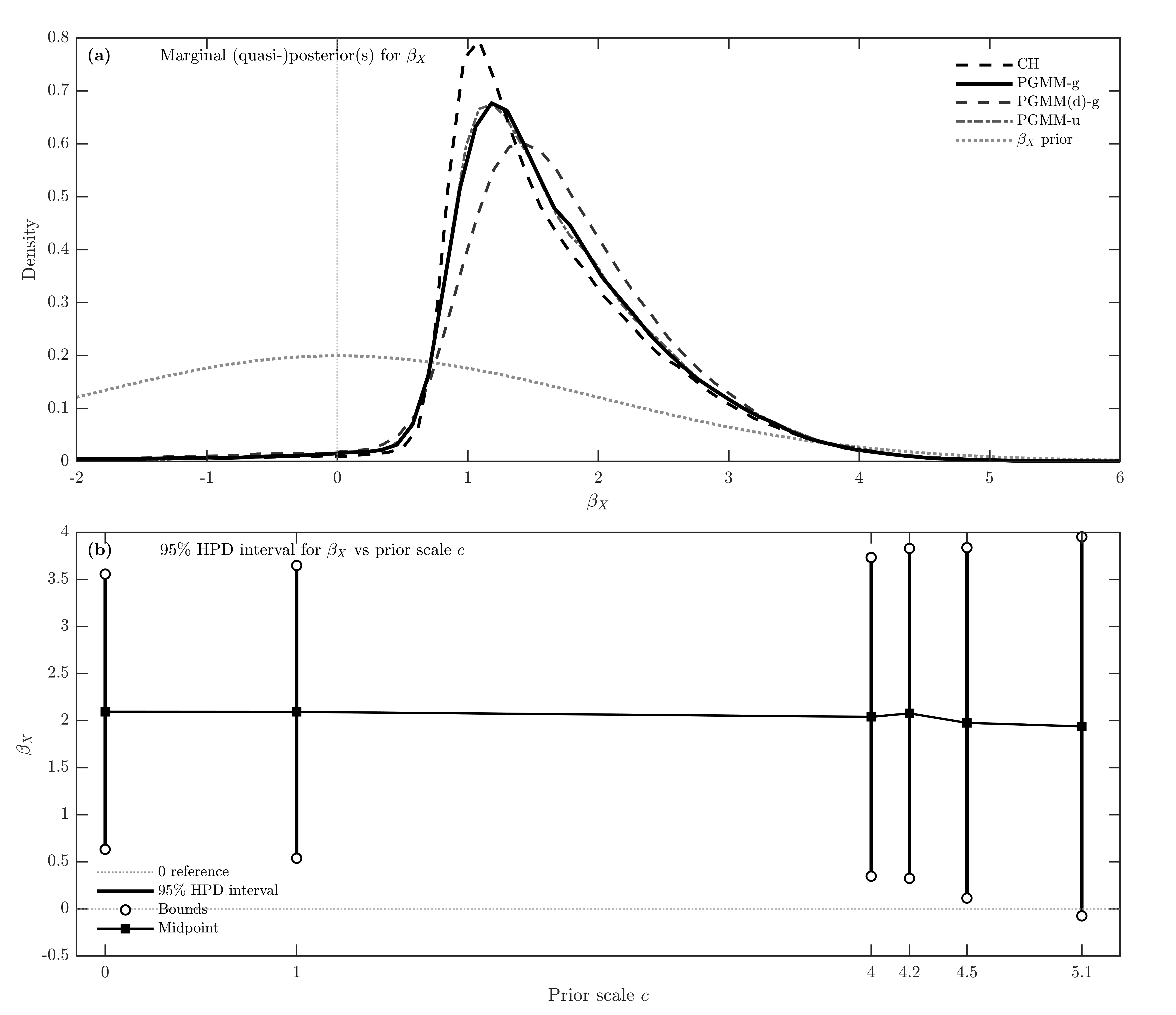}
\caption{ Upper panel (a): Linear IV (2) marginal (quasi-)posterior(s) for $\beta_X$ and its marginal prior (light dotted curve); lower panel (b): Linear IV(2) 95\% HPD intervals for $\beta_X$ resulting from the PGMM approach along various priors for $\mu$, i.e., $ \mathcal{N}(0,c\Sigma_T \Omega_d \Sigma_T^\top)$ for various values of $c$. 
 } \label{fig:ajr-empirical:2}
\end{figure}  
\begin{figure}[htbp!]
    \centering 
\includegraphics[width=0.8\textwidth,height=0.35\textheight]{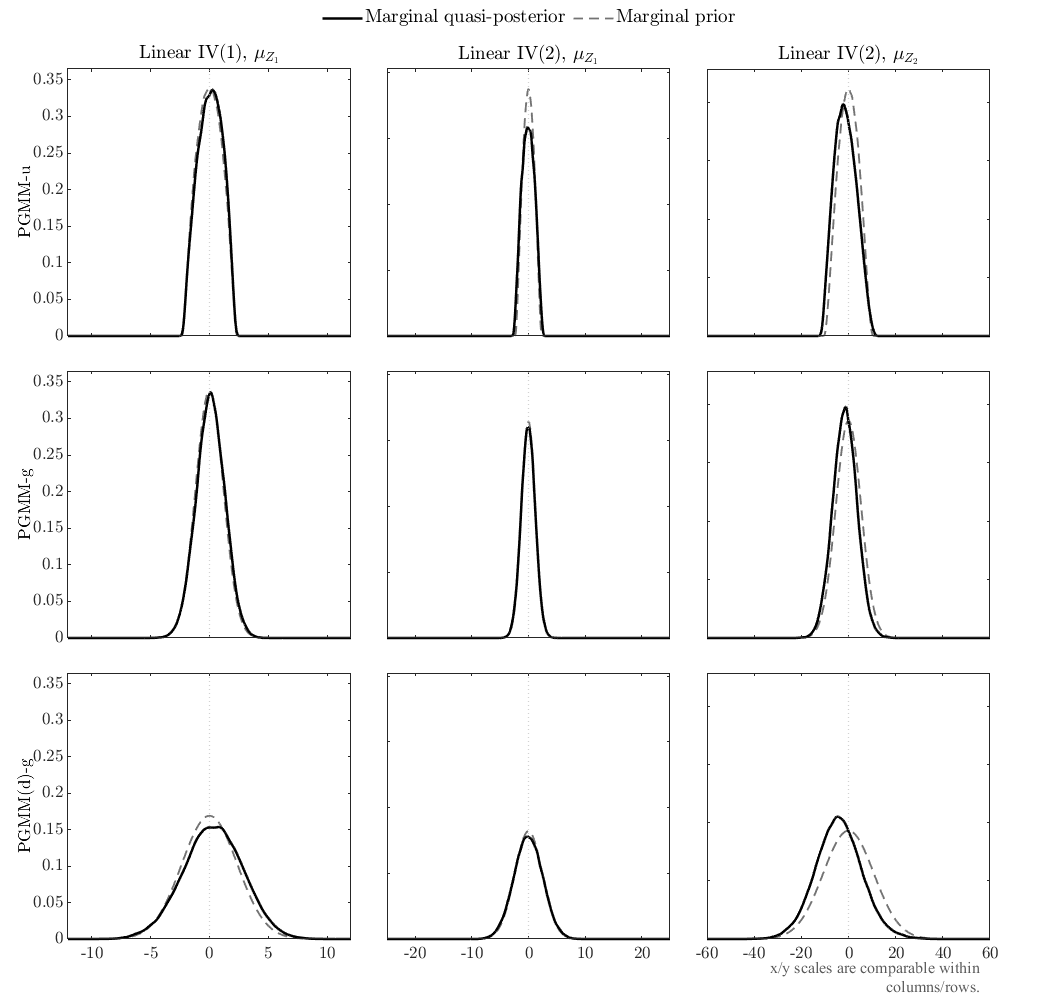}
\caption{ Marginal (quasi-)posteriors  (solid curves) and priors (dashed curves) for selected $\mu$'s (entries corresponding to moments constructed with IVs) in Linear IV(1) and Linear IV(2). 
 } \label{fig:ajr-empirical:3}
\end{figure}  

\begin{figure}[htbp!] 
    \centering
\includegraphics[width=\textwidth,height=0.35\textheight]{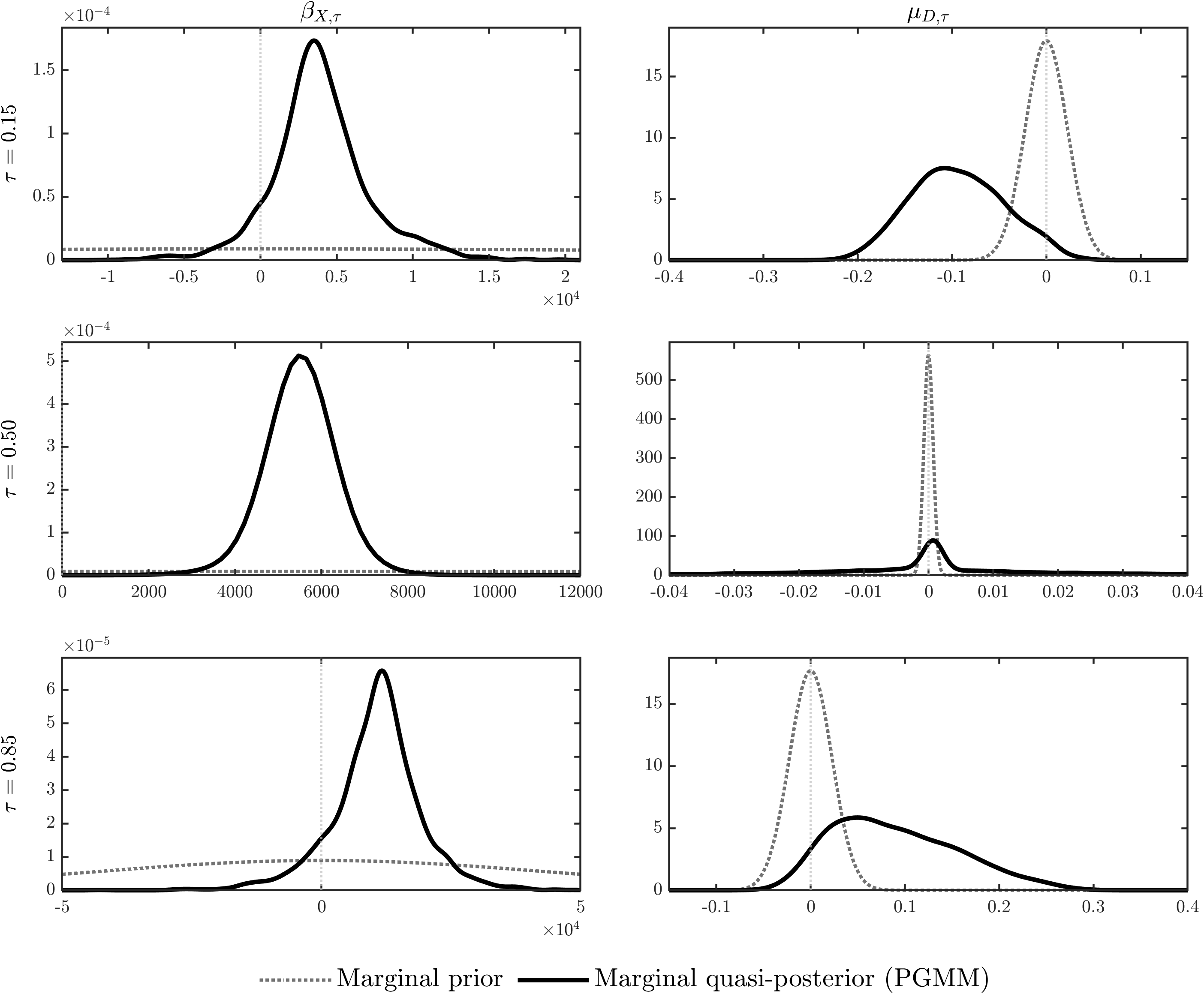} 
  \caption{PGMM marginal quasi-posteriors (solid black curves) of $\beta_{X,\tau}$ for the Gaussian prior over $\mu$ across various values of $\tau$ and $c=0.5$. The light dotted curves represent the marginal prior density curves for these parameters.  
} \label{401k_posterior_figure_c05onlyPGMM}
\end{figure} 

\begin{figure}[htbp!] 
    \centering
\includegraphics[width=\textwidth,height=0.35\textheight]{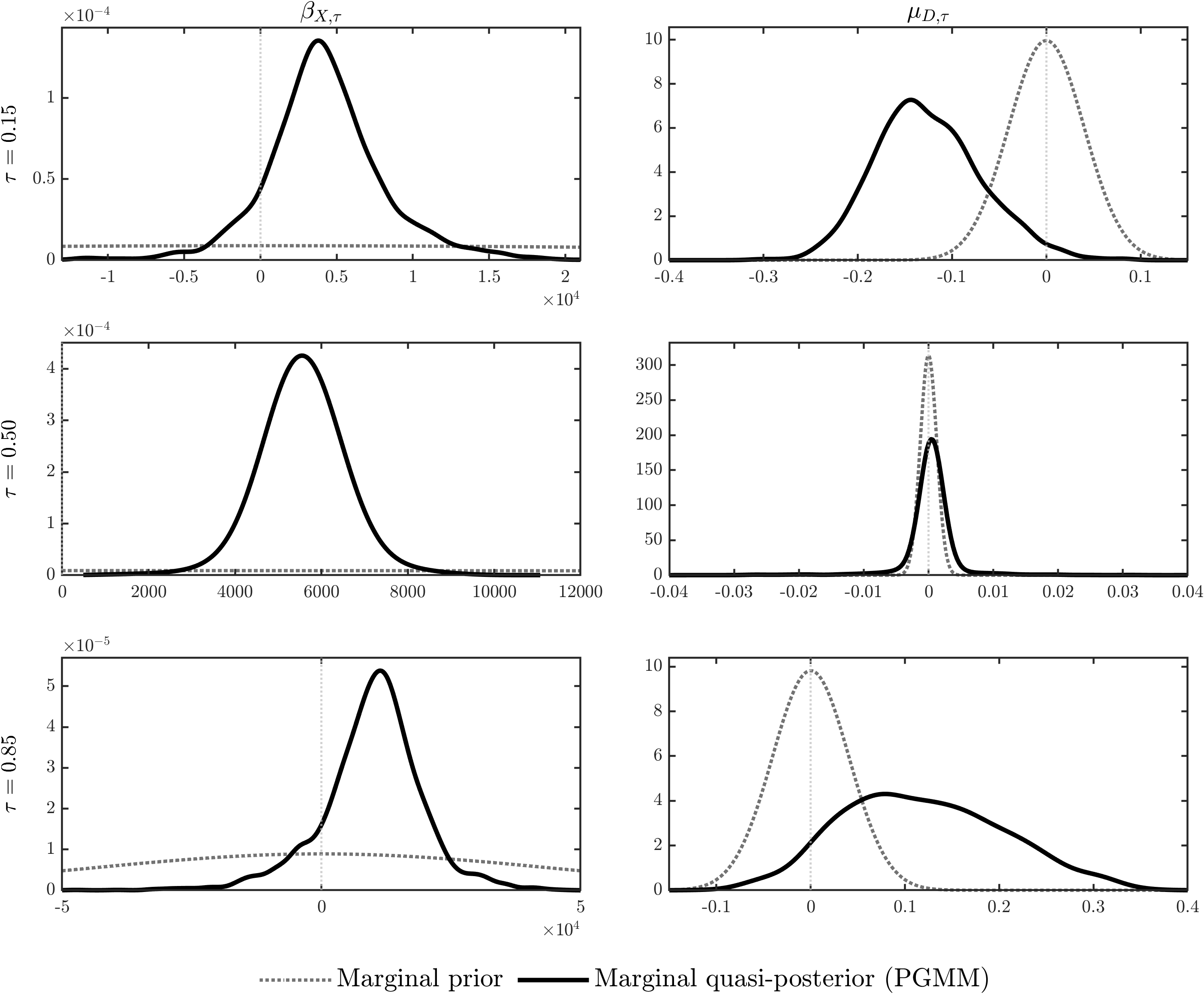} 
  \caption{PGMM marginal quasi-posteriors (solid black curves) of $\beta_{X,\tau}$ for the Gaussian prior over $\mu$ across various values of $\tau$ and $c=0.9$. The light dotted curves represent the marginal prior density curves for these parameters. 
} 
\end{figure} 

\begin{figure}[htbp!] 
    \centering
\includegraphics[width=\textwidth,height=0.35\textheight]{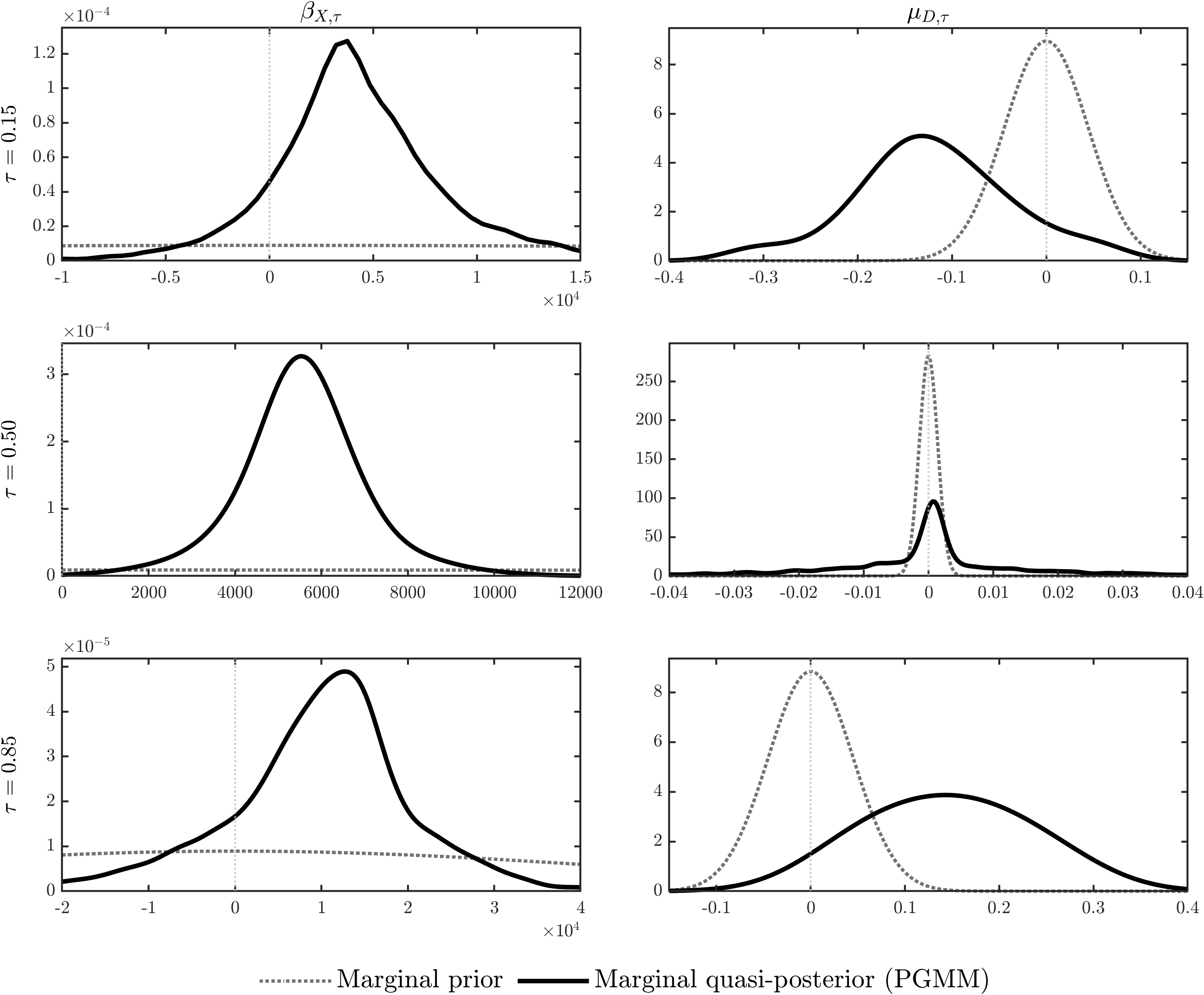} 
  \caption{PGMM marginal quasi-posteriors (solid black curves) of $\beta_{X,\tau}$ for the Gaussian prior over $\mu$ across various values of $\tau$ and $c=1$. The light dotted curves represent the marginal prior density curves for these parameters. 
} 
\end{figure}

\begin{figure}[htbp!] 
    \centering
\includegraphics[width=\textwidth,height=0.35\textheight]{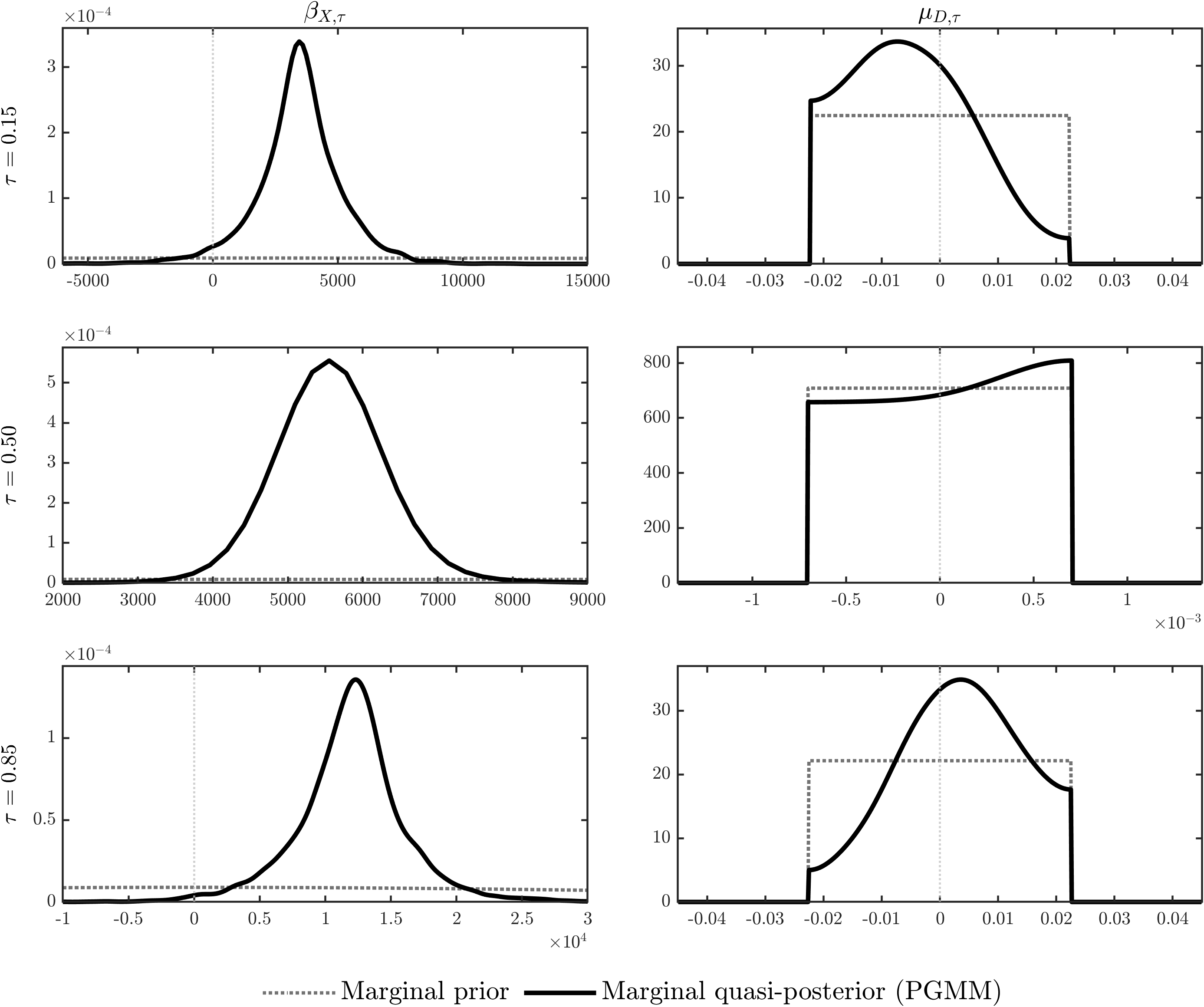} 
  \caption{PGMM marginal quasi-posteriors (solid black curves) of $\beta_{X,\tau}$ for the uniform prior over $\mu$ across various values of $\tau$ and $c=0.5$. The light dotted curves represent the marginal prior density curves for these parameters. 
} 
\end{figure} 

\begin{figure}[htbp!] 
    \centering
\includegraphics[width=\textwidth,height=0.35\textheight]{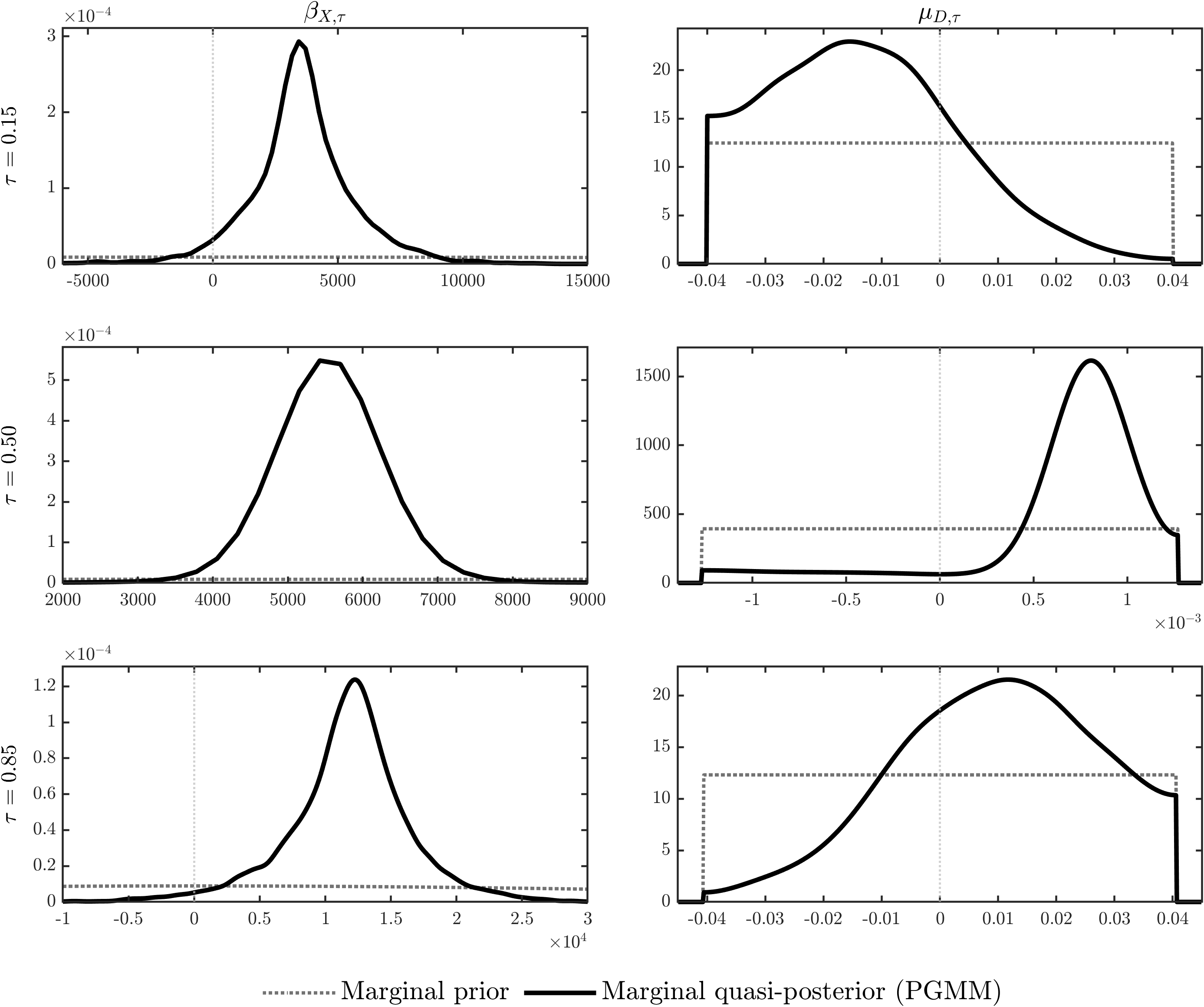} 
  \caption{PGMM marginal quasi-posteriors (solid black curves) of $\beta_{X,\tau}$ for the uniform prior over $\mu$ across various values of $\tau$ and $c=0.9$. The light dotted curves represent the marginal prior density curves for these parameters. 
} 
\end{figure} 

\begin{figure}[htbp!] 
    \centering
\includegraphics[width=\textwidth,height=0.35\textheight]{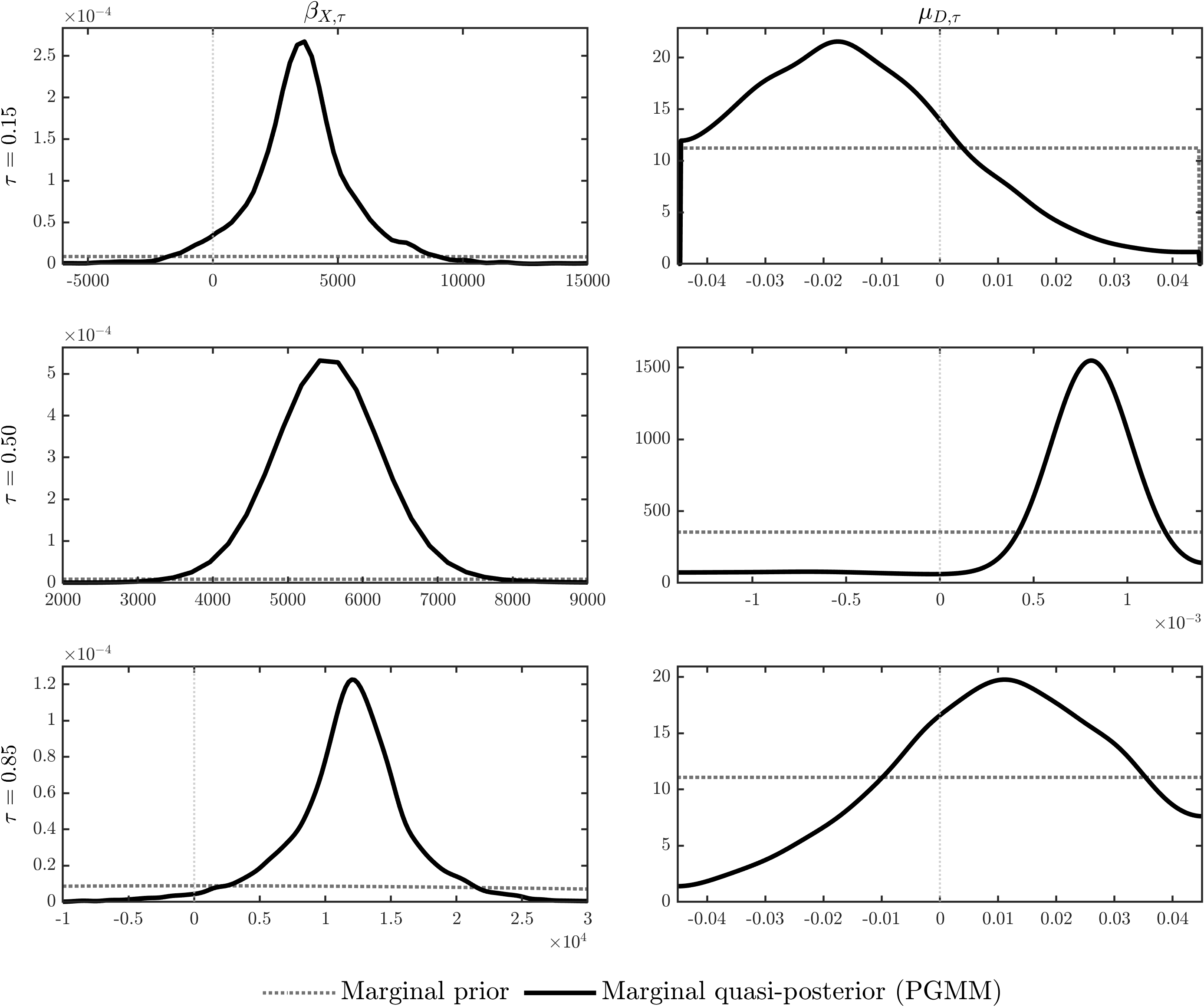}
  \caption{PGMM marginal quasi-posteriors (solid black curves) of $\beta_{X,\tau}$ for the uniform prior over $\mu$ across various values of $\tau$ and $c=1$. The light dotted curves represent the marginal prior density curves for these parameters. 
} 
\label{401k_posterior_figure_c1onlyPGMMu}
\end{figure}

 ~\\ \clearpage
\section{Auxiliary Results and Additional Proofs} 
\subsection{Intermediate results for the proof of Theorem \ref{Theorem:2}}

\subsubsection{Step of the Gaussian integral}
In this section, we derive a lemma regarding the Gaussian integral and the tail probability involved in the above main theorem.
{
\begin{lemma}\label{gai}
Under Assumptions \ref{assummu1}--\ref{rT}, let $0\leq\kappa<\infty$ 
and $\varepsilon\asymp q\log T$.
Then the following two statements hold.

\medskip
\noindent\textbf{(i)} The weighted integral over the local region 
satisfies
\[
\int_{\mathcal{M}}\int_{B_{\varepsilon,\theta|\mu}}
N_T(\theta|\mu)
\left(1+[\sqrt{T}\|\theta-\theta(\nu(\mu))\|]^\kappa\right)
d\theta\; N_T(\mu)\,d\mu
\;\lesssim_p\; k^{\kappa/2},
\]
where $B_{\varepsilon,\theta|\mu} 
:= \{\theta : \sqrt{T}\|G(\theta(\nu(\mu)))(\theta-\theta(\nu(\mu)))\|
\leq\varepsilon\}$ is the local ball in $\theta$ for fixed $\mu$, 
and $N_T(\mu)$ is the normalized marginal density satisfying 
$\int_{\mathcal{M}} N_T(\mu)\,d\mu = 1$.

\medskip\noindent\textbf{(ii)}
Recall that $\gamma\sim N(0,I_k)$ is a standard $k$-dimensional 
Gaussian random vector, and define
\[
X_\mu := C_{w,\mu}^{-1/2}\gamma - \sqrt{T}\,C_{w,\mu}^{-1}C_{m,\mu}.
\]
The set $B_\gamma$ corresponds to the region outside $B_\varepsilon$ 
under the Gaussian parametrization of $\theta$ given $\mu$, in the 
sense that $\{\theta\notin B_{\varepsilon,\theta|\mu}\}
\subseteq\{\gamma\in B_\gamma\}$, where
\[
B_\gamma := \left\{\gamma\in\mathbb{R}^k:\|\gamma\|>a_{T,\mu}\right\},
\qquad
a_{T,\mu} := \frac{c\varepsilon - \sqrt{T}\|C_{w,\mu}^{-1/2}C_{m,\mu}\|}
{\|C_{w,\mu}^{-1/2}\|}.
\]
The tail expectation satisfies
\[
\sup_{\mu\in\mathcal{M}}\,
\mathbb{E}_\gamma\!\left[
\left(1+\|X_\mu\|^\kappa\right)
\mathbf{1}\{\gamma\in B_\gamma\}
\right]
\;\lesssim_p\;
k^{\kappa/2}\exp(-c\varepsilon^2),
\]
where $c>0$ is a constant and with $\varepsilon\asymp q\log T$,
\[
\exp(-c\varepsilon^2)=\exp(-cq^2(\log T)^2)\to 0
\quad\text{as }T\to\infty.
\]
\end{lemma}
}

\begin{proof}
\noindent\textbf{Proof of (i).}It suffices to show it for $\kappa>0$. 
It is well known that for a positive continuous random variable $X$, 
$\mathbb{E}(X)=\int_{0}^{\infty}\mathbb{P}(X>x)dx$.
Recall that $\Gamma_\varepsilon$ is the set of $\mu$ corresponding to 
$B_{\varepsilon}$, where $\varepsilon \asymp q\log T$.
We also have the fact that
$\int_{B_{\varepsilon}}d\theta\,d\mu
\leq\int_{\Gamma}\int_{B_{\varepsilon,\theta|\mu}}d\theta\,d\mu,$
with $B_{\varepsilon,\theta|\mu}
\defeq\{\theta:\sqrt{T}\|G(\theta(\nu(\mu)))
(\theta-\theta(\nu(\mu)))\|\leq\varepsilon\}$
for a point $\mu\in\Gamma_\varepsilon$,
therefore, we may start by looking at
\[
\int_{B_{\varepsilon,\theta|\mu}}N_{T}(\theta|\mu)
\left(1+\|\sqrt{T}(\theta-\theta(\nu(\mu)))\|^{\kappa}\right)
\mathrm{d}\theta.
\]

Since for $u\geq 0$, $u=\int_{z\geq 0}\mathbf{1}(z\leq u)dz$,
\begin{align*}
&\int_{B_{\varepsilon,\theta|\mu}}
N_{T}(\theta|\mu)
\left(1+\|\sqrt{T}(\theta-\theta(\nu(\mu)))\|^{\kappa}\right)
d\theta\\
=\;&\int_{B_{\varepsilon,\theta|\mu}}\int_{0}^\infty
\mathbf{1}\!\left(x\leq 1+\|\sqrt{T}(\theta-\theta(\nu(\mu)))\|^{\kappa}\right)
dx\,N_{T}(\theta|\mu)\,d\theta\\
=\;&\int_{0}^\infty\int_{B_{\varepsilon,\theta|\mu}}
\mathbf{1}\!\left(1+\|\sqrt{T}(\theta-\theta(\nu(\mu)))\|^{\kappa}\geq x\right)
N_{T}(\theta|\mu)\,d\theta\,dx.
\end{align*}

As in the proof of Theorem~\ref{Theorem:2}, 
$\mathbb{P}_{N_T(\theta|\mu)}(\cdot)$ is the probability measure 
conditioning on $\mu$ corresponding to the density $N_T(\theta|\mu)$, 
and $\gamma$ is a standard $k$-dimensional multivariate Gaussian 
random variable with the associated probability measure 
$\mathbb{P}_{\gamma}$. For a fixed $\mu$, using the Gaussian parametrisation 
$\theta = \theta_\mu - x_{\star,\mu} + T^{-1/2}C_{w,\mu}^{-1/2}\gamma$
from Lemma~\ref{NT}, so that 
$\theta - \theta(\nu(\mu)) 
= T^{-1/2}C_{w,\mu}^{-1/2}\gamma - x_{\star,\mu}$,
we can proceed as follows,
\begin{align*}
&\int_{0}^{\infty}
\mathbb{P}_{N_T(\theta|\mu)}\!\left(
\|\sqrt{T}(\theta-\theta(\nu(\mu)))\|^{\kappa}>x-1,\,
\|\sqrt{T}G(\theta(\nu(\mu)))
(\theta-\theta(\nu(\mu)))\|^{2}
\leq\varepsilon^{2}\,\big|\,\mu
\right)dx\\
=\;&\int_{0}^{\infty}\mathbb{P}_{\gamma}\!\left(
\|C_{w,\mu}^{-1/2}\gamma - \sqrt{T}\,C_{w,\mu}^{-1}C_{m,\mu}\|
>((x-1)\vee 0)^{1/\kappa},\right.\\
&\qquad\qquad\left.
\|G(\theta(\nu(\mu)))
(C_{w,\mu}^{-1/2}\gamma - \sqrt{T}\,C_{w,\mu}^{-1}C_{m,\mu})\|^{2}
\leq\varepsilon^{2}\,\big|\,\mu
\right)dx\\
\leq\;&\int_{0}^{\infty}\mathbb{P}_{\gamma}\!\left(
\|C_{w,\mu}^{-1/2}\gamma\|
>((x-1)\vee 0)^{1/\kappa}
-\sqrt{T}\,\|C_{w,\mu}^{-1}C_{m,\mu}\|\,\big|\,\mu
\right)dx,
\end{align*}
where the inequality uses $\mathbb{P}(A\cap B)\leq\mathbb{P}(A)$ 
and the triangle inequality 
$\|u - v\|\geq\|u\|-\|v\|$.

By Assumption~\ref{assum33},
$\sup_{\mu\in\mathcal{M}}\|C_{w,\mu}^{-1/2}\|\lesssim_p 1$
and $\sqrt{T}\|C_{w,\mu}^{-1}C_{m,\mu}\|\lesssim_p\sqrt{k}$.
Define the split point
$x^{*}:=1+\bigl(\sqrt{k}+\sqrt{T}\|C_{w,\mu}^{-1}C_{m,\mu}\|\bigr)^{\kappa}$.
Note that
\begin{align*}
&\int_{0}^{\infty}\mathbb{P}_{\gamma}\!\left(
\|C_{w,\mu}^{-1/2}\gamma\|
>((x-1)\vee 0)^{1/\kappa}
-\sqrt{T}\|C_{w,\mu}^{-1}C_{m,\mu}\|\,\big|\,\mu
\right)dx\\
=\;&\int_{0}^{x^{*}}
\mathbb{P}_{\gamma}\!\left(
\|C_{w,\mu}^{-1/2}\gamma\|
>((x-1)\vee 0)^{1/\kappa}
-\sqrt{T}\|C_{w,\mu}^{-1}C_{m,\mu}\|\,\big|\,\mu
\right)dx\\
&+\int_{x^{*}}^{\infty}
\mathbb{P}_{\gamma}\!\left(
\|C_{w,\mu}^{-1/2}\gamma\|
>(x-1)^{1/\kappa}
-\sqrt{T}\|C_{w,\mu}^{-1}C_{m,\mu}\|\,\big|\,\mu
\right)dx.
\end{align*}

Since every probability is at most $1$,
\[
\int_{0}^{x^{*}}1\,dx
=1+\bigl(\sqrt{k}+\sqrt{T}\|C_{w,\mu}^{-1}C_{m,\mu}\|\bigr)^{\kappa}
\lesssim_p 1+k^{\kappa/2}
\lesssim k^{\kappa/2},
\]
where the second inequality uses
$\sqrt{T}\|C_{w,\mu}^{-1}C_{m,\mu}\|\lesssim_p\sqrt{k}$
and hence
$(\sqrt{k}+\sqrt{T}\|C_{w,\mu}^{-1}C_{m,\mu}\|)^{\kappa}
\lesssim_p k^{\kappa/2}$.

For $x>x^{*}$, the threshold satisfies
$(x-1)^{1/\kappa}-\sqrt{T}\|C_{w,\mu}^{-1}C_{m,\mu}\|>\sqrt{k}>0$.
Substitute $u=(x-1)^{1/\kappa}-\sqrt{T}\|C_{w,\mu}^{-1}C_{m,\mu}\|$,
so that $x-1=(u+\sqrt{T}\|C_{w,\mu}^{-1}C_{m,\mu}\|)^{\kappa}$
and $dx=\kappa(u+\sqrt{T}\|C_{w,\mu}^{-1}C_{m,\mu}\|)^{\kappa-1}du$,
with lower limit $u=\sqrt{k}$.
Since $\|C_{w,\mu}^{-1/2}\|\lesssim_p 1$,
there exists a constant $C_{3}>0$ such that
$\mathbb{P}_{\gamma}(\|C_{w,\mu}^{-1/2}\gamma\|>u)
\leq\mathbb{P}_{\gamma}(\|\gamma\|>u/C_{3})$,
and Lemma~\ref{tail} gives
$\mathbb{P}_{\gamma}(\|\gamma\|>u/C_{3})\leq\exp(-c_{0}u^{2})$
for all $u\geq\sqrt{k}$ and a constant $c_{0}>0$.
Therefore, using $u+\sqrt{T}\|C_{w,\mu}^{-1}C_{m,\mu}\|\lesssim_p u$
for $u\geq\sqrt{k}$ and setting $z=c_{0}u^{2}$,
\begin{align*}
\int_{x^{*}}^{\infty}
\mathbb{P}_{\gamma}\!\left(
\|C_{w,\mu}^{-1/2}\gamma\|>u\,\big|\,\mu
\right)
\kappa(u+\sqrt{T}\|C_{w,\mu}^{-1}C_{m,\mu}\|)^{\kappa-1}du
&\lesssim_p
\int_{\sqrt{k}}^{\infty}
u^{\kappa-1}e^{-c_{0}u^{2}}du\\
&=\frac{1}{2c_{0}^{\kappa/2}}
\int_{c_{0}k}^{\infty}
z^{\kappa/2-1}e^{-z}dz\\
&\lesssim_p k^{\kappa/2},
\end{align*}
where the last bound uses the standard incomplete gamma function
estimate $\int_{s}^{\infty}z^{\kappa/2-1}e^{-z}dz\lesssim s^{\kappa/2-1}e^{-s/2}
\lesssim k^{\kappa/2}$ for $s=c_{0}k$.\\
\noindent\textbf{Proof of (ii).}
\medskip
\noindent\textbf{Step 1:}
Define
\[
X_\mu := C_{w,\mu}^{-1/2}\gamma - \sqrt{T}\,C_{w,\mu}^{-1}C_{m,\mu},
\qquad \gamma\sim N(0,I_k).
\]

\medskip
\noindent\textbf{Step 2: Positivity of $a_{T,\mu}$ and lower bound.}
By Assumption~\ref{assum33}, uniformly over $\mu\in\mathcal{M}$,
\[
\sup_{\mu\in\mathcal{M}}\|C_{w,\mu}^{-1/2}\|\lesssim_p 1,
\qquad
\sup_{\mu\in\mathcal{M}}\sqrt{T}\|C_{w,\mu}^{-1/2}C_{m,\mu}\|
\lesssim_p \sqrt{k}.
\]
Since $\varepsilon\asymp q\log T$, there exist constants 
$C_1, C_2>0$ such that with probability approaching one,
\[
c\varepsilon \geq c\cdot C_1 q\log T 
\gg C_2\sqrt{k} 
\geq \sqrt{T}\|C_{w,\mu}^{-1/2}C_{m,\mu}\|,
\]
so $a_{T,\mu}>0$ for all $\mu\in\mathcal{M}$ with probability 
approaching one. Moreover, since 
$\|C_{w,\mu}^{-1/2}\|\lesssim_p 1$,
\[
a_{T,\mu} 
= \frac{c\varepsilon - \sqrt{T}\|C_{w,\mu}^{-1/2}C_{m,\mu}\|}
{\|C_{w,\mu}^{-1/2}\|}
\geq \frac{c\varepsilon - C_2\sqrt{k}}{C_3}
\geq c_1\varepsilon
\]
for some constant $c_1>0$ and all large $T$, where the last 
inequality uses $\sqrt{k}\ll\varepsilon$.
Hence $\inf_{\mu\in\mathcal{M}}a_{T,\mu}\geq c_1\varepsilon$ 
with probability approaching one.

\medskip
\noindent\textbf{Step 3: Gaussian concentration.}
By Lemma~\ref{tail} (Corollary A.3), for $\gamma\sim N(0,I_k)$,
\[
\Pr(\|\gamma\|\geq\sqrt{k}+\sqrt{2x})\leq e^{-x}.
\]
Setting $\sqrt{k}+\sqrt{2x}=c_1\varepsilon$ gives 
$x\geq (c_1\varepsilon-\sqrt{k})^2/2\geq c_2\varepsilon^2$ 
for some constant $c_2>0$ (using $\varepsilon\asymp q\log T$ 
so that $c_1\varepsilon\gg\sqrt{k}$ for large $T$). Therefore
\[
\sup_{\mu\in\mathcal{M}}\mathbb{P}_\gamma(B_\gamma)
\leq \Pr(\|\gamma\|>c_1\varepsilon)
\leq \exp(-c_2\varepsilon^2),
\]
where $c_2>0$ depends only on $c_1$ and hence only on the 
constants in Assumption~\ref{assum33}.

\medskip
\noindent\textbf{Step 4: Cauchy--Schwarz decomposition.}
Write
\[
\mathbb{E}_\gamma\!\left[
(1+\|X_\mu\|^\kappa)\mathbf{1}\{\gamma\in B_\gamma\}
\right]
=
\underbrace{\mathbb{P}_\gamma(B_\gamma)}_{=:I}
+
\underbrace{\mathbb{E}_\gamma\!\left[
\|X_\mu\|^\kappa\mathbf{1}\{\gamma\in B_\gamma\}
\right]}_{=:II}.
\]
For term $II$, apply Cauchy--Schwarz:
\[
II \leq 
\left(\mathbb{E}_{\gamma}\|X_\mu\|^{2\kappa}\right)^{1/2}
\mathbb{P}_\gamma(B_\gamma)^{1/2}.
\]

\medskip
\noindent\textbf{Step 5: Bounding $\mathbb{E}_\gamma\|X_\mu\|^{2\kappa}$.}
By the moment inequality applied to 
$X_\mu = C_{w,\mu}^{-1/2}\gamma - \sqrt{T}C_{w,\mu}^{-1}C_{m,\mu}$,
\[
\mathbb{E}_\gamma\|X_\mu\|^{2\kappa}
\lesssim
\mathbb{E}_\gamma\|C_{w,\mu}^{-1/2}\gamma\|^{2\kappa}
+\|\sqrt{T}\,C_{w,\mu}^{-1}C_{m,\mu}\|^{2\kappa}.
\]
By the same arguments as (i)  applied 
with $2\kappa$ in place of $\kappa$, both terms are 
$\lesssim_p k^\kappa$. Therefore
\[
\sup_{\mu\in\mathcal{M}}\mathbb{E}_\gamma\|X_\mu\|^{2\kappa}
\lesssim_p k^\kappa.
\]

Combining Steps 3, 4, and 5,
\[
\sup_{\mu\in\mathcal{M}}
\mathbb{E}_\gamma\!\left[
(1+\|X_\mu\|^\kappa)\mathbf{1}\{\gamma\in B_\gamma\}
\right]
\lesssim_p
\exp(-c_2\varepsilon^2)
+k^{\kappa/2}\exp(-c_2\varepsilon^2/2).
\]
Since $\exp(-c_2\varepsilon^2)\leq k^{\kappa/2}\exp(-c_2\varepsilon^2/2)$ 
for large $T$, we obtain
\[
\sup_{\mu\in\mathcal{M}}
\mathbb{E}_\gamma\!\left[
(1+\|X_\mu\|^\kappa)\mathbf{1}\{\gamma\in B_\gamma\}
\right]
\lesssim_p
k^{\kappa/2}\exp(-c\varepsilon^2),
\]
where $c=c_2/2>0$. Since $\varepsilon\asymp q\log T$, the right-hand side satisfies
\[
k^{\kappa/2}\exp(-c\varepsilon^2)
=k^{\kappa/2}\exp(-cq^2(\log T)^2)\to 0
\]
as $T\to\infty$ for any fixed $\kappa\geq 0$.
This completes the proof of (ii).

\end{proof}

\subsubsection{Step of \texorpdfstring{$c^*$}{TEXT}}
In this subsection, we study the term $c^*$,
\begin{align*}
	c^* & =\int_{ \Xi}\exp(-\frac{1}{2}V_T(h(.),\theta,\mu)+\log\pi(\mu))d\theta d\mu /\int_{ \Xi}\exp(\frac{1}{2}Q_T(\theta,\mu)+\log\pi(\theta, \mu))d\theta d\mu
\end{align*}
\begin{lemma}\label{smallc}
Under Assumptions \ref{assummu1}-\ref{rT}, we have
$c^* \rightarrow_{p} 1$.
\end{lemma}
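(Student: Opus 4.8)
The plan is to reduce the claim to a bound on an expectation taken under the Gaussian-mixture density $N_T$. The first step is purely algebraic: from the definition $R_T(\theta,\mu)=\tfrac12\big(Q_T(\theta,\mu)+V_T(h(.),\theta,\mu)\big)+\log\pi(\theta)$ we may write $\tfrac12 Q_T(\theta,\mu)+\log\pi(\theta)=R_T(\theta,\mu)-\tfrac12 V_T(h(.),\theta,\mu)$, so the denominator integrand of $c^*$ factors as $\exp\big(\tfrac12 Q_T+\log\pi(\theta,\mu)\big)=\exp(R_T)\exp\big(-\tfrac12 V_T\big)\pi(\mu)$. Since the numerator of $c^*$ is exactly the normalizing constant $\int_\Xi\exp(-\tfrac12 V_T)\pi(\mu)\,d\theta\,d\mu$ of $N_T$, dividing top and bottom by it yields the clean identity $1/c^*=\E_{N_T}\big[\exp(R_T(\theta,\mu))\big]$, where $\E_{N_T}$ denotes expectation under $N_T$. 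It therefore suffices to bound $\E_{N_T}[\exp(R_T)]$ above and below by positive constants with probability approaching one; in fact I expect the sharper $\E_{N_T}[\exp(R_T)]=1+o_p(1)$.

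Next I would control $R_T$ on the bulk set $B_\varepsilon$. There $\|\sqrt{T}h(\theta,\mu)\|\le\varepsilon\asymp\sqrt{k}\log T$, so $\|\sqrt{T}h\|^2+k(\log T)^2\asymp k(\log T)^2$, and the first display of Assumption \ref{rT} gives $\sup_{B_\varepsilon}T|R_T|\lesssim_p k(\log T)^2\big(\sqrt{k}(\log T)^2/\sqrt{T}+q/\sqrt{kT}\big)$. Dividing by $T$ and invoking the rate restrictions $(\log T)\big((\log T)^2 k\big)^{(\kappa\vee2)+1}/T\to0$ and $k(\log T)^4 q^2/T\to0$ shows $\sup_{(\theta,\mu)\in B_\varepsilon}|R_T(\theta,\mu)|=o_p(1)$, hence $\exp(R_T)=1+o_p(1)$ uniformly on $B_\varepsilon$. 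I would also record that $N_T(B_\varepsilon)\rightarrow_p1$: conditional on $\mu$, $N_T(\theta\mid\mu)$ is Gaussian with mean $-C_{w,\mu}^{-1}C_{m,\mu}$ (of order $\sqrt{k}/\sqrt{T}$ by the bounds recorded in the proof of Lemma \ref{gai}) and covariance $T^{-1}C_{w,\mu}^{-1}$, so by the Gaussian concentration of Lemma \ref{gai} its conditional mass escaping the $\theta$-slice of $B_\varepsilon$ is $o_p(1)$, while the mixing measure $N_T(\mu)$ concentrates on $\Gamma_\varepsilon$ exactly as in the proof of Theorem \ref{Theorem:2}.

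The lower bound is then immediate: since $\exp(R_T)\ge0$ and $\exp(R_T)\ge1-o_p(1)$ on $B_\varepsilon$, $\E_{N_T}[\exp(R_T)]\ge(1-o_p(1))N_T(B_\varepsilon)=1-o_p(1)$, giving $c^*\lesssim_p1$. For the upper bound I would split $\E_{N_T}[\exp(R_T)]=\int_{B_\varepsilon}\exp(R_T)N_T+\int_{B_\varepsilon^c}\exp(R_T)N_T$; the first term is at most $(1+o_p(1))N_T(B_\varepsilon)\le1+o_p(1)$. For the tail term I would use the identification/smoothness bound \eqref{eq:bound} of Assumption \ref{rT}, which on $\bar B_\varepsilon^c$ controls $\exp(R_T)$ by $\exp\big(\mathcal{B}_T(\theta-\theta(\nu(\mu)))\big)$; integrating this against the conditional Gaussian $N_T(\theta\mid\mu)$ and then over $\mu$ is precisely the $\mathcal{R}_{22,T}$ estimate carried out in the proof of Theorem \ref{Theorem:2}, specialized to $\kappa=0$, and is therefore $\lesssim\sup_\mu\|C_{w,\mu}^{-1/2}/\sqrt{T}\|\sqrt{k}\exp(-(\varepsilon\vee\sqrt{k}))\rightarrow_p0$. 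Combining, $\E_{N_T}[\exp(R_T)]=1+o_p(1)$, so $c^*\gtrsim_p1$, and the two bounds give $c^*\asymp_p1$.

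I expect the main obstacle to be the tail term over $B_\varepsilon^c$. On $B_\varepsilon$ everything follows by linearizing $R_T$, but outside $B_\varepsilon$ the remainder is not small and $\exp(R_T)$ could a priori grow, so the estimate must come entirely from the identification half \eqref{eq:bound} of Assumption \ref{rT} combined with a Gaussian integral sharp enough to dominate that growth. The saving grace is that this is the same computation already needed for $\mathcal{R}_{22,T}$ in Theorem \ref{Theorem:2}; the only remaining work is to check that dropping the polynomial weight ($\kappa=0$) leaves the bound intact, which it does.
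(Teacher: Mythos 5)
Your proof is correct and follows essentially the same route as the paper's: both reduce the claim to showing $\sup_{(\theta,\mu)\in B_{\varepsilon}}|R_T(\theta,\mu)|=o_p(1)$ via the first display of Assumption \ref{rT} together with negligibility of the contribution from $B_{\varepsilon}^c$, and your identity $1/c^*=\E_{N_T}[\exp(R_T(\theta,\mu))]$ is just a normalized repackaging of the paper's ratio $c_1^*\big/\bigl(c_1^*+\int_{B_{\varepsilon}}e^{-V_T/2}\pi(\mu)[e^{R_T}-1]\,d\theta\,d\mu\bigr)$. If anything your write-up is slightly more complete, since you spell out the tail bound over $B_{\varepsilon}^c$ using \eqref{eq:bound} and the Gaussian-integral estimate (the $\mathcal{R}_{22,T}$ computation, which is self-contained and does not itself invoke $c^*$, so no circularity arises), a step the paper's own proof leaves implicit.
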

\begin{proof}

Define $$c_1^*= {\int_{B_{\varepsilon}}\exp(-V_T(h(.),\theta,\mu)/2+\log(\pi(\mu)))d\theta d\mu},$$ and $${c_2^*}= {\int_{ B_{\varepsilon}}\exp(\frac{1}{2}Q_T(\theta,\mu)+\log\pi(\mu,\theta))d\theta d\mu}.$$
  	\begin{eqnarray*}
&&c^* =\frac{\int_{ \Xi}\exp(-V_T(h(.),\theta,\mu)/2+\log(\pi(\mu)))d\theta d\mu}{\int_{\Xi}\exp(\frac{1}{2}Q_T(\theta,\mu)+\log\pi(\mu,\theta))d\theta d\mu}\\
&&=\frac{\int_{B_{\varepsilon}}\exp(-V_T(h(.),\theta,\mu)/2+\log(\pi(\mu)))d\theta d\mu}{\int_{ B_{\varepsilon}}\exp(\frac{1}{2}Q_T(\theta,\mu)+\log\pi(\mu,\theta))d\theta d\mu}+ o_p(1)=\frac{c_1^*}{c_2^*}+ o_p(1)\\
&&=\frac{c_1^*}{\int_{ B_{\varepsilon}}\exp(-V_T(h(.),\theta,\mu)/2+\log(\pi(\mu)) + R_T(\theta,\mu))d\theta d\mu}+ o_p(1)
\\
&&=\frac{c_1^*}{\int_{ B_{\varepsilon}}\exp(-V_T(h(.),\theta,\mu)/2+\log(\pi(\mu))) [\exp( R_T(\theta,\mu))-1]d\theta d\mu+c_1^*}
\\&&+ o_p(1).
	\end{eqnarray*}

 If the above term is of order 
 $\frac{\int_{B_{\varepsilon}}\exp(-V_T(h(.),\theta,\mu)/2+\log(\pi(\mu)))d\theta d\mu}{\int_{ B_{\varepsilon}}\exp(-V_T(h(.),\theta,\mu)/2+\log(\pi(\mu)))d\theta d\mu(1+o_p(1))}
+ o_p(1),$ then we reach the conclusion.

By Assumption~\ref{rT},
\[
\sup_{(\theta,\mu)\in B_{\varepsilon}}
\frac{T|R_{T}(\theta,\mu)|}
{\|\sqrt{T}h(\theta,\mu)\|^2+k(\log T)^2}
\lesssim_p
\frac{\sqrt{k}(\log T)^2}{\sqrt{T}}\vee\frac{q}{\sqrt{kT}}
\to 0.
\]
Rearranging, we obtain
\[
\sup_{(\theta,\mu)\in B_{\varepsilon}}|R_{T}(\theta,\mu)|
\lesssim_p
\frac{1}{T}\left(\frac{\sqrt{k}(\log T)^2}{\sqrt{T}}\vee\frac{q}{\sqrt{kT}}\right)
\left(\|\sqrt{T}h(\theta,\mu)\|^2+k(\log T)^2\right).
\]
Since on $B_\varepsilon$ we have $\|\sqrt{T}h(\theta,\mu)\|^2\leq\varepsilon^2
\asymp q^2(\log T)^2$ and $k(\log T)^2 \leq q^2(\log T)^2$ (as $k \leq q$), 
it follows that
\[
\sup_{(\theta,\mu)\in B_{\varepsilon}}|R_{T}(\theta,\mu)|
\lesssim_p
\frac{q^2(\log T)^2}{T}
\cdot\left(\frac{\sqrt{k}(\log T)^2}{\sqrt{T}}\vee\frac{q}{\sqrt{kT}}\right)
=
\frac{\sqrt{k}\,q^2(\log T)^4}{T^{3/2}}\vee\frac{q^3(\log T)^2}{k^{1/2}T^{3/2}}
\to 0,
\]
under Assumption~\ref{rates}, since $q^2(\log T)^2/T \to 0$ and 
$q/(k^{1/2}T^{1/2}) \to 0$ are both implied by the second condition 
of Assumption~\ref{rates}.
 
\end{proof}

\subsubsection{Properties of $N_T(\mu)$}

\begin{lemma}[$N_T(\mu)$] \label{NT}
Let $W_\mu$ and $G_\mu$ satisfy Assumption \ref{assum33}.  
Recall that
\[
\theta_\mu := \theta(\nu(\mu)), 
\qquad 
x_\mu := \theta - \theta_\mu \in \mathbb{R}^k,
\qquad 
r_\mu := \widehat{m}(\theta(\nu(\mu))) -\mu \in \mathbb{R}^q.
\]
Recall that 
\begin{equation}
\label{eq:VT-def}
V_T(h(.),\theta,\mu) =V_T(\theta,\mu)
= 
T\,(G_\mu x_\mu + r_\mu)^\top W_\mu (G_\mu x_\mu + r_\mu)
- 2\log \pi(\mu)
- 2\log \pi(\theta_\mu)
+ 2\sqrt{T}\|\mu - \nu(\mu)\|.
\end{equation}
Let
\begin{equation}
\label{eq:xstar}
x_{\star,\mu}
:= C_{w,\mu}^{-1} G_\mu^\top W_\mu r_\mu,
\qquad
C_{w,\mu} := G_\mu^\top W_\mu G_\mu,
\qquad
P_{G,\mu} := G_\mu C_{w,\mu}^{-1} G_\mu^\top W_\mu.
\end{equation}

\begin{enumerate}
\item[(i)] The conditional distribution of $\theta$ given $\mu$ is Gaussian:
\begin{equation}
\label{eq:theta-cond}
\theta \mid \mu
\sim \mathcal N\!\left(
\theta_\mu - x_{\star,\mu},
\;
(T\,C_{w,\mu})^{-1}
\right),
\end{equation}
with normalized density
\[
N_T(\theta\mid\mu)
=
\frac{T^{k/2}\,|C_{w,\mu}|^{1/2}}{(2\pi)^{k/2}}
\exp\!\left(
-\frac{T}{2}
(\theta - \theta_\mu + x_{\star,\mu})^\top
C_{w,\mu}
(\theta - \theta_\mu + x_{\star,\mu})
\right).
\]

\item[(ii)] The (unnormalized) marginal density in $\mu$ is
\begin{align}
\label{eq:pT-mu-alt}
N_T(\mu)
\;\propto\;&
|C_{w,\mu}|^{-1/2}
\\
&\times
\exp\!\left\{
-\frac{T}{2}
\big(\widehat m(\theta(\nu(\mu))) - \mu\big)^\top
(I-P_{G,\mu})^\top
W_\mu
(I-P_{G,\mu})
\big(\widehat m(\theta(\nu(\mu))) - \mu\big)
\right\}
\nonumber
\\
&\times
\pi(\mu)\,\pi(\theta(\nu(\mu)))
\exp\!\big(-\sqrt{T}\|\mu-\nu(\mu)\|\big).
\nonumber
\end{align}
\end{enumerate}
\end{lemma}

\begin{proof}

Start from \eqref{eq:VT-def} and expand the quadratic term:
\[
(G_\mu x_\mu + r_\mu)^\top W_\mu (G_\mu x_\mu + r_\mu)
=
x_\mu^\top C_{w,\mu} x_\mu
+
2 x_\mu^\top G_\mu^\top W_\mu r_\mu
+
r_\mu^\top W_\mu r_\mu.
\]

Complete the square using $x_{\star,\mu}$ defined in \eqref{eq:xstar}.  
Since
\[
C_{w,\mu} x_{\star,\mu}
=
G_\mu^\top W_\mu r_\mu,
\]
we obtain
\[
x_\mu^\top C_{w,\mu} x_\mu
+
2 x_\mu^\top G_\mu^\top W_\mu r_\mu
=
(x_\mu + x_{\star,\mu})^\top
C_{w,\mu}
(x_\mu + x_{\star,\mu})
-
r_\mu^\top W_\mu P_{G,\mu} r_\mu.
\]

Therefore,
\begin{equation}
\label{eq:quad-split}
(G_\mu x_\mu + r_\mu)^\top W_\mu (G_\mu x_\mu + r_\mu)
=
(x_\mu + x_{\star,\mu})^\top
C_{w,\mu}
(x_\mu + x_{\star,\mu})
+
r_\mu^\top W_\mu (I - P_{G,\mu}) r_\mu.
\end{equation}

Substituting \eqref{eq:quad-split} into $-V_T(\theta,\mu)/2$ gives
\begin{align}
\exp\{-V_T(\theta,\mu)/2\}
=\;&
\exp\!\left(
-\frac{T}{2}
(x_\mu + x_{\star,\mu})^\top
C_{w,\mu}
(x_\mu + x_{\star,\mu})
\right)
\nonumber
\\
&\times
\exp\!\left(
-\frac{T}{2}
r_\mu^\top W_\mu (I-P_{G,\mu}) r_\mu
\right)
\nonumber
\\
&\times
\pi(\mu)\,\pi(\theta_\mu)
\exp\!\big(-\sqrt{T}\|\mu-\nu(\mu)\|\big).
\label{eq:kernel-factor}
\end{align}

\textbf{Proof of (i).}

The $\theta$-dependent term in \eqref{eq:kernel-factor} is a Gaussian density with
precision matrix $T C_{w,\mu}$ and center $\theta_\mu - x_{\star,\mu}$.  
Normalizing yields \eqref{eq:theta-cond} and the stated density.

\textbf{Proof of (ii).}

Integrating \eqref{eq:kernel-factor} over $\theta \in \mathbb{R}^k$ gives
\[
\int
\exp\!\left(
-\frac{T}{2}
(x_\mu + x_{\star,\mu})^\top
C_{w,\mu}
(x_\mu + x_{\star,\mu})
\right)
d\theta
=
(2\pi)^{k/2}
\left(T\right)^{-k/2}
|C_{w,\mu}|^{-1/2}.
\]

The remaining $\mu$-dependent terms give
\begin{align*}
N_T(\mu)
\propto\;&
|C_{w,\mu}|^{-1/2}
\exp\!\left(
-\frac{T}{2}
r_\mu^\top W_\mu (I-P_{G,\mu}) r_\mu
\right)
\\
&\times
\pi(\mu)\,\pi(\theta(\nu(\mu)))
\exp\!\big(-\sqrt{T}\|\mu-\nu(\mu)\|\big).
\end{align*}

Finally, since
\[
P_{G,\mu}^\top W_\mu = W_\mu P_{G,\mu},
\qquad
P_{G,\mu}^2 = P_{G,\mu},
\]
we have
\[
r_\mu^\top W_\mu (I-P_{G,\mu}) r_\mu
=
r_\mu^\top
(I-P_{G,\mu})^\top
W_\mu
(I-P_{G,\mu})
r_\mu,
\]
which yields \eqref{eq:pT-mu-alt}.  
This completes the proof.

\end{proof}

 {

	Recall that $$B_{\varepsilon}^{c1} =\{\|\sqrt{T}h_{\parallel}\|>\frac{\varepsilon}{2}\} $$
	$$B_{\varepsilon}^{c2} =\{\|\sqrt{T}h_{\perp}\|>\frac{\varepsilon}{2}\} $$
	$$B_{\varepsilon}^{c3} =\{\sqrt{T} \|P_{G,\mu}( \mu - \nu(\mu)) \| > \frac{\varepsilon}{2}\} .$$
	Then we have $B_{\varepsilon}^{c} = B_{\varepsilon}^{c1}\cup B_{\varepsilon}^{c2}\cup B_{\varepsilon}^{c3}.$
	We shall prove the case for  $$\bar{B}\equiv B_{\varepsilon}^{c3}\cap (B_{\varepsilon}^{c1c}\cap B_{\varepsilon}^{c2c}).$$
	
	\begin{lemma}\label{lemmaextra}
		Suppose Assumptions \ref{assum33} and \ref{assummu1} hold,   
		then
		\[
		\int_{\bar{\mathcal{B}}} \left(1 + \|\sqrt{T}h(\theta,\mu)\|^\kappa\right)
		\left|p_T(\theta,\mu)-N_T(\theta,\mu)\right|\,d\theta\,d\mu
		= o_p(1),
		\]
		where  $\bar{\mathcal{B}} = \{\|\sqrt{T}h(\theta,\mu)\|\le \varepsilon,\;
		\sqrt{T}\|\mu-\nu(\mu)\|>\tfrac{\varepsilon}{2} \}$ and $\varepsilon = q\log T$.
	\end{lemma}
	
	\begin{proof}
		Recall that
		\[
		N_T(\theta,\mu)
		=
		\frac{
			\exp\left\{-\tfrac{1}{2}V_T(h(\cdot),\theta,\mu)\right\}\,\pi(\mu)
		}
		{
			\displaystyle \int_{\mathcal{M}}\int_{\Theta}
			\exp\left\{-\tfrac{1}{2}V_T(h(\cdot),\theta,\mu)\right\}\,\pi(\mu)\,d\theta\,d\mu
		},
		\]
		with $
		V_T(h,\theta,\mu)
		=
		- 2Q_{h,T}(h,\theta,\mu)
		- 2\log\pi(\theta(\nu(\mu)))
		+ 2\sqrt{T}\,\|\mu-\nu(\mu)\|$, 
		and
		\[
		Q_{h,T}(h,\theta,\mu)
		=
		- \frac{T}{2}\left(\widehat m(\theta(\nu(\mu)))-\nu(\mu)+h(\theta,\mu)\right)^\top 
		W(\theta(\nu(\mu)))
		\left(\widehat m(\theta(\nu(\mu)))-\nu(\mu)+h(\theta,\mu)\right).
		\]

		To prove the claim, we first show
		\[
		\int_{\bar{\mathcal{B}}}
		\left(1+\|\sqrt{T}h(\theta,\mu)\|^\kappa\right)
		N_T(\theta,\mu)\,d\theta\,d\mu
		= o_p(1).
		\] 
		
		Let the plausible pair \((\mu',\theta(\mu'))\) with \(\mu'\in {\mathcal{M}}\) be the one specified by Assumption \ref{assummu1} such that there exists \(\delta>0\) on the ball, \(  B_{\delta/\sqrt{T}}(\theta(\mu))\), \(\pi(\theta)\) are bounded below. Over the neighborhood $ B_{\delta/\sqrt{T}}((\mu,\theta(\mu))) $, \(\sqrt{T}\|\mu-\nu(\mu)\| = O(1)\), so with probability approaching one,
		\[
		\exp\left\{-\tfrac{1}{2}Q_{h,T}(h,\theta,\mu)\right\}
		\gtrsim \exp(-C_q q)
		\]
		for sufficiently large \(C_q\) that $C_q q\rightarrow \infty$, e.g., $C_q=\log (\log T)/c$ for $c>0$. Hence the term $$\int_{\mathcal{M}}\int_{\Theta}
		\exp\left\{-\tfrac{1}{2}V_T(h(\cdot),\theta,\mu)\right\}\,\pi(\mu)\,d\theta\,d\mu$$ is lower bounded (with probability approaching one) by \(\exp(-C_q q) {(cT)}^{-(k+q)/2}\) with $c$ being a positive number.
		
		\medskip
		
		The constraints on \(\bar{\mathcal{B}}\) implies that
		\[
		1+\|\sqrt{T}h(\theta,\mu)\|^\kappa \lesssim 1 +  \sqrt{T}^\kappa d(\mu,\Gamma)^\kappa.
		\]
		where we let \(d(\mu,\Gamma) := \|\mu-\nu(\mu)\|\) and note that the above inequality also holds for $\sqrt{T}\|\theta-\theta(\nu(\mu))\|$.

		Using   \(\pi(\theta)\le C_\theta\) and the fact that $\int_{\sqrt{T}d(\mu,\Gamma) \geq \varepsilon} (\sqrt{T}d(\mu,\Gamma))^\kappa \exp(-\sqrt{T}d(\mu,\Gamma)) \pi(\mu) d\mu \leq \varepsilon^\kappa  e^{-\varepsilon}$, the integral with probability approaching one is (up to constant multiplied by $\exp(C_q q) {(cT)}^{(k+q)/2}$), and is up to the following magnitude
			\[
			T^{-k/2}\varepsilon^{k/2}  e^{-\varepsilon} + 	T^{-k/2}\varepsilon^{k/2}  \varepsilon^\kappa  e^{-\varepsilon}.  
			\]  Note that if we let $C_q=\log(\log T) /3$ under the rate assumption and $\varepsilon = q\log T, q\geq k$, we have that 
		\begin{align*}
			& \exp(C_q q) {(cT)}^{(k+q)/2} T^{-k / 2} \varepsilon^{k / 2+\kappa} e^{-\varepsilon}  \\	\lesssim &  ((\log T)^\frac{q}{3} (T \log T )^{(k+q)/2} T^{-k / 2} q^{k/2+\kappa} (\log T)^{k/2+\kappa} T^{-q} \\\lesssim &  ((\log T)^4q^2/T)^\frac{q}{2}   =o(1).
		\end{align*}
		Therefore, $
		\int_{\bar{\mathcal{B}}}
		\left(1+\|\sqrt{T}h(\theta,\mu)\|^\kappa\right)
		N_T(\theta,\mu)\,d\theta\,d\mu
		= o_p(1).$ The conclusion is proved once we show that $$
		\int_{\bar{\mathcal{B}}} \left(1 + \|\sqrt{T}h(\theta,\mu)\|^\kappa\right)
		|p_T(\theta,\mu)|\,d\theta\,d\mu
		= o_p(1),$$, and this is done in a similar manner to the previous arguments. Note that following a similar argument, we have that with probability approaching one, the above integral  (up to a constant  multiplied by $\exp(C_q q) {(cT)}^{(k+q)/2}$) is up to the following magnitude 
		\[
			T^{-k/2}\varepsilon^{k/2}  e^{-(k/c_w)^{-1} \varepsilon^2} + 	T^{-k/2}\varepsilon^{k/2}  \varepsilon^\kappa  e^{-(k/c_w)^{-1} \varepsilon^2} ,
		\]		 
		which multiplied by $\exp(C_q q) {(cT)}^{(k+q)/2}$ is also $o(1)$ as indicated by the rate assumption. This completes the proof.
	\end{proof}
}

\subsection{Proof of Lemma \ref{coverageprt}.}\label{Proof: lemma1}
\begin{proof}
We aim to establish that
\begin{align}
\int_\mu \mathbb{P}_{\mu}\left(\theta(\mu) \in PR_T \right) \pi(\mu) d\mu = 1 - \alpha +o_p(1). \label{eq:lemma1-1}
\end{align}

We begin by considering the case under the conditions of Theorem \ref{Theorem:1}, as the argument for Theorem \ref{Theorem:2} proceeds analogously.

We know from Theorem \ref{Theorem:1} that $PR_T(\alpha)=\left\{\theta: T(\widehat{m}(\theta))^\top  A_{\theta_0} (\widehat{m}(\theta)) \leq {Z_{\alpha}}\right\}$ such that $Z_\alpha$ satisfies the following condition 
\begin{align*}
     \P_{T,\theta}\left(T(\widehat{m}(\theta)^\top  A_{\theta_0} \widehat{m}(\theta)) \leq {Z_{\alpha}}\right)  =1-\alpha,
\end{align*}
where $\mathbb{P}_{T,\theta}$ denotes the probability measure of $\widehat{m}(\theta)-\mu+ \mu$ corresponding to the marginal posterior $p_T(\theta)$, which in this case is Gaussian with mean $0$ and covariance matrix $A_{\theta_0}^{-1}$. Therefore, $Z_\alpha$ is the $(1 - \alpha)$ quantile of a chi-squared distribution with $q$ degrees of freedom. To verify equation~\eqref{eq:lemma1-1}, it suffices to show that
\begin{align}
\label{eq:lemma 1 linear}
    \int \mathbb{P}_{\mu}\left(T(\widehat{m}({\theta}(\mu))-\mu+ \mu)^\top  A_{\theta_0}(\widehat{m}(\theta(\mu))-\mu+\mu)) \leq {Z_{\alpha}}\right)f_{\mu}(\mu)d\mu =1-\alpha + o_p(1),
\end{align}
where $A_{\theta_0} = \Lambda^{-1} - \Lambda^{-1} \left( \Omega(\theta_0)^{-1} + \Lambda^{-1} \right)^{-1} \Lambda^{-1}$, so that $A_{\theta_0}^{-1} = \Omega(\theta_0) + \Lambda$ by the Woodbury matrix identity.
 
{Note that under $\mathbb{P}$, $m(\theta(\mu)) =\mu$ following the distribution specified by $f_\mu(\cdot)$ corresponding to the local Gaussian prior, while for a given $\mu$, under $\mathbb{P}_\mu$, $\widehat{m}(\theta(\mu))-\mu$ is asymptotically Gaussian with mean zero and variance $T^{-1}\Omega(\theta_0)$. Consequently, the sum of $\widehat{m}(\theta(\mu))-\mu$ and $\mu $ is Gaussian with mean zero and variance $A_{\theta_0}^{-1}$ under $\mathbb{P}$.
Thus by the definition of the quantile $Z_{\alpha}$, the following is true:
\begin{align*}
    \int \mathbb{P}_{\mu}\left(T(\widehat{m}(\theta(\mu))- \mu+\mu))^\top A_{\theta_0} (\widehat{m}(\theta(\mu))- \mu+\mu) \leq Z_\alpha   \right)f_{\mu}(\mu)d\mu =1-\alpha,
\end{align*}
which then implies equation (\ref{eq:lemma 1 linear}) upon noting that $\|\mu - \mu_0 \|=O_p(\frac{\sqrt{q}}{\sqrt{T}})$ in this local misspecification case.  }


 
We now generalize the preceding argument to the nonlocal case under the conditions of Theorem~\ref{Theorem:2}. Here, we have that by definition $PR_T(\alpha)=\left\{\theta:  \int_{\mu}p_T(\theta,\mu)d\mu \geq {Z_{\alpha}} \right\}=\left\{\theta:  p_T(\theta) \geq {Z_{\alpha}} \right\}$ where $Z_\alpha$ is chosen to satisfy
\begin{align*}
     \P_{T,\theta}\left( p_T(\theta) \geq {Z_{\alpha}}\right)  =1-\alpha,
\end{align*}
and $\mathbb{P}_{T, \theta}$ corresponds to the marginal posterior $p_T(\theta)$. With a slight abuse of notation, we treat $p_T(\theta)$ as a function of the sample moments $\widehat{m}(\theta)$, rather than of $\theta$ directly. Define this function as $\tilde{p}_T(\widehat{m}(\theta)) := p_T(\theta)$. Then the above condition can be rewritten as
\begin{align}
\label{eq:lemma 1 --40}
     \P_{T,\theta}\left( \tilde{p}_T(\widehat{m}(\theta)) \geq {Z_{\alpha}}\right)  =1-\alpha. 
\end{align} 

To verify equation~\eqref{eq:lemma1-1}, it suffices to show that
\begin{align}
\label{eq:lemma 1 --41}
\int \mathbb{P}_{\mu} \left( \tilde{p}_T(\widehat{m}(\theta(\mu))) \geq Z_{\alpha} \right) f_\mu(\mu) d\mu = 1 - \alpha + o_p(1).
\end{align}

Similar to the above local case, for a given $\mu$, under $\mathbb{P}_\mu$, 
the distribution of $\widehat{m}(\theta(\mu)) - \mu$ is asymptotically 
Gaussian with mean zero and covariance $T^{-1}\Omega(\theta(\mu))$. 
By the assumption $W(\theta(\mu))^{-1} = \Omega(\theta(\mu))$ and 
Theorem~\ref{Theorem:2}, the conditional posterior distribution of 
$\widehat{m}(\theta) - \mu$ given $\mu$ coincides asymptotically with 
this sampling distribution. Since $\pi(\mu) = f_\mu(\mu)$ by assumption, 
integrating over $\mu$ shows that the marginal posterior distribution 
of $\widehat{m}(\theta)$ matches the joint sampling distribution of 
$\widehat{m}(\theta(\mu))$ under $\mathbb{P}$. 
Thus Equation~(\ref{eq:lemma 1 --41}) follows from~(\ref{eq:lemma 1 --40}).
\end{proof}

\subsection{Proof of Theorem \ref{th3}}
\label{proof: theorem3}
\begin{proof}
~\\~\\
\underline{\textbf{Step 1}}\\
Under the information equality, we have {$\sigma_{\eta,\mu}^2 = (\partial \eta(\theta(\mu))/\partial\theta)^{\top} {J}_W(\theta(\mu))^{-1}(\partial \eta(\theta(\mu))/\partial\theta).$} From Assumption \ref{Gaussian}, the frequentist confidence interval is presented as $$[(\partial \eta(\theta(\mu))/\partial\theta)^{\top}\widehat{\theta}(\mu)+\frac{\sigma_{\eta,\mu} z_{\alpha/2}}{\sqrt{T}}, (\partial \eta(\theta(\mu))/\partial\theta)^{\top}\widehat{\theta}(\mu)+ \frac{z_{1-\alpha/2}\sigma_{\eta,\mu}}{{\sqrt{T}}}].$$

\noindent
\underline{\textbf{Step 2}}\\
As the second step, we need to prove that under the information equality, the confidence interval agrees with the frequentist confidence interval.
As suggested by Theorem \ref{Theorem:2}, $p_T(\theta,\mu)$ can be approximated well by $N_T(\theta,\mu)$, and the conditional distribution of $\theta$ on $\mu$ (with density $N_T(\theta|\mu)$) follows a Gaussian distribution with mean $C_{w,\mu}^{-1}C_{m,\mu}$ and variance $(TC_{w,\mu})^{-1}$. As we noticed, $
\partial \eta(\theta(\mu))/\partial \theta^{\top}(TC_{w,\mu})^{-1}\partial \eta(\theta(\mu))/\partial \theta =\sigma_{\eta,\mu}^2/T$ under information equality.
Thus, it suffices to check that the quantiles of $p_T(\theta,\mu)$ and $N_T(\theta,\mu)$ indeed agree.


Let
\begin{align*}
H_{\eta,T}(s,\mu)=&F_{\eta,T}\left(\eta\left(\theta(\mu)\right)+s{ /\sqrt{T}}\right)=\int_{\theta\in\Theta:\eta\left(\theta\right)\leq \eta\left(\theta(\mu)\right)+s/\sqrt{T}}p_{T}(\theta,\mu)/p_T(\mu)d\theta,\\
\widehat{H}_{\eta,T}(s,\mu)=& \int_{\theta\in\Theta:\eta\left(\theta\right)\leq \eta\left(\theta(\mu)\right)+s /\sqrt{T}}N_{T}(\theta,\mu)/N_T(\mu)d\theta,\\
H_{\eta,\infty}(s,\mu)=&\int_{\theta\in\Theta: (\partial \eta(\theta(\mu))/\partial \theta)^{\top}(\theta - \theta(\mu))\leq s /\sqrt{T}}N_{T}(\theta,\mu)/N_T(\mu)d\theta,
\end{align*}
where $N_T(\mu)=\int_{\theta\in \Theta} N_T(\theta,\mu)d\theta$. By definition of total variation of moments norm and Theorem \ref{Theorem:2}, we have
\[
\sup_{s\in \mathcal{S}(\mu),\mu\in \Gamma}\left|H_{\eta,T}(s,\mu)-\widehat{H}_{\eta,T}(s,\mu)\right|\rightarrow_{p}0,
\]  
where $ \mathcal{S}(\mu)$ denotes the support of $H_{\eta,T}(\cdot,\mu)$ such that $ \mathcal{S}(\mu) = \{s\in \mathbb{R}: s = \sqrt{T} (\eta(\theta)-\eta(\theta(\mu))), \theta \in \Theta\}$.

By the uniform
continuity of the integral of the normal density with respect to the
boundary integration, we have
\[
\sup_{s\in \mathcal{S}(\mu),\mu\in \Gamma}\left|\widehat{H}_{\eta,T}(s,\mu)-H_{\eta,\infty}(s,\mu)\right|\rightarrow_{p}0,
\]
which implies that
\[
\sup_{s\in \mathcal{S}(\mu),\mu\in \Gamma}\left|H_{\eta,T}(s,\mu)-H_{\eta,\infty}(s,\mu)\right|\rightarrow_{p}0.
\] 

The convergence of the distribution function implies the convergence of quantiles at continuous points of distribution functions so that $H_{\eta, T}^{-1}(\alpha,\mu)-H_{\eta,\infty}^{-1}(\alpha,\mu)\rightarrow_{p}0,  $ where $H_{\eta,\infty}^{-1}(\alpha,\mu)$ and $H_{\eta,T}^{-1}(\alpha,\mu)$ are defined as the inverse of the function $H_{\eta,T}(s,\mu)$ in terms of $s$ for any fixed $\mu$.

Next, similar to the proof of Theorem 3 in \cite{chernozhukov2003mcmc} and we have that 
\begin{align*}
H_{\eta,\infty}(s,\mu)=&\mathbb{P}_{N_T(\theta | \mu )}\left\{ (\partial \eta(\theta(\mu))/\partial \theta)^{\top} (\theta-\theta(\mu))\leq {  s/\sqrt{T}}\right\} 
\end{align*}
 so that
$H_{\eta,\infty}^{-1}(\alpha,\mu)=(\partial \eta(\theta(\mu))/\partial \theta)^{\top}\sqrt{T} {U}_T(\mu) +q_{\alpha} \sqrt{(\partial \eta(\theta(\mu))/\partial \theta)^{\top}{J}_W(\mu)^{-1}(\partial \eta(\theta(\mu))/\partial \theta)}$  implied by the proof of Theorem \ref{Theorem:2}, where $q_{\alpha}$ is the $\alpha$-quantile of a standard normal distribution. The rest of the results follow from the fact that $H_{\eta,T}^{-1}(\alpha,\mu)= \sqrt{T}(c_{\eta,T}(\alpha,\mu) -\eta({\theta}(\mu)))$ and the delta method. 

{
Recall that  $f_{\mu}(\mu)= \partial F_{\mu}(\mu)/\partial \mu$ be the density corresponding to $\mathbb{P}^*(.)$.
To prove the second statement, we have, 
\begin{align*}
&\lim_{T\rightarrow\infty}\mathbb{P}^*\left\{ \eta(\theta(\tilde{\mu}))\in \cup_{\mu' \in \mathcal{M}} \text{CI}(\mu'), \forall \tilde{\mu} \in\Gamma \right\}
\\=& \lim_{T\rightarrow\infty}\int_{\mu}\mathbb{P}^*_{\theta(\mu),\mu}\left\{ \eta(\theta(\tilde{\mu}))\in \cup_{\mu' \in \mathcal{M}} \text{CI}(\mu') ,\forall \tilde{\mu} \in\Gamma \right\} f_{\mu}(\mu)d\mu \\
\geq & \lim_{T\rightarrow\infty}\int_{\mu}\mathbb{P}^*_{\theta(\mu),\mu}\left\{ \eta(\theta(\tilde{\mu}))\in  \text{CI}(\tilde{\mu}),\forall \tilde{\mu} \in\Gamma\right\} f_{\mu}(\mu)d\mu= 1-\alpha .
\end{align*}
}

\end{proof}


\subsection{Proof of Theorem \ref{th4}}
\label{proof: theorem4}
\begin{proof} 
Keep $\mu \in \Gamma$ throughout this proof.
In view of Assumption of the consistency of ${\Omega}_{T}(\theta(\mu))$, it suffices to show
that
$\|\widehat{J}_{T}^{-1}\left(\theta(\mu)\right)-J_W^{-1}(\theta(\mu))\|\rightarrow_{p}0,
$ and then conclude using the delta method.  
Let $\zeta_\mu(\theta)$ be a function of $\theta$ such that 
$$\zeta_\mu(\theta)=\sqrt{T}\left(\theta-\theta(\mu)\right)- 
\sqrt{T}\underbrace{{J}_W\left(\theta(\mu)\right)^{-1}{\Delta}_{T,W}\left(\theta(\mu)\right)}_{{U}_{T}(\mu)}, $$
 and the localized quasi-posterior density for $\zeta_\mu(\theta)$ is
\[
p_{T}(\zeta_\mu(\theta),\mu)=\frac{1}{\sqrt{T}}p_{T}\left(\zeta_\mu(\theta)/\sqrt{T}+\theta(\mu)+{U}_{T}(\mu),\mu\right).
\]

And similarly, define
$N_{T}(\zeta_\mu(\theta),\mu)=\frac{1}{\sqrt{T}}N_{T}\left(\zeta_\mu(\theta)/\sqrt{T}+\theta(\mu)+{U}_{T}(\mu),\mu\right).$  {Define $H_T$ for the set of $\zeta_\mu(\theta)$ containing the set $\{\theta:\sqrt{T}\|h(\theta,\mu)\|\leq\varepsilon, 
\theta \in \Theta\}$. }
{Denote $\zeta_\mu(\theta)=\left(\zeta_{\mu,1}(\theta),\ldots,\zeta_{\mu,k}(\theta)\right)$ }and $\widetilde{T}_{T}=\left(\widetilde{T}_{T1},\ldots,\widetilde{T}_{Tk}\right)$
where $\widetilde{T}_{T}=\sqrt{T}\left(\widehat{\theta}(\mu)-\theta(\mu)\right)-\sqrt{T}{U}_{T}(\mu)$.


Note also
\[
\begin{aligned}{\widehat{J}_{T}}^{-1}\left(\widehat{\theta}(\mu)\right) =&\int_{{\Theta}}{T}(\theta-\widehat{\theta}(\mu))(\theta-\widehat{\theta}(\mu))^{\top}p_{T}(\theta,\mu)/p_{T}(\mu)d\theta \IF(\mu:p_{T}(\mu)>c)\\
  {=}&\int_{H_{T}}\left(\zeta_\mu(\theta)-\sqrt{T}\left(\widehat{\theta}(\mu)-\theta(\mu)\right)+\sqrt{T}{U}_{T}(\mu)\right) \\& \cdot\left(\zeta_\mu(\theta)-\sqrt{T}\left(\widehat{\theta}(\mu)-\theta(\mu)\right)+\sqrt{T}{U}_{T}(\mu)\right)^{\top}[p_{T}(\zeta_\mu(\theta){,  \mu})/p_{T}(\mu)]d\zeta_\mu(\theta)\IF(\mu:p_{T}(\mu)>c),
\end{aligned}
\]
 and

\[
\tilde{J}_T^{-1}(\theta(\mu))\equiv\int_{H_T}\zeta_\mu(\theta)\zeta_\mu(\theta)^{\top}p_{T}(\zeta_\mu(\theta),\mu)/p_{T}(\mu)d\zeta_\mu(\theta).
\]

\[
\tilde{J}_{T}^{-1,c}\left(\theta(\mu)\right)\equiv\int_{H_T}\zeta_\mu(\theta)\zeta_\mu(\theta)^{\top}p_{T}(\zeta_\mu(\theta),\mu)/p_{T}(\mu)d\zeta_\mu(\theta)\IF(\mu:p_{T}(\mu)>c).
\]

Therefore we have
$\tilde{J}_{T}^{-1}(\theta(\mu)) - \tilde{J}_{T}^{-1,c}(\theta(\mu)) = \int_{H_T}\zeta_\mu(\theta)\zeta_\mu(\theta)^{\top}[p_{T}(\zeta_\mu(\theta),\mu)/p_{T}(\mu)]d\zeta_\mu(\theta)\IF(\mu:p_{T}(\mu)\leq c).$
Due to the condition that 
 $\int 1_{\{p_T(\mu)\leq c\} } p_{T}(\mu) d\mu =o_p(1)$, thus we have
$\|\tilde{J}_{T}^{-1}(\theta(\mu)) - \tilde{J}_{T}^{-1,c}(\theta(\mu)) \| =o_p(1).$

So ${\widehat{J}}_{T}^{-1}(\widehat{\theta}(\mu)) - \tilde{J}_{T}^{-1,c} (\theta(\mu))=-2\int_{H_T}\zeta_\mu(\theta)\widetilde{T}_{T}^{\top}p_{T}(\zeta_\mu(\theta),\mu)/p_{T}(\mu)d\zeta_\mu(\theta){\IF(\mu:p_{T}(\mu)>c)}+\\ 
\int_{H_T}\widetilde{T}_{T}\widetilde{T}_{T}^{\top}p_{T}(\zeta_\mu(\theta),\mu)/p_{T}(\mu)d\zeta_\mu(\theta)\IF(\mu:p_{T}(\mu)>c)$, which will be verified to be ignorable in statement (c), (d), (e) and (f).

{Denote $\Theta(\zeta_\mu(\theta))$ as the set corresponding to $H_T$}.
Recall that 
{\[
{J}_{W}^{-1}\left(\theta(\mu)\right)= \int_{\Theta(\zeta_\mu(\theta))}\zeta_\mu(\theta)\zeta_\mu(\theta)^{\top}N_{T}(\zeta_\mu(\theta),\mu)/N_T(\mu)d\zeta_\mu(\theta)+ o_p(1).
\]}


So 
\begin{eqnarray*} 
J_{W}^{-1}(\theta(\mu)) - \tilde{J}_{T}^{-1}(\theta(\mu))
=&& \int_{H_T}\zeta_\mu(\theta)\zeta_\mu(\theta)^{\top}(N_{T}(\zeta_\mu(\theta),\mu)/N_{T}(\mu)-p_{T}(\zeta_\mu(\theta),\mu)/p_{T}(\mu))d\zeta_\mu(\theta)\\&&+ \int_{H_T^c}\zeta_\mu(\theta)\zeta_\mu(\theta)^{\top}p_{T}(\zeta_\mu(\theta),\mu)/p_{T}(\mu)d\zeta_\mu(\theta).
\end{eqnarray*}
This is verified in (a) and (b). Denote $\widetilde{T}_{Tj}$ as the $j$th element of $\widetilde{T}_{T}$ ($1\leq j\leq k$).





{
\begin{itemize}
\item[(a)]
$ [\sum_{i,j}\{\int_{H_{T}}\zeta_{\mu,i}(\theta)\zeta_{\mu,j}(\theta)(p_{T}(\zeta_\mu(\theta),\mu)/p_{T}(\mu)-N_{T}(\zeta_\mu(\theta),\mu)/N_{T}(\mu)) d\zeta_\mu(\theta)\}^2]^{1/2}=o_{p}(1)$ results from the following steps. First, note that one vector norm inequality that $|v|_2 = (\sum_i v_i^2)^\frac{1}{2} \leq |v|_1 = (\sum_i |v_i|), v\in \mathbb{R}^d$ implies the above term is upper bounded by $$\sum_{i,j} \left|\int_{H_{T}}\zeta_{\mu,i}(\theta)\zeta_{\mu,j}(\theta)\left|{{(p_{T}(\zeta_\mu(\theta),\mu)/p_{T}(\mu)-N_{T}(\zeta_\mu(\theta),\mu) /N_{T}(\mu)) }}\right|d\zeta_\mu(\theta) \right|.$$ For each absolute term within the summation, we see
by the Cauchy-Schwarz inequality or the H\"{o}lder inequality, 
\begin{align*}
   & \left|\int_{H_{T}}\zeta_{\mu,i}(\theta)\zeta_{\mu,j}(\theta)\left|p_{T}(\zeta_\mu(\theta),\mu)/p_{T}(\mu)-N_{T}(\zeta_\mu(\theta),\mu)/N_{T}(\mu)\right|d\zeta_\mu(\theta)\right|
\\\leq & \left[\int_{H_{T}}\zeta_{\mu,i}(\theta)^2 \left|{{(p_{T}(\zeta_\mu(\theta),\mu)/p_{T}(\mu)-N_{T}(\zeta_\mu(\theta),\mu) /N_{T}(\mu)) }}\right|d\zeta_\mu(\theta)\right]^{1/2}\\ &\left[\int_{H_{T}}\zeta_{\mu,j}(\theta)^2\left|{{(p_{T}(\zeta_\mu(\theta),\mu)/p_{T}(\mu)-N_{T}(\zeta_\mu(\theta),\mu) /N_{T}(\mu)) }}\right|d\zeta_\mu(\theta)\right]^{1/2}.
\end{align*}

{  Then we have, 
\begin{align*} 
&\sum_{i,j} \left[\int_{H_{T}}\zeta_{\mu,i}(\theta)^2\left|{{(p_{T}(\zeta_\mu(\theta),\mu)/p_{T}(\mu)-N_{T}(\zeta_\mu(\theta),\mu) /N_{T}(\mu)) }}\right|d\zeta_\mu(\theta)\right]^{1/2}\\ &\left[\int_{H_{T}}\zeta_{\mu,j}(\theta)^2\left|{{(p_{T}(\zeta_\mu(\theta),\mu)/p_{T}(\mu)-N_{T}(\zeta_\mu(\theta),\mu) /N_{T}(\mu)) }}\right|d\zeta_\mu(\theta)\right]^{1/2}
\\ = &  \sum_{i} \left[\int_{H_{T}}\zeta_{\mu,i}(\theta)^2\left|{{(p_{T}(\zeta_\mu(\theta),\mu)/p_{T}(\mu)-N_{T}(\zeta_\mu(\theta),\mu) /N_{T}(\mu)) }}\right|d\zeta_\mu(\theta)\right]^{1/2}\\ &\sum_j \left[\int_{H_{T}}\zeta_{\mu,j}(\theta)^2\left|{{(p_{T}(\zeta_\mu(\theta),\mu)/p_{T}(\mu)-N_{T}(\zeta_\mu(\theta),\mu) /N_{T}(\mu)) }}\right|d\zeta_\mu(\theta)\right]^{1/2}.
\end{align*}

We know
{
\begin{align*}
   & \sum_{i} \left[\int_{H_{T}}\zeta_{\mu,i}(\theta)^2\left|{{(p_{T}(\zeta_\mu(\theta),\mu)/p_{T}(\mu)-N_{T}(\zeta_\mu(\theta),\mu) /N_{T}(\mu)) }}\right|d\zeta_\mu(\theta)\right]^{1/2} \\ \leq &  k \left[\max_{i} 
 \int_{H_{T}}\zeta_{\mu,i}(\theta)^2\left|{{(p_{T}(\zeta_\mu(\theta),\mu)/p_{T}(\mu)-N_{T}(\zeta_\mu(\theta),\mu) /N_{T}(\mu)) }}\right|d\zeta_\mu(\theta) \right]^{1/2}
\end{align*}
}
}
Then it boils down to bound {$k \max_{i} 
 \int_{H_{T}}\zeta_{\mu,i}(\theta)^2\left|{{(p_{T}(\zeta_\mu(\theta),\mu)/p_{T}(\mu)-N_{T}(\zeta_\mu(\theta),\mu) /N_{T}(\mu)) }}\right|d\zeta_\mu(\theta)$}, which is guaranteed by Theorem \ref{Theorem:2} with $\kappa = 2$.

\item[(b)] 
Next we look at ${\sum_{1 \leq i,j\leq k}}|\int_{H_{T}^{c}}\zeta_{\mu,i}(\theta)\zeta_{\mu,j}(\theta)p_{T}(\zeta_\mu(\theta),\mu)/p_{T}(\mu)d\zeta_\mu(\theta)|=o_{p}(1)$
by definition of $p_{T}(\theta, \mu)$ and $J_{T}\left(\theta(\mu)\right)$
being uniformly nonsingular by Lemma \ref{Gaussintegral}. It suffices to look at $k|\int_{H_{T}^{c}}\sum_i \zeta_{\mu,i}^2(\theta)p_{T}(\zeta_\mu(\theta),\mu)/p_{T}(\mu)d\zeta_\mu(\theta)|$.
This is further upper bounded by 
\begin{align*}
&\left|\int_{H_{T}^{c}}\sum_i \zeta_{\mu,i}^2(\theta)p_{T}(\zeta_\mu(\theta),\mu)/p_{T}(\mu)d\zeta_\mu(\theta)\right|\\  \leq &\left|\int_{H_{T}^{c}}\sum_i \zeta_{\mu,i}^2(\theta)|p_{T}(\zeta_\mu(\theta),\mu)/p_{T}(\mu)- N_{T}(\zeta_\mu(\theta),\mu) /N_{T}(\mu)|d\zeta_\mu(\theta)\right| \\ & {+ \left|\int_{H_{T}^{c}}\sum_i \zeta_{\mu,i}^2(\theta)N_{T}(\zeta_\mu(\theta),\mu) /N_{T}(\mu)]d\zeta_\mu(\theta)\right|  }
\end{align*}



\item[(c)] 
By Assumption \ref{Gaussian}, $\|(\widehat{\theta}(\mu)-\theta(\mu)) - U_T(\mu)\| = o_p(1/\sqrt{T})$, so $\|\widetilde{T}_{T}\| = o_p(1)$.

\item[(d)]
$\int_{H_{T}}\left\|\widetilde{T}_{T}\right\|^{2} N_{T}(\zeta_\mu(\theta),\mu) /N_{T}(\mu) d\zeta_\mu(\theta)=o_{p}(1)$\\
since $\|\widetilde{T}_{T}\|^{2} = o_p(1)$ for fixed $k$ and $\int N_{T}(\zeta_\mu(\theta),\mu)/N_{T}(\mu) d\zeta_\mu(\theta) = 1 + o_p(1)$.

\item[(e)] For all $1\leq i,j\leq k$, since $\widetilde{T}_{Tj} = o_p(1)$ for fixed $k$,
\\$\sum_{i,j}|\int_{H_{T}}\zeta_{\mu,i}(\theta)\widetilde{T}_{Tj}\left|p_{T}(\zeta_\mu(\theta),\mu)/p_{T}(\mu)-N_{T}(\zeta_\mu(\theta),\mu) /N_{T}(\mu)\right|d\zeta_\mu(\theta)|=o_{p}(1)$
by Theorem \ref{Theorem:2}.

\item[(f)] For all $1\leq i,j\leq k$, since $\widetilde{T}_{Tj} = o_p(1)$ for fixed $k$,
$\sum_{i,j}|\int_{H_{T}}\zeta_{\mu,i}(\theta)\widetilde{T}_{Tj} N_{T}(\zeta_\mu(\theta),\mu) /N_{T}(\mu)d\zeta_\mu(\theta)|=o_{p}(1)$
by definition of $N_{T}(\zeta_\mu(\theta),\mu)$ and ${J}_W\left(\theta(\mu)\right)$
being uniformly nonsingular, from which the required conclusion follows.
\end{itemize}}

\end{proof}

\subsection{Link to optimal decision rule}
\label{appendix:optimalrule}

We argue in this section that, similar to the results established in \cite{andrews2022optimal}, under model misspecification, we can also establish that the quasi-posterior can be obtained as the limit of a sequence of posteriors under proper priors, and the resulting quasi-Bayes decision rule can correspond to the pointwise limit of the sequence of Bayes decision rules. 
To align with the setup in \cite{andrews2022optimal}, we do not work directly with $\mathbb{P}_{\mu}$ or focus solely on the parameter $\theta(\mu)$. Instead, we define a fixed coarse reference measure $\mathbb{P}_0$ and consider the parameter pair $(\theta(\mu), \mu)$. While the construction in this subsection may appear tedious, it is primarily introduced to facilitate this match.

\begin{Assumption}(Measure in the limit) \label{assum:(Measure in the limit)} 
There exists $\tilde{Z}_t$, which is a maximal subset of $Z_t$ such that the distribution of $\tilde{Z}_t$ does not change with respect to any $(\theta(\mu),\mu) \in \Xi$.
There exists i.i.d. random sequence $\mu_t$ such that $\E_{\mathbb{P}_{\mu_*}}(\mu_t)= \mu_*$. The random variable $(g(Z_t,\theta(\mu)) - \mu_t, \tilde{Z}_t), t=1,\cdots, T$ follows $\mathbb{P}_0$ that is invariant for all plausible  pairs $(\theta(\mu),\mu)\in \Xi$.
\end{Assumption}
 
The specification of the $\mathbb{P}_0$ 
matches the nonparametric Bayesian idea from \cite{gallant2007statistical} such that we can treat $\mu_t$ as a latent variable capturing the misspecification, and $\mathbb{P}_0$ remains invariant with respect to the plausible characteristics. While $\mathbb{P}_\mu$ is a marginal measure over $Z_t$ given $\mu$, $\mathbb{P}_0$ can then be seen as a coarse joint measure concerning $(Z_t, \mu_t)$ and it may not be enough to deduce a marginal distribution for $Z_t$ or $\mu_t$, which is similar to the case discussed in \cite{gallant2007statistical}. 

Assumption \ref{assum:(Measure in the limit)} essentially assumes that the measure (in the limit) of the sample moment can be decomposed into two parts. The plausible term impacts one part, while the remaining part is a measure invariant to $(\theta(\mu),\mu)$. For example, in the plausible IV model from \cite{conley2012plausibly}, we may choose $\mu_t=  \gamma D_t^\top  D_t, \Tilde{Z}_t=(X_t^{\top}, W_t^{\top}, D_t^{\top})^{\top}$ and thus the distribution of $\sum_t [g(Z_t,\theta(\mu)) - \mu_t]$ does not depend on the plausible term $\mu$.  Additionally, it is reasonable to consider
the deviations from the measure of $((g(Z_t,\theta) - \mu_t)^\top, \Tilde{Z}_t^\top )^\top$, i.e., $\mathbb{P}_0$, and in the spirit of \cite{andrews2022optimal} and \cite{kitamura2013robustness} we consider perturbations in the probability measure in Assumption \ref{assum: Differentiability in quadratic mean}.
The subspace of score functions as 
\[
T_{\mu}\left(\mathbb{P}_{0}\right)=\left\{ f\in T\left(\mathbb{P}_{0}\right):\mathbb{E}_{\mathbb{P}_{0}}\left[f(Z_t) \left(g\left(Z_t, \theta(\mu)\right) - \mu_t\right) \right]=0\right\}.
\]
We define $\bar{m}(\mu,\theta)=\E_{\mathbb{P}_0}(f\left(g\left(Z_t, \theta\right) - \mu_t\right))$, then for $f\in T_{\mu}\left(\mathbb{P}_0\right)$, 
 by design $\bar{m}(\mu,\theta(\mu))=0$. 

\begin{Assumption}(Differentiability in quadratic mean) \label{assum: Differentiability in quadratic mean}  $(g(Z_t,\theta) - \mu_t, \Tilde{Z}_t)$ of size $T$ follows distribution {$\mathbb{P}=\mathbb{P}_{T,f}$}, where the sequence
$\mathbb{P}_{T,f}$ converges to  $\mathbb{P}_{0}$, 
\[
\int\left[\sqrt{T}\left(d\mathbb{P}_{T,f}^{1/2}-d\mathbb{P}_0^{1/2}\right)-\frac{1}{2}fd\mathbb{P}_0^{1/2}\right]^{2}\rightarrow0.
\]
\end{Assumption}

While the score $f$ controls the data distribution, our interest
lies in the plausible pair $(\theta(\mu),\mu)$. Denote $\widehat{g}(\theta,\mu) = \frac{1}{\sqrt{T}}\sum_{t=1}^T [g(Z_t,\theta) -\mu]$.

\begin{Assumption}(GP in the limit) \label{assum:(GP in the limit)} 
Assume that under $\mathbb{P}_{T,f}$, $\widehat{g}(\theta(\mu),\mu)$ weakly converges to a Gaussian process with mean function $\bar{m}(\cdot)$ and covariance function $\Sigma(\cdot)$, where the covariance function is continuous and nonsingular for all pairs $(\theta,\mu)$ and is consistently estimable. 
\end{Assumption}
Additionally, denote by $\tilde{g}(\cdot)$ the limiting Gaussian process of $\widehat{g}(\cdot)$ on $(\theta(\mu),\mu)$ pairs, $\mathcal{C}_\Xi$ the space of continuous functions from $\Xi$ to $\mathbb{R}^{q}$. Let $A$ be any linear functional $A: \mathcal{C}_\Xi \rightarrow \mathbb{R}^{q}$ such that $\operatorname{Cov}(A(\tilde{g}(\cdot)), \tilde{g}(\theta,\mu))$ is nonsingular for all $(\theta,\mu)$. Let $h_\perp(\cdot)=\tilde{g}(\cdot)-\operatorname{Cov}(\tilde{g}(\cdot), \xi_A) \operatorname{Var}(\xi_A)^{-1} \xi_A$ and $\mu_\perp(\cdot)=\bar{m}(\cdot)-\operatorname{Cov}(\tilde{g}(\cdot), \xi_A) \operatorname{Var}(\xi_A)^{-1} A(\tilde{g}(\cdot))$, where $\xi_A=A(\tilde{g}(\cdot))$.   

Analogue to arguments in Sections 3.1.2 from \cite{andrews2022optimal}, the likelihood function $\ell\left(\mu_\perp, \theta(\mu), \mu; \tilde{g}(\cdot)\right)$ based on the observed data $\tilde{g}(\cdot)$ factors as
$$
\ell\left(\mu_\perp, \theta(\mu), \mu; \tilde{g}(\cdot)\right)=\ell\left(\mu_\perp, \theta(\mu), \mu; \xi_A\right) \ell(\mu_\perp; h_\perp),
$$
and thus their equation (8) directly implies that a class of proportional priors over $\mu_\perp$ would give rise to quasi-Bayes as a limiting case.

To establish that quasi-Bayes decision rules are pointwise limits of the corresponding Bayes decision rules, it suffices to verify the conditions of Theorem 8 in \cite{andrews2022optimal}. The non-singularity of the covariance function is ensured, for instance, by Assumption \ref{a3}. The remaining conditions of their Theorem 8 require a loss function $\ell(a, \theta, \mu)$ that is uniformly bounded, continuous, and strictly convex in $a$ for each $(\theta, \mu)$, along with a compact and convex action space $\mathcal{A}$. These requirements are met with appropriately chosen loss functions and action spaces.

\section{Examples}\label{Sec:Example}  

\paragraph{\textbf{Difference from \cite{chen2018monte}}.} 
\label{chen}
Section 5.2 of \cite{chen2018monte} develops their methodology within a GMM framework; however, several of the key underlying assumptions do not necessarily hold in our model setup. In particular, regularity conditions (b)--(c) in Proposition 5.3 impose restrictions on certain quadratic forms of the sample moments, which may be incompatible with our framework. This mismatch arises because  
that we allow \(\widehat{m}(\theta)-\mu\) to exhibit different statistical properties across plausible pairs, provided that its mean remains zero.

We illustrate this point via the following example. In the model with $Y_i=X_i^{\top} \beta_i+D_i^{\top} \gamma_i+U_i$ allowing for heterogeneous treatment effects where $\E\left[D_i U_i\right]=0$, $\E\left[X_i U_i\right] \neq 0$, and $\gamma_i$ and $\beta_i$ are jointly independent of $X_i$ and $D_i$ so that $Y_i=X_i^{\top} \beta+D_i^{\top} \gamma+\varepsilon_i$ satisfies $\E\left[X_i \varepsilon_i\right] \neq 0$ and $\E\left[D_i \varepsilon_i\right]=0$ where $\varepsilon_i=U_i+X_i^{\top}\left(\beta_i-\beta\right)+D_i^{\top}\left(\gamma_i-\gamma\right), \beta=\E\left[\beta_i\right]$, and $\gamma=\E\left[\gamma_i\right]$. In this case, $\beta$ should be interpreted as the average treatment effect of $X$ on $Y$, and $\gamma$ should be interpreted as the average partial effect of the IV $D$ on $Y$, and $(\beta,\gamma)$ are jointly identified.

For convenience, consider that both $\beta$ and $\gamma$ are scalars, and then there is a one-to-one mapping between $\gamma$ and $\mu$ such that $\mu(\gamma) =\E\left[D_i^2\right]\gamma$. Additionally, there is a one to one mapping between $\gamma$ and $\beta$ such that for a given $\gamma$, $\beta(\gamma)=\E\left[D_iY_i-D_i^2\gamma\right] /\E\left[D_iX_i\right] $. Therefore, equivalently, we consider the plausible pair to be $(\beta(\gamma),\gamma)$, whose collection should correspond to the identified set $\Theta_I$ explored in \cite{chen2018monte}. Under this setup, the conditions required in Proposition 5.3 of \cite{chen2018monte} may fail to hold. For instance, the covariance of the sample moment now depends on the chosen plausible pair rather than remaining fixed as $\E[g(Z_i, \theta)^2] = \E[D_i^2 \varepsilon_i^2]$, where $\varepsilon_i$ itself varies with the values of $\beta$ and $\gamma$. This dependence violates condition (b) of Proposition 5.3. Moreover, it directly leads to a violation of condition (c) as well, particularly when $\gamma$ is allowed to vary over a relatively large range within $\Theta_I$.

\paragraph{\textbf{Example I} (Linear IV model)} 
\cite{conley2012plausibly} and \cite{armstrong2021sensitivity} examine the linear IV model with potential misspecification. \cite{conley2012plausibly} relax the exclusion restriction and consider plausible exogenous instrumental variables that may directly impact the dependent variables. They model the misspecification in IV models by introducing a parameter $\gamma$ within the first stage regression, gauging the validity of the exclusion restriction, as illustrated in the following set of equations:
\begin{equation*}
	\begin{aligned}
		& Y=\iota_T \alpha  + (X ~\vdots~  W) \beta+D \gamma+\varepsilon; 
	\end{aligned}
\end{equation*}
where $\beta=(\beta_X^\top, \beta_W^\top)^\top$, $Y$ represents a $T \times 1$ vector of outcomes; $X$ refers to an $T \times p_X$ matrix of endogenous variables, with $\E[X^\top \varepsilon] \neq 0$; $W$ refers to an $T \times p_W$ matrix of exogenous variables; $D$ corresponds to an $T \times k$ matrix of instrumental variables, where  $k \geq p_X$ and $\E\left[D^{\top} \varepsilon\right]=0$; $\theta$ is the parameter of interest and $\gamma$ is the parameter that measures the plausibility of the exclusion restriction. Correspondingly, we have that the plausible characteristics $\mu =T^{-1}\E(D^{\top}D)\gamma$.

We illustrate the $\Gamma$  utilizing the IV population moment condition $g(\theta) = {\mathbb{E} Z_t^\top Z_t}\gamma+\mathbb{E} Z_t^\top (X_t ~\vdots~  W_t)(\beta-\beta_0)$,  which implies that the plausible characteristics will be in at most a subspace spanned by columns of ${\mathbb{E} Z_t^\top Z_t}$ and $\mathbb{E} Z_t^\top  (X_t ~\vdots~  W_t)$  that is a measure-zero subset of $\mathbb{R}^q$ with $q=1+q_W +k$ in case of overidentification with $k>p_X$, and the similar arguments also apply to other linear models. Therefore, the posterior of $\mu$ will shrink to a lower-dimensional space than  $\mathbb{R}^q$.  
  

\paragraph{\textbf{Example II} (Nonlinear instrumental variable quantile regression (IVQR))}
\cite{chernozhukov2005iv} (see also, e.g., \cite{chernozhukov2006instrumental}) propose the IVQR, which effectively estimates the treatment effects at various quantiles via instrumental variable regressions. This approach is empirically appealing and has been used in many recent studies, e.g., \cite{glaeser2015entrepreneurship}. \cite{chernozhukov2003mcmc} discuss one IVQR example for the $\tau$-quantile $(0<\tau<1)$. 

The IVQR method (see, e.g., \cite{chernozhukov2005iv}) requires the validity of the IVs and also assumes rank invariance (or rank similarity); these conditions depend on the accurate specification of the model. Allowing for model misspecification would make the procedure more robust.

\paragraph{\textbf{Example III}  (Two-way fixed effect with heterogeneous treatment effect)}
We consider the settings in \cite{de2020two} such that for every $(i, j, t) \in\left\{1, \ldots, N_{j, t}\right\} \times\{1, \ldots, J\} \times\{1, \ldots, T\}$,  $D_{i, j, t}$ and $\left(Y_{i, j, t}(0), Y_{i, j, t}(1)\right)$ respectively denote the treatment status and the potential outcomes without and with treatment of observation $i$ in group $j$ at period $t$. The outcome of observation $i$ in group $j$ and period $t$ is $Y_{i, j, t}=Y_{i, j, t}\left(D_{i, j, t}\right)$. 
We may thus consider a two-way fixed effect model for the observed outcome of individual $i$ in group $j$ at period
$t$, $Y_{i, j, t}=Y_{i, j, t}(1)D_{i, j,t}+Y_{i, j, t}(0)(1-D_{i,j,t})$ with $D_{i,j,t} =D_{j,t}\in\{0,1\}$, as follows 
\begin{align*}
& Y_{i, j, t}  =\alpha_j+\lambda_t+\delta_{j,t} D_{j,t}+\varepsilon_{i, j, t}.
\end{align*}
where $\mathbb{E} (\varepsilon_{i, j, t}|D_{j,t}, t=1,\cdots, T) =0$ 

Under the constant treatment effect assumption $\delta_{j,t}=\delta$, the part of the moment conditions related to a two-stage estimator proposed in \cite{gardner2022two} that directly involves $\delta$ is
$g(\delta) = \frac{1}{N}\sum_{i,j,t} \E((Y_{i,j, t}- (\alpha_j+ \lambda_t+\delta  D_{j,t}) )D_{j,t})=0$ with $N=\sum_{j,t} N_{j,t}$ denoting the number of total observations.  If the model is misspecified such that $\delta_{j,t}$ varies across groups and time periods, then we only 
have $\E((Y_{i, j, t}- (\alpha_j+\lambda_t+\delta_{j,t}  D_{j,t}) )D_{j,t})=0$ and thus $g(\delta)=\mu$ with $\mu=\frac{1}{N}\sum_{i,j,t} \E((\delta_{j,t}-\delta)  D_{j,t})$. If we denote such a pair in terms of $(\delta(\mu), \mu)$, then it is not hard to see that the moment equation admits the relationship $g(\delta(\mu)) = \mu,$ when  $\mu=\frac{1}{N}\sum_{i, j,t} \E((\delta_{j,t}-\delta(\mu) )  D_{j,t})$.

The examples mentioned above resort to the conventional models when the misspecification term is held constant at zero. Our interest, however, lies in the outcomes achieved when the values of $\mu$ are adjusted to permit a degree of model misspecification.

\subsection{Gaussian Location Model and the AK Interval}\label{AK}
We begin with a simplified Gaussian location model to illustrate the connection between \cite{armstrong2021sensitivity}(AK) confidence intervals and the interval we propose. Consider the \textit{limiting experiment} (cf. \ Equation (13) in AK) under the additional restrictions that $\psi$ is scalar:
\[
Y = \psi + b + \varepsilon, \qquad \varepsilon \sim \mathcal{N}(0,\sigma^2).
\]
Here, $\psi$ is the parameter of interest and $b$ represents misspecification, constrained to lie in $[-M, M]$ for a constant $M$.

In this toy setting, the worst-case bias is $\overline{\mathrm{bias}} = M$ and $|k|=1$. Therefore, the AK minimax fixed-length confidence interval for $\psi$ takes the form
\[
Y \pm \mathrm{cv}_\alpha\!\left(\frac{M}{\sigma}\right)\sigma,
\]
where $\mathrm{cv}_\alpha(c)$ is the $1-\alpha$ quantile of $|Z|$, with $Z \sim \mathcal{N}(c, 1)$.


Define $\delta_M$ as the dirac measure at M.
We now show that this AK interval coincides with a Bayesian credible interval under a suitable least favorable prior. Place the priors
\[
\pi(\psi)\propto 1, \qquad
\pi(b)=\tfrac{1}{2}\delta_{M} + \tfrac{1}{2}\delta_{-M},
\]
where the first choice is the improper flat prior and the second is a symmetric prior supported at the boundary values of the set of the misspecification term $b$.

Under these priors, the posterior distribution of $\psi$ given $Y$ is
\[
\pi(\psi\mid Y)
= \tfrac{1}{2}\mathcal{N}(Y-M,\sigma^2)
+ \tfrac{1}{2}\mathcal{N}(Y+M,\sigma^2).
\]

By symmetry, we look at $(1-\alpha)$ credible set (restricted to connected intervals) centered at $Y$,\footnote{If the posterior is unimodal, this happens to be HPD.} i.e.,
\[
C_{\mathrm{Bayes}}(Y) = [Y - c_B,\; Y + c_B],
\]
where $c_B$ solves $
\tfrac{1}{2}\P(|\varepsilon^{-}-Y|\le c_B|Y) + \tfrac{1}{2}\P(|\varepsilon^{+}-Y|\le c_B|Y)
= 1-\alpha$ with $\varepsilon^+\sim \mathcal{N}(Y+M,\sigma^2)$ and $ \varepsilon^-\sim \mathcal{N}(Y-M,\sigma^2)$. 

Equivalently, $c_B$ solves $
\tfrac{1}{2}\P(|z-M|\le c_B) + \tfrac{1}{2}\P(|z+M|\le c_B)
= 1-\alpha$ with $z\sim \mathcal{N}(0,\sigma^2)$. Since symmetry implies that $\P(|z+M|\le c_B)=\P(|z-M|\le c_B)$, the condition reduces to
\[
\P(|z+M|\le c_B)=1-\alpha.
\]
Dividing by $\sigma$, this is exactly the defining equation for $\mathrm{cv}_{\alpha}(M/\sigma)$, hence
\[
c_B = \mathrm{cv}_{\alpha}\!\left(\frac{M}{\sigma}\right)\sigma.
\]

Therefore, the Bayesian credible interval under this least favorable prior coincides exactly with the AK minimax fixed-length confidence interval.
We now extend this argument to the full limiting experiment considered in Section 4.1 in AK.
{ 
Consider the Gaussian limiting experiment
\[
Y = -\Gamma\theta + \mu + \Sigma^{1/2}\varepsilon,\qquad
\varepsilon \sim \mathcal{N}(0,I_q),
\]
where
\[
\theta \in \mathbb{R}^k,\qquad
\Gamma \in \mathbb{R}^{q\times k},\qquad
\Sigma \in \mathbb{R}^{q\times q},\qquad
\mu \in \mathcal{C} \subset \mathbb{R}^q,
\]
and $\mathcal{C}$ is convex and centrosymmetric. Suppose the parameter of interest is of the form $\psi = H\theta$ with $H \in \mathbb{R}^{1\times k}.$ AK considers the linear estimator $w^\top Y$, with $H=-w^\top \Gamma$ and confidence intervals of the form \[
w^\top Y \pm
\mathrm{cv}_\alpha\!\left(
\frac{\overline{\mathrm{bias}}_{\mathcal{C}}(w)}{\sqrt{w^\top\Sigma w}}
\right)
\sqrt{w^\top\Sigma w}.
\]
We thus also consider inference based upon the statistic $w^\top Y$, and show, step by step, that for a given $w$, the AK interval coincides exactly with a QBP credible interval under a suitable least favorable prior.

The misspecification term $\mu$ lies in the known set $\mathcal{C}$, and the worst-case bias of $w^\top Y$ for $\psi$ is
\[
\overline{\mathrm{bias}}_{\mathcal{C}}(w)
:= \sup_{\mu\in\mathcal{C}} |w^\top \mu|.
\]
Convexity and centrosymmetry of $\mathcal{C}$ ensure the existence of least favorable misspecification directions $\pm M_w \in \mathcal{C}$ such that
\[
w^\top M_w = \sup_{\mu\in\mathcal{C}} |w^\top\mu|
= \overline{\mathrm{bias}}_{\mathcal{C}}(w).
\]

As in the scalar model, place an improper flat prior on $\psi$ and the symmetric two-point prior 
$\pi_w(\mu) = \tfrac12 \delta_{M_w} + \tfrac12 \delta_{-M_w}
$ on the misspecification term $\mu$.
Implied directly by the model, we have that 
%
\[
w^{\top}Y \mid \psi,\mu
\sim
\mathcal{N}\!\big(\psi + w^{\top}\mu,\; w^{\top}\Sigma w\big).
\]
Hence, this posterior for $\psi$ is thus a symmetric mixture of two normal distributions with means
\[
m_- := w^\top Y - w^\top M_w
\quad\text{and}\quad
m_+ := w^\top Y + w^\top M_w.
\] 

By symmetry, we look at $(1-\alpha)$ credible set (restricted to connected intervals) centered at $w^\top Y$, 
\[
C_{\mathrm{Bayes}}(w^\top Y) = [w^\top Y - c_B,\; w^\top Y + c_B],
\]
and the half-length $c_B$ must then satisfy
\[
\frac{c_B}{\sqrt{w^\top\Sigma w}}
= \mathrm{cv}_\alpha\!\left(
\frac{\overline{\mathrm{bias}}_{\mathcal{C}}(w)}{\sqrt{w^\top\Sigma w}}
\right),
\]
and hence $
	c_B
	= \mathrm{cv}_\alpha\!\left(
	\frac{\overline{\mathrm{bias}}_{\mathcal{C}}(w)}{\sqrt{w^\top\Sigma w}}
	\right)\sqrt{w^\top\Sigma w}$, which leads to the same interval as AK. 
}

\clearpage

\setcounter{section}{0}\setcounter{subsection}{0}\setcounter{subsubsection}{0}
\setcounter{equation}{0}\setcounter{figure}{0}\setcounter{table}{0}
\setcounter{theorem}{0}\setcounter{lemma}{0}\setcounter{corollary}{0}
\makeatletter
\@ifundefined{c@Theorem}{}{\setcounter{Theorem}{0}}
\@ifundefined{c@Lemma}{}{\setcounter{Lemma}{0}}
\@ifundefined{c@Corollary}{}{\setcounter{Corollary}{0}}
\@ifundefined{c@Assumption}{}{\setcounter{Assumption}{0}}
\@ifundefined{c@algocf}{}{\setcounter{algocf}{0}}
\makeatother

\renewcommand{\thesection}{OM.\arabic{section}}
\renewcommand{\theequation}{OM.\arabic{equation}}
\renewcommand{\thefigure}{OM.\arabic{figure}}
\renewcommand{\thetable}{OM.\arabic{table}}
\renewcommand{\thecorollary}{OM.\arabic{corollary}}
\renewcommand{\thelemma}{OM.\arabic{lemma}}
\renewcommand{\thetheorem}{OM.\arabic{theorem}}
\renewcommand{\theAssumption}{OM.\arabic{Assumption}}
\renewcommand{\thealgocf}{OM.\arabic{algocf}}
\providecommand{\theHsection}{\thesection}\renewcommand{\theHsection}{OM.\arabic{section}}
\providecommand{\theHequation}{\theequation}\renewcommand{\theHequation}{OM.\arabic{equation}}
\providecommand{\theHfigure}{\thefigure}\renewcommand{\theHfigure}{OM.\arabic{figure}}
\providecommand{\theHtable}{\thetable}\renewcommand{\theHtable}{OM.\arabic{table}}
\providecommand{\theHtheorem}{\thetheorem}\renewcommand{\theHtheorem}{OM.\arabic{theorem}}
\providecommand{\theHlemma}{\thelemma}\renewcommand{\theHlemma}{OM.\arabic{lemma}}
\providecommand{\theHcorollary}{\thecorollary}\renewcommand{\theHcorollary}{OM.\arabic{corollary}}
\providecommand{\theHAssumption}{\theAssumption}\renewcommand{\theHAssumption}{OM.\arabic{Assumption}}

\begin{center}
{\Large\bfseries Online Materials for\\ ``Plausible GMM: A Quasi-Bayesian Approach''\par}
\vspace{1em}
{\normalsize Victor Chernozhukov$^a$, Christian B. Hansen$^b$, Lingwei Kong$^c$, Weining Wang$^d$\par}
\vspace{0.5em}
{\footnotesize $^a$ Department of Economics, Massachusetts Institute of Technology.\\
$^b$ Booth School of Business, The University of Chicago.\\
$^c$ Department of Economics, Econometrics and Finance, University of Groningen.\\
$^d$ Department of Economics, University of Bristol.}
\end{center}
\thispagestyle{empty}
\clearpage
\pagestyle{plain}
 
\section{Appendix: verification of assumptions}

\subsection{Verification of Assumption \ref{Gaussian}.}

We impose a high-level expansion assumption in Assumption \ref{Gaussian}. 
We extend the method of obtaining an expansion of a CUE estimator from \cite{newey2004higher} to our high-dimensional misspecified moment settings to verify that assumption. Corollary \ref{coro:Gaussian} delivers this expansion.

\begin{corollary}\label{coro:Gaussian} 
Suppose that the following assumptions hold with $\xi>2$:
\begin{enumerate} 
    \item Assumptions \ref{a1}, Assumption \ref{assum33};
    \item $g(Z_t,\theta)$ is continuous in $\theta \in \Theta$ with probability one;
    \item    for a given $\mu \in \mathcal{M}$, 
    $\sup_{1\leq u \leq q, \theta\in\Theta}\mathbb{E}( \|e_u^\top (g(Z_t, \theta) -\mu)\|^{\xi}) <C $,  $\sup_{\theta\in\Theta}\mathbb{E}( \|(g(Z_t, \theta) -\mu)(g(Z_t, \theta) -\mu)^\top \|^{\xi})< Cq^{\xi/2}$ for some positive constant  $C$, where $e_u$ denotes a $q\times 1$ vector with the $u$th element of $e_u$ being one and the rest entries are zero;
    \item Define $\|.\|_F$ as the Frobenius norm.
    For each given $\mu \in \Gamma$,  $\mathbb{E}\|(g(Z_t, \theta(\mu)) - \mu)(g(Z_t, \theta(\mu)) - \mu)^\top - \Omega(\theta(\mu),\mu) \|_F^2 = O(q)$, 
    {$\mathbb{E}\|g(Z_t, \theta(\mu)) - \mu\|^\xi \lesssim q^{\xi/2} $},  
    {$\mathbb{E}\|(g(Z_t, \theta(\mu)) - \mu)(g(Z_t, \theta(\mu)) - \mu)^\top - \Omega(\theta(\mu),\mu) \|_F^\xi \lesssim q^{\xi/2} $};
         \item $g(Z_t,\theta)$ is 
         continuously differentiable, and
         $G_t(\theta)=\partial g_t(\theta) / \partial \theta$;
{          \item   there exists a neighborhood, $\mathcal{N}(\theta(\mu))$, of $\theta(\mu)$ such that for each {$\theta \in \mathcal{N}(\theta(\mu))$}: there exists a non-negative scalar random variable $b(Z_t)$ such that  $\| g(Z_t, \theta)-g(Z_t, \theta(\mu))  \|\leq b(Z_t) \|\theta -\theta(\mu)\|$,  $\|\partial (g(Z_t, \theta)-\mu) /\partial \theta -\partial (g(Z_t, \theta(\mu))-\mu) /\partial \theta \|\leq b(Z_t) \|\theta -\theta(\mu)\|$,  and $\mathbb{E}[ (b(Z_t))^2 ]<\infty$; 
        } 
\item

$\|\sup_{\theta \in \Theta}(T^{-1}\sum_t \{G_t({\theta}) -\E G_t(\theta)\})\| = o_p(1)$, 
       there exists $C>0$ such that 
        $1/C\leq  \lambda_{\min}(\mathbb{E}[(\partial (g(Z_t, \theta)-\mu)/ \partial \theta^\top)]^\top 
\mathbb{E} [\partial (g(Z_t, \theta)-\mu)/ \partial \theta^\top ])  \leq \lambda_{\max}(\mathbb{E}(\partial (g(Z_t, \theta)-\mu)/ \partial \theta^\top)^\top \mathbb{E}\partial (g(Z_t, \theta)-\mu)/ \partial \theta^\top )  \leq C$;

        \item  
        \label{item-coro appendix: }
        {There exists a constant $\xi_\lambda \in (\xi^{-1} , 1/2)$ such that $ \max \{q  T^{-\xi_\lambda + \xi^{-1}}, q T^{\xi_\lambda - 1/2}\} \rightarrow 0$. }
\end{enumerate}
Then, the following expansion (\ref{eq:expansion}) holds for any fixed $\mu \in \Gamma$,
\begin{equation} \small 
\left\|\widehat{\theta}(\mu)-\theta(\mu)-(G(\theta(\mu))^{\top }\Omega(\theta(\mu),\mu)^{-1} G(\theta(\mu)) )^{-1}G(\theta(\mu))^{\top}\Omega(\theta(\mu),\mu)^{-1} \left(\widehat{m}\left(\theta(\mu)\right)-\mu\right)\right\|  =o_p( {q} T^{-1/2}),\label{eq:expansion}
\end{equation}
where we consider 
$\widehat{\theta}(\mu) =\argmin_{\theta \in \Theta} \{\widehat{m}(\theta)-\mu\}^{\top}{\widehat{\Omega}(\theta, \mu)^{-1}} \{\widehat{m}(\theta)-\mu\},$
 {$\Omega(\theta(\mu), \mu)=\mathbb{E}[(g(Z_t, \theta(\mu)) - \mu )(g(Z_t, \theta(\mu))-\mu)^\top]$ and $\widehat{\Omega}(\theta(\mu), \mu)=T^{-1}\sum_{t=1}^T[(g(Z_t, \theta(\mu)) - \mu )(g(Z_t, \theta(\mu))-\mu)^\top]$}.
Thus, Assumption \ref{Gaussian} is satisfied for this case once the distributional property of $(\widehat{m}(\theta)-\mu)$ leads to the Gaussian limiting distribution, which hold under conditions indicated by, e.g., the central limit theorem proposed in \cite{francq2005central}. 
\end{corollary}
Before the discussion, we first claim several lemmas employed in proving Corollary \ref{coro:Gaussian}, which correspond to Lemmas A1-A3 in \cite{newey2004higher} and are adjusted to fit our conditions. Furthermore, we let $\rho(v)$ be a quadratic scalar function such that $\rho(v) = - v - \frac{1}{2} v^2$ and let $ \rho_j(v)=\partial^j \rho(v) / \partial v^j$, and we let $$\widehat{P}_\mu(\theta,\lambda)=\frac{1}{T}\sum_{t=1}^T \rho(\lambda^{\top} (g(Z_t, \theta) -\mu))= -\lambda^\top \frac{1}{T}\sum_{t=1}^T (g(Z_t, \theta) -\mu)  -\frac{1}{2}\lambda^\top \frac{1}{T}\sum_{t=1}^T (g(Z_t, \theta) -\mu))(g(Z_t, \theta) -\mu)^\top \lambda.$$ 
The estimator for $\theta(\mu)$ for a given $\mu\in \Gamma$ then corresponds to a saddle point problem such that $$\widehat{\theta}(\mu) =\argmin_{\theta\in \Theta} \max_{\lambda \in {{\widehat{\Lambda}_{T,\mu}}}(\theta)  } \widehat{P}_\mu(\theta,\lambda),$$ with $ {{\widehat{\Lambda}_{T,\mu}}}(\theta)  = \{\lambda: \lambda^{\top} (g(Z_t, \theta)-\mu)\in \Lambda_g, 1\leq t\leq T \}$ and $\Lambda_g$ being an open interval containing zero. We use the following notations for a given $\mu \in \Gamma$. We denote  $g_t(\theta)=g(Z_t, \theta)$ and $\widehat{g}_{\mu, t}=g_t(\widehat{\theta}(\mu))$.
Let ${\upsilon}_{\theta\lambda}$ be an augmented parameter vector such that ${\upsilon}_{\theta\lambda} = (\theta^\top, \lambda^\top)^\top$ and for a given $\mu \in \Gamma$, ${\upsilon}_{\theta(\mu)0} = (\theta(\mu)^\top, 0^\top)^\top$. Let $v_t(\mu, {\upsilon}_{\theta\lambda} )=\lambda^{\top} (g_t(\theta)-\mu)$ and $v_{t,1}(\mu, {\upsilon}_{\theta\lambda} )=\partial v_t(\mu, {\upsilon}_{\theta\lambda} ) / \partial {\upsilon}_{\theta\lambda} =\left(\lambda^{\top} G_t(\theta), (g_t(\theta)-\mu)^{\top}\right)^{\top}$. 
Denote $ m_\mu\left(Z_t, {\upsilon}_{\theta\lambda} \right)=\partial \rho\left(v_t(\mu, {\upsilon}_{\theta\lambda} )\right)/\partial {\upsilon}_{\theta\lambda}  =\rho_1\left(v_t(\mu, {\upsilon}_{\theta\lambda} )\right)v_{t,1}(\mu, {\upsilon}_{\theta\lambda} ),$ $m_{\mu,t}({\upsilon}_{\theta\lambda} )=m_\mu\left(Z_t, {\upsilon}_{\theta\lambda} \right)$, $M_\mu=\mathbb{E}\left[ \partial m_{\mu, t}\left( {\upsilon}_{\theta(\mu)0}\right) /\partial {\upsilon}_{\theta\lambda}^\top  \right]$, $\psi_\mu(Z_t)=-M_\mu^{-1} m_\mu\left(Z_t, {\upsilon}_{\theta\lambda}\right)$ and $\widehat{\psi}_\mu= \sum_{t=1}^{T} \psi_\mu(Z_t)/\sqrt{T}$. By construction, we have $M_\mu= -\left(\begin{matrix}
    0 & G(\theta(\mu))^\top \\ G(\theta(\mu)) & \Omega(\theta(\mu),\mu)
\end{matrix} \right)$. 

\begin{lemma}(Generalised High dimensional Fuk-Nagaev Inequality)
\label{l2cont} 
 Consider i.i.d. centered $X_1, ..., X_n$ in $\mathbb{R}^k$. Let $\Sigma := \E[X_1X_1^{\top}]$ and $\omega := \diag(\Sigma)$. Assume that $\E\lVert X_1\rVert^\xi < \infty$ for some $\xi > 2$, then we have for any $t > 0$,
\begin{equation}
\mathrm{P}\bigg\{\big\lVert\sum_{i=1}^n X_i\big\rVert \geq 2\sqrt{n|\omega|_1} + t\bigg\} \leq C_\xi\frac{n\mathrm{E}\lVert X_1\rVert^\xi}{t^\xi} + \exp\bigg(-\frac{t^2}{3n \lambda_{\max}(\Sigma)}\bigg),
\end{equation}
where $C_\xi > 0$ is a constant only depending on $\xi$.
\end{lemma}
\begin{proof}[Proof of Lemma \ref{l2cont}] See Theorem 3.1 in \cite{einmahl2008characterization}. We apply Theorem 3.1 therein with $(B,\|\cdot\|)=\left(\mathbb{R}^k,\|\cdot\|\right)$ where $\eta=\delta=1$.  The unit ball of the dual of $\left(\mathbb{R}^k,\|\cdot\|\right)$ is the set of linear functions $\left\{x=\left(x_1, \ldots, x_k\right)^T \mapsto \sum_{j=1}^k \lambda_j x_j: (\sum_{j=1}^k\left|\lambda_j\right|^2)^{1/2} \leq 1\right\}$, and for $\lambda_1, \ldots, \lambda_k$ with $(\sum_{j=1}^k\left|\lambda_j\right|^2)^{1/2} \leq 1$, with the following step,
$$
\begin{gathered}
\sum_{i=1}^n \mathbb{E}\left[\left(\sum_{j=1}^k \lambda_j X_{i j}\right)^2\right]= \sum_{i=1}^n \E\left[\lambda^{\top}X_iX_i^{\top}\lambda\right]
\leq n \|\lambda\|^2 \lambda_{\max}(\Sigma)\leq n  \lambda_{\max}(\Sigma).
\end{gathered}
$$
 Hence in this case, $\Lambda_n^2$ in Theorem 3.1 \cite{einmahl2008characterization} is bounded by  $n  \lambda_{\max}(\Sigma)$. Additionally, by Jensen's inequality, $\sqrt{n|\omega|_1} \geq \mathbb{E}\lVert\sum_{i=1}^n X_i\rVert$. Therefore, Lemma \ref{l2cont} is implied by Theorem 3.1 \cite{einmahl2008characterization}.
\end{proof}

\begin{lemma}\label{lemA1}
    Under the assumptions of Corollary \ref{coro:Gaussian}, let $b_t=\sup_{\theta\in \Theta}\|g(Z_t, \theta) -\mu \|$ for a $\mu\in \Gamma$, then $\max_{1\leq t\leq T}b_t =O_p( {q}^{{1}/{2}} T^{{1}/{\xi}})$. 
\end{lemma}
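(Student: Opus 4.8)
The plan is to control the maximum of the i.i.d.\ nonnegative scalars $b_t$ by combining an elementary union bound with a $\xi$-th moment bound on a single $b_t$; the target rate $q^{1/2}T^{1/\xi}$ is exactly what this argument produces once one establishes $\E[b_1^\xi]\lesssim q^{\xi/2}$. Since $Z_1,\dots,Z_T$ are i.i.d.\ and $\theta\mapsto g(Z_t,\theta)$ is a fixed measurable map (Assumption \ref{a1} and continuity, item (3) of Corollary \ref{coro:Gaussian}), the $b_t=\sup_{\theta\in\Theta}\|g(Z_t,\theta)-\mu\|$ are i.i.d., so Markov's inequality applied termwise gives $\mathbb{P}(\max_{1\le t\le T}b_t>x)\le T\,\mathbb{P}(b_1>x)\le T\,\E[b_1^\xi]/x^\xi$. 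Taking $x=M\,q^{1/2}T^{1/\xi}$ and inserting $\E[b_1^\xi]\lesssim q^{\xi/2}$ bounds the right-hand side by a constant multiple of $M^{-\xi}$, which is made arbitrarily small uniformly in $T$ by enlarging $M$; this is exactly the assertion $\max_{1\le t\le T}b_t=O_p(q^{1/2}T^{1/\xi})$. I note in passing that, although Lemma \ref{l2cont} (the Fuk--Nagaev inequality) is available, it governs \emph{sums} and is not the instrument here—the maximum of i.i.d.\ terms is handled by the cruder but adequate union bound.

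The substantive remaining step is the coordinatewise reduction yielding $\E[b_1^\xi]\lesssim q^{\xi/2}$. Writing $\bar g_u:=\sup_{\theta\in\Theta}|e_u^\top(g(Z_t,\theta)-\mu)|$, I would use $b_t^2\le\sum_{u=1}^q\bar g_u^2$ together with the power-mean inequality for the convex map $x\mapsto x^{\xi/2}$ (convex because $\xi>2$), obtaining $b_t^\xi\le q^{\xi/2-1}\sum_{u=1}^q\bar g_u^\xi$. Taking expectations leaves $\E[b_1^\xi]\le q^{\xi/2}\max_{1\le u\le q}\E[\bar g_u^\xi]$, so it suffices to produce a $\xi$-th moment bound for the scalar envelope $\bar g_u$ that is uniform in the coordinate $u$, i.e.\ $\max_{u}\E[\bar g_u^\xi]\lesssim 1$.

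To obtain this scalar envelope bound I would anchor at $\theta(\mu)$ via $\bar g_u\le|e_u^\top(g(Z_t,\theta(\mu))-\mu)|+\sup_{\theta\in\Theta}|e_u^\top(g(Z_t,\theta)-g(Z_t,\theta(\mu)))|$: the first term has $\xi$-th moment bounded uniformly in $u$ by the pointwise moment condition (item (4) of Corollary \ref{coro:Gaussian}, evaluated at $\theta(\mu)$), and the second is treated by passing to a finite cover of the compact $\Theta$, on which the Lipschitz-type smoothness (item (7)) converts the supremum into a maximum over finitely many anchor points—each again controlled by item (4)—plus a remainder scaled by the net radius. The main obstacle is precisely this passage from the pointwise control $\sup_\theta\E[\,\cdot\,]<C$ supplied by the hypotheses to the envelope control $\E[\sup_\theta\,\cdot\,]$, which is where the continuity, compactness of $\Theta$, and the integrable-envelope/Lipschitz structure must be carefully combined (and where the interplay between the only-second-moment Lipschitz constant and the desired $\xi$-th moment needs attention). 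Once the uniform envelope bound is in hand, substituting back through the convexity step gives $\E[b_1^\xi]\lesssim q^{\xi/2}$, and the union-bound argument of the first paragraph then delivers the claimed rate.
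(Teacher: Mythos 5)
Your proposal is correct and follows the same basic route as the paper's proof (coordinatewise reduction plus Markov and a union bound), but the execution differs in two respects worth noting. First, the paper writes $b_t\le \sqrt{q}\max_{1\le u\le q}b_{tu}$ with $b_{tu}=\sup_{\theta}|e_u^\top(g(Z_t,\theta)-\mu)|$ and applies Markov to the $b_{tu}$ individually, i.e.\ a union bound over all $qT$ coordinates; taken literally this yields $\max_{t}b_t=O_p(q^{1/2}(qT)^{1/\xi})$, and the extra $q^{1/\xi}$ factor is silently dropped. Your convexity step $b_t^\xi\le q^{\xi/2-1}\sum_u\bar g_u^\xi$, giving $\E[b_1^\xi]\lesssim q^{\xi/2}$ followed by a union bound over $t$ only, delivers the stated rate $q^{1/2}T^{1/\xi}$ exactly, so on this point your argument is the cleaner one. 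Second, the envelope moment bound $\max_u\E[\bar g_u^\xi]\lesssim 1$ that you (rightly) identify as the substantive remaining step is not proved in the paper at all: the paper simply reads item (4) of Corollary \ref{coro:Gaussian} as asserting $\max_u\E(b_{tu}^\xi)<C$, i.e.\ as an envelope condition with the supremum inside the expectation, even though the displayed hypothesis places $\sup_\theta$ outside $\E$. Your proposed covering argument to bridge pointwise and envelope control is a reasonable repair, and your worry about the Lipschitz envelope $b(Z_t)$ possessing only two moments is well founded — that route would not close as written; the honest fix is to read (or restate) the hypothesis as the envelope condition the paper actually uses, at which point your proof is complete and slightly sharper than the paper's.
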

\begin{proof}[Proof of Lemma \ref{lemA1}]~\\
Denote $b_{tu}=\sup_{\theta\in\Theta} \|e_u^\top(g(Z_t, \theta) -\mu) \|$, and thus
    $ b_t\leq  \max_{1\leq u\leq q}\sqrt{q} b_{tu}$. The assumption that $\max_{1\leq u\leq q}\mathbb{E}(b_{tu}^{\xi}) <C $ and the Markov inequality imply that $\max_{1\leq u\leq q}\max_{1\leq t\leq T} b_{tu}=O_p(T^{1/\xi})$, and thus the conclusion follows.
\end{proof}

\begin{lemma}\label{lemA2}
    Under the assumptions of Corollary \ref{coro:Gaussian}, $\sup_{\theta\in \Theta, \lambda \in \Lambda_{T,s_\lambda}, 1\leq t\leq T}\|\lambda^\top (g(Z_t, \theta) -\mu)\| = O_p(s_\lambda  {q}^{\frac{1}{2}} T^{-\xi_\lambda+\frac{1}{\xi}}) $ with $\Lambda_{T,s_\lambda} = \{\lambda: \|\lambda \| \lesssim
 s_\lambda T^{-\xi_\lambda} \}$ and $s_\lambda>0$.  
\end{lemma}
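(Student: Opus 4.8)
The plan is to reduce the claim to a product of two elementary bounds via the Cauchy--Schwarz inequality, and then to invoke Lemma \ref{lemA1} for the harder of the two factors. First I would observe that for each fixed $t$, $\theta$, and $\lambda$, the quantity $\lambda^\top (g(Z_t,\theta)-\mu)$ is a scalar, so that $\|\lambda^\top (g(Z_t,\theta)-\mu)\| = |\lambda^\top (g(Z_t,\theta)-\mu)|$, and Cauchy--Schwarz gives
$$
|\lambda^\top (g(Z_t,\theta)-\mu)| \leq \|\lambda\|\,\|g(Z_t,\theta)-\mu\|.
$$
Taking suprema and using that the two factors depend on disjoint blocks of the indices, the left-hand supremum factors as
$$
\sup_{\theta,\lambda,t}|\lambda^\top (g(Z_t,\theta)-\mu)| \;\leq\; \Big(\sup_{\lambda\in\Lambda_{T,s_\lambda}}\|\lambda\|\Big)\Big(\sup_{\theta\in\Theta,\,1\leq t\leq T}\|g(Z_t,\theta)-\mu\|\Big).
$$

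Next I would bound each factor. The first is immediate from the definition of the set $\Lambda_{T,s_\lambda} = \{\lambda:\|\lambda\|\lesssim s_\lambda T^{-\xi_\lambda}\}$, which yields $\sup_{\lambda\in\Lambda_{T,s_\lambda}}\|\lambda\|\lesssim s_\lambda T^{-\xi_\lambda}$. For the second factor, I would recognize that $\sup_{\theta\in\Theta}\|g(Z_t,\theta)-\mu\| = b_t$ in the notation of Lemma \ref{lemA1}, so that $\sup_{\theta,t}\|g(Z_t,\theta)-\mu\| = \max_{1\leq t\leq T} b_t$, and Lemma \ref{lemA1} delivers $\max_{1\leq t\leq T} b_t = O_p(q^{1/2} T^{1/\xi})$. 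Multiplying the two bounds then gives
$$
\sup_{\theta,\lambda,t}|\lambda^\top (g(Z_t,\theta)-\mu)| \;=\; O_p\!\big(s_\lambda T^{-\xi_\lambda}\cdot q^{1/2} T^{1/\xi}\big) \;=\; O_p\!\big(s_\lambda q^{1/2} T^{-\xi_\lambda+1/\xi}\big),
$$
which is exactly the asserted rate.

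Since the argument is a one-line application of Cauchy--Schwarz followed by a direct appeal to Lemma \ref{lemA1}, there is no genuine obstacle here; the only points requiring care are bookkeeping ones. Specifically, I would make sure to note that $\lambda^\top(g(Z_t,\theta)-\mu)$ is scalar-valued (so the outer norm is an absolute value), that the supremum over $\lambda$ can be decoupled cleanly from the supremum over $(\theta,t)$ because $\|\lambda\|$ does not depend on $\theta$ or $t$, and that the $O_p$ rate from Lemma \ref{lemA1} is uniform over $\theta\in\Theta$ (which is why $b_t$ is defined with the supremum over $\theta$ already built in). Tracking the exponents $-\xi_\lambda$ and $1/\xi$ correctly through the product is the last detail, and it matches the claimed exponent $-\xi_\lambda + 1/\xi$ together with the $q^{1/2}$ factor from the dimension.
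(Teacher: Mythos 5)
Your argument is correct and is essentially identical to the paper's own proof, which likewise applies the Cauchy--Schwarz inequality and then invokes Lemma \ref{lemA1} to bound $\max_{1\leq t\leq T} b_t = O_p(q^{1/2}T^{1/\xi})$. The additional bookkeeping remarks you add (scalar-valuedness, decoupling of the suprema) are fine but not needed beyond what the paper records.
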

\begin{proof}[Proof of Lemma \ref{lemA2}]
    This is a direct result of Lemma \ref{lemA1} and Cauchy-Schwarz inequality such that $\sup_{\theta\in \Theta, \lambda \in \Lambda_{T,s_\lambda}, 1\leq t\leq T}\|\lambda^\top (g(Z_t, \theta) -\mu)\| \leq s_\lambda T^{-\xi_\lambda}\max_{1\leq t\leq T}b_t =s_\lambda T^{-\xi_\lambda} O_p( {q}^{\frac{1}{2}} T^{ \frac{1}{\xi}})  $. 
\end{proof}

\begin{lemma}\label{lemA31}
    Under the assumptions of Corollary \ref{coro:Gaussian}, 
    $\bar{\lambda} = \argmax_{\lambda \in \Lambda_{T,q^{1/2}}}\widehat{P}_{\mu}(\theta(\mu),\lambda)$ exists {with probability approaching one} such that $\bar{\lambda} =O_p(q^{1/2}T^{-1/2})$ and $ \max_{\lambda \in \Lambda_{T,q^{1/2}}}\widehat{P}_{\mu}(\theta(\mu),\lambda) =O_p(qT^{-1}) $ with $\Lambda_{T,q^{1/2}} =  \{\lambda: \|\lambda \| \lesssim
 q^{1/2} T^{-\xi_\lambda} \}$ by setting $s_\lambda =q^{1/2}$ in Lemma \ref{lemA2}.  
\end{lemma}
\begin{proof}[Proof of Lemma \ref{lemA31}]
$ \|\frac{1}{{T}} \sum_t (g(Z_t, \theta(\mu))-\mu)(g(Z_t, \theta(\mu))-\mu)^\top  - \Omega(\theta(\mu), \mu)\|$
\\
$= O_p(q^{1/2}T^{-1/2}) =o_p(1)$ by Lemma \ref{l2cont} and the assumptions  $\mathbb{E}\|(g(Z_t, \theta(\mu)) - \mu)(g(Z_t, \theta(\mu)) - \mu)^\top - \Omega(\theta(\mu), \mu) \|_F^2 = O(q)$,  {$\mathbb{E}\|(g(Z_t, \theta(\mu)) - \mu)(g(Z_t, \theta(\mu)) - \mu)^\top - \Omega(\theta(\mu), \mu) \|_F^\xi \lesssim q^{\xi/2}$}. By the global concavity of $\widehat{P}_{\mu}(\theta(\mu),\lambda)$ we know $\bar{\lambda}$ exists and a first-order Taylor expansion around zero for $\widehat{P}_{\mu}(\theta(\mu),\lambda)$ gives,
\begin{align*} 
0=\widehat{P}_{\mu}(\theta(\mu),0)&\leq \widehat{P}_{\mu}(\theta(\mu),\bar{\lambda})  =  -\bar{\lambda}^\top (\widehat{m}(\theta(\mu))-\mu) - \frac{1}{2} \bar{\lambda}^\top (\frac{1}{T} \sum_t (g_t(\theta(\mu))-\mu)(g_t(\theta(\mu))-\mu)^\top )   \bar{\lambda}  \\ 
& \leq    -\bar{\lambda}^\top (\widehat{m}(\theta(\mu))-\mu) - \frac{1}{2} (\lambda_{\min}(\Omega(\theta(\mu), \mu)) +o_p(1)) \|\bar{\lambda}\|^2 \lesssim_p  \|\bar{\lambda}\| \| \widehat{m}(\theta(\mu))-\mu\| - C \|\bar{\lambda}\|^2,
\end{align*}
where $C$ is a positive constant. We know $
\|\bar{\lambda}\| \leq \| \widehat{m}(\theta(\mu))-\mu\|/C =O_p(q^{1/2} T^{-1/2}),$    
and thus {$
  \sup_{\mu \in \Gamma}\max_{\lambda \in \Lambda_{T,q^{1/2}}}\widehat{P}_{\mu}(\theta(\mu),\lambda) =O_p(qT^{-1}).$}
\end{proof} 
Lemmas \ref{lemA1}-\ref{lemA31} are intermediate results used for the proof of Lemma \ref{lem15}. Lemma \ref{lemA31} provides initial results for the objective function evaluated at the true value $\theta(\mu)$, the rate of which serves as an upper bound in order to derive the rate of  $\|\widehat{m}(\widehat{\theta}(\mu)) -\mu\|$ stated in Lemma \ref{lem15}. The rate of $\|\widehat{m}(\widehat{\theta}(\mu)) -\mu\|$ is instrumental in deducing a rate of $\|\widehat{\theta}(\mu) -\theta(\mu)\|$ in Lemma \ref{lem16} and thus the consistency result in Lemma \ref{lem17}, leading to the eventual asymptotic normality. 
\begin{lemma}\label{lem15}
    Under the assumptions of Corollary \ref{coro:Gaussian}, $\|\widehat{m}(\widehat{\theta}(\mu)) -\mu\| =O_p(qT^{-1/2})$ for any $\mu \in \Gamma$.  
\end{lemma}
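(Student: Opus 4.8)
The plan is to exploit the saddle-point representation $\widehat{\theta}(\mu)=\argmin_{\theta\in\Theta}\max_{\lambda\in\widehat{\Lambda}_{T,\mu}(\theta)}\widehat{P}_\mu(\theta,\lambda)$ and to turn a bound on the optimized dual objective into a bound on the moment discrepancy. Write $R_T:=\|\widehat{m}(\widehat{\theta}(\mu))-\mu\|$ and abbreviate $\widehat{\Omega}:=\widehat{\Omega}(\widehat{\theta}(\mu),\mu)$, so that $\widehat{P}_\mu(\widehat{\theta}(\mu),\lambda)=-\lambda^{\top}(\widehat{m}(\widehat{\theta}(\mu))-\mu)-\tfrac12\lambda^{\top}\widehat{\Omega}\lambda$ by the quadratic form of $\rho$. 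The minimax inequality gives the upper bound $\max_{\lambda\in\widehat{\Lambda}_{T,\mu}(\widehat{\theta}(\mu))}\widehat{P}_\mu(\widehat{\theta}(\mu),\lambda)\le\max_{\lambda\in\widehat{\Lambda}_{T,\mu}(\theta(\mu))}\widehat{P}_\mu(\theta(\mu),\lambda)=O_p(q/T)$. For the last equality I would invoke Lemma \ref{lemA31}: since $\widehat{P}_\mu(\theta(\mu),\cdot)$ is a globally concave quadratic whose unconstrained maximizer $-\widehat{\Omega}(\theta(\mu),\mu)^{-1}(\widehat{m}(\theta(\mu))-\mu)$ has norm $O_p(q^{1/2}T^{-1/2})\ll q^{1/2}T^{-\xi_\lambda}$ (using $\|\widehat{m}(\theta(\mu))-\mu\|=O_p(q^{1/2}T^{-1/2})$ from Lemma \ref{l2cont} and the fact that $\widehat{\Omega}(\theta(\mu),\mu)$ has eigenvalues bounded away from zero at the truth), this maximizer is interior to $\Lambda_{T,q^{1/2}}$, so the max over $\widehat{\Lambda}_{T,\mu}(\theta(\mu))$ is dominated by the max over $\Lambda_{T,q^{1/2}}$, which Lemma \ref{lemA31} bounds by $O_p(q/T)$.

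For the matching lower bound I would evaluate the inner objective at a single feasible dual direction aligned with the discrepancy. Take $\lambda^{\dagger}=-s\,(\widehat{m}(\widehat{\theta}(\mu))-\mu)/R_T$ with $s\asymp q^{1/2}T^{-\xi_\lambda}$. Feasibility, i.e. $\lambda^{\dagger\top}(g(Z_t,\widehat{\theta}(\mu))-\mu)\in\Lambda_g$ for every $t$, follows from $|\lambda^{\dagger\top}(g(Z_t,\widehat{\theta}(\mu))-\mu)|\le s\max_{1\le t\le T}b_t$, Lemma \ref{lemA1}, and the rate requirement $qT^{-\xi_\lambda+1/\xi}\to 0$; this is exactly Lemma \ref{lemA2} with $s_\lambda\asymp q^{1/2}$. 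Then $\widehat{P}_\mu(\widehat{\theta}(\mu),\lambda^{\dagger})=sR_T-\tfrac12 s^{2}\lambda^{\dagger\top}\widehat{\Omega}\lambda^{\dagger}/s^{2}\ge sR_T-\tfrac12 s^{2}\lambda_{\max}(\widehat{\Omega})$, where the crude uniform control $\lambda_{\max}(\widehat{\Omega})\le T^{-1}\sum_t b_t^{2}=O_p(q)$ comes from $\sup_{\theta\in\Theta}\|g(Z_t,\theta)-\mu\|\le b_t$, the Lipschitz envelope $b(Z_t)$ with $\E[b(Z_t)^2]<\infty$, and the bound $\E\|g(Z_t,\theta(\mu))-\mu\|^2=\operatorname{tr}\Omega(\theta(\mu),\mu)=O(q)$.

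The two bounds combine by optimizing the lower bound over the admissible scale $s$, splitting into cases according to whether the unconstrained dual optimum $s^{*}=R_T/\lambda_{\max}(\widehat{\Omega})$ is feasible. If $s^{*}\lesssim q^{1/2}T^{-\xi_\lambda}$, choosing $s=s^{*}$ yields $R_T^{2}/(2\lambda_{\max}(\widehat{\Omega}))=O_p(q/T)$, hence $R_T=O_p\big((\lambda_{\max}(\widehat{\Omega})\,q/T)^{1/2}\big)=O_p(qT^{-1/2})$. If instead $s^{*}$ exceeds the feasibility radius, then at the boundary $s\asymp q^{1/2}T^{-\xi_\lambda}<s^{*}$ the quadratic $sR_T-\tfrac12 s^2\lambda_{\max}(\widehat{\Omega})\ge\tfrac12 sR_T$, so $q^{1/2}T^{-\xi_\lambda}R_T=O_p(q/T)$, giving $R_T=O_p(q^{1/2}T^{\xi_\lambda-1})$; because $\xi_\lambda<1/2$ this is of smaller order than the case hypothesis $R_T\gtrsim q^{3/2}T^{-\xi_\lambda}$ for large $T$, so the second case is vacuous and the first case delivers the claim $R_T=O_p(qT^{-1/2})$.

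The main obstacle I anticipate is the uniform control of $\lambda_{\max}(\widehat{\Omega}(\widehat{\theta}(\mu),\mu))$ at the \emph{random and not-yet-consistent} argument $\widehat{\theta}(\mu)$: the sharp concentration $\|\widehat{\Omega}(\theta(\mu),\mu)-\Omega(\theta(\mu),\mu)\|=O_p(q^{1/2}T^{-1/2})$ of Lemma \ref{lemA31} holds only at the truth, so I must descend to a genuinely uniform (if crude) bound through the envelope $b_t$, the Lipschitz condition, and the high-dimensional Fuk–Nagaev inequality (Lemma \ref{l2cont}) for the relevant averages. A secondary delicate point is the exponent bookkeeping that reconciles the feasibility radius $q^{1/2}T^{-\xi_\lambda}$, the envelope growth $q^{1/2}T^{1/\xi}$, and the target rate $qT^{-1/2}$ through the single condition $\xi_\lambda\in(\xi^{-1},2^{-1})$ with $qT^{-\xi_\lambda+1/\xi}\to 0$; it is precisely $\xi_\lambda<1/2$ that rules out the degenerate second case.
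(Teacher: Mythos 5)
Your proposal is correct and follows essentially the same route as the paper's proof: the saddle-point sandwich bounding the optimized dual objective by $\max_{\lambda\in\Lambda_{T,q^{1/2}}}\widehat{P}_\mu(\theta(\mu),\lambda)=O_p(q/T)$ from Lemma \ref{lemA31}, a test direction proportional to $-(\widehat{m}(\widehat{\theta}(\mu))-\mu)$, the crude envelope bound $\lambda_{\max}(\widehat{\Omega})\le T^{-1}\sum_t b_t^2=O_p(q)$, and feasibility secured by Lemmas \ref{lemA1}--\ref{lemA2} together with $qT^{-\xi_\lambda+1/\xi}\to 0$. The only difference is presentational: you optimize over the dual scale $s$ with an explicit case split on whether the unconstrained dual optimum is feasible, whereas the paper first derives a preliminary rate with $\epsilon_T=T^{-\xi_\lambda}/\|\widehat{m}(\widehat{\theta}(\mu))-\mu\|$ and then lets $\epsilon_T$ range over all $o_p(q^{-1})$ sequences; both yield the same $O_p(qT^{-1/2})$ conclusion.
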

\begin{proof}[Proof of Lemma \ref{lem15}]
Since we have a saddle point problem, 
\begin{align*}
\widehat{P}_{\mu}(\widehat{\theta}(\mu), \widehat{\lambda}_g(\epsilon_T) )   \leq  \widehat{P}_{\mu}(\widehat{\theta}(\mu),\widehat{\lambda})  \leq   \max_{\lambda \in \Lambda_{T, q^{1/2}}}  \widehat{P}_{\mu}(\theta(\mu), {\lambda}),  
\end{align*}
where $\widehat{\lambda}_g(\epsilon_T)=-\epsilon_T (\widehat{m}(\widehat{\theta}(\mu))-\mu)$ and $\epsilon_T$ is a positive scalar. We have from Lemma \ref{lemA31} that  $\widehat{P}_{\mu}(\widehat{\theta}(\mu),\widehat{\lambda}) \leq   \max_{\lambda \in \Lambda_{T, q^{1/2}}}  \widehat{P}_{\mu}(\theta(\mu), {\lambda})  =  O_p(qT^{-1}).$ For the term $\widehat{P}_{\mu}(\widehat{\theta}(\mu), \widehat{\lambda}_g(\epsilon_T))$ we have that 
\begin{align*}
\widehat{P}_{\mu}(\widehat{\theta}(\mu),\widehat{\lambda}_g(\epsilon_T)) =\epsilon_T\|\widehat{m}(\widehat{\theta}(\mu)) -\mu\|^2 -1/2 (\widehat{\lambda}_g(\epsilon_T))^\top(\frac{1}{T} \sum_t { (g_t(\widehat{\theta}(\mu))-\mu)} (g_t(\widehat{\theta}(\mu))-\mu)^\top ) \widehat{\lambda}_g(\epsilon_T),
\end{align*}
and thus the fact that $$1/2 (\widehat{\lambda}_g(\epsilon_T))^\top(\frac{1}{T} \sum_t { (g_t(\widehat{\theta}(\mu))-\mu)} (g_t(\widehat{\theta}(\mu))-\mu)^\top) \widehat{\lambda}_g(\epsilon_T)\leq q(\sum_{t=1} b_{t}^2/T)\| \widehat{\lambda}_g(\epsilon_T)\|^2 $$ implied by the proof of Lemma \ref{lemA1} leads to 
\begin{align*}
& \epsilon_T\|\widehat{m}(\widehat{\theta}(\mu)) -\mu\|^2 - O_p(q)\| \widehat{\lambda}_g(\epsilon_T)\|^2 = \epsilon_T\|\widehat{m}(\widehat{\theta}(\mu)) -\mu\|^2 - O_p(q) \epsilon_T^2\|\widehat{m}(\widehat{\theta}(\mu)) -\mu\|^2 
\leq     O_p(qT^{-1})
\end{align*}
{
If we choose $\epsilon_T =T^{-\xi_\lambda}/\|\widehat{m}(\widehat{\theta}(\mu)) -\mu\|$ so that $\widehat{\lambda}_g(T^{-\xi_\lambda}/\|\widehat{m}(\widehat{\theta}(\mu)) -\mu\|)\in \Lambda_{T,q^{1/2}} \cap \widehat{\Lambda}_{T}(\widehat{\theta}(\mu))$ w.p.a.1., the above inequality implies that 
\begin{align*}
     \|\widehat{m}(\widehat{\theta}(\mu)) -\mu\| 
\leq  O_p(q)  O_p(T^{-\xi_\lambda}) +   O_p(qT^{\xi_\lambda-1}).
\end{align*}
Hence, $\|\widehat{m}(\widehat{\theta}(\mu)) -\mu\| =  o_p(T^{-1/\xi})$ and from the assumption $q  T^{-\xi_\lambda + \xi^{-1}}\rightarrow 0$, we know that $\widehat{\lambda}_g(\epsilon_T)$ is an interior point in $\Lambda_{T,q^{1/2}}$ and in $\widehat{\Lambda}_{T}(\widehat{\theta}(\mu))$  w.p.a.1. as long as $\epsilon_T \rightarrow 0$. Now consider one arbitrary drifting to zero sequence $\epsilon_T$, $\epsilon_T\|\widehat{m}(\widehat{\theta}(\mu)) -\mu\|^2 - O_p(q)\epsilon_T^2\|\widehat{m}(\widehat{\theta}(\mu)) -\mu\|^2 =  O_p(qT^{-1})$ implies that  $\epsilon_T \|\widehat{m}(\widehat{\theta}(\mu)) -\mu\|^2 =  O_p(qT^{-1})$ if $\epsilon_T=o_p(q^{-1})$, since this is true for all such sequences, we know $\|\widehat{m}(\widehat{\theta}(\mu)) -\mu\|=O_p(qT^{-1/2})$.
}
\end{proof} 
\begin{lemma}\label{lem16}
    Under the assumptions of Corollary \ref{coro:Gaussian}, $\| \widehat{\theta}(\mu) -\theta(\mu)\| =O_p(qT^{-1/2})$. 
\end{lemma}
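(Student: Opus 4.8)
The plan is to convert the moment-space rate $\|\widehat m(\widehat\theta(\mu))-\mu\|=O_p(qT^{-1/2})$ delivered by Lemma \ref{lem15} into a parameter-space rate, exploiting that the population moment map $\theta\mapsto m(\theta)$ is locally bi-Lipschitz near $\theta(\mu)$ because $G(\theta)^\top G(\theta)$ has eigenvalues bounded away from zero and infinity (conditions 2 and 8 of Corollary \ref{coro:Gaussian}). Fix $\mu\in\Gamma$ throughout. First I would establish preliminary consistency $\widehat\theta(\mu)\rightarrow_p\theta(\mu)$: since condition 9 forces $qT^{-1/2}=o(T^{-\xi_\lambda})=o(1)$, Lemma \ref{lem15} gives $\|\widehat m(\widehat\theta(\mu))-\mu\|=o_p(1)$; combined with a uniform law of large numbers $\sup_{\theta\in\Theta}\|\widehat m(\theta)-m(\theta)\|=o_p(1)$ (from continuity in condition 3, the moment bounds in conditions 4--5, and compactness of $\Theta$) this yields $\|m(\widehat\theta(\mu))-\mu\|=o_p(1)$. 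Because $\theta(\mu)$ is the unique root of $m(\theta)=\mu$ on the compact set $\Theta$ (Assumption \ref{a1}) and $m$ is continuous, $\inf_{\|\theta-\theta(\mu)\|\ge\delta}\|m(\theta)-\mu\|>0$ for every $\delta>0$, so $\widehat\theta(\mu)\rightarrow_p\theta(\mu)$ and with probability approaching one $\widehat\theta(\mu)$ lies in the neighborhood $\mathcal{N}(\theta(\mu))$ where the Lipschitz bounds of condition 7 apply.

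Second, on that neighborhood I would linearize the population moment around $\theta(\mu)$. Using $m(\theta(\mu))=\mu$, first-order differentiability, and the Lipschitz continuity of $G$ (Assumption \ref{assum33}(iv)),
$$m(\widehat\theta(\mu))-\mu=G(\theta(\mu))\bigl(\widehat\theta(\mu)-\theta(\mu)\bigr)+R,\qquad \|R\|\lesssim\|\widehat\theta(\mu)-\theta(\mu)\|^2,$$
and since $\lambda_{\min}\bigl(G(\theta(\mu))^\top G(\theta(\mu))\bigr)$ is bounded away from zero, the leading term satisfies $\|G(\theta(\mu))(\widehat\theta(\mu)-\theta(\mu))\|\gtrsim\|\widehat\theta(\mu)-\theta(\mu)\|$. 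Once $\widehat\theta(\mu)$ is consistent the quadratic remainder is $o_p(1)\cdot\|\widehat\theta(\mu)-\theta(\mu)\|$ and hence negligible relative to the linear term, giving the local lower bound $\|m(\widehat\theta(\mu))-\mu\|\gtrsim\|\widehat\theta(\mu)-\theta(\mu)\|$ with probability approaching one.

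Third, I would bound $\|m(\widehat\theta(\mu))-\mu\|$ from above by splitting off the centered empirical process $\nu_T(\theta):=\widehat m(\theta)-m(\theta)$:
$$\|m(\widehat\theta(\mu))-\mu\|\le\|\widehat m(\widehat\theta(\mu))-\mu\|+\|\nu_T(\theta(\mu))\|+\|\nu_T(\widehat\theta(\mu))-\nu_T(\theta(\mu))\|.$$
Here $\|\widehat m(\widehat\theta(\mu))-\mu\|=O_p(qT^{-1/2})$ by Lemma \ref{lem15}, and $\|\nu_T(\theta(\mu))\|=\|\widehat m(\theta(\mu))-\mu\|=O_p(\sqrt{q/T})$ from $\E\|\widehat m(\theta(\mu))-\mu\|^2\lesssim q/T$ (condition 5, using $\xi>2$ and i.i.d.\ sampling). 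For the increment, a mean-value expansion writes $\nu_T(\widehat\theta(\mu))-\nu_T(\theta(\mu))=(\widehat G(\tilde\theta)-G(\tilde\theta))(\widehat\theta(\mu)-\theta(\mu))$ for an intermediate $\tilde\theta$, with $\widehat G(\theta)=T^{-1}\sum_t G_t(\theta)$; the uniform Jacobian law $\|\sup_{\theta\in\Theta}(\widehat G(\theta)-G(\theta))\|=o_p(1)$ (condition 8) then makes this term $o_p(\|\widehat\theta(\mu)-\theta(\mu)\|)$. Combining the lower bound of Step 2 with this upper bound and absorbing the $o_p(\|\widehat\theta(\mu)-\theta(\mu)\|)$ contribution into the left-hand side yields $\|\widehat\theta(\mu)-\theta(\mu)\|\lesssim O_p(qT^{-1/2})+O_p(\sqrt{q/T})=O_p(qT^{-1/2})$, since $\sqrt{q/T}\le qT^{-1/2}$ for $q\ge1$, which is the claim.

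The step I expect to be the main obstacle is controlling the empirical-process increment $\nu_T(\widehat\theta(\mu))-\nu_T(\theta(\mu))$ and showing it is genuinely $o_p(\|\widehat\theta(\mu)-\theta(\mu)\|)$ uniformly as $k$ and $q$ diverge. The mean-value shortcut is clean but leans entirely on the uniform Jacobian law of condition 8 remaining valid in the growing-dimension regime; the more robust route is a stochastic-equicontinuity / maximal-inequality argument built on the envelope Lipschitz constant $b(Z_t)$ with $\E b^2<\infty$ (condition 7), whose resulting modulus of continuity must be shown to vanish under the rate restriction of condition 9. Managing these high-dimensional maximal inequalities, rather than the deterministic bi-Lipschitz geometry of Steps 2--3, is where the real work lies.
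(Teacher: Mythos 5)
Your proposal is correct and follows essentially the same route as the paper: both arguments invert a local (bi-)Lipschitz bound coming from the eigenvalue condition on $G^\top G$ together with the uniform Jacobian law, and both reduce the problem to the two bounds $\|\widehat m(\widehat\theta(\mu))-\mu\|=O_p(qT^{-1/2})$ from Lemma \ref{lem15} and $\|\widehat m(\theta(\mu))-\mu\|=O_p(\sqrt{q/T})$. The paper simply applies the mean-value theorem to the empirical moment map $\widehat m$ directly (so your centered-process increment $\nu_T(\widehat\theta(\mu))-\nu_T(\theta(\mu))$ is absorbed into the $o_p(1)$ perturbation of the Jacobian), which makes your preliminary consistency step and the quadratic remainder in your population-moment linearization unnecessary.
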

\begin{proof}[Proof of Lemma \ref{lem16}]

The differentiability assumption implies that 
$$\widehat{m}(\widehat{\theta}(\mu)) -\widehat{m}({\theta}(\mu)) = T^{-1}\sum_t G_t(\overline{\theta}
(\mu))^{\top}(\widehat{\theta}(\mu) -{\theta}(\mu) ),$$
where $$\|\bar{\theta}(\mu)-{\theta}(\mu)\|\leq \|\widehat{\theta}(\mu)-{\theta}(\mu)\|.$$ Denote $G(\theta) = \mathbb{E}G_t(\theta)$,  and the assumption  $
\|\sup_{\theta\in \Theta}(T^{-1}\sum_t G_t({\theta})^{\top}-\E G_t(\theta)^\top)\| =o_p(1)$ leads to
$\widehat{m}(\widehat{\theta}(\mu)) -\widehat{m}({\theta}(\mu)) = (G(\bar{\theta}
(\mu))+ o_p(1))^{\top}(\widehat{\theta}(\mu) -{\theta}(\mu) ),$ {where $o_p(1)$ is defined in terms of $|.|_2$ norm}.
Thus the assumption that $\lambda_{\min} (G(\bar{\theta}
(\mu))^{\top}G(\bar{\theta}
(\mu)))$ is bounded away from zero implies that
$$\|(\widehat{\theta}(\mu) -{\theta}(\mu) )\|\leq C\|\widehat{m}(\widehat{\theta}(\mu)) -\widehat{m}({\theta}(\mu))\|\leq C\{\|\widehat{m}(\widehat{\theta}(\mu))-\mu\| +\|\widehat{m}({\theta}(\mu))-\mu\|\},$$
then the rate follows as $\|\widehat{m}(\widehat{\theta}(\mu))-\mu\|=O_p(qT^{-1/2})$ indicated by Lemma \ref{lem15} and $\|\widehat{m}({\theta}(\mu))-\mu\|=O_p(q^{1/2}T^{-1/2})$ indicated by Lemma \ref{l2cont}. 
\end{proof}

Consider a first-order Taylor expansion of $\sum_t m(Z_t,{\upsilon}_{\theta\lambda} )/T$ for $\widehat{\upsilon}_{\theta\lambda} =\left(\widehat{\theta}(\mu)^{\top}, \widehat{\lambda}^{\top}\right)^{\top}$ and $ {\upsilon}_{\theta(\mu)0}=\left(\theta(\mu)^{\top}, 0^{\top}\right)^{\top}$, we have $0=\left(\begin{array}{c}
0 \\
{- \widehat{m}\left(\theta(\mu)\right)}
\end{array}\right)+ \bar{M}_\mu \left(\widehat{\upsilon}_{\theta\lambda} -{\upsilon}_{\theta(\mu)0}\right),$ where
\begin{align*} 
& \bar{M}_\mu=\left(\begin{array}{cc}
0 & \sum_{t=1}^T\rho_1\left(\bar{\lambda}^{\top} \widehat{g}_{\mu, t}\right) G_t(\bar{\theta}(\mu))^{\top} / T \\
\sum_{t=1}^T \rho_1\left(\bar{\lambda}^{\top} \widehat{g}_{\mu, t}\right) G_t(\bar{\theta}(\mu)) / T & \sum_{t=1}^T \rho_2\left(\bar{\lambda}^{\top} \widehat{g}_{\mu, t}\right) g_t(\bar{\theta}(\mu)) \widehat{g}_{\mu, t}^{\top} /T
\end{array}\right),  
\end{align*}
and $\bar{\theta}(\mu)$ and $\bar{\lambda}$ are mean values converging to $\theta(\mu)$ and $0$ that actually differ from row to row of the matrix $\bar{M}_\mu$.

\begin{lemma}\label{lem17}
    Under the assumptions of Corollary \ref{coro:Gaussian}, $\| \bar{M}_\mu -M_\mu\|=o_p(1)$, for all $\mu$.
\end{lemma}
\begin{proof}[Proof of Lemma \ref{lem17}] 

We first show that
$\|\widehat{\Omega}(\theta(\mu), \mu)- \Omega(\theta(\mu), \mu) \| = o_p(1)$ with {$\widehat{\Omega}(\theta(\mu), \mu)= \frac{1}{T}\sum_{t} (\widehat{g}_{\mu, t}-\mu) (\widehat{g}_{\mu, t}-\mu)^{\top} $}. This comes 
    directly from the fact that  
\begin{align*}
&    \|\widehat{\Omega}(\theta(\mu), \mu) - \Omega(\theta(\mu), \mu) \| =   \|\frac{1}{T}\sum_{t} (\widehat{g}_{\mu, t}-\mu)(\widehat{g}_{\mu, t}-\mu)^{\top} - \Omega(\theta(\mu), \mu) \| 
    \\ \leq &   \left\|\frac{1}{T}\sum_t( \widehat{g}_{\mu, t} - {g}_t({\theta}(\mu)) ) (\widehat{g}_{\mu, t} -{g}_t({\theta}(\mu)))^\top \right\|+2\left\|\frac{1}{T}\sum_t({g}_t({\theta}(\mu)) -\mu ) (\widehat{g}_{\mu, t} -{g}_t({\theta}(\mu)))^\top \right\|\\ 
    &+  \left\|\frac{1}{{T}} \sum_t (g_t(\theta(\mu))-\mu)(g_t(\theta(\mu))-\mu)^\top  - \Omega(\theta(\mu), \mu) \right\|   ={ o_p(1) }
\end{align*} 
where  
\begin{enumerate}
    \item $\left\|\frac{1}{T}\sum_t( \widehat{g}_{\mu, t} - {g}_t({\theta}(\mu)) ) (\widehat{g}_{\mu, t} -{g}_t({\theta}(\mu)))^\top \right\|=o_p(1)$ results from the fact that  $\| \widehat{g}_{\mu, t} - {g}_t({\theta}(\mu)) \| \leq b(Z_t) \|\widehat{\theta}(\mu)-{\theta}(\mu)\|$ and  
    \begin{align*}
         & \left\|\frac{1}{T}\sum_t( \widehat{g}_{\mu, t} - {g}_t({\theta}(\mu)) ) (\widehat{g}_{\mu, t} -{g}_t({\theta}(\mu)))^\top \right\| \leq 
        \frac{1}{T}\sum_t (b(Z_t))^2 \|\widehat{\theta}(\mu)-{\theta}(\mu)\|^2.    
    \end{align*}

    \item $\left\|\frac{1}{T}\sum_t({g}_t({\theta}(\mu)) -\mu ) (\widehat{g}_{\mu, t} -{g}_t({\theta}(\mu)))^\top  \right\| = o_p(1)$ results from the fact that 
    \begin{align*}
        & \left\|\frac{1}{T}\sum_t({g}_t({\theta}(\mu)) -\mu ) (\widehat{g}_{\mu, t} -{g}_t({\theta}(\mu)))^\top  \right\|\\
        = &{ \left\|\frac{1}{{T}}\sum_t({g}_t({\theta}(\mu)) -\mu )(\widehat{\theta}(\mu) -{\theta}(\mu)) )^\top  G_t(\bar{\theta}(\mu))^\top  \right\|} \\
        \leq & \max_t b_t \| \widehat{\theta}(\mu) -{\theta}(\mu)) \|{ \left\|\frac{1}{{T}}\sum_t G_t(\bar{\theta}(\mu))^\top  \right\|} =O_p(q^{3/2}T^{1/\xi - 1/2})
    \end{align*}

\item  $ \left\|\frac{1}{{T}} \sum_t (g_t(\theta(\mu))-\mu)(g_t(\theta(\mu))-\mu)^\top  - \Omega(\theta(\mu),\mu) \right\| =O_p(q^{1/2}T^{-1/2})$ as in the proof of Lemma \ref{lemA31}.
 
\end{enumerate}
Next, $\| \frac{1}{T}\sum_t G_t(\widehat{\theta}(\mu)) - \mathbb{E} G_t(\theta(\mu))) \|=o_p(1)$  results from the triangle inequality and the facts that $$\|\frac{1}{T}\sum_t G_t(\widehat{\theta}(\mu)) -  G_t(\theta(\mu))) \|\leq \frac{\sum_t b(Z_t)}{T}\|\widehat{\theta}(\mu) -\theta(\mu)\|,$$ and that $\| \frac{1}{T}\sum_t G_t({\theta}(\mu)) - \mathbb{E} G_t(\theta(\mu))) \| $ is bounded by assumption. Since the above arguments also hold when we replace parts of $\widehat{\theta}(\mu)$ by $\bar{\theta}(\mu)$, the final conclusion is then indicated once we notice that $\|\bar{\lambda}^{\top}\widehat{g}_{\mu, t} \|=o_p(1)$, which is indicated by Lemma \ref{lem15} and the fact that $\|\widehat{\Omega} - \Omega(\theta(\mu), \mu) \| = o_p(1)$.

\end{proof}

\begin{proof}[Proof of Corollary \ref{coro:Gaussian}] 
We now can establish the expansion \ref{eq:expansion}  to show the validity of Assumption \ref{Gaussian}. 
Note that by assumption $M_\mu$ is non-singular, and Lemma \ref{lem17}  implies that $\bar{M}_\mu$ is also non-singular w.p.a.1. such that  $\|\bar{M}_\mu^{-1} -M_\mu^{-1}\| \leq \|\bar{M}_\mu^{-1} \| \|\bar{M}_\mu  -M_\mu  \| \|M_\mu^{-1}\| =o_p(1)$, and thus $$\sqrt{T}\left(\widehat{\upsilon}_{\theta\lambda} -{\upsilon}_{\theta(\mu)0} \right)=-(M_\mu^{-1}+o_p(1))\left(0,-\sqrt{T} (\widehat{m}\left(\theta(\mu)\right)^{\top}-\mu^\top)\right)^{\top},$$ with 
\begin{align*} 
& {M}_\mu^{-1}=-\left(\begin{array}{cc}
F^{-1}_{\mu} & -F^{-1}_{\mu}G(\theta(\mu))^{\top}\Omega(\theta(\mu), \mu)^{-1} \\ -\Omega(\theta(\mu), \mu)^{-1}G(\theta(\mu))F^{-1}_{\mu} &\Omega(\theta(\mu), \mu)^{-1} + \Omega(\theta(\mu), \mu)^{-1}G(\theta(\mu))F^{-1}_{\mu}G(\theta(\mu))^{\top}\Omega(\theta(\mu), \mu)^{-1}  
\end{array}\right)
\end{align*}
and $F_{\mu}=- G(\theta(\mu))^{\top }\Omega(\theta(\mu), \mu)^{-1} G(\theta(\mu)).$ 
Therefore, $\|(\widehat{\theta}(\mu)-\theta(\mu)) -  F^{-1}_{\mu}G(\theta(\mu))^{\top}\Omega(\theta(\mu), \mu)^{-1}( \widehat{m}\left(\theta(\mu)\right)-\mu )\| = o_p(q{T}^{-1/2})$. The assumptions proposed in this analysis can be relaxed further if only for the validity of Assumption \ref{Gaussian}.  
\end{proof}
\subsection{Verification of Assumption  \ref{rT}}

Define
\[
M_{m,T}(\theta,\mu)
=\sqrt{T}\bigl[\widehat{m}(\theta)-\widehat{m}(\theta(\nu(\mu)))
-\mathbb{E}(\widehat{m}(\theta))+\mathbb{E}(\widehat{m}(\theta(\nu(\mu))))\bigr].
\]
Note that $\mathbb{E}[M_{m,T}(\theta,\mu)]=0$ by construction.

\begin{Assumption}[Tail assumptions of empirical moments]
\label{assum9}
$\widehat{m}(\theta)$ and $\mathbb{E}(\widehat{m}(\theta))$ are second-order 
differentiable in $\theta$. Let $\gamma_1\in\mathbb{R}^k$ and 
$\gamma_2\in\mathbb{R}^q$ be unit vectors under $\|\cdot\|$. There exist 
constants $u, v_0>0$ such that
\[
\sup_{\gamma_1,\,\gamma_2,\,(\theta,\mu)\in B_\varepsilon}
\log\mathbb{E}\exp\!\Bigl(
\lambda\,\gamma_2^\top
\frac{\partial M_{m,T}(\theta,\mu)}{\partial\theta}
\gamma_1
\Bigr)
\lesssim \frac{v_0^2\lambda^2}{2},
\]
where $v_0\lesssim 1/\sqrt{T}$, $|\lambda|\leq u$, and $u\geq\sqrt{k}$.
Additionally,
\[
\sup_{\mu\in\Gamma}
\bigl\|\widehat{m}(\theta(\mu))-\mu\bigr\|
\lesssim_p \frac{q\log \log T}{\sqrt{T}}.
\]
\end{Assumption}
The condition $u\geq\sqrt{k}$ ensures that the entropy regime 
$\mathbb{Q}_2+2x\leq u^2$ in Lemma~\ref{bound} holds with 
$\mathbb{Q}_2\asymp k$, which is needed in the proof of 
Lemma~\ref{lemmart}.
In Assumption~\ref{assum9}, $u$ represents the strength of the tail 
assumption and $v_0$ is a proxy for the sub-Gaussian variance. The 
condition on the moment generating function resembles a sub-Gaussian 
restriction and holds under mild regularity. To see this, let 
$G_T(\theta)=\partial\widehat{m}(\theta)/\partial\theta$. By the 
differentiability condition and since $\theta(\nu(\mu))$ is held fixed 
when differentiating in $\theta$,
\[
\frac{\partial M_{m,T}(\theta,\mu)}{\partial\theta}
=\sqrt{T}\,\bigl[G_T(\theta)-G(\theta)\bigr],
\]
where $G(\theta)=\partial\,\mathbb{E}(\widehat{m}(\theta))/\partial\theta$.
The process $G_T(\theta)-G(\theta)$ is centered, i.e., 
$\mathbb{E}[G_T(\theta)]=G(\theta)$, which follows from 
$\mathbb{E}[\widehat{m}(\theta)]=m(\theta)$ and standard 
differentiability of the expectation. Under standard CLT and uniform integrability conditions, it is not restrictive to have that, 
\[
\mathrm{Var}\!\left(
\gamma_2^\top\frac{\partial M_{m,T}}{\partial\theta}\gamma_1
\right)
=T\cdot O(1/T)=O(1),
\]
which is consistent with the bound specified in the above assumption with $v_0\lesssim 1/\sqrt{T}$ giving 
$v_0^2\lambda^2/2=O(\lambda^2/T)$.

\begin{Assumption}[Moments assumptions and identification]
\label{2}
Let $m(\theta) := \mathbb{E}[g(Z_t,\theta)]$, 
$G(\theta) := \partial m(\theta)/\partial\theta^\top$ ($q\times k$),
and $W(\theta) := \Omega(\theta)^{-1}$ where 
$\Omega(\theta) := \mathrm{Var}(g(Z_t,\theta))$.
The following conditions hold:
\begin{itemize}
\item[i)] \textbf{Smoothness.} $m(\theta)$ is second-order continuously 
differentiable on $\Theta$, with
$$\sup_{\theta\in\Theta}\max_{j\leq q}
\lambda_{\max}(\partial^2_\theta \mathbb{E}[g_j(Z_t,\theta)])\leq C.$$

\item[ii)] \textbf{Identification.} $0 < c_G \leq \sigma_{\min}(G(\theta)) 
\leq \sigma_{\max}(G(\theta)) \leq C_G < \infty$ 
for all $\theta\in\Theta$ and constants $c_G, C_G$.

\item[iii)] \textbf{Weighting matrix.} Uniformly over $\theta\in\Theta$,
\[
\sup_{\theta\in\Theta}\|W(\theta)-W_T(\theta)\|
=O_p\!\left((\log T)^2\sqrt{\frac{q}{T}}\right),
\]
and there exists $L_T=O_p(1)$ such that 
$\|W_T(\theta')-W_T(\theta)\|\leq L_T\|\theta'-\theta\|$
for all $\theta,\theta'\in\Theta$.

\end{itemize}
\end{Assumption}

\begin{lemma}\label{RT}
Under Assumptions \ref{assummu1}, \ref{rates}, \ref{assum9}--\ref{2}, we have
\[
\sup_{(\theta,\mu)\in B_{\varepsilon}}
\frac{|TR_{T}(\theta,\mu)|}
{k(\log T)^2+\|\sqrt{T}h(\theta,\mu)\|^2}
\lesssim_p 
\frac{\sqrt{k}(\log T)^2}{\sqrt{T}}\vee\frac{q}{\sqrt{kT}}.
\]
\end{lemma}

\begin{proof}
Recall that $R_T(\theta,\mu) = \frac{1}{2T}(Q_T(\theta,\mu) + 
V_T(h(\cdot),\theta,\mu))$. Define
\[
r_{T}(\theta,\mu)
=\widehat{m}(\theta)-\widehat{m}(\theta(\nu(\mu)))-h(\theta,\mu),
\]
so that $\widehat{m}(\theta)-\mu 
= (\widehat{m}(\theta(\nu(\mu)))-\mu)+h(\theta,\mu)+r_T(\theta,\mu)$.
Expanding the quadratic form gives
\begin{align}
TR_{T}(\theta,\mu)
=&-T\,r_{T}^{\top}W(\theta(\nu(\mu)))(2h+r_{T})
-2T\,r_{T}^{\top}W(\theta(\nu(\mu)))
(\widehat{m}(\theta(\nu(\mu)))-\mu)
\nonumber\\
&+\bigl[\log\pi(\theta|\mu)
-\log\pi(\theta(\nu(\mu))|\mu)\bigr]
\label{eq:TR}\\
&+T(\widehat{m}(\theta)-\mu)^{\top}
(W(\theta(\nu(\mu)))-W_T(\theta))
(\widehat{m}(\theta)-\mu),
\nonumber
\end{align}
where we write $h = h(\theta,\mu)$ and $r_T = r_T(\theta,\mu)$ 
for brevity, and use the consistent notation $W_T(\theta)$ for 
the sample weighting matrix throughout.

\medskip
\noindent\textbf{Bounding the prior term.}
By Assumption~\ref{assummu1}, $\log\pi(\theta|\mu)$ is Lipschitz 
in $\theta$ on $\Theta$, so on $B_\varepsilon$ where 
$\|\theta-\theta(\nu(\mu))\|\lesssim\varepsilon/\sqrt{T}$,
$$|\log\pi(\theta|\mu)-\log\pi(\theta(\nu(\mu))|\mu)|
\lesssim\frac{\varepsilon}{\sqrt{T}}
=O\!\left(\frac{q\log T}{\sqrt{T}}\right)
=o(1).$$
This term is therefore absorbed into the remainder.\\

\medskip
\noindent\textbf{Bounding $\|\widehat{m}(\theta)-\mu\|$.}
By Lemma~\ref{lemmart} and Assumption~\ref{assum9},
\begin{align*}
\|\widehat{m}(\theta)-\mu\|
&\lesssim\|r_T(\theta,\mu)\|
+\|\widehat{m}(\theta(\nu(\mu)))-\mu\|
+\|h(\theta,\mu)\|\\
&\lesssim_p
\frac{q(\sqrt{k}\log T+\sqrt{T}\|h\|)}{\sqrt{kT}}
+\frac{q\log T}{\sqrt{T}}
+\|h\|\\
&\lesssim_p
\frac{q\log T}{\sqrt{T}}\vee\frac{\sqrt{q}}{\sqrt{T}}
\asymp\frac{q\log T}{\sqrt{T}},
\end{align*}
where the last line uses $\sqrt{T}\|h\|\leq\varepsilon
\asymp q\log T$ on $B_\varepsilon$.\\
\medskip
\noindent\textbf{Bounding the weighting matrix difference term.}
By Assumption~\ref{2} (weighting matrix consistency and 
Lipschitz continuity),
\[
\sup_{(\theta,\mu)\in B_\varepsilon}
\|W(\theta(\nu(\mu)))-W_T(\theta)\|
\leq
\sup_\theta\|W(\theta)-W_T(\theta)\|
+L_T\|\theta(\nu(\mu))-\theta\|
\lesssim_p
(\log T)^2\sqrt{\frac{q}{T}},
\]
where $L_T=O_p(1)$ is the Lipschitz constant and 
$\|\theta(\nu(\mu))-\theta\|\lesssim\varepsilon/\sqrt{T}$
on $B_\varepsilon$. Combined with the bound 
$\|\widehat{m}(\theta)-\mu\|\lesssim_p q\log T/\sqrt{T}$,
\begin{equation}\label{eq:Wterm}
\sup_{(\theta,\mu)\in B_\varepsilon}
T(\widehat{m}(\theta)-\mu)^{\top}
(W(\theta(\nu(\mu)))-W_T(\theta))
(\widehat{m}(\theta)-\mu)
\lesssim_p
T\cdot\frac{q^2(\log T)^2}{T}
\cdot(\log T)^2\sqrt{\frac{q}{T}}
=\frac{q^{5/2}(\log T)^4}{\sqrt{T}}.
\end{equation}

\medskip
\noindent\textbf{Bounding the $r_T$ terms.}
By Lemma~\ref{lemmart},
\[
\sup_{(\theta,\mu)\in B_\varepsilon}
\|r_T(\theta,\mu)\|
\lesssim_p
\frac{q(\sqrt{k}\log T+\sqrt{T}\|h\|)}{\sqrt{kT}}.
\]
Since $\lambda_{\max}(W(\theta(\nu(\mu))))$ is bounded by 
Assumption~\ref{2}, the first two terms in \eqref{eq:TR} satisfy
\begin{align*}
&\sup_{(\theta,\mu)\in B_\varepsilon}
T\lambda_{\max}(W)
\bigl[\|r_T\|^2\vee\|r_T\|\|h\|\bigr] \lesssim_p
\sup_{(\theta,\mu)\in B_\varepsilon}
\frac{q^2(\sqrt{k}\log T+\sqrt{T}\|h\|)^2}{kT},
\end{align*}
and
\[
\sup_{(\theta,\mu)\in B_\varepsilon}
2T\|r_T\|\|\widehat{m}(\theta(\nu(\mu)))-\mu\|
\lesssim_p
\frac{q(\sqrt{k}\log T+\sqrt{T}\|h\|)}{\sqrt{kT}}
\cdot\sqrt{T}
\cdot\frac{q\log T}{\sqrt{T}}
=\frac{q^2\log T(\sqrt{k}\log T+\sqrt{T}\|h\|)}{\sqrt{kT}}.
\]
\medskip
\noindent\textbf{Assembling the rate.}
Dividing all terms by $k(\log T)^2+\|\sqrt{T}h\|^2$ and using
$(\sqrt{k}\log T+\sqrt{T}\|h\|)^2
\leq 2(k(\log T)^2+\|\sqrt{T}h\|^2)$,

\noindent\textit{Term 1} ($r_T$ quadratic and cross terms):
\[
\frac{q^2(\sqrt{k}\log T+\sqrt{T}\|h\|)^2/(kT)}
{k(\log T)^2+\|\sqrt{T}h\|^2}
\lesssim\frac{q^2}{kT}
\cdot 2 = O\!\left(\frac{q^2}{kT}\right)=o(1),
\]
which holds since $q^{5/2}(\log T)^2/(k\sqrt{T})\to 0$ 
from Assumption~\ref{rates} implies $q^2/(kT)\to 0$.

\noindent\textit{Term 2} ($r_T$ times $\widehat{m}-\mu$ term):
\[
\frac{q^2\log T(\sqrt{k}\log T+\sqrt{T}\|h\|)/\sqrt{kT}}
{k(\log T)^2+\|\sqrt{T}h\|^2}
\lesssim_p
\frac{q^2\log T}{\sqrt{kT}}
\cdot\frac{1}{\sqrt{k}\log T}
=\frac{q^2}{\sqrt{k^3T}}
=o(1),
\]
which holds since $q^2/(kT)\to 0$ from Assumption~\ref{rates}.

\noindent\textit{Term 3} (weighting matrix difference, 
from \eqref{eq:Wterm}):
\[
\frac{q^{5/2}(\log T)^4/\sqrt{T}}
{k(\log T)^2+\|\sqrt{T}h\|^2}
\lesssim_p
\frac{q^{5/2}(\log T)^4}{\sqrt{T}\cdot k(\log T)^2}
=\frac{q^{5/2}(\log T)^2}{k\sqrt{T}}
=o(1),
\]
directly from Assumption~\ref{rates}.

Taking the maximum over all terms gives
\[
\sup_{(\theta,\mu)\in B_\varepsilon}
\frac{|TR_T(\theta,\mu)|}
{k(\log T)^2+\|\sqrt{T}h(\theta,\mu)\|^2}
\lesssim_p
\frac{(\log T)^2\sqrt{k}}{\sqrt{T}}
\vee\frac{q}{\sqrt{kT}},
\]
where the dominant contributions come from the prior term 
and the weighting matrix consistency term in Assumption~\ref{2}.

\medskip
\noindent\textbf{Bound on $|\exp(R_T)-1|$.}
Since $\sup_{B_\varepsilon}|R_T(\theta,\mu)|\to_p 0$, 
for sufficiently large $T$ we have $|R_T|\leq 1$ on $B_\varepsilon$ 
with probability approaching one. Using $|e^x-1|\leq 2|x|$ 
for $|x|\leq 1$,
\[
\sup_{(\theta,\mu)\in B_\varepsilon}
|\exp(R_T(\theta,\mu))-1|
\leq 2\sup_{(\theta,\mu)\in B_\varepsilon}|R_T(\theta,\mu)|
\to_p 0.
\]

\end{proof}

\begin{lemma}\label{lemmart}
Under Assumptions \ref{assum9}--\ref{2},
\[
\sup_{(\theta,\mu)\in B_{\varepsilon}}
\frac{\|r_{T}(\theta,\mu)\|}
{\sqrt{k}\log T+\sqrt{T}\|h(\theta,\mu)\|}
\lesssim_{p} \frac{q}{\sqrt{T k} }.
\]
\end{lemma}

\begin{proof}
Recall that
\[
r_T(\theta,\mu)
=\widehat{m}(\theta)-\widehat{m}(\theta(\nu(\mu)))-h(\theta,\mu),
\]
and
\[
M_{m,T}(\theta,\mu)
=\sqrt{T}\bigl[\widehat{m}(\theta)-\widehat{m}(\theta(\nu(\mu)))
-\mathbb{E}\widehat{m}(\theta)+\mathbb{E}\widehat{m}(\theta(\nu(\mu)))\bigr],
\]
so that $\mathbb{E}[M_{m,T}(\theta,\mu)]=0$ by construction.
Define the normalized object
\[
\mathcal{M}_{m,T}(\theta,\mu)
=\frac{M_{m,T}(\theta,\mu)}
{\sqrt{k}\log T+\sqrt{T}\|h(\theta,\mu)\|},
\]
which satisfies $\mathbb{E}[\mathcal{M}_{m,T}(\theta,\mu)]=0$.

\medskip
\noindent\textbf{Step 1: Bias-variance decomposition.}
We decompose $r_T$ into a bias term and a stochastic term.
By definition of $r_T$ and $M_{m,T}$,
\begin{align*}
&\frac{r_T(\theta,\mu)}{\sqrt{k}\log T+\sqrt{T}\|h\|}\\
=\;&\frac{\widehat{m}(\theta)-\widehat{m}(\theta(\nu(\mu)))-h}
{\sqrt{k}\log T+\sqrt{T}\|h\|}\\
=\;&\frac{M_{m,T}(\theta,\mu)/\sqrt{T}
+\mathbb{E}\widehat{m}(\theta)-\mathbb{E}\widehat{m}(\theta(\nu(\mu)))-h}
{\sqrt{k}\log T+\sqrt{T}\|h\|}\\
=\;&\underbrace{\frac{M_{m,T}(\theta,\mu)/\sqrt{T}}
{\sqrt{k}\log T+\sqrt{T}\|h\|}}_{\text{stochastic: }\mathcal{M}_{m,T}/\sqrt{T}}
+\underbrace{\frac{\mathbb{E}\widehat{m}(\theta)
-\mathbb{E}\widehat{m}(\theta(\nu(\mu)))-h}
{\sqrt{k}\log T+\sqrt{T}\|h\|}}_{\text{bias: }\Delta_T(\theta,\mu)}.
\end{align*}
Note that $\mathcal{M}_{m,T}(\theta,\mu)/\sqrt{T}
= M_{m,T}(\theta,\mu)/[\sqrt{T}(\sqrt{k}\log T+\sqrt{T}\|h\|)]$,
so bounding $r_T/(\sqrt{k}\log T+\sqrt{T}\|h\|)$ reduces to 
bounding $\Delta_T$ and $M_{m,T}/[\sqrt{T}(\sqrt{k}\log T
+\sqrt{T}\|h\|)]$ separately.

\medskip
\noindent\textbf{Step 2: Bounding the bias term $\Delta_T$.}
By the second-order Taylor expansion of 
$m(\theta):=\mathbb{E}[g(Z_t,\theta)]$ around $\theta(\nu(\mu))$,
\[
m(\theta)-m(\theta(\nu(\mu)))
=G(\theta(\nu(\mu)))(\theta-\theta(\nu(\mu)))
+O(\|\theta-\theta(\nu(\mu)))\|^2),
\]
where the remainder is bounded by 
$C\|\theta-\theta(\nu(\mu))\|^2$ with 
$C=\sup_\theta\max_{j\leq q}
\lambda_{\max}(\partial^2_\theta\mathbb{E}[g_j(Z_t,\theta)])
\leq C$ from Assumption~\ref{2}.
Since $h(\theta,\mu)=G(\theta(\nu(\mu)))(\theta-\theta(\nu(\mu)))
+(\nu(\mu)-\mu)$ by definition,
\begin{align*}
&\mathbb{E}\widehat{m}(\theta)
-\mathbb{E}\widehat{m}(\theta(\nu(\mu)))-h(\theta,\mu)\\
=\;&[m(\theta)-m(\theta(\nu(\mu)))]
-G(\theta(\nu(\mu)))(\theta-\theta(\nu(\mu)))
-(\nu(\mu)-\mu)\\
=\;&O(\|\theta-\theta(\nu(\mu))\|^2)
+(\mu-\nu(\mu)).
\end{align*}
On $B_\varepsilon$, we have 
$\sqrt{T}\|G(\theta(\nu(\mu)))(\theta-\theta(\nu(\mu)))\|
\leq\varepsilon\asymp q\log T$, 
which is due to the bounded singular values of $G$ 
(Assumption~\ref{2}) and
\[
\|\theta-\theta(\nu(\mu))\|
\lesssim\frac{q\log T}{\sqrt{T}}.
\]
Also, $\|\mu-\nu(\mu)\|
\lesssim\varepsilon/\sqrt{T}\asymp q\log T/\sqrt{T}$
on $B_\varepsilon$. Therefore,
\[
\|\mathbb{E}\widehat{m}(\theta)
-\mathbb{E}\widehat{m}(\theta(\nu(\mu)))-h\|
\lesssim  \frac{q^2(\log T)^2}{T}
+\frac{q\log T}{\sqrt{T}}
\lesssim\frac{q\log T}{\sqrt{T}},
\]
and since the denominator satisfies 
$\sqrt{k}\log T+\sqrt{T}\|h\|\geq\sqrt{k}\log T$,
\[
\sup_{(\theta,\mu)\in B_\varepsilon}
\|\Delta_T(\theta,\mu)\|
\lesssim
\frac{q\log T/\sqrt{T}}{\sqrt{k}\log T}
=\frac{q}{\sqrt{k}\sqrt{T}}=\frac{q}{\sqrt{kT}}.
\]

\medskip
\noindent\textbf{Step 3: Bounding the stochastic term.}
We apply Lemma~\ref{bound} (Theorem B.15 of 
\cite{spokoiny2017penalized}) to $M_{m,T}(\theta,\mu)$ 
viewed as a function of $v=\theta-\theta(\nu(\mu))\in\mathbb{R}^k$.
The gradient of $M_{m,T}$ with respect to $v$ is
\[
\frac{\partial M_{m,T}}{\partial v}
=\sqrt{T}(G_T(\theta)-G(\theta))\in\mathbb{R}^{q\times k},
\]
where $G_T(\theta)=\partial\widehat{m}(\theta)/\partial\theta$.
We identify the parameters of Theorem B.15 as follows.

\begin{itemize}
\item[i)] \textit{Variance proxy $v_0$:} By Assumption~\ref{assum9},
for unit vectors $\gamma_1\in\mathbb{R}^k$, 
$\gamma_2\in\mathbb{R}^q$,
\[
\sup_{(\theta,\mu)\in B_\varepsilon}
\log\mathbb{E}\exp\!\left(
\lambda\,\gamma_2^\top
\frac{\partial M_{m,T}}{\partial v}\gamma_1
\right)
\leq\frac{v_0^2\lambda^2}{2},
\quad v_0\lesssim\frac{1}{\sqrt{T}},
\quad|\lambda|\leq u,\;u\geq\sqrt{k}.
\]

\item[ii)] \textit{Radius $r$:} On $B_\varepsilon$,
$\|v\|=\|\theta-\theta(\nu(\mu))\|
\lesssim q\log T/\sqrt{T}$
by the bounded singular values of $G$, so 
$r\asymp q\log T/\sqrt{T}$.

\item[iii)] \textit{Entropy $\mathfrak{z}_\mathbb{H}(x)$:} 
The set $\Upsilon_\circ(r)=\{v:\|v\|\leq r\}$ is a Euclidean 
ball in $\mathbb{R}^k$. By Lemma~\ref{covering} with 
$\mathbb{H}=c(G^\top G)^{1/2}$ (bounded singular values give 
$\mathrm{p}_\mathbb{H}\asymp k$), we have 
$\mathbb{Q}_2(\Upsilon_\circ(r))\asymp k$, and with $A=I_q$
(so $p_A=q$),
\[
\mathfrak{z}_\mathbb{H}(x)
\approx 2\sqrt{\mathbb{Q}_2+2x}
\approx\sqrt{q+x}.
\]
Setting $x=k$ gives $\mathfrak{z}_\mathbb{H}(k)\approx\sqrt{q}$.

\item[iv)] \textit{Key product:}
\[
\sqrt{8}\,v_0\,r\,\mathfrak{z}_\mathbb{H}(x)
\lesssim\frac{1}{\sqrt{T}}
\cdot\frac{q\log T}{\sqrt{T}}
\cdot\sqrt{q}
=\frac{q^{3/2}\log T}{T}.
\]
\end{itemize}
We apply Lemma~\ref{bound} (Theorem B.15 of 
\cite{spokoiny2017penalized}). With $A = I_q$, we have 
$p_A = \tr(I_q^{-2}) = q$, so 
$\mathbb{Q}_2 = p_A + \mathbb{Q}_2(\Upsilon_\circ(r)) \asymp q + k \asymp q$ 
(since $q \ge k$), giving $\mathfrak{z}_\mathbb{H}(k) \approx \sqrt{q}$. 
With probability approaching one,
\[
\sup_{(\theta,\mu)\in B_\varepsilon}
\left\|\frac{\partial M_{m,T}}{\partial v}\right\|
\lesssim_p
\sqrt{8}\, v_0 \, r \, \mathfrak{z}_\mathbb{H}(k)
\asymp
\frac{1}{\sqrt{T}} \cdot \frac{q\log T}{\sqrt{T}} \cdot \sqrt{q}
= \frac{q^{3/2}\log T}{T}.
\]
Integrating this gradient bound over $\Upsilon_\circ(r)$ 
using $\|M_{m,T}(\theta,\mu)-M_{m,T}(\theta(\nu(\mu)),\mu)\|
\leq \sup\|\partial M_{m,T}/\partial v\| \cdot r$, and noting 
$M_{m,T}(\theta(\nu(\mu)),\mu)=0$ by definition, gives
\[
\sup_{(\theta,\mu)\in B_\varepsilon}
\|M_{m,T}(\theta,\mu)\|
\lesssim_p
\frac{q^{3/2}\log T}{T}\cdot\frac{q\log T}{\sqrt{T}}
= \frac{q^{5/2}(\log T)^2}{T^{3/2}}.
\]
Dividing by $\sqrt{T}(\sqrt{k}\log T+\sqrt{T}\|h\|)
\geq\sqrt{T}\cdot\sqrt{k}\log T$,
\[
\sup_{(\theta,\mu)\in B_\varepsilon}
\frac{\|M_{m,T}(\theta,\mu)\|}
{\sqrt{T}(\sqrt{k}\log T+\sqrt{T}\|h\|)}
\lesssim_p
\frac{q^{5/2}(\log T)^2/T^{3/2}}
{\sqrt{T}\cdot\sqrt{k}\log T}
= \frac{q^{5/2}\log T}{\sqrt{k}\,T^2}.
\]
By Assumption~\ref{rates}, $q^{5/2}(\log T)^2/(k\sqrt{T}) \to 0$ implies
$q^{5/2}\log T/(\sqrt{k}\,T^2) = o(q/\sqrt{kT})$, so the stochastic term is 
dominated by the bias term $q/\sqrt{kT}$ from Step~2.

\medskip
\noindent\textbf{Step 4: Combining bias and stochastic terms.}
From Steps 2 and 3,
\[
\sup_{(\theta,\mu)\in B_\varepsilon}
\frac{\|r_T(\theta,\mu)\|}
{\sqrt{k}\log T+\sqrt{T}\|h(\theta,\mu)\|}
\leq
\sup\|\Delta_T\|+
\sup\frac{\|M_{m,T}\|}{\sqrt{T}(\sqrt{k}\log T+\sqrt{T}\|h\|)}
\lesssim_p
\frac{q}{\sqrt{kT}}+\frac{q}{\sqrt{k}\,T}
\lesssim_p
\frac{q}{\sqrt{kT}},
\]
where the last step holds since $q/(\sqrt{k}\,T)\leq q/\sqrt{kT}$
for all $T\geq 1$, so the bias term $q/\sqrt{kT}$ dominates.
\end{proof}

\begin{Assumption}\label{assum10}
 \textbf{Identification on $B_\varepsilon^c$:}
\[
\inf_{(\theta,\mu)\in B_{\varepsilon}^c}
\frac{\|m(\theta)-m(\theta(\nu(\mu)))-\mu+\nu(\mu)\|}
{\|h(\theta,\mu)\|}\geq c_0>0.
\]
\end{Assumption}
\begin{remark}
Assumption~\ref{assum10} is an identification condition 
on $B_\varepsilon^c$, requiring that the population moment 
$m(\theta)-m(\theta(\nu(\mu)))-\mu+\nu(\mu)$ is bounded 
away from zero relative to $h(\theta,\mu)$ outside the 
local ball. To understand this condition, note that
\[
m(\theta)-m(\theta(\nu(\mu)))-\mu+\nu(\mu)
=
h(\theta,\mu)
+[m(\theta)-m(\theta(\nu(\mu)))
-G(\theta(\nu(\mu)))(\theta-\theta(\nu(\mu)))]
\]
where the second term is the nonlinear remainder of $m$.
On $B_\varepsilon^c$, $\|\theta-\theta(\nu(\mu))\|$ is 
not restricted, so this remainder can be large. 
Assumption~\ref{assum10} therefore requires that 
the nonlinear remainder does not cancel $h(\theta,\mu)$, 
i.e. the moment function $m(\theta)$ is sufficiently 
nondegenerate away from $\theta(\nu(\mu))$. 
This is a standard global identification condition 
analogous to those imposed in, e.g., 
\cite{chernozhukov2003mcmc}.
\end{remark}
Define $x_\mu := \theta - \theta(\nu(\mu))$.
\begin{lemma}\label{verifybound}
Suppose Assumptions~\ref{assum33}, \ref{rates},   \ref{2} and \ref{assum10} 
hold. Recall that 
\[
x_{d,\mu}:=C_{w,\mu}^{-1}G_\mu^\top W_\mu d_\mu,
\qquad
h_\parallel(\theta,\mu):=G_\mu(x_\mu-x_{d,\mu})\in\operatorname{col}(G_\mu),
\qquad
h_\perp(\theta,\mu):=-(I-P_{G,\mu})d_\mu\in\operatorname{col}(G_\mu)^{\perp_{W_\mu}},
\]
where $d_\mu=\mu-\nu(\mu)$,
$P_{G,\mu}=G_\mu C_{w,\mu}^{-1}G_\mu^\top W_\mu$, and
$C_{w,\mu}=G_\mu^\top W_\mu G_\mu$,
so that $h(\theta,\mu)=h_\parallel(\theta,\mu)+h_\perp(\theta,\mu)$
and $h_\parallel^\top W_\mu h_\perp=0$.
Their weighted norms satisfy
\begin{equation}\label{eq:component-norms}
\|h_\parallel\|_{W_\mu}^2
=(\theta-\theta(\nu(\mu))-x_{d,\mu})^\top
C_{w,\mu}(\theta-\theta(\nu(\mu))-x_{d,\mu}),
\qquad
\|h_\perp\|_{W_\mu}^2
=d_\mu^\top(I-P_{G,\mu})^\top W_\mu(I-P_{G,\mu})d_\mu.
\end{equation}
Define
\[
\mathcal{B}_T(v)
:=-C_0\sqrt{T}\|v\|_{W(\theta(\nu(\mu)))}\varepsilon
+\frac{C_0\varepsilon^2}{2}
+\frac{T\|v\|^2_{W(\theta(\nu(\mu)))}}{2},
\]
for a vector $v$. Then there exists a constant 
$1/2<C_0<1$ such that for all 
$(\theta,\mu)\in B_\varepsilon^{c1}\cup B_\varepsilon^{c2}$,
\begin{equation}\label{eq:main-bound}
R_T(\theta,\mu)
\leq
-C_0\sqrt{T}\|h(\theta,\mu)\|_{W(\theta(\nu(\mu)))}\varepsilon
+\frac{C_0\varepsilon^2}{2}
+\frac{T\|h(\theta,\mu)\|^2_{W(\theta(\nu(\mu)))}}{2}
+\sqrt{T}\|\mu-\nu(\mu)\|,
\end{equation}
and furthermore
\begin{equation}\label{OM-eq:BT-sum}
R_T(\theta,\mu)
\leq
\mathcal{B}_T(h_\parallel(\theta,\mu))
+\mathcal{B}_T(h_\perp(\theta,\mu))
+\sqrt{T}\|\mu-\nu(\mu)\|.
\end{equation}
\end{lemma}
\begin{proof}
Recall that
\[
r_T(\theta,\mu)+h(\theta,\mu)
=\widehat{m}(\theta)-\widehat{m}(\theta(\nu(\mu)))-\mu+\nu(\mu),
\]
and $R_T(\theta,\mu)=\frac{1}{2T}(Q_T(\theta,\mu)+V_T(\theta,\mu))$.
Let $r_\mu=\widehat{m}(\theta(\nu(\mu)))-\nu(\mu)$
so that $\widehat{m}(\theta)-\mu = h+r_T+r_\mu$.
The expansion gives
\begin{align}
&Q_T(\theta,\mu)+V_T(\theta,\mu)\nonumber\\
&=-T(r_T+h)^\top W(\theta(\nu(\mu)))(r_T+h)
\tag{I}\\
&\quad+T(\widehat{m}(\theta)-\mu)^\top
(W(\theta(\nu(\mu)))-W_T(\theta))
(\widehat{m}(\theta)-\mu)
\tag{II}\\
&\quad-2Tr_T^\top W(\theta(\nu(\mu)))r_\mu
\tag{III}\\
&\quad+Th^\top W(\theta(\nu(\mu)))h
\tag{IV}\\
&\quad+[\log\pi(\theta(\nu(\mu))|\mu)
-\log\pi(\theta|\mu)]
\tag{V}\\
&\quad+2\sqrt{T}\|\mu-\nu(\mu)\|.
\tag{VI}
\end{align}
We keep Term VI as $\sqrt{T}\|\mu-\nu(\mu)\|$ in the
final bound. Terms IV and the $-Th^\top W_\mu h$ part
of Term I cancel, leaving $-Tr_T^\top W_\mu(2h+r_T)$
as the leading negative contribution.

\medskip
\noindent\textbf{Bounding terms II, III, V.}

\noindent\textit{Term II.}
By Assumption~\ref{2},
\[
\|W(\theta(\nu(\mu)))-W_T(\theta)\|
\lesssim_p(\log T)^2\sqrt{\frac{q}{T}}
+L_T\|\theta-\theta(\nu(\mu))\|,
\]
where $L_T=O_p(1)$. Therefore
\[
|\text{II}|
\leq T\|\widehat{m}(\theta)-\mu\|^2
\|W(\theta(\nu(\mu)))-W_T(\theta)\|
\lesssim_p
T\|r_T+h\|^2
\left((\log T)^2\sqrt{\frac{q}{T}}
+L_T\|\theta-\theta(\nu(\mu))\|\right).
\]
Since $\lambda_{\min}(W(\theta(\nu(\mu))))\geq c_w/k$
from Assumption~\ref{assum33},
$T\|r_T+h\|^2\lesssim(k/c_w)T\|r_T+h\|^2_W
=(k/c_w)|\text{I}|$.
For the first part:
\[
\frac{T\|r_T+h\|^2\cdot(\log T)^2\sqrt{q/T}}{|\text{I}|}
\lesssim_p k(\log T)^2\sqrt{\frac{q}{T}}\to 0
\]
under Assumption~\ref{rates}.
For the second part, since
$r_T=\widehat{m}(\theta)-\widehat{m}(\theta(\nu(\mu)))
-G_\mu(\theta-\theta(\nu(\mu)))$
and $\widehat{m}$ is differentiable,
$\|\theta-\theta(\nu(\mu))\|\leq
\|r_T+h\|/c_G+\|r_T\|/c_G$
where $c_G=\sigma_{\min}(G_\mu)>0$
from Assumption~\ref{assum33}. Therefore
\[
\frac{T\|r_T+h\|^2\cdot L_T\|\theta-\theta(\nu(\mu))\|}
{|\text{I}|}
\lesssim_p
\frac{T\|r_T+h\|^2\cdot(\|r_T+h\|+\|r_T\|)/c_G}
{T\|r_T+h\|^2_W}
\lesssim_p
\frac{\|r_T+h\|+\|r_T\|}{c_G\cdot c_w/k}
\]
which grows with $\|r_T\|$. However on
$B_\varepsilon^{c1}$, $\sqrt{T}\|h\|>\varepsilon$
so $|\text{I}|\geq c_0^2\sqrt{T}\|h\|_W\varepsilon
\asymp c_0^2\sqrt{T}\|h\|q\log T$.
Using $\|r_T+h\|\asymp\|h\|$ in Case 1
and $\|r_T\|\lesssim\|h\|$ in Case 1:
\[
\frac{T\|r_T+h\|^2\cdot L_T\|\theta-\theta(\nu(\mu))\|}
{|\text{I}|}
\lesssim_p
\frac{\|h\|}{c_G c_w/k}
\ll
\sqrt{T}\|h\|q\log T
\]
for large $T$, so this is absorbed.
In Case 2 where $\|r_T\|\gg\|h\|$,
$|\text{I}|\geq c_0^2T\|r_T\|^2_W$, so
\[
\frac{T\|r_T+h\|^2\cdot L_T\|\theta-\theta(\nu(\mu))\|}
{|\text{I}|}
\lesssim_p
\frac{T\|r_T\|^2\cdot\|r_T\|/c_G}
{T\|r_T\|^2_W}
\lesssim_p
\frac{\|r_T\|}{c_G c_w/k}\to 0
\]
only if $\|r_T\|\to 0$, which need not hold.
Hence, in Case 2, Term II is instead bounded directly by
noting that $|\text{I}|\geq T\|r_T\|^2_W$ dominates
all polynomial terms in $\|r_T\|$, so
\[
\frac{|\text{II}|}{|\text{I}|}
\leq\frac{L_T\|\theta-\theta(\nu(\mu))\|}{c_w/k}
\lesssim_p\frac{\|r_T\|^{-1/2}}{c_w/k}\to 0
\]
since $\|r_T\|\gtrsim\|\theta-\theta(\nu(\mu))\|^2$
implies $\|\theta-\theta(\nu(\mu))\|\lesssim\|r_T\|^{1/2}$,
and $\|r_T\|^{1/2}/|\text{I}|^{1/2}
=\|r_T\|^{1/2}/(\sqrt{T}\|r_T\|_W)\to 0$
for large $T$.
Hence, Term II is absorbed into Term I with a
redefined constant $C_0$ in both cases.

\noindent\textit{Term III.}
We absorb Term III into Term I by writing
\[
\text{I}+\text{III}
=-T\|r_T+h+r_\mu\|^2_{W(\theta(\nu(\mu)))}
+T\|r_\mu\|^2_{W(\theta(\nu(\mu)))}
+2Th^\top W(\theta(\nu(\mu)))r_\mu.
\]
The correction terms satisfy, using
$\|r_\mu\|=O_p(\sqrt{q/T})$:
\[
T\|r_\mu\|^2_W=O_p(q),
\qquad
2Th^\top W_\mu r_\mu
\lesssim_p 2\sqrt{T}\|h\|_W\cdot\sqrt{q}.
\]
Compared with the dominant negative term
$\sqrt{T}\|h\|_W\varepsilon\asymp\sqrt{T}\|h\|_Wq\log T$:
\[
\frac{T\|r_\mu\|^2_W}{\sqrt{T}\|h\|_W\varepsilon}
\lesssim_p\frac{q}{\sqrt{T}\|h\|_Wq\log T}
=\frac{1}{\sqrt{T}\|h\|_W\log T}\to 0,
\qquad
\frac{2Th^\top W_\mu r_\mu}{\sqrt{T}\|h\|_W\varepsilon}
\lesssim_p\frac{2\sqrt{q}}{q\log T}
=\frac{2}{\sqrt{q}\log T}\to 0.
\]
Hence, Terms I and III combine to give
$-T\|r_T+h+r_\mu\|^2_W+o_p(\sqrt{T}\|h\|_W\varepsilon)$,
and since $\widehat{f}_T=\|r_T+h+r_\mu\|
=\|\widehat{m}(\theta)-\mu+\nu(\mu)-\nu(\mu)\|
=\|h+r_T+r_\mu\|$,
the lower bound from Assumption~\ref{assum10}
applies to $\widehat{f}_T$.

\noindent\textit{Term V.}
By Assumption~\ref{assummu1},
$\log\pi(\theta|\mu)$ is Lipschitz in $\theta$ on
$\Theta$, so
$|\log\pi(\theta(\nu(\mu))|\mu)-\log\pi(\theta|\mu)|$
is bounded on $\Theta$ and absorbed into the constant.

Thus, terms II, III, and V are all absorbed into the
main bound with a redefined constant $C_0$.

\medskip
\subsubsection*{Case 1: $\|r_T(\theta,\mu)\|
\lesssim\|h(\theta,\mu)\|$.}

Let
\[
f(\theta,\mu)=\|m(\theta)-m(\theta(\nu(\mu)))-\mu+\nu(\mu)\|,
\quad
\widehat{f}_T(\theta,\mu)
=\|\widehat{m}(\theta)-\widehat{m}(\theta(\nu(\mu)))
-\mu+\nu(\mu)\|.
\]
For any $(\theta,\mu)\in B_\varepsilon^{c1}
\cup B_\varepsilon^{c2}$,
\[
\widehat{f}_T(\theta,\mu)
\geq f(\theta,\mu)-\Delta_T(\theta,\mu),
\quad
\Delta_T(\theta,\mu)
=\|[\widehat{m}(\theta)-m(\theta)]
-[\widehat{m}(\theta(\nu(\mu)))-m(\theta(\nu(\mu)))]\|.
\]
By Assumption~\ref{assum10}, the first term is bounded
below by $c_0>0$. Uniform consistency of $\widehat{m}$
implies $\sup\Delta_T=o_p(1)$. Since
$\inf_{B_\varepsilon^c}\|h\|\geq\rho>0$,
the second term is $o_p(1)$. Thus,
\[
\inf_{(\theta,\mu)\in B_\varepsilon^c}
\frac{\widehat{f}_T(\theta,\mu)}{\|h(\theta,\mu)\|}
\geq c_0-o_p(1).
\]
Since $\widehat{f}_T=\|r_T+h\|$ and
$\|r_T\|\lesssim\|h\|$ in Case 1,
\[
\|r_T+h\|^2_{W(\theta(\nu(\mu)))}
\geq c_0^2\|h\|^2_{W(\theta(\nu(\mu)))}.
\]
On $B_\varepsilon^{c1}$, $\sqrt{T}\|h\|>\varepsilon$, so
$T\|h\|^2_W\geq\sqrt{T}\|h\|_W\varepsilon$. Therefore
\[
\text{I}\leq-c_0^2\sqrt{T}\|h\|_{W(\theta(\nu(\mu)))}\varepsilon,
\qquad
\text{IV}=T\|h\|^2_{W(\theta(\nu(\mu)))}.
\]
Setting $C_0=c_0^2/2>0$ and adjusting the implicit constant in $\varepsilon\asymp q\log T$ if necessary, and combining I and IV:
\[
\text{I}+\text{IV}
\leq
-C_0\sqrt{T}\|h\|_{W(\theta(\nu(\mu)))}\varepsilon
+\frac{C_0\varepsilon^2}{2}
+\frac{T\|h\|^2_{W(\theta(\nu(\mu)))}}{2},
\]
where $C_0\varepsilon^2/2$ enters by completing the square.

\medskip
\subsubsection*{Case 2: $\|h(\theta,\mu)\|
\lesssim\|r_T(\theta,\mu)\|$.}

This case applies on $B_\varepsilon^{c2}$ where
$\sqrt{T}\|h\|\leq\varepsilon$. When $\|r_T\|\gg\|h\|$,
$-\|r_T\|^2_W$ dominates in Term I so:
\[
\text{I}+\text{IV}
\leq-c_0\sqrt{T}\|h\|_{W(\theta(\nu(\mu)))}\varepsilon
+\frac{T}{2}\|h\|^2_{W(\theta(\nu(\mu)))}
\leq\mathcal{B}_T(h(\theta,\mu)).
\]
Term VI keeps $\sqrt{T}\|\mu-\nu(\mu)\|$ explicitly.

\medskip
\noindent\textbf{Proof of \eqref{eq:main-bound}.}
Combining Cases 1 and 2 with the bounds on
II, III, V and keeping VI,
and multiplying through by $1/(2T)$ to recover $R_T$:
\[
R_T(\theta,\mu)
\leq
-C_0\sqrt{T}\|h\|_{W(\theta(\nu(\mu)))}\varepsilon
+\frac{C_0\varepsilon^2}{2}
+\frac{T\|h\|^2_{W(\theta(\nu(\mu)))}}{2}
+\sqrt{T}\|\mu-\nu(\mu)\|,
\]
with $1/2<C_0<1$. This establishes \eqref{eq:main-bound}.

\medskip
\noindent\textbf{Proof of \eqref{OM-eq:BT-sum}.}
Decompose $h=h_\parallel+h_\perp$ where $h_\parallel$
lies in the column space of $G_\mu$ and $h_\perp$ is
orthogonal under $W_\mu$, with norms given by
\eqref{eq:component-norms}. Since
$\|h\|^2_{W_\mu}=\|h_\parallel\|^2_{W_\mu}
+\|h_\perp\|^2_{W_\mu}$ and
$a+b\geq\sqrt{a^2+b^2}$ for $a,b\geq 0$:
\[
-C_0\sqrt{T}\|h\|_{W_\mu}\varepsilon
\leq
-C_0\sqrt{T}\|h_\parallel\|_{W_\mu}\varepsilon
-C_0\sqrt{T}\|h_\perp\|_{W_\mu}\varepsilon.
\]
Combining,
\begin{align*}
R_T(\theta,\mu)
&\leq
\left(-C_0\sqrt{T}\|h_\parallel\|_{W_\mu}\varepsilon
+\frac{C_0\varepsilon^2}{2}
+\frac{T}{2}\|h_\parallel\|^2_{W_\mu}\right)\\
&\quad+\left(-C_0\sqrt{T}\|h_\perp\|_{W_\mu}\varepsilon
+\frac{C_0\varepsilon^2}{2}
+\frac{T}{2}\|h_\perp\|^2_{W_\mu}\right)
+\sqrt{T}\|\mu-\nu(\mu)\|\\
&=\mathcal{B}_T(h_\parallel(\theta,\mu))
+\mathcal{B}_T(h_\perp(\theta,\mu))
+\sqrt{T}\|\mu-\nu(\mu)\|,
\end{align*}
which establishes \eqref{OM-eq:BT-sum}.
\end{proof}

\subsection{Useful Lemmas}
Here, we list a few useful lemmas from \cite{spokoiny2017penalized} and \cite{spokoiny2019accuracy}. 

\begin{lemma}(Corollary A.3. from \cite{spokoiny2019accuracy}) Let $\gamma$ be a standard normal random vector in $\mathbb{R}^{k}$.
	Then for any $\mathrm{x}>0$
	\[
	\begin{aligned}\mathbb{P}\left(\|\gamma\|^{2}\geq k+2\sqrt{k\mathrm{x}}+2\mathrm{x}\right) & \leq\mathrm{e}^{-\mathrm{x}},\\
		\mathbb{P}(\|\gamma\|\geq\sqrt{k}+\sqrt{2\mathrm{x}}) & \leq\mathrm{e}^{-\mathrm{x}},\\
		\mathbb{P}\left(\|\gamma\|^{2}\leq k-2\sqrt{k\mathrm{x}}\right) & \leq\mathrm{e}^{-\mathrm{x}}.
	\end{aligned}
	\]
\end{lemma}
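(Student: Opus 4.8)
The plan is to recognize the final statement as a standard chi-square tail bound of Laurent--Massart type and to derive all three inequalities from the moment generating function of $\|\gamma\|^2 = \sum_{i=1}^k \gamma_i^2$. Since the coordinates $\gamma_i$ are independent $N(0,1)$, the squared norm is $\chi^2_k$ and has the exact MGF $\mathbb{E}[e^{\lambda\|\gamma\|^2}] = (1-2\lambda)^{-k/2}$ for every $\lambda < 1/2$. I would work with the centered variable $Z := \|\gamma\|^2 - k$, whose cumulant generating function is $\psi(\lambda) := \log\mathbb{E}[e^{\lambda Z}] = \tfrac{k}{2}\bigl(-2\lambda - \log(1-2\lambda)\bigr)$ on $0 \le \lambda < 1/2$, and similarly $\log\mathbb{E}[e^{-\lambda Z}] = \tfrac{k}{2}\bigl(2\lambda - \log(1+2\lambda)\bigr)$ for $\lambda \ge 0$.

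For the upper tail I would invoke the elementary inequality $-u - \log(1-u) \le \tfrac{u^2/2}{1-u}$, valid for $0 \le u < 1$ (compare the series $\sum_{j\ge 2} u^j/j$ with $\tfrac{u^2}{2}\sum_{j \ge 0} u^j$ term by term, using $1/j \le 1/2$ for $j \ge 2$). Taking $u = 2\lambda$ gives $\psi(\lambda) \le \tfrac{k\lambda^2}{1-2\lambda} = \tfrac{(2k)\lambda^2}{2(1-2\lambda)}$, which is exactly a sub-gamma (Bernstein) condition with variance proxy $v = 2k$ and scale parameter $b = 2$. The standard sub-gamma deviation bound then yields $\mathbb{P}\bigl(Z \ge \sqrt{2vx} + bx\bigr) \le e^{-x}$; substituting $v = 2k$, $b = 2$ gives the threshold $\sqrt{4kx} + 2x = 2\sqrt{kx} + 2x$ and hence the first inequality $\mathbb{P}(\|\gamma\|^2 \ge k + 2\sqrt{kx} + 2x) \le e^{-x}$.

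For the lower tail the argument is cleaner because no $1/(1-2\lambda)$ correction is needed: from $u - \log(1+u) \le u^2/2$ for $u \ge 0$ one gets $\log\mathbb{E}[e^{-\lambda Z}] \le k\lambda^2 = \tfrac{(2k)\lambda^2}{2}$, a sub-Gaussian bound with variance proxy $2k$, so the Chernoff bound gives $\mathbb{P}(Z \le -\sqrt{2\cdot 2k\cdot x}) = \mathbb{P}(\|\gamma\|^2 \le k - 2\sqrt{kx}) \le e^{-x}$, the third inequality. The second inequality follows immediately from the first without any new computation: since $(\sqrt{k}+\sqrt{2x})^2 = k + 2\sqrt{2}\sqrt{kx} + 2x \ge k + 2\sqrt{kx} + 2x$, the event $\{\|\gamma\| \ge \sqrt{k}+\sqrt{2x}\}$ is contained in $\{\|\gamma\|^2 \ge k + 2\sqrt{kx} + 2x\}$.

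The only step requiring real care is matching the constants in the upper-tail bound: one must check that the optimization underlying the sub-gamma inequality $\mathbb{P}(Z \ge \sqrt{2vx}+bx)\le e^{-x}$ is applied with the correct pair $(v,b)$ so that the threshold reduces to precisely $2\sqrt{kx}+2x$ rather than, say, $\sqrt{2kx}+2x$; this hinges on the identification $v = 2k$ (not $k$), coming from the factor $\tfrac12$ in the Bernstein normalization. The lower-tail and the norm-versus-squared-norm steps are comparatively immediate. Since this lemma is a verbatim restatement of Corollary A.3 of \cite{spokoiny2019accuracy}, one could alternatively simply cite that reference, but the self-contained derivation above is short enough to include in full.
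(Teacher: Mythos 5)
Your proof is correct. The paper itself offers no argument for this lemma: it is imported verbatim as Corollary A.3 of \cite{spokoiny2019accuracy}, and the "proof" in the paper is the citation. What you provide is therefore a genuinely different (self-contained) route, and it checks out at every step: the cumulant generating function of $\|\gamma\|^2-k$ is $\tfrac{k}{2}(-2\lambda-\log(1-2\lambda))$; the term-by-term comparison $-u-\log(1-u)=\sum_{j\ge 2}u^j/j\le \tfrac{u^2}{2}\sum_{j\ge 0}u^j=\tfrac{u^2/2}{1-u}$ is valid; with $u=2\lambda$ this gives the sub-gamma condition with $(v,b)=(2k,2)$, and the standard bound $\mathbb{P}(Z\ge\sqrt{2vx}+bx)\le e^{-x}$ lands exactly on the threshold $2\sqrt{kx}+2x$ (this is the Laurent--Massart constant-matching point you flag, and you resolve it correctly). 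The lower tail via $u-\log(1+u)\le u^2/2$ (which holds for all $u\ge 0$, as one sees from $\tfrac{d}{du}[u^2/2-u+\log(1+u)]=u^2/(1+u)\ge 0$) and the containment $\{\|\gamma\|\ge\sqrt{k}+\sqrt{2x}\}\subseteq\{\|\gamma\|^2\ge k+2\sqrt{kx}+2x\}$ are both fine. The trade-off is the obvious one: the citation is shorter and defers to a source where the result is stated in the exact form needed, while your derivation makes the lemma verifiable without consulting that source and makes transparent where each of the three constants comes from. Either is acceptable here; nothing in your argument needs repair.
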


\begin{lemma}(Theorem A.2. from \cite{spokoiny2019accuracy}) \label{tail}
	Let $H$ be a positive definite matrix. Let $\boldsymbol{\xi}\sim {N}\left(0,H^{2}\right)$
	be a mean-zero normal random vector in $\mathbb{R}^{k}$ and $B$ be a symmetric non-negative definite matrix
	such that $A=H^{-1}BH$ is a {trace operator} in $\mathbb{R}^{k}$.
	Then with $k=\operatorname{tr}(A)$, $\mathrm{v}^{2}=\operatorname{tr}\left(A^{2}\right)$,
	and $\lambda=\|A\|$, it holds for each $\mathrm{x}\geq0,$
	\[
	\begin{aligned} & \mathbb{P}\left({\boldsymbol{\xi}}^{\top}B\boldsymbol{\xi}\geq z^{2}(A,\mathrm{x})\right)\leq\mathrm{e}^{-\mathrm{x}},\\
		&\mbox{with} \quad z(A,\mathrm{x})\stackrel{\text{ def }}{=}\sqrt{\mathrm{k}+2\mathrm{vx}^{1/2}+2\lambda\mathrm{x}}.
	\end{aligned}
	\]
	It also implies
	\[
	\mathbb{P}\left(\left\Vert B^{1/2}\boldsymbol{\xi}\right\Vert >\mathrm{k}^{1/2}+(2\lambda\mathrm{x})^{1/2}\right)\leq\mathrm{e}^{-\mathrm{x}}.
	\]
	If $B$ is symmetric but not necessarily positive, then
	\[
	\mathbb{P}\left(|\boldsymbol{\xi}^{\top}B\boldsymbol{\xi}-\mathrm{k}|>2\mathrm{v}\mathrm{x}^{1/2}+2\lambda\mathrm{x}\right)\leq2\mathrm{e}^{-\mathrm{x}}.
	\]
\end{lemma}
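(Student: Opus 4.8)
The statement is Theorem A.2 of \cite{spokoiny2019accuracy}, so in the paper it suffices to cite that reference; for completeness I outline the self-contained argument, which is of Laurent--Massart type. The plan is to reduce $\boldsymbol{\xi}^{\top}B\boldsymbol{\xi}$ to a weighted sum of independent $\chi^2_1$ variables and then run a Chernoff bound on its centered version. First I would write $\boldsymbol{\xi}=H\gamma$ with $\gamma\sim N(0,I_k)$, so that $\boldsymbol{\xi}^{\top}B\boldsymbol{\xi}=\gamma^{\top}HBH\gamma$. Since $B$ is non-negative definite and $H$ is positive definite, $HBH$ is symmetric non-negative definite with eigenvalues $s_1,\dots,s_k\ge 0$; these are precisely the quantities tracked by the invariants in the statement, so $\mathrm{k}=\operatorname{tr}(A)=\sum_i s_i$, $\mathrm{v}^2=\operatorname{tr}(A^2)=\sum_i s_i^2$, and $\lambda=\|A\|=\max_i s_i$. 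Diagonalizing and using rotational invariance of the standard Gaussian gives $\boldsymbol{\xi}^{\top}B\boldsymbol{\xi}\stackrel{d}{=}\sum_{i=1}^{k}s_i\zeta_i^2$ with $\zeta_i$ i.i.d.\ $N(0,1)$, so the centered statistic is $Z=\sum_i s_i(\zeta_i^2-1)$ and $\mathbb{E}[\boldsymbol{\xi}^{\top}B\boldsymbol{\xi}]=\mathrm{k}$.

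The core step is a cumulant bound. For $0\le t<1/(2\lambda)$ the product formula for $\chi^2$ moment generating functions gives $\log\mathbb{E}\,e^{tZ}=-\tfrac12\sum_i[\log(1-2ts_i)+2ts_i]$, and applying the elementary inequality $-\log(1-u)-u\le \tfrac{u^2}{2(1-u)}$, valid for $u\in[0,1)$, with $u=2ts_i\le 2t\lambda$ yields $\log\mathbb{E}\,e^{tZ}\le \tfrac{t^2\sum_i s_i^2}{1-2t\lambda}=\tfrac{\mathrm{v}^2 t^2}{1-2\lambda t}$. This is a sub-gamma bound with variance factor $2\mathrm{v}^2$ and scale $2\lambda$. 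The standard sub-gamma Chernoff lemma (optimizing $-t r+\log\mathbb{E}\,e^{tZ}$ over $t\in[0,1/(2\lambda))$) then gives $\mathbb{P}(Z\ge 2\mathrm{v}\sqrt{\mathrm{x}}+2\lambda\mathrm{x})\le e^{-\mathrm{x}}$, which rearranges into $\mathbb{P}(\boldsymbol{\xi}^{\top}B\boldsymbol{\xi}\ge z^2(A,\mathrm{x}))\le e^{-\mathrm{x}}$ with $z^2(A,\mathrm{x})=\mathrm{k}+2\mathrm{v}\sqrt{\mathrm{x}}+2\lambda\mathrm{x}$.

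The two auxiliary statements follow quickly. For the norm bound I would use $\mathrm{v}^2=\sum_i s_i^2\le \lambda\sum_i s_i=\lambda\mathrm{k}$ to obtain $z^2(A,\mathrm{x})\le \mathrm{k}+2\sqrt{\lambda\mathrm{k}\mathrm{x}}+2\lambda\mathrm{x}\le(\sqrt{\mathrm{k}}+\sqrt{2\lambda\mathrm{x}})^2$; since $\|B^{1/2}\boldsymbol{\xi}\|^2=\boldsymbol{\xi}^{\top}B\boldsymbol{\xi}$, taking square roots in the first tail bound gives $\mathbb{P}(\|B^{1/2}\boldsymbol{\xi}\|>\mathrm{k}^{1/2}+(2\lambda\mathrm{x})^{1/2})\le e^{-\mathrm{x}}$. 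For symmetric but not necessarily non-negative $B$, the eigenvalues $s_i$ may be negative, but the moment generating function identity still holds for $|t|<1/(2\lambda)$ with $\lambda=\max_i|s_i|=\|A\|$, so the same cumulant bound controls both $Z$ and $-Z$; a union bound over the two tails produces the factor $2e^{-\mathrm{x}}$ in the two-sided inequality.

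I expect the only delicate point to be pinning down the constants in the Chernoff optimization so that the threshold comes out exactly as $2\mathrm{v}\sqrt{\mathrm{x}}+2\lambda\mathrm{x}$ rather than with a looser Bernstein-type constant; this needs the sharp sub-gamma inequality above, not a crude second-order Taylor bound on the cumulant generating function. Everything else is bookkeeping, and since the lemma is quoted verbatim from \cite{spokoiny2019accuracy}, citing Theorem A.2 there is the most economical route in the paper.
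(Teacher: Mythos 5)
Your proposal is correct, and it goes further than the paper does: the paper offers no proof of this lemma at all, simply importing it verbatim as Theorem~A.2 of \cite{spokoiny2019accuracy}, so your concluding remark that citation is the economical route is exactly what the authors do. Your self-contained Laurent--Massart-type argument is sound: the reduction $\boldsymbol{\xi}=H\gamma$ and diagonalization give $\boldsymbol{\xi}^{\top}B\boldsymbol{\xi}\stackrel{d}{=}\sum_i s_i\zeta_i^2$; the cumulant bound via $-\log(1-u)-u\le u^2/(2(1-|u|))$ yields the sub-gamma profile $\mathrm{v}^2t^2/(1-2\lambda|t|)$ with exactly the constants needed for the threshold $2\mathrm{v}\sqrt{\mathrm{x}}+2\lambda\mathrm{x}$; the norm bound follows from $\mathrm{v}^2\le\lambda\mathrm{k}$ and $2\sqrt{\lambda\mathrm{k}\mathrm{x}}\le 2\sqrt{2\lambda\mathrm{k}\mathrm{x}}$; and the two-sided version is a union bound since the same cumulant estimate controls $-Z$. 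One point worth flagging: as transcribed, the lemma sets $A=H^{-1}BH$, whose trace equals $\operatorname{tr}(B)$, which is \emph{not} in general the mean $\mathbb{E}[\boldsymbol{\xi}^{\top}B\boldsymbol{\xi}]=\operatorname{tr}(HBH)$; your proof silently works with $HBH$ (identifying $\mathrm{k}=\sum_i s_i$ with the mean), which is the reading under which the result is true, so the displayed definition of $A$ appears to be a typo in the transcription rather than a gap in your argument.
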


\begin{lemma}(Theorem B.15 from \cite{spokoiny2017penalized})\label{bound}
Let $\mathcal{Y}(\boldsymbol{v})$ with $\boldsymbol{v}\in \Upsilon_0(r)=\{\boldsymbol{v}\in \Upsilon:  \| \boldsymbol{v}-\boldsymbol{v}^* \| \leq r\}$ and  $\Upsilon\subseteq\mathbb{R}^{k}$, be a
smooth centered random vector process with values in $\mathbb{R}^{q}$. Let also $\E[\mathcal{Y}(\boldsymbol{v}^{*})]=0$
for the center $\boldsymbol{v}^{*}\in \Upsilon_0(r)$. Without loss of generality, assume
$\boldsymbol{v}^{*}=0$. We aim to bound $\|\mathcal{Y}(\boldsymbol{v})\|$ uniformly over $\boldsymbol{v}$
over a vicinity  $\Upsilon_0(r)$ of $\boldsymbol{v}^{*}$. By $\nabla\mathcal{Y}(\boldsymbol{v})$
we denote the $k\times q$ matrix with entries $\nabla_{\boldsymbol{v}_{i}}\mathcal{Y}(\boldsymbol{v}),i\leq k,j\leq q$.
Suppose that $\mathcal{Y}(\boldsymbol{v})$ satisfies for each $\boldsymbol{\gamma}_{1}\in\mathbb{R}^{k}$
and $\boldsymbol{\gamma}_{2}\in\mathbb{R}^{q}$ with $\left\Vert \boldsymbol{\gamma}_{1}\right\Vert =\left\Vert \boldsymbol{\gamma}_{2}\right\Vert =1$, and there exists a positive constant $v_0$, 
\[
\sup_{\boldsymbol{v}\in\Upsilon} \log \E \exp\left\{ \lambda\boldsymbol{\gamma}_{1}^{\top}\nabla\mathcal{Y}(\boldsymbol{v})\boldsymbol{\gamma}_{2}\right\} \leq\frac{v_{0}^{2}\lambda^{2}}{2},\quad|\lambda|\leq g.
\]

Let $A$ be a matrix fulfilling $1 / 2 \leq\left\|A A^{\top}\right\|\leq 1$. Then for each $r$, it holds
$$
\mathbb{P}\left\{\sup _{\boldsymbol{v} \in \Upsilon_0(r)}\|A \nabla\mathcal{Y}(\boldsymbol{v})\|>\sqrt{8} v_0 r \mathfrak{z}_{\mathbb{H}}(\mathrm{x})\right\} \leq \mathrm{e}^{-\mathrm{x}},
$$
where $\mathfrak{z}_{\mathbb{H}}(\mathrm{x})$ is given by the following  with $\mathbb{Q}_2={p}_A+\mathbb{Q}_2\left(\Upsilon_{\circ}(r)\right)$.
$$
\begin{aligned}
& \mathfrak{z}_{\mathbb{H}}(\mathrm{x})= \begin{cases}2 \sqrt{\mathbb{Q}_2+2 \mathrm{x}}, & \text { if } \mathbb{Q}_2+2 \mathrm{x} \leq \mathrm{g}^2, \\
2 \mathrm{~g}^{-1} \mathrm{x}+\mathrm{g}^{-1} \mathbb{Q}_2+\mathrm{g}, & \text { if } \mathbb{Q}_2+2 \mathrm{x}>\mathrm{g}^2.\end{cases}
\end{aligned}
$$
In the above,  $\mathbb{Q}_2$ relates to the entropy of the set $ \Upsilon_0(r)$, $p_A$ denotes a trace norm of a trace operator $A^{-2}$, and both can be calculated according to section B.4 in \cite{spokoiny2017penalized} as outlined below.

For each $k\leq 1$, by $\mathcal{M}_k$ we denote a $r_k$-net in $\Upsilon^{\circ}(r_0)$ with $r_k=r_02^{-k}$, so that $\Upsilon^{\circ}(r_0) \subseteq \bigcup_{\boldsymbol{v} \in \mathcal{M}_k}  \{\boldsymbol{v}'\in \Upsilon:  \| \boldsymbol{v}'-\boldsymbol{v} \| \leq r_k \} $, then $\mathbb{Q}_2\left(\Upsilon^{\circ}\right) \stackrel{\text { def }}{=} \sum_{k=1}^{\infty} 2^{-k+1} \log \left(2 \mathbb{N}_k\right)$ with $\mathbb{N}_k \stackrel{\text { def }}{=}\left|\mathcal{M}_k\right|$ being the cardinality of $\mathcal{M}_k$. For a positive self-adjoint operator in $\mathbb{R}^{\infty}$, denoted by $\mathbb{H}$, such that $\lambda_{\min}(\mathbb{H})=1$ and $\mathbb{H}^{-2}$ is a trace operator, then $\mathrm{p}_\mathbb{H} \stackrel{\text { def }}{=} \operatorname{tr}\left(\mathbb{H}^{-2}\right)=\sum_{j=1}^{\infty} h_j^{-2}<\infty,$
where $1=h_1 \leq h_2 \leq \cdots$ are the ordered eigenvalues of $H$.

\end{lemma}

\begin{lemma}(Theorem B.8. in \cite{spokoiny2017penalized})\label{covering}
 Suppose that for some $\alpha>1$,
$
\mathrm{p}_{\mathbb{H}}(\alpha) \stackrel{\text { def }}{=} \sum_{j=1}^{\infty} h_j^{-2} \log ^\alpha\left(h_j^2\right)<\infty,
$
then
$
\mathbb{Q}_2(\Upsilon^{\circ}_{\mathbb{H}} (r)) \leq \mathrm{C} \sum_{j=1}^{\infty} h_j^{-1},
$ with $\Upsilon^{\circ}_{\mathbb{H}} (r) = \{\boldsymbol{v}'\in \Upsilon:  \| \mathbb{H}(\boldsymbol{v}'-\boldsymbol{v} )\| \leq r\} ,$ for a fixed center $\boldsymbol{v}$.
\end{lemma}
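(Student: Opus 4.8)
This is Theorem B.8 of \cite{spokoiny2017penalized}, and the plan is to reconstruct it as a metric-entropy estimate for an ellipsoid. After centering so that $\boldsymbol{v}=0$, the set $\Upsilon^{\circ}_{\mathbb{H}}(r)=\{u:\|\mathbb{H}u\|\le r\}$ is, in the eigenbasis of $\mathbb{H}$, the ellipsoid $\{u:\sum_j (h_j u_j/r)^2\le 1\}$ with non-increasing principal semi-axes $a_j=r/h_j$. The assumptions $h_1=\lambda_{\min}(\mathbb{H})=1$ and $\mathrm{p}_{\mathbb{H}}=\sum_j h_j^{-2}<\infty$ guarantee that this is a genuine trace-class ellipsoid with $\sum_j a_j^2=r^2\mathrm{p}_{\mathbb{H}}<\infty$, so that at every resolution only finitely many coordinates are ``active.'' Everything then reduces to bounding, at each dyadic scale $r_k=r_0 2^{-k}$, the cardinality $\mathbb{N}_k$ of a minimal $r_k$-net of $\Upsilon^{\circ}_{\mathbb{H}}(r)$ in $\|\cdot\|$, and summing the chaining series $\mathbb{Q}_2=\sum_{k\ge 1}2^{-k+1}\log(2\mathbb{N}_k)$.

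First I would bound $\log\mathbb{N}_k$ by splitting coordinates at $\epsilon=r_k$ into the finite ``large'' block $L(\epsilon)=\{j:a_j\ge\epsilon\}$ and its complement. Restricted to $L(\epsilon)$ the ellipsoid sits inside a box, and the elementary volumetric estimate gives $\log N\lesssim\sum_{j\in L(\epsilon)}\log(3a_j/\epsilon)=\sum_{j:\,h_j\le r2^{k}/r_0}\log\!\big(3\,r\,2^{k}/(r_0 h_j)\big)$. The orthogonal complement is an ellipsoid all of whose axes lie below $\epsilon$, contained in a ball of radius $\big(\sum_{j\notin L(\epsilon)}a_j^2\big)^{1/2}$, so its covering contributes at most $\epsilon^{-2}\sum_{j\notin L(\epsilon)}a_j^2$ to $\log N$ by a Gaussian-width / volumetric bound. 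Adding the two pieces yields a clean per-scale estimate of $\log(2\mathbb{N}_k)$.

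Next I would perform the summation. For the large-block term, substituting $\epsilon=r_0 2^{-k}$ and exchanging the order of summation gives $\sum_k 2^{-k}\sum_{j:\,2^{k}\gtrsim h_j}\log(C2^{k}/h_j)$; for each fixed $j$, writing $m=k-\lceil\log_2 h_j\rceil\ge 0$, the inner series is $\lesssim h_j^{-1}\sum_{m\ge 0}2^{-m}(C+m)\lesssim h_j^{-1}$, because the geometric weights dominate the linear growth of the logarithm. For the tail term, the same exchange turns $\sum_k 2^{-k}\,\epsilon^{-2}\sum_{j\notin L(\epsilon)}a_j^2$ into $\sum_j h_j^{-2}\sum_{k:\,2^{k}\lesssim h_j}2^{k}\lesssim\sum_j h_j^{-1}$. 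Both contributions therefore collapse to $C\sum_j h_j^{-1}$, which is the claimed bound. The hypothesis $\mathrm{p}_{\mathbb{H}}(\alpha)=\sum_j h_j^{-2}\log^{\alpha}(h_j^2)<\infty$ with $\alpha>1$ is what secures absolute convergence of the full chaining sum and absorbs the residual logarithmic over-counting, ensuring these surplus $\log h_j^2$ factors are swallowed into the constant rather than inflating the leading order.

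The step I expect to be the main obstacle is precisely this bookkeeping across the infinitely many scales: controlling the tail block uniformly in $k$ so that $\sum_{j\notin L(\epsilon)}a_j^2$ and the box over-counting do not accumulate, and verifying that $\mathrm{p}_{\mathbb{H}}(\alpha)<\infty$ with $\alpha>1$ is exactly the borderline integrability needed for the Dudley-type series to converge at the correct $\sum_j h_j^{-1}$ rate. By contrast, the ellipsoid geometry and the geometric summations in $k$ are routine. This recovers Theorem B.8 of \cite{spokoiny2017penalized}, which is all that is invoked in the proofs of Lemmas \ref{RT} and \ref{lemmart}.
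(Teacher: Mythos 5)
The paper does not prove this statement at all: it is imported verbatim as Theorem B.8 of \cite{spokoiny2017penalized} and used as a black box in the proofs of Lemmas \ref{RT} and \ref{lemmart}, so there is no in-paper argument to compare yours against. Judged on its own terms, your reconstruction follows the standard route for entropy bounds on trace-class ellipsoids (dyadic chaining, head/tail block split at each scale, a volumetric count on the finitely many active axes, and an exchange of the $j$ and $k$ summations), and the two summation computations — $\sum_{m\ge 0}2^{-m}(C+m)\lesssim 1$ for the head and $\sum_{k:2^k\lesssim h_j}2^k\lesssim h_j$ for the tail — are correct and do collapse both contributions to $C\sum_j h_j^{-1}$.

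Two points deserve tightening. First, your tail-block step as written invokes containment ``in a ball of radius $\bigl(\sum_{j\notin L(\epsilon)}a_j^2\bigr)^{1/2}$,'' but a Hilbert-space ball of radius larger than $\epsilon$ admits no finite $\epsilon$-net; the bound $\log N\lesssim \epsilon^{-2}\sum_{j\notin L(\epsilon)}a_j^2$ is valid, but only because the tail set is itself an ellipsoid, via Sudakov's minoration $\log N(T,\epsilon)\lesssim (w(T)/\epsilon)^2$ applied with Gaussian width $w\le\bigl(\sum_{j\notin L(\epsilon)}a_j^2\bigr)^{1/2}$ (or via an explicit finite-dimensional truncation chosen by a tail-sum condition rather than by $a_j\ge\epsilon$). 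You should name that inequality rather than gesture at a ``volumetric'' bound, which does not apply there. Second, your computation never actually uses the hypothesis $\mathrm{p}_{\mathbb{H}}(\alpha)=\sum_j h_j^{-2}\log^{\alpha}(h_j^2)<\infty$ with $\alpha>1$ — only trace-class-ness enters — yet your closing paragraph attributes the convergence of the chaining series to it. Either the hypothesis is redundant for the bound in the form you derived it, or (as in Spokoiny's own argument) it is the device that controls the accumulated logarithmic factors relative to $\sum_j h_j^{-1}$; as written, the claim about where it enters is unsupported. Neither issue affects the use of the lemma in this paper, where it is applied only to finite-dimensional parameter sets.
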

\begin{lemma} (Lemma A.17 from \cite{spokoiny2019accuracy})\label{Gaussintegral}
Let $\mathcal{T}$ be a linear operator in $\mathbb{R}^k$ with $\|\mathcal{T}\|_{\text {op }} \leq 1$. Let $\boldsymbol{z} \in \mathbb{R}^{k}$ be a unit norm vector: $\|\boldsymbol{z}\|=1$. Define $k=\operatorname{tr}\left(\mathcal{T}^{\top} \mathcal{T}\right)$. For any positive $\mathrm{C}_0, \mathrm{r}_0$ with $1 / 2<\mathrm{C}_0 \leq 1$ and $\mathrm{C}_0 \mathrm{r}_0>2 \sqrt{k+1}+\sqrt{\mathrm{x}}$,
$$
\begin{aligned}
& \mathbb{E}\left\{{|\langle \boldsymbol{z}, \gamma\rangle|^2 } \exp \left(-\mathrm{C}_0 \mathrm{r}_0\|\mathcal{T} \gamma\|+\frac{\mathrm{C}_0 \mathrm{r}_0^2}{2}+\frac{1}{2}\|\mathcal{T} \gamma\|^2\right) \mathbb{I}\left(\|\mathcal{T} \gamma\|>\mathrm{r}_0\right)\right\}  \leq \mathrm{Ce}^{-\left(k+\mathrm{x}\right) / 2}.
\end{aligned}
$$
\end{lemma}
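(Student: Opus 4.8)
The plan is to read the statement as a tilted Gaussian tail estimate for a standard normal vector $\gamma\sim\mathcal N(0,I_k)$, writing $A:=\mathcal T^\top\mathcal T$ (spectrum in $[0,1]$, $\operatorname{tr}(A)=k$), $u:=\|\mathcal T\gamma\|=(\gamma^\top A\gamma)^{1/2}$, and $S:=I_k-A\succeq 0$. The first step is to absorb the troublesome factor $e^{\frac12u^2}$ into the Gaussian density, using $(2\pi)^{-k/2}e^{-\frac12\|\gamma\|^2}e^{\frac12u^2}=(2\pi)^{-k/2}e^{-\frac12\gamma^\top S\gamma}$. Thus the quantity to bound equals $(2\pi)^{-k/2}\int |\langle z,\gamma\rangle|^2\exp\!\big(-\tfrac12\gamma^\top S\gamma-\mathrm C_0\mathrm r_0 u+\tfrac{\mathrm C_0\mathrm r_0^2}{2}\big)\mathbb I(u>\mathrm r_0)\,d\gamma$ against Lebesgue measure, which exposes the exact cancellation $\tfrac12u^2-\tfrac12\|\gamma\|^2=-\tfrac12\gamma^\top S\gamma$ that makes the object finite even when $\|\mathcal T\|_{\mathrm{op}}=1$.

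The key pointwise bound is that on $\{u>\mathrm r_0\}$ one has $-\mathrm C_0\mathrm r_0 u+\tfrac{\mathrm C_0\mathrm r_0^2}{2}\le-\tfrac{\mathrm C_0\mathrm r_0}{2}u$ (equivalently $\le-\tfrac{\mathrm C_0\mathrm r_0^2}{2}$), since $u>\mathrm r_0$. I would therefore reduce matters to bounding $(2\pi)^{-k/2}\int_{\{u>\mathrm r_0\}}|\langle z,\gamma\rangle|^2\,e^{-\frac12\gamma^\top S\gamma-\frac{\mathrm C_0\mathrm r_0}{2}u}\,d\gamma$. It is important to retain (half of) the linear term rather than discarding it for a bare prefactor: in near-isometric directions of $\mathcal T$ (eigenvalues of $A$ close to $1$) the weight $e^{-\frac12\gamma^\top S\gamma}$ degenerates and the only available decay is the linear $e^{-\frac{\mathrm C_0\mathrm r_0}{2}u}$, whereas in directions with small eigenvalues $S$ supplies quadratic decay; keeping both guarantees integrability in every direction.

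To evaluate the reduced integral I would bound the polynomial factor by $|\langle z,\gamma\rangle|^2\le\|\gamma\|^2=u^2+\gamma^\top S\gamma$, integrate the $\gamma^\top S\gamma$ part against $e^{-\frac12\gamma^\top S\gamma}$ (contributing a factor of order $k$), and fold the $u^2$ part into a radial integral. Passing to the law of $u$ (or working in the eigenbasis of $A$) reduces the dominant contribution to $\int_{\mathrm r_0}^\infty u^{k-1}e^{-\frac{\mathrm C_0\mathrm r_0}{2}u}\,du$ times a chi-type normalization of order $(2^{k/2}\Gamma(k/2))^{-1}$; since $\mathrm C_0\mathrm r_0>2\sqrt{k+1}$ the integrand's mode $2(k-1)/(\mathrm C_0\mathrm r_0)$ lies below $\mathrm r_0$, so the integrand is decreasing on $[\mathrm r_0,\infty)$ and the integral is controlled by its endpoint value, producing the smallness $e^{-\mathrm C_0\mathrm r_0^2/2}$. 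Alternatively, one can invoke Lemma \ref{tail} directly on $u=\|\mathcal T\gamma\|$ (with $\operatorname{tr}A=k$, $\operatorname{tr}A^2\le k$, $\|A\|\le1$) to control the deep-tail probability of $u$.

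Collecting the factors gives a bound of the form $e^{O(k)}\,e^{-\mathrm C_0\mathrm r_0^2/2}$, and since $\tfrac{\mathrm C_0\mathrm r_0^2}{2}\ge\tfrac{(2\sqrt{k+1}+\sqrt{\mathrm x})^2}{2}\ge 2(k+1)+\tfrac{\mathrm x}{2}$, the quadratic smallness simultaneously supplies the full $e^{-\mathrm x/2}$ and dominates the $e^{O(k)}$ volume and factorial factors, yielding $\le Ce^{-(k+\mathrm x)/2}$. I expect the main obstacle to be precisely the joint treatment of the polynomial weight $|\langle z,\gamma\rangle|^2$ and the exponential tilt in the presence of near-isometric directions: a Hölder or Cauchy–Schwarz split fails because $e^{\frac12 u^2}$ is not Gaussian-integrable once $\|\mathcal T\|_{\mathrm{op}}=1$, so the cancellation giving $-\tfrac12\gamma^\top S\gamma$ must be kept exact and the competing quadratic ($S$-driven) and linear ($u$-driven) decays balanced coordinate by coordinate. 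Ensuring that this bookkeeping does not sacrifice a constant fraction of the $\sqrt{\mathrm x}$ budget is the delicate point that the argument must handle to reach the sharp $\tfrac{\mathrm x}{2}$ exponent.
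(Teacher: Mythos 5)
Your argument is essentially correct, but be aware that the paper contains no proof of this statement at all: it is imported verbatim, by citation, as Lemma A.17 of \citet{spokoiny2019accuracy}, so any self-contained derivation is by construction a different route from the paper's. Your plan has the right skeleton and isolates the genuinely delicate points: the exact cancellation $\tfrac12\|\mathcal{T}\gamma\|^2-\tfrac12\|\gamma\|^2=-\tfrac12\gamma^\top S\gamma$ with $S=I-\mathcal{T}^\top\mathcal{T}\succeq 0$, which must be kept exact because $e^{u^2/2}$ is not Gaussian-integrable once $\|\mathcal{T}\|_{\mathrm{op}}=1$, and the retention of half the linear tilt on $\{u>\mathrm{r}_0\}$ to supply decay in near-isometric directions. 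Your constant bookkeeping also closes: the binding factor for large $\mathrm{x}$ is $\mathrm{r}_0^k\sim \mathrm{x}^{k/2}$, and since $\mathrm{x}^{k/2}\le (k/e)^{k/2}e^{\mathrm{x}/2}$ while $C_0\mathrm{r}_0^2/2\ge (C_0\mathrm{r}_0)^2/2\ge 2(k+1)+2\sqrt{(k+1)\mathrm{x}}+\mathrm{x}/2$, Stirling for $\Gamma(k/2)$ leaves an $e^{\Theta(k)}$ margin, so the sharp exponent $e^{-(k+\mathrm{x})/2}$ survives with an absolute constant; likewise $\mathrm{r}^k e^{-C_0\mathrm{r}^2/2}$ is decreasing for $\mathrm{r}\ge\sqrt{k/C_0}$, which lies below the admissible boundary because $C_0>1/2$, so evaluating at the boundary is legitimate.

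Two caveats, both minor. First, as transcribed in the paper the lemma overloads $k$: an operator on $\mathbb{R}^k$ with $\|\mathcal{T}\|_{\mathrm{op}}\le 1$ and $\operatorname{tr}(\mathcal{T}^\top\mathcal{T})=k$ forces $\mathcal{T}^\top\mathcal{T}=I_k$, i.e.\ the isometric case $u=\|\gamma\|$, $S=0$; under that literal reading your radial computation is not merely the ``dominant contribution'' but the entire proof, and it is exact (rotational symmetry even gives $\int_{S^{k-1}}|\langle z,\omega\rangle|^2\,d\omega=\omega_{k-1}/k$, improving on the crude bound $|\langle z,\gamma\rangle|^2\le\|\gamma\|^2$). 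Second, in the general effective-dimension reading (ambient dimension exceeding $\operatorname{tr}(\mathcal{T}^\top\mathcal{T})$, as in Spokoiny's original), the step ``passing to the law of $u$ \ldots chi-type normalization $(2^{k/2}\Gamma(k/2))^{-1}$'' is heuristic: under the tilted measure $(2\pi)^{-k/2}e^{-\frac12\gamma^\top S\gamma}\,d\gamma$ the variable $u$ does not have a chi-$k$ density, so you need either the per-eigendirection bookkeeping you gesture at or a layer-cake decomposition in $u$. Relatedly, invoking Lemma \ref{tail} ``directly'' is only valid under the standard Gaussian; the tilt $e^{\phi(u)}$ with $\phi(u)=\tfrac12(u-C_0\mathrm{r}_0)^2+\tfrac{C_0(1-C_0)}{2}\mathrm{r}_0^2$ must be folded in by integrating the tail bound against $\phi'$ over level sets, which is in fact how the cited source proceeds. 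With either repair your route yields a fully self-contained proof, which is something the paper itself does not provide.
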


 \clearpage

\section{Algorithms}\label{sub:Algorithms}
The implementation of our quasi-Bayesian procedure, such as the construction of $PR_T$'s for parameters of interest, may require numerical simulations to evaluate quasi-posterior distributions in the absence of an explicit closed-form solution. We can use a Markov Chain Monte Carlo (MCMC) algorithm as described in Algorithm \ref{MCMC algorithm}, 


Algorithm \ref{MCMC algorithm} uses the Metropolis-Hastings algorithm to generate random ($\theta^{(i)},\mu^{(i)}$) pairs with $i=1,\cdots, n$ drawn from quasi-posterior distributions. This extends the MCMC algorithm by Chernozhukov and Hong (2003) to include the plausible characteristic term $\mu$. To minimize the impact of initial value selections, we consider burning periods $n'<n$, and subsequently retain a sequence of generated draws ${(\theta^{(n')},\mu^{(n')}),\cdots, (\theta^{(n)},\mu^{(n)})}$.    
\begin{algorithm}[htbp!]
	\SetAlgoLined
	\KwIn{
		\begin{itemize}
			\item $Q_T(\theta,\mu)$, $ \pi(\theta,\mu)$: the objective function and the prior used to form the target quasi-posterior distribution, $p_T(\theta, \mu) \propto \exp\left\{\frac{1}{2}Q_T(\theta,\mu)\right\}\pi(\theta,\mu)$;
			\item 		$\mathcal{q}(\theta', \mu'|\theta, \mu)$: proposal distribution;
			\item	$(\theta^{0}, \mu^{0})$: initial  values; 
			\item $K_J$: the number of iterations.		
		\end{itemize} 			 
  ~\\}
	\KwOut{\begin{itemize}
			\item 
			Samples $\{\theta^{(i)},\mu^{(i)}\}_{i=1}^{K_J}$ drawn from the target distribution $p_T(\theta, \mu)$.~\\
	\end{itemize}}
	 ~\\
	
	\For{$i \leftarrow 1$ \KwTo $K_J$}{
 \begin{enumerate}
     \item Sample $(\theta', \mu')$ from $\mathcal{q}(\theta', \mu'|\theta^{(i-1)}, \mu^{(i-1)})$;
	\item  Calculate acceptance ratio $$\scriptsize \alpha^{(i)} = \min\left(1, \frac{\exp\left\{\frac{1}{2}Q_T(\theta', \mu')\right\}\pi(\theta', \mu')\mathcal{q}(\theta^{(i-1)}, \mu^{(i-1)}|\theta', \mu')}{ \exp\left\{\frac{1}{2}Q_T(\theta^{(i-1)}, \mu^{(i-1)})\right\}\pi(\theta^{(i-1)}, \mu^{(i-1)})\mathcal{q}(\theta', \mu'|\theta^{(i-1)}, \mu^{(i-1)})}\right);$$
\item Sample $u^{(i)}$ from Uniform(0, 1). Set $(\theta^{(i)}, \mu^{(i)}) \leftarrow (\theta', \mu')$		
		if {$u^{(i)} \leq \alpha^{(i)}$},\\ otherwise,	set $(\theta^{(i)}, \mu^{(i)} )\leftarrow (\theta^{(i-1)}, \mu^{(i-1)})$; 
	
 \end{enumerate}  
	}~\\
	\caption{Random Walk Metropolis-Hastings (MH) Algorithm}
 \label{MCMC algorithm}
\end{algorithm}

\begin{algorithm}[htbp!]  
	\SetAlgoLined
	\KwIn{
		\begin{itemize}
			\item $Q_T(\theta,\mu)$, $ \pi(\theta,\mu)$: the objective function and the prior used to form the target quasi-posterior distribution, $p_T(\theta, \mu) \propto \exp\left\{\frac{1}{2}Q_T(\theta,\mu)\right\}\pi(\theta,\mu)$;
   \item Sequential Monte Carlo (SMC) tuning parameters
\begin{itemize}
    \item $B_N$: the size of initial draws, and also the size of the final outputs
    \item $\{ \phi_j \}_{j=1}^{J_N}$: an increasing sequence of tempering tuning parameters of length 
$J_N$ 

from $0$ to $1$;
    \item $K_j$: the length of the MH algorithm in the mutation step specified below;
\end{itemize}
\item 		$\mathcal{q}_{\mathcal{K}_j}(\theta', \mu'|\theta, \mu)$: (adaptive) proposal distributions that may vary for every iteration $2\leq j\leq J_N$ and depend on tuning parameters $\mathcal{K}_j$ that are pre-specified or 
derived at iteration $j-1$;
		\end{itemize}
  ~\\~\\}
	\KwOut{\begin{itemize}
			\item Samples $\{\theta^{(i)},\mu^{(i)}, W^{(i)}\}_{i=1}^{B_N}$.   
~\\
	\end{itemize}}
	 ~\\
\begin{enumerate}
 \item[(1)] (Initialization) Draw the initial particles from the prior: $(\theta_{1}^{i},\mu_{1}^{i} ) {\sim_{{i.i.d.}}} \pi(\theta,\mu)$ and set equal weights $W_{1}^i=1, i=$ $1, \ldots, B_N$.
    \item[(2)] (Recursion)

    \For{$j \leftarrow 2$ \KwTo $J_N$}
{
\begin{itemize}
\item  (Correction) Let $\tilde{w}_j^i=\left[\exp(Q_T( \theta_{j-1}^i,\mu_{j-1}^i)/2)\right]^{\phi_j-\phi_{j-1}}$ and  set new normalised weights $\tilde{W}_j^i=\frac{\tilde{w}_j^i W_{j-1}^i}{\frac{1}{B_N} \sum_{i=1}^N \tilde{w}_j^i W_{j-1}^i}, i=1, \ldots, B_N$.

\item (Selection) Compute $\operatorname{ESS}_j=B_N /\left(\frac{1}{B_N} \sum_{i=1}^N (\tilde{W}_i^j )^2\right)$. 

\begin{enumerate}
\item   If $\operatorname{ESS}_j \leq B_N/2$, let $\{\hat{\theta}_i, \hat{\mu}_i\}_{i=1}^{B_N}$ denote $B_N$ i.i.d. draws from a multinomial distribution with support points and weights $\left\{\theta_{j-1}^i,\mu_{j-1}^i, \tilde{W}_j^i\right\}_{i=1}^{B_N}$ and 

then set $W_j^i=1$. 
\item  If $\operatorname{ESS}_j > B_N/2$, let $\hat{\theta}_i=\theta_{j-1}^i$, $\hat{\mu}_i=\mu_{j-1}^i$ and $W_j^i=\tilde{W}_j^i$, $i=1, \ldots, B_N$.  
\end{enumerate}

\item (Mutation) For each $i$, run $K_j$ steps of an MH algorithm (see Algorithm \ref{MCMC algorithm}) 

with initial value $\left\{\hat{\theta}_i,\hat{\mu}_i\right\}$ and proposal distribution  $ \mathcal{q}_{\mathcal{K}_j}$ targeting at a 

distribution proportional to  $\exp\left\{\frac{1}{2}Q_T(\theta,\mu)\right\}^{\phi_j}\pi(\theta,\mu)$. 

Let $\left\{ {\theta}_j^i, {\mu}_j^i\right\}$ be the final draw out of the MH algorithm. 
\end{itemize}
	}
     \item[(3)] For $j=J_N$, the final round outputs are $
\{ \theta_{J_N}^i,\mu_{J_N}^i, W_{J_N}^i \}_{i=1}^{B_N}
$, which is then set to be $\{\theta^{(i)},\mu^{(i)}, W^{(i)}\}_{i=1}^{B_N}$.
\end{enumerate}	

	\caption{Sequential Monte Carlo (SMC) Algorithm}
 \label{SMC algorithm}
\end{algorithm}   

\clearpage

\section{Additional Simulation Results}\label{simold}
From the Bayesian perspective, Sections \ref{subsec:simulinear moments} and \ref{subsec:simu Nonlinear moments} compare PGMM posteriors in various settings and validate our results stated in Lemma \ref{coverageprt}.
From the frequentist perspective, we validate our BvM theorem and related coverage results. In particular, with the linear IV model setup, we illustrate the validity of Theorems \ref{Theorem:1}-\ref{Theorem:2}. 
In addition to the linear moment case, Section \ref{sec:simu Nonlinear moments} addresses simulations involving nonlinear moments and illustrates the frequentist justification of the unions of the credible regions constructed via quasi-posterior distributions, as indicated by Theorems \ref{th3}-\ref{th4}, by evaluating the frequentist coverage rates.
\subsection{Linear moments scenario}\label{sec:simulinear moments}
\subsubsection{Bayesian motivation}\label{subsec:simulinear moments}
~\\ 
This subsection compares PGMM posteriors between the case where $\mu\equiv 0$ (no model misspecification) and $\mu$ draws from $F_\mu$. The posteriors behave differently in these two cases; e.g., the posteriors of the plausible characteristics in the presence of model misspecification tend to deviate from zero instead of centering around zero as priors, which suggests that the shape of the PGMM posterior provides meaningful information about the extent of misspecification. 

The simulation exercises in this subsection draw plausible characteristics $\mu$ together with $Z_t$ jointly and continue with the linear IV model from {Section \ref{subsec:Illustration Example 1}}, and for simplicity, we consider the case with one endogenous variable, no exogenous control variables, and a single instrument variable. Therefore, $\theta=
(\alpha, \beta_{X}) \in \Theta$ and moment conditions $g(Z_t, \theta)=(1, D_t^\top)^\top \left(Y_t - \alpha + X_t^\top \beta_{X} \right).$  In this simulation exercise, with a given value of $\gamma$, $Y_t$ is generated with $X_t,D_t$ in the following way: 
\begin{align*}
 {Y}_t  =\alpha + X_t  \beta_X +(1, D_t^\top) \gamma +  {\varepsilon_t},
\end{align*}
where ${\varepsilon}$ consists of independent and identically distributed (i.i.d.) draws from $N(0,\sigma^2)$, and $X_t$ and $D_t$ are the fixed in the simulation using the sample data as described in Section \ref{subsec:Illustration Example 1}. By the fact that $\mu=\mathbb{E}g(Z_t, \theta)$, we have $\gamma = (\mathbb{E}(1, D_t^\top)^\top (1, D_t^\top))^{-1}\mu$, so once we draw $\mu$ from $F_\mu$ the value of $\gamma$ is also determined. In the simulation exercises in this subsection, we thus calibrate $\sigma^2$ to the data employed in  Section \ref{subsec:Illustration Example 1}, draw the value of $\mu = (\mu_1,\mu_2)^\top$ from given $F_\mu$, let $\gamma = (\mathbb{E}(1, D_t^\top)^\top (1, D_t^\top))^{-1}\mu$, $\alpha =0, \beta_X = \mu_2/10$, and then generate $Y_t$ and construct PGMM posteriors based on $Z_t=(Y_t, X_t, D_t)$. 

Figure \ref{fig:simu_linear} plots the results based on the above simulation setup and compares the posteriors in two cases.  {In the upper panel, $\mu$ is fixed at zero, and there is no model misspecification. 
The lower panel uses DGPs with $\mu$ drawn from $\pi(\mu)$; therefore, $\mu$ varies instead of fixing at zeros. 
}

\begin{figure}[htbp!]
    \centering
    \includegraphics[width=\textwidth]{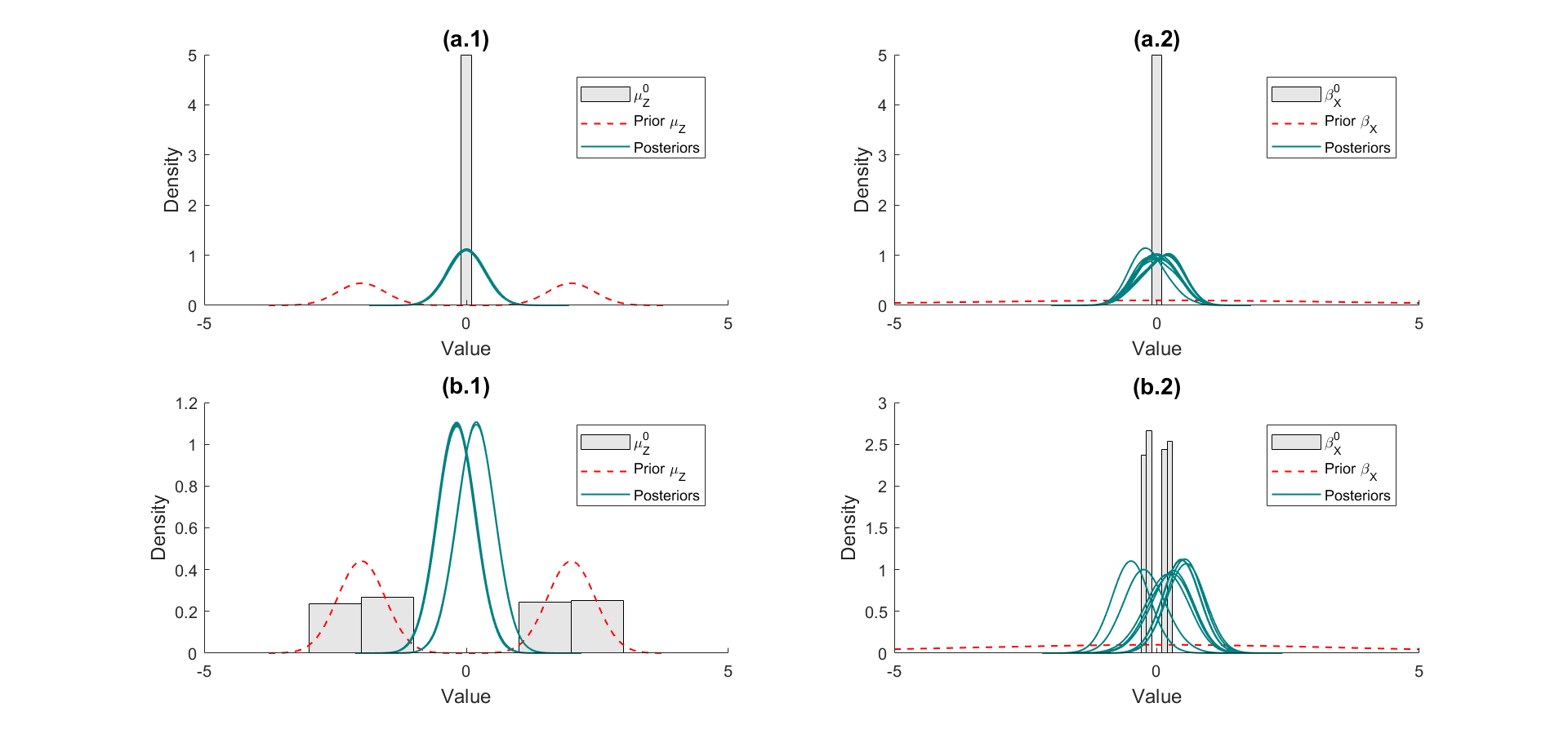}
    \caption{\footnotesize {Marginal priors (red dashed line) and  PGMM marginal posteriors (green solid line) of slope coefficient, $\beta_X$ (left panel), and plausible characteristic, $\mu_Z=\mu_2$ (right panel), resulting from the Section \ref{subsec:simulinear moments} linear IV model simulations.} The upper panel (a.) represents a correctly specified model with $\mu\equiv 0$, i.e., $F_\mu$ assigns all mass to 0. In contrast, the lower panel (b.) reflects a plausible IV setting from \cite{conley2012plausibly} with $\mu$ drawing from the prior of $\mu$, $\pi(\mu)$ in the simulation data generating process (DGP), i.e.,  $F_\mu$ coincides with $\pi(\mu)$. The red dashed curves mark the marginal prior densities. The gray shadowed bars are the histograms of the realized values of $\beta_X, \mu$ used in the DGPs of simulations, i.e., the bar in the subfigure (a.1) locates at point 0 as the value of $\mu$ is fixed at 0 in the upper panel simulation exercise. The green solid curves mark the PGMM marginal posteriors, e.g., the green curves in the subfigure (a.1) correspond to the marginal PGMM posteriors of $\beta_X$.
    }
    \label{fig:simu_linear}
\end{figure}

In addition, Figure \ref{fig:simu_linear} shows that the 
highest posterior regions
of $\beta_X$ tend to cover the true values of $\beta_X$ used in the DGP. {Corresponding to Lemma \ref{coverageprt}, the average coverage rates of the 95\% $PR_T$ containing the values of $\beta_X$ used in the DGP are $0.99$ and $0.96$ for cases (a.) and (b.), respectively.}

\subsubsection{Frequentist validation}
\label{section31}\label{sec:simu linear} ~\\
{Apart from Section \ref{subsec:simulinear moments} and \ref{subsec:simu Nonlinear moments}, the remaining sections consider DGPs with fixed plausible characteristics from a frequentist perspective.} We continue with the linear model from Section \ref{subsec:simulinear moments}. We set $\alpha$ to be zero in this subsection and consider $\theta=
 \beta_{X}  \in \Theta$. With a given value of $\gamma$, $Y_t$ is generated similarly as before:
\begin{align*}
 {Y_t}  =X_t  \theta +D_t\gamma +  {\varepsilon_t}.   \label{eq:simu1}
\end{align*}
In this simulation exercise, we calibrate parameters to the 401(K) data employed in \cite{conley2012plausibly} so that $X_t$ and $D_t$ are fixed subsamples of size 9951 from the 401(K) data with $X_t$ being an indicator for 401(k) participation and $D_t$ being an indicator for 401(k) plan eligibility.  ${\varepsilon}$ consists of independent and identically distributed (i.i.d.) draws from $N(0,\sigma^2)$, and the values of $ \theta$ and $\sigma^2$ are obtained by regressing the net financial assets from the dataset on $X_t$ with $D_t$ as an instrument using the 2SLS estimator and sample variance of the residuals.

\subsubsection{Validity of the Gaussian mixture limiting distribution}
~\\
\noindent
This section validates the Gaussian mixture limiting distribution specified in Theorem \ref{Theorem:2} using the linear IV model mentioned above. The simulated data, $Z_t= (Y_t, X_t, D_t), 1\leq t\leq T$, is generated as follows: $X_t = D_t +v_t, Y_t = X_t \theta + D_t \gamma + \varepsilon_t,$ where $D_t$'s are independently log-normally distributed such that the natural logarithm of $D_t$ follows the standard normal distribution, $v_t$'s and $\varepsilon_t$'s are independently standard normally distributed,  and $Z_t, v_t$, and $\varepsilon_t$ are all independent of each other.

We first proceed with the case $k=q=1$, and the parameter of interest $\theta$ is fixed at zero. We choose $\gamma=1/\sqrt{T}$ with $T$ being the sample size to mimic local misspecification and use a local Gaussian prior for $\mu$ and an independent flat prior for $\theta$ such that $\pi(\theta,\mu)\propto \exp{(-T\mu^\top \mu/2)}$. 
 \begin{figure}[htbp!] 
	\centering 
\includegraphics[width=0.8\textwidth]{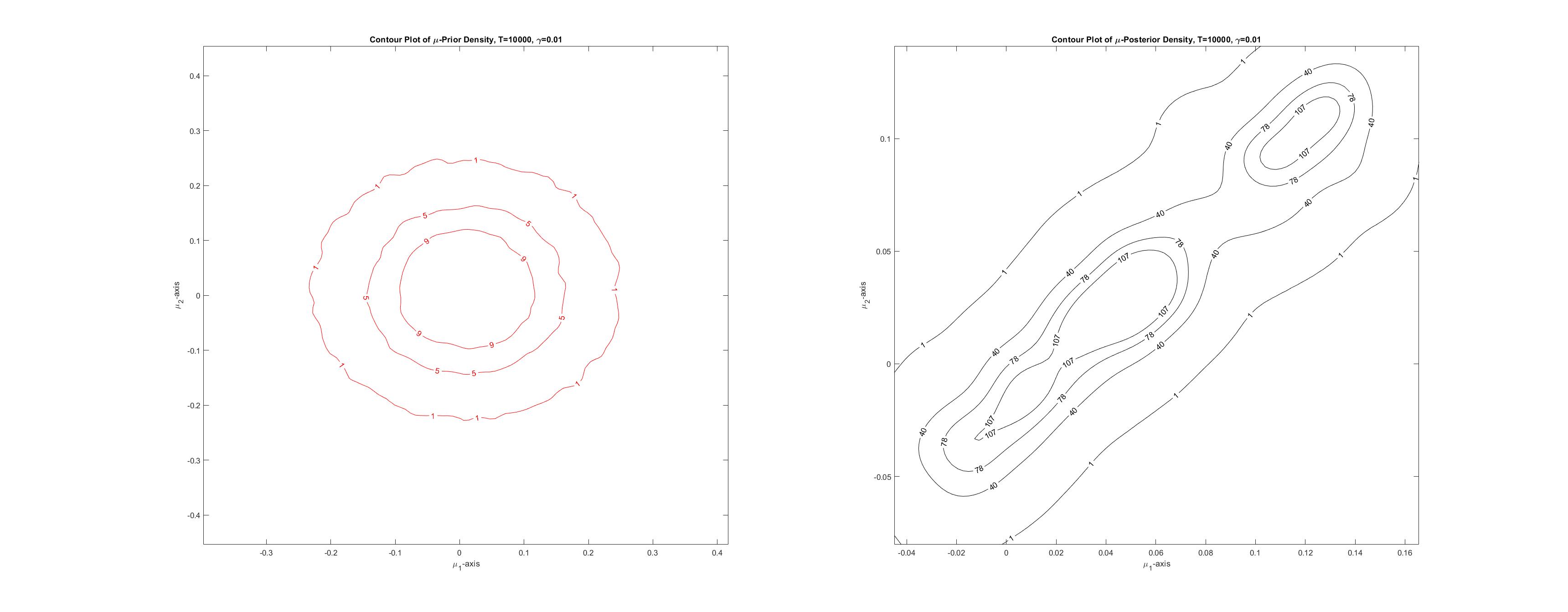}
 \caption{\footnotesize $q=2$, with $\gamma=(0.01, 0.01)^\top$. Left panel:  contour plot of the (marginal) prior density of $\mu=(\mu_1, \mu_2)^\top$; Right panel: contour plot of the (marginal) quasi-posterior density of $\mu=(\mu_1, \mu_2)^\top$.} \label{priors1}
\end{figure}

\begin{sidewaysfigure}
     \centering\textcolor{white}{\rule{1.75\textheight}{0.7\textheight}}
\includegraphics[width=\textwidth, height=0.5\textheight]{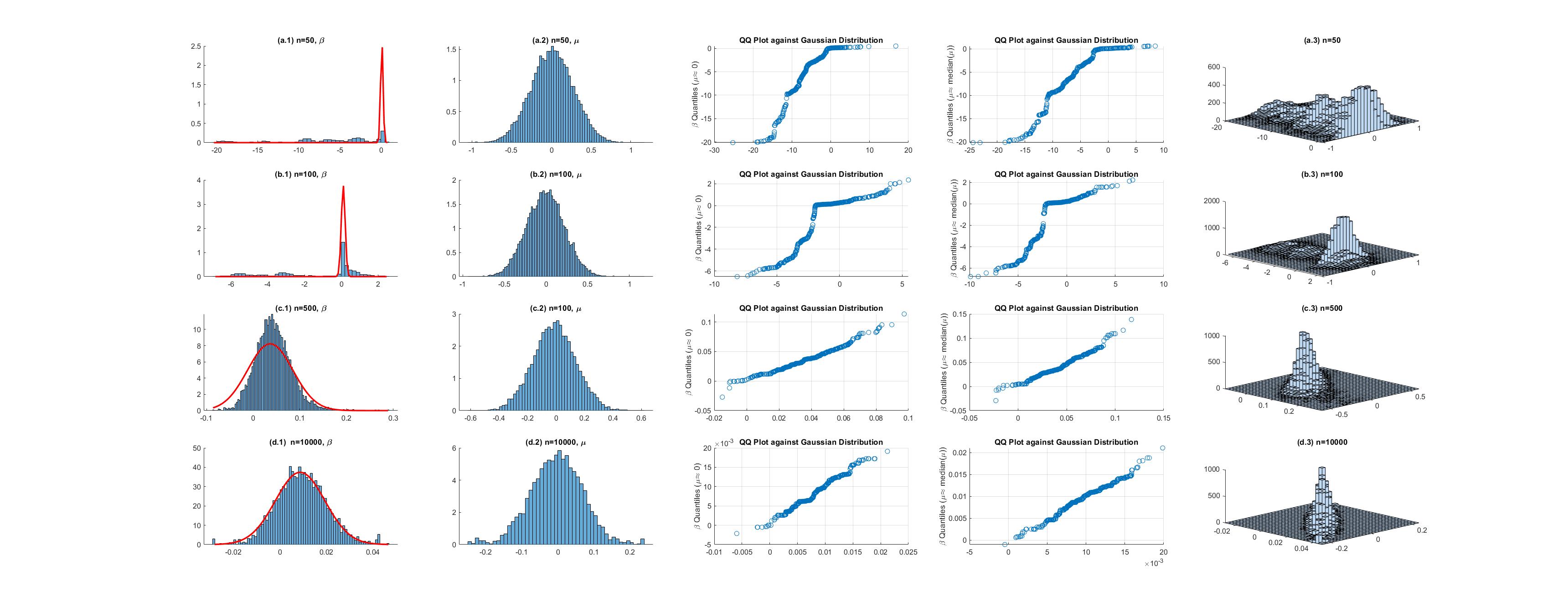}	
	\caption{\footnotesize
The first column displays histograms of randomly drawn $\theta$ values from quasi-posterior distributions, with the red smooth curve representing the limiting density suggested by Theorem \ref{Theorem:2}. In the second column, you can find histograms of random draws of the plausible characteristic $\mu$ from quasi-posterior distributions. The final column exhibits joint histograms of pairs $(\theta,\mu)$. The third and fourth columns display QQ-plots of simulated $\theta$'s corresponding to $\mu$'s with values close to zero and the median of the simulated $\mu$'s, respectively. The sample size varies from 50 (first row) to 10,000 (fourth row).}  \label{fig/linear_IV4} 
\end{sidewaysfigure}

Figures \ref{fig/linear_IV4} plots based on random draws of pairs of $(\theta,\mu)$ from quasi-posterior distributions. These quasi-posterior distributions are constructed using simulated data of various sample sizes. The third and fourth columns of Figure \ref{fig/linear_IV4} illustrate that the Gaussian mixture distribution $N_T(\theta,\mu)$ closely approximates $p_T(\theta,\mu)$ when the sample size is relatively large, as the simulation results show that the conditional quasi-posterior distribution of $\theta$ given $\mu$ closely resembles a Gaussian. These findings are also shown in the QQ plots of Figure \ref{fig/linear_IV4}.

In practice, Theorem \ref{Theorem:2} allows us to utilize the Gaussian mixture limiting distribution directly to draw inference on $\theta$ instead of relying on random draws from $p_T(\theta,\mu)$. This becomes crucial in the presence of a large parameter dimension, where the MCMC algorithm may experience a low acceptance rate. In cases where simulating from $p_T(\theta,\mu)$ is difficult, we can still derive results using simulations from the Gaussian mixture $N_T(\theta,\mu).$

Another interesting observation is that in Figure \ref{fig/linear_IV4}, the fifth column indicates that the region with the highest density of quasi-posterior distributions tends to concentrate in a smaller area than the priors. This phenomenon is illustrated in Figure \ref{priors1}, where we explore a simulation exercise similar to the settings in Figure \ref{fig/linear_IV4}, but with $q=2$ and $\gamma=(0.01,0.01)^\top$. Figure \ref{priors1} demonstrates that in the presence of over-identification, the posterior distribution may concentrate on an area of smaller dimension than the prior distribution.

\subsection{Nonlinear moments scenario}\label{sec:simu Nonlinear moments}
\subsubsection{Bayesian motivation}\label{subsec:simu Nonlinear moments}
~\\
We now consider a simulation exercise with non-smooth moment conditions, i.e., the IVQR model. Inspired by the plausible IV model, we use invalid/plausible IV ($\Tilde{D}_t$) to introduce model misspecification in this simulation exercise such that  $\Tilde{D}_t =D_t + \Tilde{\gamma} \left(\tau-1\left(Y_t \leqslant \alpha_\tau + X_t^\top \beta_{\tau, X} + W_t^\top \beta_{\tau, W} \right)\right)$ with $D_t$ being a valid instrumental variable. Therefore, when $\Tilde{\gamma}=0$ the model is correctly specified, and with non-zero $\Tilde{\gamma}$'s, the moment conditions are misspecified. Specifically, this section focuses upon the Median IV case with $\tau =0.5$ as in Section \ref{subsec:Illustration Example 1} with moment conditions  $$g(Z_t, \theta)=(1, \tilde{D}_t^\top, W_t^\top)^\top\left(\tau-1\left(Y_t \leqslant \alpha_\tau + X_t^\top \beta_{\tau, X} + W_t^\top \beta_{\tau, W} \right)\right),$$ where $Z_t=(Y_t, X_t, W_t, \tilde{D}_t),$
and $\theta =(\alpha_\tau, \beta_{\tau,X}, \beta_{\tau, W}) \in \Theta $. 
\begin{figure}[htbp!]
    \centering
    \includegraphics[width=0.9\textwidth]{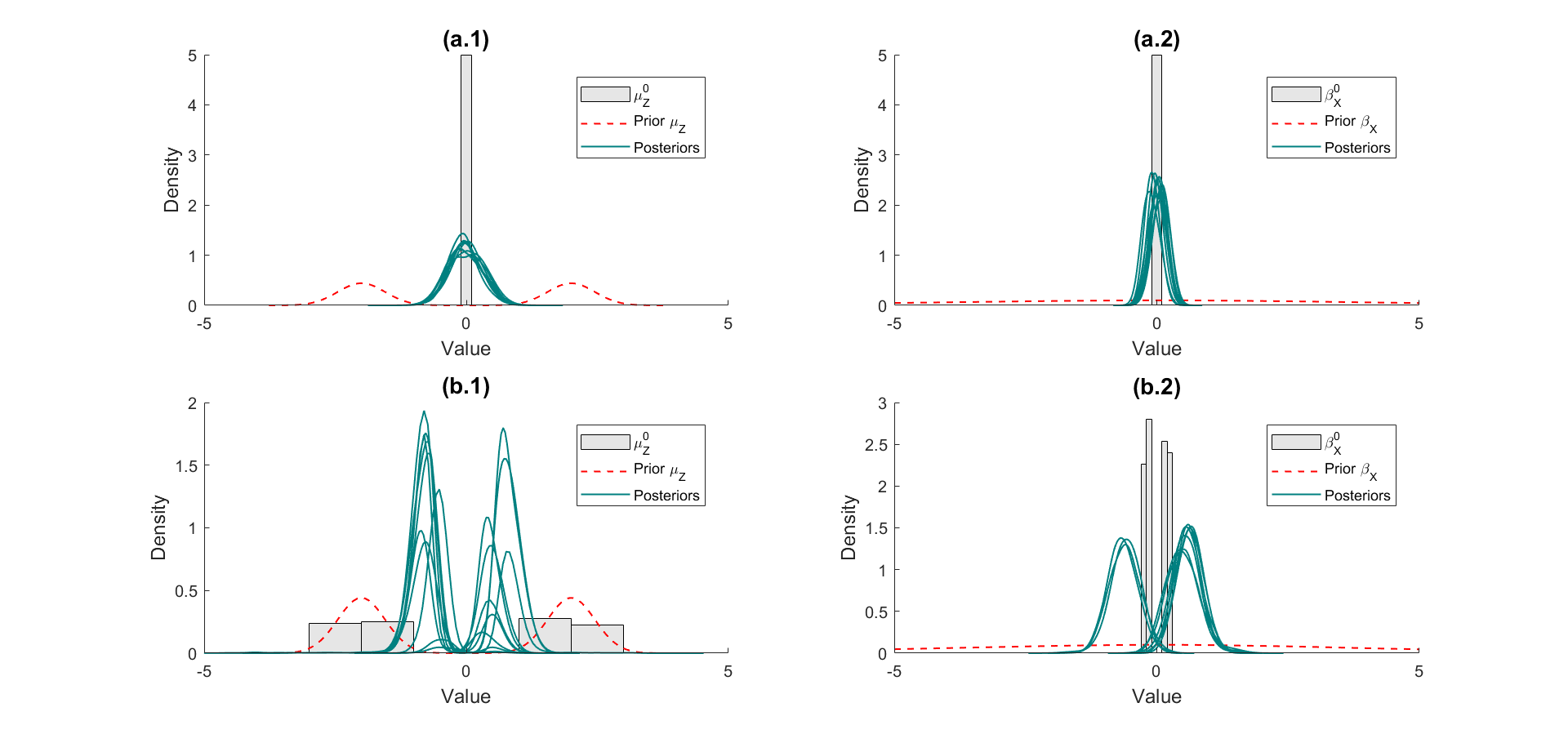}
    \caption{\footnotesize Similar to Figure \ref{fig:simu_linear}, {marginal priors (red dashed line) and  PGMM marginal posteriors (green solid line) of slope coefficient, $\beta_X$ (left panel),  and plausible characteristic, $\mu_D$ (right panel), resulting from the Section \ref{subsec:simu Nonlinear moments}  median IV model simulations.} The upper panel (a.) represents a correctly specified model with $\mu\equiv 0$. In contrast, the lower panel (b.) uses DGPs with $\mu$ drawn from the prior $\pi(\mu)$. The red dashed curves mark the marginal prior densities. The gray shadowed bars are the histograms of the realized values of $\beta_X, \mu$ used in the DGPs of simulations. The green solid curves mark the PGMM marginal posteriors. 
    }
    \label{fig:simu_median}
\end{figure} 

  Figure \ref{fig:simu_median} is created similarly to Figure \ref{fig:simu_linear} with one instrumental variable. In the simulation exercises in this subsection, $X_t$, $W_t$, and $D_t$ are fixed using the sample data as described in Section \ref{subsec:Illustration Example 1}. We first draw $\mu = (\mu_1, \mu_D, \mu_W)$ from $F_\mu$, and the values of $\mu$ directly determines the value of $\tilde{\gamma}$ via $\mu=\mathbb{E}g(Z_t, \theta)$. In this simulation design, $\mu_1$ and $\mu_W$ are always set to zero, and once we draw $\mu$, we choose $\theta = \mu/10$, and $Y_t$ is generated  in the following way: 
\begin{align*}
 {Y}_t  =\alpha_\tau + X_t  \beta_{\tau, X} + W_t \beta_{\tau, W}+  {\varepsilon},
\end{align*}
where ${\varepsilon}$ consists of independent and identically distributed (i.i.d.) draws from $N(0,\sigma^2)$ with $\sigma^2$ calibrated  to the data employed in Section \ref{subsec:Illustration Example 1}.

Similar to Figure \ref{fig:simu_linear}, in the upper panel of Figure \ref{fig:simu_median} $\mu\equiv 0$, while in the lower panel, $\mu$ used in the DGPs are drawn from the prior $\pi(\mu)$.
The observed patterns are similar. The average coverage rates of the 95\% $PR_T$ are 0.99 and 0.92 for cases (a.) and (b.) in Figure \ref{fig:simu_median}, respectively. In the absence of model misspecification in the upper panel of Figure \ref{fig:simu_median}, the marginal posteriors of the plausible characteristics tend to cluster around zero, and the average coverage rate is undoubtedly higher than the nominal rate due to the additional uncertainty introduced. However, they start to deviate from zero as model misspecification is introduced in the lower panel. 
 
\subsubsection{Frequentist validation}
~\\ \noindent
This subsection revisits an median IV simulation example from \cite{chernozhukov2003mcmc} with slight modifications to introduce model misspecification. The Monte Carlo Simulation Example II established by \cite{chernozhukov2003mcmc} is: 
$$ Y_t=\alpha+X_t^{\top} \beta+u_t, 
 u_t=\sigma(X_t) \varepsilon_t, \quad  \sigma(X_t)= (1+\sum_{j=1}^3 X_{t,j}) / 5,$$
where there are no endogenous variables, $X_{t,j}$'s are independently log-normally distributed such that the natural logarithm of $X_{t,j}$ follows a standard normal distribution, $\varepsilon_i$'s are independently standard normally distributed and are independent of $X_{t,j}$'s and $\theta=(\alpha, \beta)$. They consider the following moment conditions for the median $g(Z_t, \theta)  = (1, X_t^\top)^\top\left(0.5-1\left(Y_t \leqslant \alpha + X_t^\top \beta \right)\right).$ 



{
We modify the above DGP to introduce model misspecification in two ways: by adjusting the credibility of the instrument variables and the rank invariance (or similarity) used in the IVQR with discrete (or bounded continuous) treatment variables so that treatment status should not impact the underlying conditional distribution. The former situation resembles the plausible linear IV model in that the exclusion restriction is relaxed. We substitute $u_t$ with $\tilde{u}_t = \sigma(X_t)\varepsilon_t + \gamma X_{t,3}^2$ to include plausible IVs, with $\gamma$ evaluating the credibility of IVs. In the latter case, we consider the following DGP with potentially missing variables $X_t$ such that $Y_t=\alpha +  D_t^{\top}\beta  + \gamma D_t X_t +\varepsilon_t,$ where $D_t$ follows an independent and identically distributed Bernoulli distribution with a success rate of 1/2.  For both cases, we consider the following moment conditions used for estimating the parameters concerning the $\tau$-quantile:  $g(Z_t, \theta_\tau)=(1, X_t^\top, W_t^\top)^\top\left(\tau-1\left(Y_t \leqslant \alpha_\tau + X_t^\top \beta_{\tau}  \right)\right),$ where $Z_t=(Y_t, X_t),$ $\theta =(\alpha_\tau, \beta_{\tau}) \in \Theta $. 

  


Table \ref{table ExampleII-1} is constructed with simulated data generated under the first DGP relaxing the exclusion restriction, while Table \ref{table: ExampleII-2} is built under the latter one. Both tables compare average coverage rates of sets constructed using different approaches containing true parameter values, and they also demonstrate the validity of Theorems \ref{th3} and \ref{th4}.  

}


\begin{table}
\centering
\begin{tabular}{ccccccc|ccccccc}
\toprule
$\gamma$ & $T$ & $\tau$ & Methods & $\beta_{\tau,1}$ & $\beta_{\tau,2}$ & $\beta_{\tau,3}$&{$ \gamma$ }& $T$ & $\tau$ & Methods & $\beta_{\tau,1}$ & $\beta_{\tau,2}$ & $\beta_{\tau,3}$ \\
\hline 
1 & 300 & 0.2 & 0 & 0.946 & 0.941 & 0.029 & 1 & 300 & 0.2 & 1 & 0.985 & 0.981 & 0.924 \\ 
1 & 300 & 0.5 & 0 & 0.675 & 0.676 & 0.34 & 1 & 300 & 0.5 & 1 & 0.988 & 0.97 & 0.921 \\  
1 & 100 & 0.2 & 0 & 0.781 & 0.725 & 0.273 & 1 & 100 & 0.2 & 1 & 0.928 & 0.929 & 0.964 \\ 
1 & 100 & 0.5 & 0 & 0.554 & 0.549 & 0.454 & 1 & 100 & 0.5 & 1 & 0.996 & 0.968 & 0.94 \\ 
0 & 300 & 0.2 & 0 & 0.917 & 0.935 & 0.914 & 0 & 300 & 0.2 & 1 & 0.994 & 0.996 & 0.993 \\ 
0 & 300 & 0.5 & 0 & 0.925 & 0.919 & 0.916 & 0 & 300 & 0.5 & 1 & 0.965 & 0.96 & 0.969 \\  
0 & 100 & 0.2 & 0 & 0.941 & 0.947 & 0.941 & 0 & 100 & 0.2 & 1 & 0.995 & 0.994 & 0.997 \\ 
0 & 100 & 0.5 & 0 & 0.903 & 0.917 & 0.901 & 0 & 100 & 0.5 & 1 & 0.959 & 0.969 & 0.971 \\ 
\hline
\end{tabular}
\caption{\footnotesize This table illustrates the average coverage rates of sets containing the true values of $\beta_{\tau, i}$'s, and the rates are displayed in the columns labeled ``$\beta_{\tau, i}$''. The column labeled ``$\gamma$'' shows the values of $\gamma$ used in the DGPs, labeled ``$T$'' shows the value of the sample size, labeled ``$\tau$''  shows the corresponding quantiles, labeled ``Methods'' outlines the estimation procedure, with the value of $0$ referring to the [CH] method and the value of $1$, referring to the PGMM method.
The [CH] intervals are constructed with flat priors over $\theta_\tau$'s, and the PGMM intervals are constructed with flat priors for $\theta_\tau$'s and  independent local Gaussian priors $N(0, I/T)$ for plausible characteristics.} \label{table ExampleII-1}
\end{table}

Table \ref{table ExampleII-1} compares results with and without accounting for model misspecification. In the simulation exercise outlined in Table \ref{table ExampleII-1}, the parameters $\theta = (\alpha, \beta)$ in the DGP are set equal to the null vector, and thus $\beta_\tau$'s are also null vectors following the settings in \cite{chernozhukov2003mcmc}. The table shows the average coverage rates for the true $\beta_\tau$ value in sets from two quasi-posterior distributions, with and without assuming model misspecification. One set is created using 0.025 to 0.975 sample quantiles obtained from quasi-posterior distributions for $\beta_{\tau, i}$'s without assuming model misspecification (refer to the quasi-Bayesian approach outlined in \cite{chernozhukov2003mcmc}, denoted as [CH]), and another set is constructed from PGMM quasi-posterior. The sets resulting from PGMM approach are built as follows: we select simulated $\beta_{\tau, i}$'s corresponding to pre-selected plausible characteristics $(\mu_1, \mu_2, \mu_3, \mu_4)^\top$ with average values $\sum_i \mu_i/4$ close to the 0.1-0.9 sample quantiles of priors, then we create a set using 0.025 to 0.975 sample quantiles of $\beta_i$'s selected corresponding to a specific value of plausible characteristics, and the final set is the union of all those sets.

In Table \ref{table ExampleII-1}, when $\gamma=0$, both techniques produce comparable outcomes. However, when $\gamma=1$, [CH] sometimes results in a strikingly low coverage rate for $\beta_{\tau, 3}$  (for example, when $\tau=0.2$) while incorporating local plausible characteristics enhances the coverage rate. Table \ref{table: ExampleII-2} explores a different DGP from Table \ref{table ExampleII-1} with $\theta =(0,1)^\top$ in the DGP and thus $\beta_\tau=1$; additionally, Table \ref{table: ExampleII-2} also presents findings regarding the 95\% confidence sets using the IVQR procedure (see \cite{chernozhukov2005iv}). We observe similar patterns in Table \ref{table: ExampleII-2} as in Table \ref{table ExampleII-1}.

%





\begin{table} 
\centering
\begin{tabular}{cccccc| cccccc}
\toprule
$\gamma$ & $T$ & $\tau$ & Methods & $\beta_\tau$ &  & $\gamma$ & $T$ & $\tau$  & Methods & $\beta_\tau$ &   \\ \hline
0 & 300 & 0.5 & -1 & 0.9629 &  & 1 & 300 & 0.5 & -1 & 0 &  \\ 
0 & 300 & 0.5 & 0 & 0.9572 &  & 1 & 300 & 0.5 & 0 & 0.0009 &  \\ 
0 & 300 & 0.5 & 1 & 0.9780 &  & 1 & 300 & 0.5 & 1 & 0.9555 &  \\ 
0 & 300 & 0.8 & -1 & 0.9573 &  & 1 & 300 & 0.8 & -1 & 0 &  \\ 
0 & 300 & 0.8 & 0 & 0.9574 &  & 1 & 300 & 0.8 & 0 & 0.0893 &  \\ 
0 & 300 & 0.8 & 1 & 0.9796 &  & 1 & 300 & 0.8 & 1 & 0.9009 &   \\ 
0 & 100 & 0.5 & -1 & 0.9666 &  & 1 & 100 & 0.5 & -1 & 0.0439 &  \\ 
0 & 100 & 0.5 & 0 & 0.9653 &  & 1 & 100 & 0.5 & 0 & 0.1112 &  \\ 
0 & 100 & 0.5 & 1 & 0.9807 &  & 1 & 100 & 0.5 & 1 & 0.9587 &   \\ 
0 & 100 & 0.8 & -1 & 0.9529 &  & 1 & 100 & 0.8 & -1 & 0.0063 &  \\ 
0 & 100 & 0.8 & 0 & 0.9455 &  & 1 & 100 & 0.8 & 0 & 0.4694 &  \\ 
0 & 100 & 0.8 & 1 & 0.9754 &  & 1 & 100 & 0.8 & 1 & 0.9659 &  \\ 
\hline
\end{tabular}
\caption{\footnotesize Similar to Table \ref{table ExampleII-1}, the value of $-1$ in the column labeled ``Methods'' refers to the IVQR method (see \cite{chernozhukov2005iv}). The [CH] intervals are constructed with flat priors over $\theta$'s, and the PGMM intervals are constructed with flat priors for $\theta$'s and independent local Gaussian priors $N(0, 10I/T)$ for plausible characteristics.}
\label{table: ExampleII-2}
\end{table}

\clearpage
\bibliographystyle{plainnat}
\bibliography{biball}
\end{document}